\title{Strict Self-Assembly of Discrete Self-Similar Fractal Shapes}
\titlerunning{Strict Self-Assembly of Fractals}
\author{Florent Becker}{LIFO --- Université d'Orléans, rue Léonard de Vinci\\ 45000 Orléans, France }{florent.becker@univ-orleans.fr}{https://orcid.org/0000-0002-6742-9042}{}%TODO mandatory, please use full name; only 1 author per \author macro; first two parameters are mandatory, other parameters can be empty. Please provide at least the name of the affiliation and the country. The full address is optional. Use additional curly braces to indicate the correct name splitting when the last name consists of multiple name parts.
\authorrunning{Florent Becker} %TODO mandatory. First: Use abbreviated first/middle names. Second (only in severe cases): Use first author plus 'et al.'
\keywords{Molecular Self-assembly, Discrete Fractals, Substitutions} %TODO mandatory; please add comma-separated list of keywords
\newcommand\ezEmbed[1][]{\ensuremath{\operatorname{squash}_{#1}}}
\newcommand\pd{\ensuremath{\mathring{-}}}
\newcommand\movie{\ensuremath{\mathcal{MOVIE}}}
\newcommand\far{\ensuremath{\operatorname{Far}}}
\newcommand\near{\ensuremath{\operatorname{Near}}}
\newcommand\tree{\ensuremath{\mathcal{TREE-DEC}}}
\newcommand\Assemblies[1]{\ensuremath{{#1}^{\subset \mathbb{Z}^2}}}
\newcommand\FreeAssemblies[1]{\ensuremath{{#1}^{\operatorname{Free}}}}
\newcommand\dom{\operatorname{dom}}
\newcommand\Ss{\ensuremath{\mathcal{S}}}
\newcommand\Seqs[1]{\ensuremath{\mathcal{H}[#1]}}
\newcommand\FreeSeqs[1]{\ensuremath{\mathcal{H}^{\operatorname{Free}}[#1]}}
\newcommand\Prods[1]{\ensuremath{\mathcal{A}[#1]}}
\newcommand\TerminalProds[1]{\ensuremath{\mathcal{A}_\square[#1]}}
\newcommand\fizziness{\operatorname{fz}}
\newcommand\moreFizzy{\ensuremath{>_{\fizziness}}}
\newcommand\moreFizzyEq{\ensuremath{\geq_{\fizziness}}}
\newcommand{\surligne}[3][]{%
  \tikz[remember picture, anchor=base, baseline]{\node[fill=#2, inner sep=1pt, rounded corners=2pt] (#1) {#3};}%
}
\newcommand\Z{\ensuremath{\mathbb{Z}}}
\newcommand\FBattribute[2][]{
  \def\tempa{}%
  \def\tempb{#1}%
  \ifx\tempa\tempb{}
  \ensuremath{\operatorname{#2}}
  \else
  \ensuremath{#1{\cdot}\operatorname{#2}}
  \fi
}
\newcommand\glue[1][]{\FBattribute[#1]{glue}}
\newcommand\present[1][]{\FBattribute[#1]{present}}
\newcommand\paths[1][]{\FBattribute[#1]{successor}}
\newcommand\strengthFun[1][]{\FBattribute[#1]{strength}}
\newcommand\temperature[1][]{\FBattribute[#1]{temperature}}
\newcommand\tileset[1][]{\FBattribute[#1]{tileset}}
\newcommand\seed[1][]{\FBattribute[#1]{seed}}
\newcommand\tiles[1][]{\FBattribute[#1]{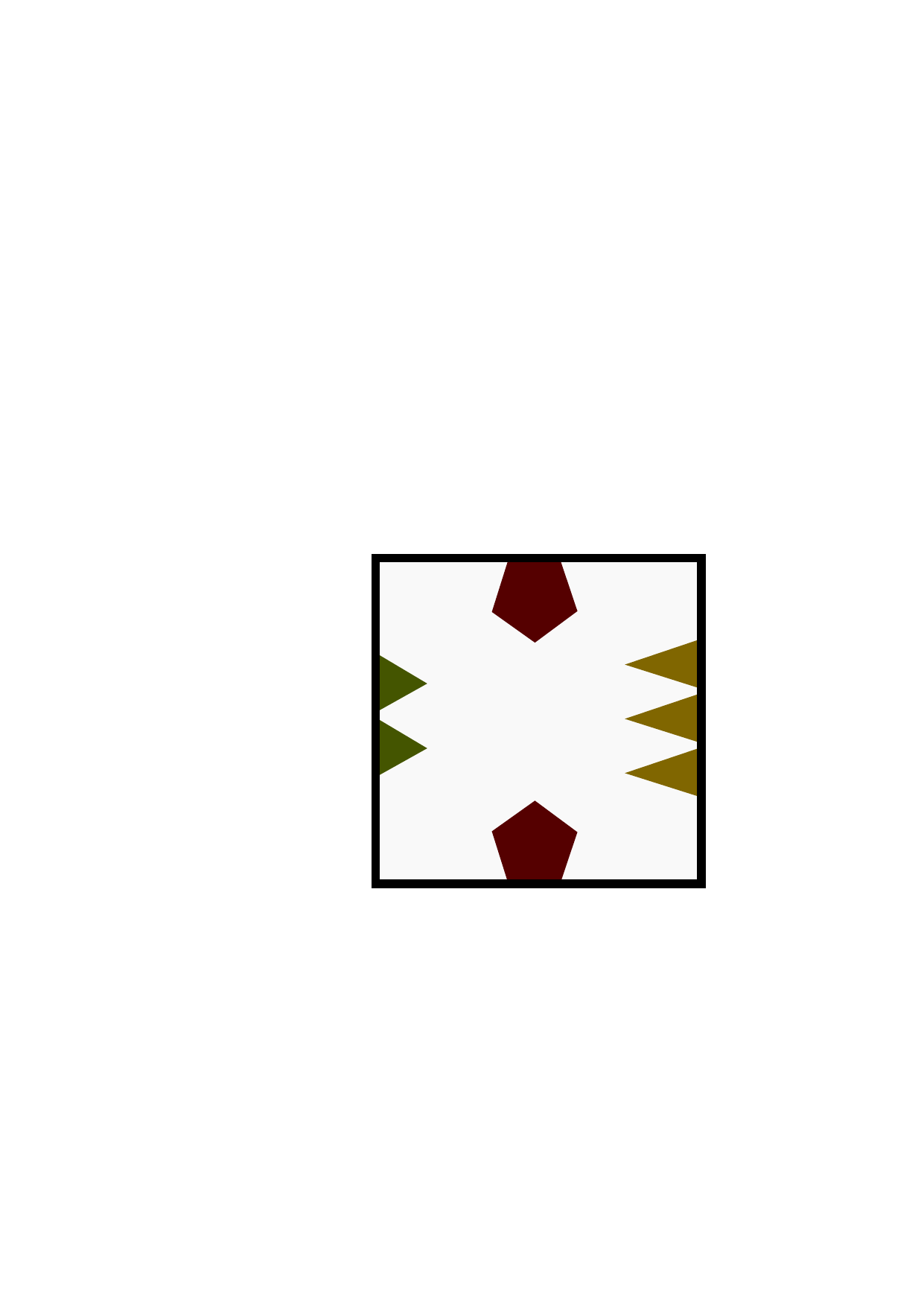}}
\newcommand\gcolor[1]{\ensuremath{\mathfrak{#1}}}
\newcommand\Dir{\ensuremath{\{N, E, S, W\}}}
\newcommand\Wires{\operatorname{Wires}}
\newcommand\Wirings{\mathcal{W}}
\newcommand\PDir{\ensuremath{\mathcal{P}(\Dir)}}
\newcommand\cacarpet{\ensuremath{K_{\infty}}}
\newcommand\bigquot[2]{\ensuremath{\left\lfloor}\frac{#1}{#2}\right\rfloor}
\newcommand\quot[2]{\ensuremath{\lfloor{#1}/{#2}\rfloor}}
\newcommand\Gates[1]{\ensuremath{\operatorname{Gates}(#1)}}
\newcommand\evalfunc[1]{\ensuremath{\bar{#1}}}
\newcommand\extractfunc{\ensuremath{\operatorname{extract}}}
\newcommand\embedfunc{\ensuremath{\operatorname{embed}}}
\newcommand\innerevalfunc[1]{\ensuremath{\widetilde{#1}}}
\newcommand\pinputs{\ensuremath{\operatorname{pred-inputs}}}
\newcommand\psymb{\ensuremath{\operatorname{pred-symbols}}}
\newcommand\decgate{\ensuremath{\operatorname{dec-gate}}}
\newcommand\decgatel{\ensuremath{\operatorname{dec-gate}_l}}
\newcommand\decgatew{\ensuremath{\operatorname{dec-gate}_w}}
\newcommand\strength{\ensuremath{\operatorname{st}}}
\newcommand\lmpos[1][]{\FBattribute[#1]{pos}} %\ensuremath{#1\cdot\operatorname{pos}}}
\newcommand\lmpar[1][]{\ensuremath{#1\cdot\operatorname{parent}}}
\newcommand\lmpwiring[1][]{\FBattribute[#1]{parent-wiring}}
\newcommand\lmpcolor[1][]{\FBattribute[#1]{parent-label}}
\newcommand\gmcolor[1][]{\FBattribute[#1]{label_{2}}}
\newcommand\gmanclocal[1][]{\FBattribute[#1]{ancestor-msg}}
\newcommand\mloc[1][]{\FBattribute[#1]{parent-label_2}}
\newcommand\mglob[1][]{\FBattribute[#1]{global}}
\newcommand\mglobb{\operatorname{global}}
\newcommand\winputset[1][]{\FBattribute[#1]{Inputs}}
\newcommand\winertset[1][]{\FBattribute[#1]{Inert}}
\newcommand\woutputset[1][]{\FBattribute[#1]{Outputs}}
\newcommand\woutputmap[1][]{\FBattribute[#1]{output-num}}
\newcommand\woutputwires[1][]{\FBattribute[#1]{outwires}}
\newcommand\wirepos[2]{#1 (\textrm{with wire #2})}
\newcommand\gatefun[1][]{\FBattribute[#1]{func}}
\newcommand\gatewiring[1][]{\FBattribute[#1]{wiring}}
\newcommand\gatecolor[1][]{
  \def\tempa{}%
  \def\tempb{#1}%
  \ifx\tempa\tempb{}
    \ensuremath\operatorname{label}
  \else
    \ensuremath{\operatorname{label}(#1)}
  \fi
}
\newcommand\frepar[1]{
  \def\tempa{}%
  \def\tempb{#1}%
  \ifx\tempa\tempb{}
    \ensuremath{\operatorname{set-parent}}
  \else
    \ensuremath{\operatorname{set-parent}[#1]}
  \fi
}
\newcommand\fdecode[2][]{
  \def\tempa{}%
  \def\tempb{#1}%
  \ifx\tempa\tempb{}
  \ensuremath{\operatorname{decode}_{#2}}
  \else
  \ensuremath{\operatorname{decode}_{#2}[#1]}
  \fi
}
\newcommand\fincr[1]{
  \def\tempa{}%
  \def\tempb{#1}%
  \ifx\tempa\tempb{}
    \ensuremath{\operatorname{incr}}
  \else
    \ensuremath{\operatorname{incr}[#1]}
  \fi
}
\newcommand\fgincr[2]{
  \def\tempa{}%
  \def\tempb{#1}%
  \ifx\tempa\tempb{}
  \ensuremath{\operatorname{incr}_{#2}}
  \else
  \ensuremath{\operatorname{incr}_{#1}[#2]}
  \fi
}
\newcommand*\finst[1]{\ensuremath{\operatorname{instantiate}_{#1}}}
\colorlet{gold}{yellow!70!black}
\colorlet{rose}{magenta!50}
\newcommand\coldot[2][]{ \tikz{\node[yshift=-.2ex, circle, fill=#2] (#1) {};} }
\newcommand\tealdot[1][]{\coldot[#1]{teal}}
\newcommand\pinkdot[1][]{\coldot[#1]{rose}}
\newcommand\golddot[1][]{\coldot[#1]{gold}}
\newcommand\gammashape{
  \node[draw] (O) at (0, 0) {};
  \node[draw, right=0 of O] {};
  \node[draw, above=0 of O] {};
  \node[fill=red, circle, scale=.2, anchor=center] at (O.center) {};
}
\newcommand\drawcacarpet[1]{ % Argument = profondeur
  \pgfmathsetmacro\level{int(#1)}
  \ifnum #1 = 0 \relax
  \filldraw[fill=gray!50] (0, 0) -- (0, 1) -- (1, 1) -- (1, 0) -- cycle;
  \else
  \pgfmathsetmacro\next{int(#1 - 1)}
  \begin{scope}[scale=.166666666]
    \foreach \x in {0, ..., 5}
    \foreach \y in {0, ..., 5}
    {
      \ifthenelse{\numexpr\(\x=3\and\y=3\)\OR\(\x=2\and\y=2\)\OR\(\x=3\AND\y=2\)\OR\(\x=2\AND\y=3\)}
      {

      }
      {
        \begin{scope}[shift={(\x, \y)}]
          \drawcacarpet{\next}
        \end{scope}
      }
    }
  \end{scope}
  \fi
}
\newcommand\drawU[1]{
  \pgfmathsetmacro\level{int(#1)}
  \ifthenelse{\numexpr\level=0}{
    \filldraw[fill=gray!50] (0, 0) -- (0, 2) -- (3, 2) -- (3, 0) -- cycle;
  }{
    \pgfmathsetmacro\nextlevel{int(\level - 1)}
    \begin{scope}[scale=.33333333333333333333]
      \foreach \x in {0, ..., 2}
      \foreach \y in {0, ..., 1}{
        \ifthenelse{\numexpr\(\x=1\)\and\(\y=1\)}
        {
        }{
          \begin{scope}[shift={(3*\x, 2*\y)}]
            \drawU{\nextlevel};
          \end{scope}
        }
      }
    \end{scope}
  }
}
\newcommand\Cause{\operatorname{Cause}}
\newcommand\complete[1]{#1^\bullet}
\newcommand\treedec[1]{\tree[\complete{#1}]}
\newcommand\support[1][]{\FBattribute[#1]{support}}
\tikzset{gate/.style={draw,circle}}
\newcommand\drawinputsW[2][]{\draw[-{Stealth}] ($ #2 +(-.7, -.3) $) .. controls ($ #2 +(-.5, -.5) $) .. #2 node[pos=-.2] {#1};}
\newcommand\drawinputNe[2][]{\draw[-{Stealth}] ($ #2 +(.3, .7) $) .. controls ($ #2 +(.5, .5) $) .. #2 node[pos=-.2] {#1};}
\newcommand\drawinputEw[2][]{\draw[-{Stealth}{Stealth}] ($ #2 +(-.7, 0) $) -- #2 node[pos=-.4] {#1};}
\newcommand\drawinputWe[2][]{\draw[-{Stealth}{Stealth}] ($ #2 +(.7, 0) $) -- #2 node[pos=-.4] {#1};}
\newcommand\drawinputE[2][]{\draw[-{Stealth}] ($ #2 +(.7, 0) $) -- #2 node[pos=-.5] {#1};}
\newcommand\drawinputW[2][]{\draw[-{Stealth}] ($ #2 +(-.7, 0) $) -- #2 node[pos=-.5] {#1};}
\newcommand\drawinputEn[2][]{\draw[-{Stealth}] ($ #2 +(.7, .3) $) .. controls ($ #2 +(.5, .5) $) .. #2 node[pos=-.2] {#1};}
\newcommand\drawinputSw[2][]{\draw[-{Stealth}] ($ #2 +(-.3, -.7) $) .. controls ($ #2 +(-.5, -.5) $) .. #2 node[pos=-.2] {#1};}
\newcommand\drawinputWs[2][]{\draw[-{Stealth}] ($ #2 +(-.7, -.4) $) .. controls ($ #2 +(-.5, -.5) $) .. #2 node[pos=-.2] {#1};}
\newcommand\drawoutputE[2][]{\draw[-{Stealth}] #2 edge node [pos=1.2] {#1} +(.9, 0);}
\newcommand\drawoutputN[2][]{\draw[-{Stealth}] #2 edge node [pos=1.2] {#1} +(0, .9);}
\newcommand\drawoutputnE[2][]{\draw[-{Stealth}] #2 .. controls +(.5, .5) .. +(.9, .1) node[pos=1.1] {#1};}
\newcommand\drawoutputnW[2][]{\draw[-{Stealth}] #2 .. controls +(-.5, .5) .. +(-.9, .1) node [pos=1.1] {#1};}
\newcommand\drawoutputNe[2][]{\draw[-{Stealth}] #2 .. controls +(.5, .5) .. +(.1, .9) node [pos=1.1] {#1};}
\newcommand\drawoutputsE[2][]{\draw[-{Stealth}] #2 .. controls +(.5, -.5) .. +(.9, -.1) node [pos=1.1] {#1};}
\newcommand\drawoutputSe[2][]{\draw[-{Stealth}] #2 .. controls +(.5, -.5) .. +(.1, -.9) node [pos=1.1] {#1};}
\tikzset{
  old inner xsep/.estore in=\oldinnerxsep,
  old inner ysep/.estore in=\oldinnerysep,
  double circle/.style 2 args={
    circle,
    old inner xsep=\pgfkeysvalueof{/pgf/inner xsep},
    old inner ysep=\pgfkeysvalueof{/pgf/inner ysep},
    /pgf/inner xsep=\oldinnerxsep+#1,
    /pgf/inner ysep=\oldinnerysep+#1,
    alias=sourcenode,
    append after command={
      let     \p1 = (sourcenode.center),
      \p2 = (sourcenode.east),
      \n1 = {\x2-\x1-#1-0.5*\pgflinewidth}
      in
      node [inner sep=0pt, draw, circle, minimum width=2*\n1,at=(\p1),#2] {}
    }
  },
  double circle/.default={2pt}{blue}
}
\newcommand\fractalify[4]{
  \pgfmathsetmacro\level{int(#1)}
  \ifthenelse{\numexpr\level=0}{
    \filldraw[fill=gray!50] (0, 0) -- (0, 1) -- (1, 1) -- (1, 0) -- cycle;
  }{
    \pgfmathsetmacro\nextlevel{\level - 1}
    \foreach \x [evaluate=\x as \expx using (\x - 1)*#3^\nextlevel] in {1, ..., #3} {
      \foreach \y [evaluate=\y as \expy using (\y-1)*#4^\nextlevel] in {1, ..., #4} {
        \arraytomacro#2[\y,\x]\cell
        \begin{scope}[shift={(\expx, \expy)}]
          \ifthenelse{\numexpr\cell=1}{
            \fractalify{\nextlevel}{#2}{#3}{#4}
          }{}
        \end{scope}
      }
    }
  }
}
\tikzset{
  singleGlue/.pic={
    \begin{scope}[scale=.2]
      \fill[fill=red] (-.4,0) -- (.4, 0) -- (.2, 1) -- (-.2, 1) -- cycle;
    \end{scope}
  }
}
\tikzset{
  doubleGlue/.pic={
    \begin{scope}[scale=.2]
      \fill[fill=black] (-.7, -1) rectangle (-.2, 0);
      \fill[fill=black] (.2, -1) rectangle (.7, 0);
    \end{scope}
  }
}
\tikzset{
  spaceinvader/.pic={
    \begin{scope}[scale=.08]
      \foreach \x/\y in {0/0, 0/1, 1/1, 1/2, 1/3, 2/0, 2/1, 2/3, 2/4, 3/1, 3/2, 3/3, 4/0, 4/1, 4/3, 4/4, 5/1, 5/2, 5/3, 6/0, 6/1}
      \filldraw[pic actions, shift={(\x, \y)}] (0, 0) rectangle (1, 1);
      \coordinate (-right-foot) at (6,0);
      \coordinate (-right-antenna) at (4,5);
      \coordinate (-left-foot) at (0,0);
    \end{scope}
  }
}
\tikzset{
  pics/posWithinK/.style args={#1/#2}{
    code={
      \begin{scope}[scale=.1, shift={(-3,-3)}]
        \draw[-] (0,0) grid (6,6);
        \fill[#2] (2,2) rectangle(4,4);
        \fill[black] #1 rectangle +(1,1);
        \coordinate(-east) at (6,3);
        \coordinate(-west) at (0,3);
      \end{scope}  
    }
  }
}
\begin{document}

\maketitle

%TODO mandatory: add short abstract of the document
\begin{abstract}
  This paper gives a (polynomial time) algorithm to decide whether a given Discrete Self-Similar Fractal Shape can be assembled in the aTAM model.

  In the positive case, the construction relies on a Self-Assembling System in the aTAM which strictly assembles a particular self-similar fractal shape, namely a variant $K^\infty$ of the Sierpinski Carpet. We prove that the aTAM we propose is correct through a novel device, \emph{self-describing circuits} which are generally useful for rigorous yet readable proofs of the behaviour of aTAMs.

  We then discuss which self-similar fractals can or cannot be strictly self-assembled in the aTAM. It turns out that the ability of iterates of the generator to pass information is crucial: either this \emph{bandwidth} is eventually sufficient in both cardinal directions and $K^\infty$ appears within the fractal pattern after some finite number of iterations, or that bandwidth remains ever insufficient in one direction and any aTAM trying to self-assemble the shape will end up with an ultimately periodic pattern.
\end{abstract}

\section{Introduction}

There is an area of particular interest at the intersection between dynamical systems, models of computation and automata, and discrete geometry. There, researchers try to determine the limits of various models of computation when faced with geometrical constraints, and to relate them with the dynamic properties of the models. This paper deals with Self-Assembling Tilings in the Abstract Tile Assembly Model (aTAM), as introduced by Winfree~\cite{winfree_design_1998}. In addition to its \emph{raison d'être} as an abstract model of DNA computing, this model enjoys an interest as an intermediary between one-dimensional and two-dimensional Cellular Automata (CAs), as well as a kind of asynchronous version of CAs and tilings.

Two features distinguish this model from more classical models such as Cellular Automata: the fact that it consumes the space it computes in, forcing a steady growth of its output; as well as its asynchronicity. The aTAM is able to simulate classical models such as Turing Machines and Cellular Automata~\cite{winfree_design_1998} by covering large rectangles in a synchronized manner. The synchronization can be wrought from the jaws of the environment's asynchronicity using so called ``Temperature 2'' systems where the local attachement of a tile to a site can depend from the concommitent presence of two tiles on neighboring sites. This synchronization can also be substitued for by a richer local geometry, such as 3D space~\cite{cook_temperature_2011}, or the presence of a variety of tiles shapes~\cite{hendricks_power_2018}.

This hints that much of the subtelty of the asynchronous and write-only nature of the aTAM reflects in its ability to assemble fine geometrical features. Since 2010~\cite{patitz_self-assembly_2010}, a particular focus has been set on so-called Discrete Self-Similar Fractal Shapes (DSSFS) as candidates for being tough to assemble in the aTAM and allowing to pinpoint the subtelty of its variants. In order to meaningfully constrain the aTAM using the geometry of the shape that is being assembled, this line of research investigates its \emph{strict} self-assembly. That is, the authors demand that in order to assemble a shape $S$, no tile is ever put outside of $S$.

This program has seen some progress, with two lines of results. The first line of results~\cite{furcy_scaled_2017, barth_scaled_2014, DBLP:journals/tcs/LathropLS09, patitz_self-assembly_2010, kautz_self-assembling_2013} looks at close variants of DSSFS, either slightly laxer and showing that they \emph{can} be obtained by aTAM system, or restricted sets of DSSFS and showing that they cannot be obtained. The second line of results~\cite{hendricks_hierarchical_2020, hendricks_hierarchical_2018-2} looks at variants of the aTAM, generally more powerful ones, showing that they can assemble (some) DSSFS.

The big question since~\cite{patitz_self-assembly_2010} has been whether the \emph{plain} model (aTAM) can self-assemble some \emph{unadultered} DSSFS. The conjecture has generally been that their geometry is too complicated for the aTAM, that is that \emph{there was no aTAM system able to uniquely and strictly self-assemble a DSSFS}. In order to characterize these geometrical constraints, several notions have been proposed, such as $\zeta$-dimension and sparsity~\cite{patitz23, hader_fractal_2021}, and ease of disconnecting~\cite{hendricks_hierarchical_2020, furcy_scaled_2017, barth_scaled_2014}. This paper answers the question differently: not only there \emph{is} a DSFSS which can be strictly self-assembled, in fact most DSSFS can indeed be self-assembled, as long as a periodic tiling by their generator is well connected.

The positive construction of this paper relies on a family of ideas from the theory of Tilings. The hierarchical structure of discrete fractals is a lot alike that of many aperiodic tilings. In trying to assemble such a hierarchical shape, one can take inspiration from authors such as Mozes or Goodman-Strauss~\cite{goodman1998matching, mozes1989tilings} who use a system with several layers of information checking each other's work in order to impose from local rules the desired global hierarchy. Like theirs, the construction uses several layers carrying the same information at different scales. This part of the construction is somewhat simpler than in tilings thanks to the \emph{seeded} nature of the assembly. In tilings again, Durand, Romashchenko and Shen show in~\cite{DBLP:journals/jcss/DurandRS12} another way to impose such a structure, by obtaining their tiling through a Fixed-Point Theorem. The approach here through \emph{self-describing circuits} is a bastardization of these two, with some tricks of its own to deal with the lack of global synchronisation in the assembly. The way it is presented, through a circuit, is a necessity to deal with the proof of a big aTAM system acting in an environment with non-trivial geometric constraints, but it can certainly be useful as a tool to express and study many constructions in the aTAM model.

The characterization of the fractal shapes which \emph{cannot} be assembled in the aTAM relies on a Pumping Lemma in the tradition of~\cite{meunier_intrinsic_2014}. This Pumping Lemma limits the amount of computation that can be done in the assembly by looking at its \emph{treewidth}, or equivalently, the size of the largest square it encircles. This is a formalization of a new limit to computing power in the aTAM, and it should apply in a variety of settings, with a \emph{je-ne-sais-quoi} of Complexity Theory.

The presentation will start in~\cref{sec:defs} with the definitions of the abstract Tile Assembly Model, as well as the statement of the Tree Pump Lemma (\cref{lem:tree_pump}) and the difference between \emph{weak} and \emph{strict} assembly of shapes. The paper then covers the basics about discrete self-similar fractal shapes, and states the main positive result,~\cref{thm:main}. Then,~\cref{sec:abstract_models} presents Self-Describing Embedded Circuits which form the main proof device for~\cref{thm:main}; they are put to use in~\cref{sec:circuit}, concluding the proof of the positive construction. Finally,~\cref{sec:limits} delimitates the fractal shapes which can be constructed through the aTAM. The dichotomy there is quite sharp: for a given shape $S$, either the technique of \cref{thm:main} can be straightforwardly adaptated to $S$ or there is no way at all to obtain it in the aTAM. \cref{thm:kill_llama} concludes the paper with a polynomial-time algorithm to distinguish between the two situations.

\section{Definitions and statement of the main results}
\label{sec:defs}

\subsection{Notations}

A variable with an arrow such as $\vec{a}$ denotes a vector of values indexed by some set $S$. Given $s \in S$, $a_s$ is the element of $\vec{a}$ associated with $s$.

For any sets $A, B$ and $S$, given $\vec{a} \in A^S$ and $\vec{b} \in B^S$, the standard notation $\vec{a} \otimes \vec{b}$ is the vector $\vec{v} \in (A \times B)^S$ defined by $\forall s \in S, v_s = (a_s, b_s)$. Likewise, given $f: A^I \mapsto A^O$ and $g: B^I \mapsto B^O$, the function $f \otimes g: (A \times B)^I \mapsto (A \times B)^O$ is defined by $(f \otimes g)(\vec{a} \otimes \vec{b}) = f(\vec{a}) \otimes g(\vec{b})$.

The unit vectors $(0, 1), (1, 0), (0, -1), (-1, 0)$ of $\mathbb{Z}^2$ are noted $N$, $E$, $S$, $W$ respectively. An oriented edge of $\mathbb{Z}^2$ is an \emph{arc}, noted $(a \to b)$; its \emph{direction} is the unit vector $b - a \in \Dir$. Given an arc $e = (a \to b)$ and a vector $v \in \mathbb{Z}^2$, $e + v = (a + v \to b + v)$. The arc out of $a \in \mathbb{Z}^2$ in direction $d \in \Dir$ is $(a \to a + d)$.

Let $P$ be a finite subset of $\mathbb{N}^2$, with $w = \max(x | (x,y) \in P)$ and $h = max(y | (x,y) \in P)$. Then for $z = (x,y) \in \mathbb{Z}^2$, $\lfloor z / P\rfloor$ is the pair $(\lfloor x / w \rfloor, \lfloor y / h \rfloor)$ and $z \bmod P$ is the pair $(x \bmod w, y \bmod h)$.

The constructions in~\cref{sec:abstract_models} and~\ref{sec:circuit} make heavy use of sets built as cartesian products. In an attempt to make their exposition more readable and distinguish the function of each component of the product, they are given in a ``record'' or ``object'' like syntax using '$\{\}$' for record construction and ``$x \cdot \operatorname{field}$'' for field access. That is, given a base set $X$ and a finite set of $k$ labels such as $L = \{\operatorname{zeroth}, \operatorname{first}, \operatorname{second}, \ldots, \operatorname{not-quite-k-th}\}$, the element $x = (x_0, x_1, x_2, \ldots, x_{k-1})$ is written $x = \{ \operatorname{zeroth} = x_0, \operatorname{first} = x_1, \ldots, \operatorname{not-quite-k-th} = x_{k-1}\}$. Symetrically, $x_0$ can be extracted from $x$ by writing $\FBattribute[x]{zeroth}$. A judicious set of labels can make this notation actually useful.

\subsection{Self-assembly and the Abstract Tile-Assembly Model}

\subsubsection{Definitions}

There follows a brief exposition of the basic definition of the aTAM. The survey by Patitz~\cite{DBLP:journals/nc/Patitz14} as well as the classical article by Winfree~\cite{winfree_design_1998} give a far less telegraphic exposition.

\begin{definition}[Wang Tile]
  Given an alphabet $\Sigma$, a \emph{Wang Tile} is an element of $\Sigma^{\Dir}$, i.e. a unit square with an element of $\Sigma$ on each of its sides. Given $w \in \Sigma^{\Dir}$ and $d \in \Dir$, $w(d)$ is referred to as the \emph{color} of the $d$ \emph{side}  of $w$.
\end{definition}

\begin{definition}[Assembly]
  Given a set $\mathcal{W}$ of Wang Tiles, an \emph{assembly} of $\mathcal{W}$ is a partial function $A: \mathbb{Z}^2 \to \mathcal{W}$. It is finite if its domain is. The notation $z \in A$ should be read as $z \in \operatorname{dom}(A)$, or equivalently ``$A(z)$ is defined''.

  The set of assemblies on $\mathcal{W}$ is noted $\Assemblies{\mathcal{W}}$.
\end{definition}

\begin{definition}[aTAM]
  Let $\Sigma$ be an alphabet, an \emph{unseeded Tile Assembly System} $T$ is a triplet
  \[
      \begin{Bmatrix}
        \operatorname{tileset} &\in& \Sigma^{\Dir}\\
        \operatorname{strength-function} &:& \Sigma \to \mathbb{Z}\\
        \operatorname{temperature} &\in& \mathbb{N}
      \end{Bmatrix}
    .\]

  A \emph{seeded Tile Assembly System} $T'$ is a quadruplet
  \[
    \begin{Bmatrix}
      \operatorname{tileset} &\in& \Sigma^{\Dir}\\
      \operatorname{strength-function} &:& \Sigma \to \mathbb{Z}\\
      \operatorname{temperature} &\in& \mathbb{N}\\
      \operatorname{seed} &\in& \Assemblies{(\FBattribute[T']{tileset})}
    \end{Bmatrix}.\]

  In both cases, the notation $\Assemblies{T}$ refers to the set of assemblies $\Assemblies{(\FBattribute[T]{tileset})}$.
\end{definition}

The crucial differences between a Tile Assembly System and a mere set of Wang Tiles are its strength function and its temperature. The strength function defines the \emph{binding strength} of an edge of an assembly.

\begin{definition}[Binding]
  Let $\Ss$ be a Tile Assembly System, and $A \in \Assemblies{\Ss}$. Let $e$ be an edge between two positions $z$, $z' = z + d$ in $A$. The \emph{binding} $b(e)$ of $e$ in $A$ is
  \begin{itemize}
  \item $0$ if $A(z)(d) \neq A(z')(-d)$, and
  \item  $\strengthFun[\Ss](g)$ if $A(z)(d) = A(z')(-d) = g$.
\end{itemize}

  Given a set $E$ of edges of $A$, the binding strength of $E$ is $b(E) = \sum_{e \in E}b(e)$.
\end{definition}

This binding strength corresponds to forces tying the assembly together in the face of thermal agitation. This thermal agitation is modeled by $\temperature[\Ss]$. Whenever an assembly has a cut with a binding strength less than the temperature, it will tend to be torn along this cut by thermal agitation. Assemblies which do not have such cuts are \emph{stable}.

\begin{definition}[Stable assembly]
  Given a Tile Assembly System $\Ss$, an assembly $A \in \Assemblies{\Ss}$ is \emph{stable} if for any cut $C$ of $A$, $b(C) \geq \temperature[\Ss]$
\end{definition}

Stability begets a definition for the dynamics of the process of self-assembly: tiles are added to an assembly as long as the resulting new assembly is stable.

\begin{definition}[Attachment, Assembly Sequence]
  Given a Tile Assembly System $\Ss$ and an assembly $A \in \Assemblies{\Ss}$, an \emph{attachment candidate} is a pair $t@z$, with $t \in \tileset[\Ss]$ and $z \in \mathbb{Z}^2$. It is \emph{valid} if $z \notin \dom A$ and \emph{stable} if $A' = A \cup \{z \mapsto t\}$ is stable. If it is both valid and stable, it is an \emph{attachment}, noted $A \xrightarrow{t@z} A'$

  An assembly $A'$ \emph{follows} from an assembly $A$ if there is a sequence of attachments $A_0 = A \xrightarrow{t_1@z_1} A_1 \xrightarrow{t_2@z_2} \ldots \xrightarrow{t_{k}@z_{k}} A_k = A'$. Such a sequence is a \emph{assembly sequence} from $A$ to $A'$. It is noted $\alpha = A \rightarrow_\Ss A'$, and $A'$ is noted $\lim \alpha$. If $\alpha$ is infinite, then $\lim \alpha = \bigcup_i A_i$.

  The set of assembly sequences following from a given assembly $A$ is noted $\Seqs{\Ss, A}$, and $\Prods{\Ss, A} = \{ \lim \alpha | \alpha \in \Seqs{\Ss, A} \}$.

  An assembly is terminal if $\Prods{\Ss, A} = \{A\}$. The set of terminal assemblies following from $A$ is $\TerminalProds{\Ss, A}$.
\end{definition}

Given a seeded Tile Assembly System, its \emph{productions} are the assemblies following from its seed.

\begin{definition}[Production, Terminal Production]
  Given a seeded Tile Assembly System $\Ss$, a \emph{production} is an assembly which follows from $\seed[\Ss]$.

  The set of productions of $\mathcal{S}$ is noted $\Prods{\Ss}$, and the set of terminal productions is noted $\TerminalProds{\Ss}$. Likewise, $\Seqs{\Ss} = \Seqs{\Ss, \seed[\Ss]}$.
\end{definition}

Note that a production may be infinite but not terminal.

Given a subset $X \subset \mathbb{Z}^2$, the restriction of an assembly sequence $\alpha$ to $X$, noted $\alpha_{|X}$ is the sequence of attachments of $\alpha$ taking place within $X$. Likewise, for an assembly $A$, $A_{|X}$ is the restriction of $A$ to the positions within $X$.

\subsubsection{The Tree Pump Lemma}

No paper about self-assembly would be complete without a ``pump or die'' lemma in the style of \cite{meunier_intrinsic_2014}. This one examines the case of skinny productions, that is those who do not encircle any square larger than $N$ for some fixed $N$. The crucial property which makes them simple(-ish) is their bounded treewidth. The interest of this lemma ---especially for eight year olds--- is that it lets the user direct the flow of the firehose.

\begin{restatable}[Tree Pump]{lemma}{treepump}
  \label{lem:tree_pump}
  For any aTAM system $\mathcal{S}$ with $\seed[\Ss]$ finite and connected, define the following sets of assemblies:
  \begin{itemize}
  \item for any integer $m$, $C_{m}[\mathcal{S}]$ is the set of assemblies of $\mathcal{S}$ which encircle an $m \times m$ square;
  \item for any real $k$ and vector $\vec{d}$, $B_{k,\vec{d}}[\mathcal{S}]$ is the set of assemblies of $\mathcal{S}$ which do not cover any position $\vec{p}$ such that $\vec{p} \cdot \vec{d} > k + |\seed[\Ss]|$
  \item for any vector $\vec{d}$, $P_{\vec{d}}[\mathcal{S}]$ is the set of \emph{ultimately periodic} assemblies of $\mathcal{S}$ such that there is a vector $\vec{p}$ with $|\vec{p} \cdot \vec{d}| > 0$ and a non-empty sub-assembly $a \subseteq A$ such that $a + \vec{p} \subseteq a$.
  \end{itemize}

  Then, there is a function $F: \mathbb{N} \times \mathbb{N} \to \mathbb{N}$ such that for aTAM system $\mathcal{S}$ with $n$ tiles and a 1-tile seed, integer $m$ and unit vector $\vec{d}$ of $\mathbb{R}^2$,
  \[ \TerminalProds{\mathcal{S}} \cap (C_m[\mathcal{S}] \cup B_{F(n,m),\vec{d}}[\mathcal{S}] \cup P_{\vec{d}}[\mathcal{S}]) \neq \emptyset. \]
\end{restatable}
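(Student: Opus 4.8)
The plan is to run a three-way case analysis steered by the three sets, collapsing everything onto a single pigeonhole argument over bounded-size cuts. First I would dispose of the two easy outcomes. If some terminal production encircles an $m\times m$ square it lies in $C_m[\Ss]$ and we are done; likewise if some terminal production covers no position past $F(n,m)+|\seed[\Ss]|$ in the direction $\vec d$ it lies in $B_{F(n,m),\vec d}[\Ss]$ and we are done. So assume neither occurs: every terminal production (and at least one exists, since any assembly sequence extends to a maximal, hence terminal, one, possibly infinite) is \emph{skinny}, encircling no $m\times m$ square, yet \emph{long}, reaching past $F(n,m)+|\seed[\Ss]|$ in the direction $\vec d$. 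The goal is then to manufacture a terminal production in $P_{\vec d}[\Ss]$, so that $\TerminalProds{\Ss}\cap P_{\vec d}[\Ss]\neq\emptyset$.

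Fix such a terminal production $A$. The key structural input is that skinniness bounds treewidth: being a connected subgraph of the grid that encircles no $m\times m$ square, $A$ admits a tree decomposition whose bags have size bounded by a function of $m$ (the largest encircled square controlling the width, as previewed in the discussion of the lemma). Orienting this decomposition along $\vec d$ produces a sequence of \emph{cuts} $\kappa_0,\kappa_1,\dots$ of bounded size, ordered by increasing $\vec d$-coordinate, each separating the part of $A$ on the seed side from the part lying further in direction $\vec d$. To each cut I attach a bounded-size \emph{signature}: the tiles crossing it, the direction in which causality flows across it (which side each crossing tile attached from), and the glues and bindings realised along it. Because the bags are bounded and $\Ss$ has $n$ tiles, the number of distinct signatures is bounded by some value, which I take to be $F(n,m)$.

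Now comes the pigeonhole. Since $A$ reaches past $F(n,m)+|\seed[\Ss]|$ in direction $\vec d$, two cuts $\kappa_i$ and $\kappa_j$ with $i<j$ carry the same signature; let $\vec p$ be the translation taking $\kappa_i$ to $\kappa_j$, so that $\vec p\cdot\vec d>0$. The \emph{slab} $a$ between the two cuts is then a self-contained gadget turning the interface recorded at $\kappa_i$ into an identical copy of itself at $\kappa_j$. I would build the desired production by keeping the near side of $\kappa_i$ and gluing on infinitely many translated copies $a+t\vec p$, each presenting to its successor exactly the interface its predecessor consumed. Matching signatures make the glues line up, so the result is stable; replaying the sub-sequence that built $a$ once per copy (translated by $t\vec p$ and respecting the recorded causal directions) exhibits a valid assembly sequence from the seed, so the pumped assembly is a genuine production. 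Completing it to a terminal assembly preserves the non-empty periodic sub-assembly $a$ with $a+\vec p\subseteq a$, placing it in $P_{\vec d}[\Ss]$ and closing the argument.

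The hard part will be the interaction between the static and the dynamic structure in this pumping step, not the pigeonhole itself. Bounding treewidth by the largest encircled square is a planar-graph fact that must be made precise, but the real work is in the signature: it must capture enough of the causal and binding interface along a cut that the slab can be repeated indefinitely while keeping every intermediate assembly both stable and producible. In particular I expect the delicate point to be ensuring the chosen cuts are oriented so that causality flows \emph{forward} along $\vec d$ across them — exactly the ``directing the firehose'' the statement alludes to — so that each repeated copy of $a$ can be grown from its predecessor rather than from tiles that, in the original $A$, lay on the far side. Getting this bookkeeping right is where the difficulty lies.
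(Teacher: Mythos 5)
Your skeleton up to the pigeonhole is sound and in fact matches the paper's: skinniness bounds the connected treewidth of productions (\cref{lem:treedec}), which yields bounded-width windows ordered along $\vec{d}$, and a pigeonhole on window movies produces two windows related by a translation $\vec{p}$ with $\vec{p}\cdot\vec{d}>0$ (\cref{lem:useless}). The fatal gap is the sentence ``replaying the sub-sequence that built $a$ once per copy \ldots exhibits a valid assembly sequence''. In $\mathbb{Z}^2$ this replay can simply be geometrically impossible: the slab between the two windows is only a graph-theoretic region, and its translates $a+t\vec{p}$ can collide with the near side or with earlier copies (picture a skinny production that curls back on itself; pumping the curled branch makes it pierce the rest of the assembly). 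Matching movies guarantee that glues line up across the interface, but nothing prevents overlap elsewhere, and once an overlap occurs the ``replayed'' attachments are not attachments at all. This is not bookkeeping to be fixed in the signature; it is the central difficulty, and the paper's entire apparatus of free assemblies on abstract assembly supports (where translated copies cannot overlap, only fail to be adjacent), of fizziness, and of the flatness lemma (sequences of maximal fizziness embed into $\mathbb{Z}^2$) exists to repair precisely this step.

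The second gap is your last sentence, ``completing it to a terminal assembly''. The paper explicitly explains why ``pump, then complete'' fails: the completion adds new growth which may itself need pumping, and among the (ordinal) extensions there need not exist any canonical, maximally fizzy one to iterate on; so the surgery cannot be performed once on a fixed terminal production. Moreover $P_{\vec{d}}[\Ss]$ asks for an \emph{ultimately periodic} terminal assembly, and an arbitrary completion of your pumped assembly has no reason to be ultimately periodic. The paper's resolution reverses the quantifiers: rather than pumping a given terminal production, it constructs a \emph{straight} sequence --- one that is already a fixed point of the pumping operation of \cref{lem:celeri_branche} --- by alternating ``extend to a terminal production'' with ``re-embed into $\mathbb{Z}^2$ and recover'' (\cref{lem:crash_recovery}), and shows this iteration stabilizes because straight sequences of bounded treewidth are finite in number (\cref{lem:finite_choice}) while fizziness strictly increases otherwise. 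For the limiting straight sequence, any branch long enough to contain two equal movies is automatically periodic, which is what places its terminal production in $P_{\vec{d}}[\Ss]$. Your argument needs machinery of this kind (or a genuine substitute) to close both gaps.
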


\begin{figure}[h]
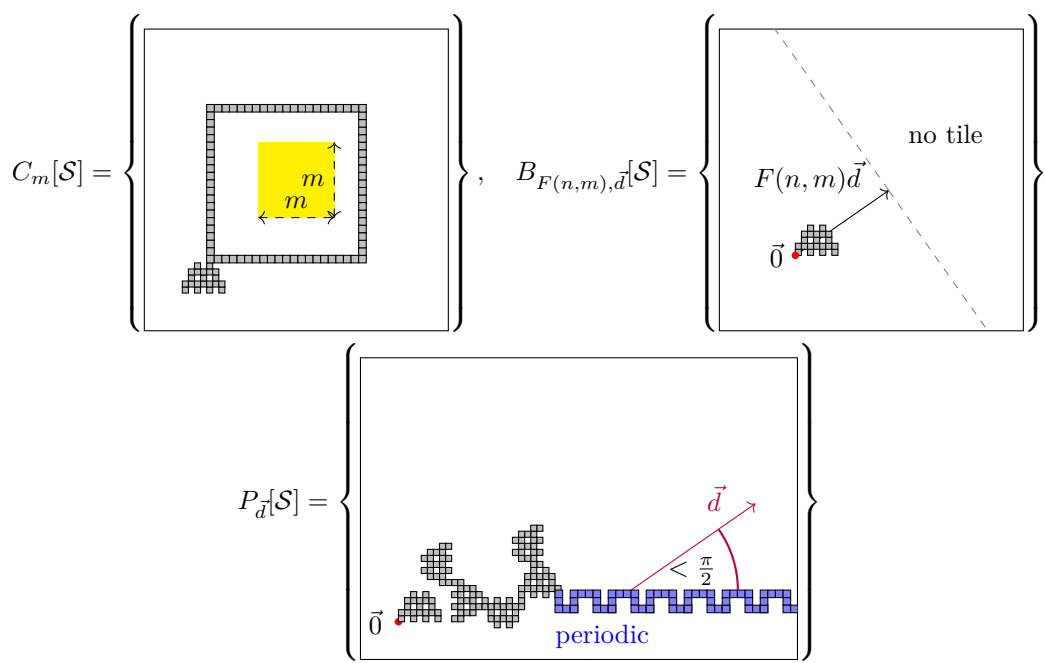

  \centering
    \tikzset{
      chemin/.pic={
        \begin{scope}[scale=.1]
          \foreach \x/\y in {0/0, 1/0, 2/0, 2/1, 2/2, 3/0, 1/2, 4/0, 4/1, 4/1}
          \path[draw, fill=gray!50!white] (\x, \y) -- ++(1,0) -- ++(0, 1) -- ++(-1,0) -- cycle;
          \coordinate (-bend) at (4, 0);
        \end{scope}
      }
    }

    \tikzset{
      u/.pic={
        \filldraw[fill=blue!50!white] (0,0) rectangle (.1, .1);
        \filldraw[fill=blue!50!white] (0,-.1) rectangle (.1, 0);
        \filldraw[fill=blue!50!white] (0,-.2) rectangle (.1, -.1);
        \filldraw[fill=blue!50!white] (.1,-.2) rectangle (.2, -.1);
        \filldraw[fill=blue!50!white] (.2,-.2) rectangle (.3, -.1);
        \filldraw[fill=blue!50!white] (.2,-.2) rectangle (.3, -.1);
        \filldraw[fill=blue!50!white] (.2,-.1) rectangle (.3, 0);
        \filldraw[fill=blue!50!white] (.2,0) rectangle (.3, .1);
        \filldraw[fill=blue!50!white] (.3,0) rectangle (.4, .1);
        \filldraw[fill=blue!50!white] (.4,0) rectangle (.5, .1);
      }
    }
    
    \tikzset{
      unbounded/.pic={
        \begin{scope}[scale=.5, shift={(-3, -3)}]
          \clip[draw] (-1, -1) rectangle (10.5,7);
          \node[fill=red, circle, inner sep=1pt] (origin) at (0, 0) {};
          \node[left=.1em of origin] {$\vec{0}$};
          \coordinate (path_dest) at (5.4, 2);
          \coordinate (foot) at (1.4, 0);
          \pic[fill=gray!50!white] (invader) at (0, 0) {spaceinvader};
          \pic[fill=gray!50!white, yscale=-1, rotate=-90] (invader2) at (foot) {spaceinvader};
          \pic[fill=gray!50!white, yscale=-1] (invader3) at (invader2-right-antenna) {spaceinvader};
          \pic[fill=gray!50!white] (invader4) at (invader3-right-foot) {spaceinvader};
          \pic[fill=gray!50!white, rotate=90] (invader5) at (invader4-right-antenna) {spaceinvader};
          \pic[fill=gray!50!white, rotate=90] (invader6) at (invader2-right-foot) {spaceinvader};
          
          % \path (foot) -- (path_dest) foreach \t [count=\ti] in {0, .2, ..., 1} {pic[pos=\t]  (chemin-\ti) {chemin}};
          \path (invader4-right-foot) -- +(10, 0) foreach \t in {0, .1, ..., 1} { pic[pos=\t] {u}};
          \node[blue, inner sep=0, fill=white, below=10pt of invader4-right-foot, anchor=north west] {periodic};

          \coordinate[shift={(1, .1)}] (a) at (invader4-right-foot);
          \coordinate[shift={(35:2)}] (b) at (a);
          \coordinate[shift={(1,0)}] (c) at (a);
          \path[purple, ->] (a) edge node[pos=.8, auto] {$\vec{d}$} (b); % ++(35:5);
          % \path[blue, ->] (a) edge node[pos=.8, auto] {$\vec{p}$} ++(0:4);
          \pic[pic text={$< \frac{\pi}{2}$}, draw=purple, thick, angle radius=4em] {angle = c--a--b};
        \end{scope}
      }      
    }

    \tikzset{
      bigCycle/.pic={
        \begin{scope}[scale=.5, shift={(-4, -4)}]
          \clip[draw] (-1, -1) rectangle (7,7);
          \pic[fill=gray!50!white] (invader) at (0, 0) {spaceinvader};
          \path [draw=none, fill=yellow] (2,2) rectangle (4,4);
          \path[<->, dashed] (2,2) edge node[midway, above] {$m$} (4,2);
          \path[<->, dashed] (4,2) edge node[midway, left] {$m$} (4,4);
          \path (invader-right-antenna) -- ++(4, 0) 
             foreach \t in {0, .05, ..., 1} { pic[pos=\t] {code={\filldraw[thin,fill=gray!50!white] (0,0) rectangle (.1, .1);}}}
             -- ++(0, 4)
             foreach \t in {0, .05, ..., 1} { pic[pos=\t] {code={\filldraw[thin,fill=gray!50!white] (0,0) rectangle (.1, .1);}}}
             -- ++(-4, 0)
             foreach \t in {0, .05, ..., 1} { pic[pos=\t] {code={\filldraw[thin,fill=gray!50!white] (0,0) rectangle (.1, .1);}}}
             -- cycle
             foreach \t in {0, .05, ..., 1} { pic[pos=\t] {code={\filldraw[thin,fill=gray!50!white] (0,0) rectangle (.1, .1);}}};
        \end{scope}
      }
    }

    \tikzset{
      bounded/.pic={
        \begin{scope}[scale=.5]%, shift={(-1, -1)}]
          \clip[draw] (-2, -2) rectangle (6, 6);
          \coordinate (origin) at (0, 0);
          \coordinate (d_circle) at (35:3);
          \coordinate (d_oppcircle) at (35:-2.5);
          \path[->] (origin) edge node[pos=.8, auto] {$F(n, m) \vec{d}$} (d_circle);
          \pic[fill=gray!50!white] (invader) at (0, 0) {spaceinvader};
          \node[fill=red, circle, inner sep=1pt] at (origin) {};
          \node[left=.1em of origin] {$\vec{0}$};
          \path[draw, gray, dashed] (d_circle) -- +(125:10);
          \path[draw, gray, dashed] (d_circle) -- +(125:-10);
          \fill[pattern=checkerboard, pattern color=red!50, path fading=east, rotate around={35:(d_circle)}] (d_circle) +(0, -10) rectangle +(10, 10);
          \path (d_circle) -- +(1.5,1.5) node [fill=white, rounded corners=2pt] {no tile};
        \end{scope}
      }
    }

    \begin{eqnarray*}
      &C_{m}[\mathcal{S}] = \left\{\tikz[baseline=(current bounding box.center)]{\pic{bigCycle};}\right\},\quad
      B_{F(n,m),\vec{d}}[\mathcal{S}] = \left\{\tikz[baseline=(current bounding box.center)]{\pic{bounded};}\right\}&\\
      &P_{\vec{d}}[\mathcal{S}] = \left\{\tikz[baseline=(current bounding box.center)]{\pic{unbounded};}\right\}&\\
      %&\TerminalProds{S} \cap (C_{n,m} \cup B_{n,m} \cup P_{n,m}) \neq \emptyset&\\
    \end{eqnarray*}
    \caption{The three sets defined by~\cref{lem:tree_pump}: $C_{m}[\mathcal{S}]$ is the assemblies which encircle an $m \times m$ square, $B_{F(n, m), \vec{d}}[\mathcal{S}]$ is the set of assemblies which do not reach further than $F(n,m)$ in direction $\vec{d}$, and $P_{\vec{d}}[\mathcal{S}]$ is the assemblies which contain a periodic path with period $\vec{p}$ such that $\vec{p}\cdot\vec{d} > 0$.}
\label{fig:tree_pump}
\end{figure}

The Tree Pump Lemma states that in order to do a meaningful amount of computation, a self-assembling system with $n$ tiles needs to encircle large squares ---say, of size $m$, thus hitting $C_{m}[\Ss]$. If it does not, then its productions look very much like trees drawn on $\mathbb{Z}^2$ with a $m$-cell wide brush. A branch of that tree then behaves like a finite automaton. If that automaton stops, the assembly goes no further than $F(n, m)$ in any given direction $\vec{d}$, which gives a final production in $B_{F(n,m), \vec{d}}[\Ss]$. If not, these long branches must have an ultimately periodic behavior which gives a final production in $P_{\vec{d}}[\Ss]$.

% The proof proceeds in two parts, all along assuming that there are no $m \times m$ squares encircled by a production of $\mathcal{S}$: first the tree-like nature of the productions of $\mathcal{S}$ is formalized through a \emph{tree decomposition}. Then using the Window Movie Lemma~\cite{todo} on a long enough branch of that tree decomposition yields the desired ultimately periodic path.

The full proof is given in~\cref{apx:tree_pump}, as it involves a fair few ancillary definitions.

% \begin{proof}
%   Let $n, m \in \mathbb{N}^2$. Let $N = m^4 \cdot n$, $N$ is greater than the number of sequences of additions of elements of an $n$ element set into distinct points of an $m \times m$ square.
% \end{proof}

\subsection{Strict versus Weak Assembly of shapes}

\begin{definition}[Strict Assembly]
  An aTAM system $\Ss$ \emph{strictly self-assembles} a set $X \subset \Z^2$ if for all terminal production $\mathcal{P} \in \TerminalProds{\Ss}$, the domain of $\mathcal{P}$ is $X$.
\end{definition}

\begin{definition}[Weak Assembly]
  An aTAM system $\Ss$ \emph{weakly self-assembles} a set $X \subset \Z^2$ if there is $Y \subset \tileset[\Ss]$ such that for all terminal production $\mathcal{P} \in \TerminalProds{\Ss}$ and any position $z \in \mathbb{Z}^2$, $\mathcal{P}(z) \in Y \Leftrightarrow z \in X$.
\end{definition}

Weak and strict assembly are called thus because it is generally easier to weakly assemble a given shape than to strictly assemble it. Indeed, when weakly self-assembling a given shape $S$, it is possible to use positions in $\Z^2 \setminus S$ for computation. In strict self-assembly, doing so is impossible.

\begin{lemma}[Folklore]
  Any shape that can be strictly assembled can also be weakly assembled.
\end{lemma}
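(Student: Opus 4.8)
The plan is to reuse the strictly self-assembling system verbatim and let the distinguished marker set be the entire tileset. Suppose $\mathcal{S}$ strictly self-assembles $X$, so that every terminal production $\mathcal{P} \in \TerminalProds{\mathcal{S}}$ satisfies $\dom \mathcal{P} = X$. I would take as witness the set $Y = \tileset[\mathcal{S}]$ and claim that $\mathcal{S}$, together with this choice of $Y$, meets the definition of weak self-assembly of $X$. No new tiles, glues, or seed need be introduced.

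The single thing to verify is the biconditional $\mathcal{P}(z) \in Y \Leftrightarrow z \in X$ for every terminal production $\mathcal{P}$ and every position $z \in \Z^2$. Under the partial-function reading of $\mathcal{P}$, the membership $\mathcal{P}(z) \in Y$ holds exactly when $\mathcal{P}(z)$ is defined, that is, when $z \in \dom \mathcal{P}$; since $Y$ is chosen to contain every tile of the system, membership in $Y$ imposes no constraint beyond definedness. Strict assembly gives $\dom \mathcal{P} = X$, so $\mathcal{P}(z) \in Y \Leftrightarrow z \in \dom \mathcal{P} \Leftrightarrow z \in X$, which is precisely what weak assembly demands.

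There is no genuine obstacle here: the entire content of the lemma is the observation that ``occupied by some tile'' is itself a legitimate weak marker, so the strict domain condition translates directly into a weak marking condition. The only point I would state explicitly is the convention for interpreting $\mathcal{P}(z) \in Y$ when $z \notin \dom \mathcal{P}$, namely that it is read as false, consistent with $\mathcal{P}$ being a partial map. It is exactly this convention that makes both directions of the biconditional hold, and it is worth flagging so that the reader does not wonder whether undefined positions require separate treatment.
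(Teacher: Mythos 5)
Your proposal is correct and matches the paper's proof exactly: the paper also sets $Y$ equal to the full tileset in the definition of weak assembly. The extra verification you spell out (the partial-function convention making $\mathcal{P}(z) \in Y$ equivalent to $z \in \dom\mathcal{P}$) is precisely the content the paper leaves implicit.
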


\begin{proof}
  Set $Y = \tileset[\Ss]$ in the definition of weak assembly.
\end{proof}

\subsection{Discrete Self-Similar Fractal Shapes}

Fractals are usually defined in some continuous space such as $\mathbb{R}^2$ or $\mathbb{C}^2$. In the context of self-assembly, one needs to work with \emph{discrete} fractals, that is, subsets of $\Z^2$ which are self-similar. In this paper, discrete self-similar fractal shapes are defined as fixed points of some 2 dimensional substitution. In~\cite{patitz_self-assembly_2010}, self-similar fractal shapes are defined as the union of an infinite hierarchy of shapes. The following definition is identical, as long as the generator contains $(0,0)$. Explicitely defining a geometrical substitution will help, as that substitution can guide constructions happening on the fractal. If the generator does not contain $(0, 0)$, the classical definition can be recovered by iterating the substitution associated with $G \cup (0, 0)$ to reach the fix-point, then iterating once the substitution associated with $G$.

The substitution $\sigma_G$ based on a finite pattern $G$ replaces each pixel in a pattern $X$ with a copy of $G$.

\begin{definition}[Rectangular substitution]
  Let $G$ be a finite subset of $\mathbb{N}^2$. Let $w = \max(\{x | (x,y) \in G\})$ and $h = \max(\{y | (x,y) \in G\})$. The substitution $\sigma_G: \mathcal{P}(\mathbb{N}^2) \to \mathcal{P}(\mathbb{N}^2)$ associated with $G$ is defined by:
  \[ \forall X \subset \mathbb{N}^2, \sigma_G(X) = \{ p \in \mathbb{N}^2 | \quot{p}{G} \in X \wedge p \bmod G \in G \} \]
\end{definition}

Using substitution allows manipulating the recursive definition of DSSF with richer data than one bit per position.

\begin{definition}[Colored substitution]
  Let $X$ be an alphabet and $G$ a finite subset of $\mathbb{N}^2$; given for each $x \in X$ a coloring $C(x) \in X^G$, the substitution $\sigma_C$ associated with $C$ is defined for each partial coloring $Y: \mathbb{N}^2 \dashrightarrow X$ by:
  \[
  \begin{cases}
    \sigma_C(Y): \mathbb{N}^2 \dashrightarrow X\\
    \sigma_C(Y): \vec{z} \mapsto C(Y(\quot{z}{G}))(z \bmod G),\\
  \end{cases}
  \]

  Where $\sigma_C(Y)(\vec{z})$ is defined whenever $Y(\quot{z}{G})$ is defined and $z \bmod G \in G$.
\end{definition}

\begin{figure}
  \centering
  \begin{tikzpicture}
    \input{tikz/colored_subst}
  \end{tikzpicture}
  \caption{A colored substitution on a 3-colored alphabet $\Sigma = \{ \tealdot, \pinkdot, \golddot \}$ is defined from a shape $G = \tikz{\gammashape}$ and a coloring of $G$ for each color of $\Sigma$. It can then be applied to any colored shape (right).}
  \label{fig:colored_subst}
\end{figure}

\begin{definition}[Self-Similar Fractal Shape]
  Let $G$ be a finite subset of $\mathbb{N}^2$ with $(0, 0) \in G$. The discrete self-similar fractal shape with generator $G$ is the following subset of $\mathbb{N}^2$:
  \[ G^{\infty} = \bigcup_{i \in \mathbb{N}} \sigma_G^i(\{(0,0)\}). \]
  The $i$-th step of the fractal is $G^i = \sigma_G^i(\{(0, 0)\}$.
\end{definition}

One self-similar fractal shape will be of particular interest to us in this paper, wich we name \emph{Sierpinski's Cacarpet}. It is the self-similar fractal shape $\cacarpet$ with generator \(K = \{0,\ldots, 5\}^2 \setminus \{2, 3\}^2\), represented on figure~\ref{fig:cacarpet}. We note $\kappa = \sigma_K$ the associated substitution.

\begin{figure}
  \centering
  \begin{tikzpicture}
    \input{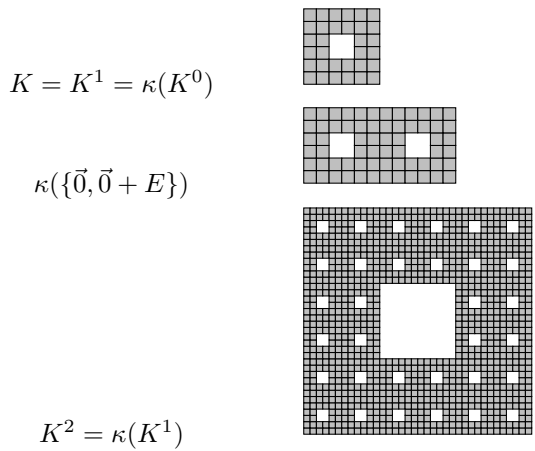}
  \end{tikzpicture}
  \caption{The substitution $\kappa$ of the Sierpinski's Cacarpet.}
  \label{fig:cacarpet}
\end{figure}

% Discrete Self-Similar Fractal Shapes have a lot of regularity. In particular, given a large enough ``patch'' of $G^\infty$, it is possible to find copies of $G^k$ for a given $k$ around it. The first kind of large patches are squares centered on  an element of $G^\infty$.

% \begin{lemma}
%   Let $G$ be a finite subset of $\mathbb{N}^2$, of width $w$ and height $h$, and $n = \min(w, h)$. Let $\vec{z} \in G^\infty$. Then the square of size $2 n^k + 1$ centered on $\vec{z}$ contains a copy of $G^k$.
% \end{lemma}

% \begin{proof}
%   For $\vec{z}$ to be in $G^\infty$, it has to be in a copy of $G^k$. That copy is in a square of size $n^k$ which contains $\vec{z}$. That square is included in the square of size $2 n^k + 1$ centered on $\vec{z}$.
% \end{proof}

% The second kind of large patches which will turn out to be useful are sectors limited by two diverging lines.

% \begin{lemma}
%   Let $G$ be a finite subset of $\mathbb{N}^2$, let $\vec{u}, \vec{v}$ be two non-colinear vectors $\mathbb{N}^2$. Let $p \in \mathbb{N}^2$, and $S$ be the sector spanning from $p$ between directions $\vec{u}$ and $\vec{v}$. Assume $G^\infty \cap S \neq \emptyset$; then $G^\infty \cap S$ contains copies of $G^k$ for all $k$.
% \end{lemma}

% \begin{proof}
  
% \end{proof}

% \subsection{Patitz' conjecture}

% \begin{conjecture}
%   No DSSF strictly self-assemble in the aTAM. 
% \end{conjecture}

\subsection{The Sierpinski Cacarpet can be Strictly Self-Assembled}

Until this paper, in order to assemble a DSSF $\mathcal{S}$ in the aTAM, one would typically embed counters into $\mathcal{S}$, or rather some small superset of it \cite{patitz_self-assembly_2010, lutz_approximate_2012, kautz_self-assembling_2013}. Strict self-assembly of a DSSFS can be obtained in another way:
\begin{itemize}
\item first, choose a \emph{suitable} DSSF to assemble, namely the Sierpinski Cacarpet $K^\infty$
\item observe that $K^\infty$ is the fixed-point of a discrete substitution $\kappa$,
\item in consequence, define a hierarchy of systems: a \emph{self-describing circuit} $C_\square$, which entails an \emph{Locally Deterministic Oriented Pattern} $P_\square$, whose tiles make a temperature $2$ Tile Assembly System $S_\square$ in the abstract Tile Assembly Model which uniquely assembles $P_\square$.
\end{itemize}

\emph{In fine}, the correctness of $S_{\square}$ follows thus:
\begin{itemize}
\item $C_{\square}$ is defined as a fixed-point of $\kappa$, and thus has $K^\infty$ as its support,
\item the structure of $C_{\square}$ allows it to be \emph{evaluated}, which yields a pattern $P_{\square}$, with the same support,
\item moreover, $C_{\square}$ is \emph{self-describing} (Definition~\ref{def:self_describing} and Theorem~\ref{thm:cacarpet_circuit}), which gives $P_{\square}$ its \emph{local determinism} (Definition~\ref{def:loc_det_pat} and Lemma~\ref{lem:self-descr-local-det})
\item the set of tiles in $P_{\kappa}$ defines an aTAM System $S_\square$; since $P_{\square}$ is locally deterministic, the unique final production of $S_{\square}$ is $P_{\square}$ (Lemma~\ref{lem:det-pat-atam}).
\end{itemize}

\begin{theorem}
  \label{thm:main}
  There is an aTAM system at temperature 2 which self-assembles the DSSF pattern $K^\infty$.
\end{theorem}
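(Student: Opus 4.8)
The plan is to avoid reasoning about the tile set directly and instead descend a short tower of increasingly concrete objects, proving correctness once at the top and transporting it downward. At the bottom I want a temperature-$2$ system $S_{\square}$ whose tile set is simply the set of distinct tiles occurring in a fixed \emph{oriented pattern} $P_{\square}$, that is, a coloring of $\mathbb{Z}^2$ with domain exactly $K^\infty$. The whole difficulty then reduces to producing a $P_{\square}$ that, first, has support exactly $K^\infty$ and, second, is \emph{locally deterministic}, so that the generic lemma turning a locally deterministic pattern into an aTAM with that pattern as its unique terminal production applies and finishes the proof.

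To build such a $P_{\square}$ I would not write the colors by hand but define them as the evaluation of a \emph{self-describing circuit} $C_{\square}$. The idea, borrowed from the layered tilings of Mozes and Goodman--Strauss and from the fixed-point tilings of Durand--Romashchenko--Shen, is that $C_{\square}$ is forced to be a fixed point of the geometric substitution $\kappa$: each $6^n \times 6^n$ super-block of $C_{\square}$ carries, in a sub-circuit, an encoding of the blueprint of $C_{\square}$ itself, which it uses both to decide where its own $2\times2$ holes (and recursively the holes of its children) must lie, and to hand the same blueprint down to those children. Because the support of a fixed point of $\kappa$ is forced to be a fixed point of $\sigma_K$, any honest solution automatically has support $\bigcup_i \kappa^i(\{(0,0)\}) = K^\infty$; this yields the support requirement essentially for free, and is the first thing I would verify once $C_{\square}$ is pinned down.

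With $C_{\square}$ in hand, the next step is to \emph{evaluate} it: the wires and gates of the circuit are laid out inside $K^\infty$, and running them produces the oriented pattern $P_{\square}$, with the same support. Local determinism of $P_{\square}$ is then meant to follow directly from the self-describing property: at each site the two incoming ``input'' sides of a tile already carry enough committed information to force a unique tile, which is exactly the cooperative, temperature-$2$ attachment condition. Once local determinism is established, extracting the tiles of $P_{\square}$ and invoking the pattern-to-aTAM lemma produces the desired $S_{\square}$ whose unique terminal production is $P_{\square}$, completing the argument.

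The main obstacle, I expect, is establishing the self-describing property itself, i.e.\ proving that a consistent fixed point $C_{\square}$ exists and that its evaluation is genuinely forced locally. Two tensions must be resolved simultaneously. First, the geometry: the wires that ferry the blueprint from a block to its sub-blocks must be routed entirely within $K^\infty$, threading around the $2\times2$ holes at every scale, and one must check that the \emph{bandwidth} through the generator is sufficient in both cardinal directions so that no information is lost as the scale grows. Second, the asynchrony: unlike a tiling, the assembly grows in an arbitrary order with no global clock, so the orientation of the circuit must be arranged so that every tile's defining inputs are always present before it attaches --- which is precisely what local determinism encodes and what the final lemma consumes. Making the routing and the orientation coexist coherently at every level of the hierarchy is the crux; the remainder is bookkeeping.
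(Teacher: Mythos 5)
Your proposal follows exactly the paper's own architecture: a self-describing circuit $C_{\square}$ defined as a fixed point of the substitution $\kappa$ (giving support $K^\infty$ for free), whose evaluation yields a locally deterministic pattern $P_{\square}$, which a generic lemma then compiles into a temperature-$2$ aTAM with $P_{\square}$ as unique terminal production. You also correctly locate the crux where the paper spends its effort --- constructing the two-layer message-passing scheme that makes the circuit genuinely self-describing while routing all wires inside $K^\infty$ --- so this is the same proof, stated at the level of its skeleton.
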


The proof of \cref{thm:main} spans sections~\ref{sec:abstract_models} and \ref{sec:circuit}. The constructivist reader will find the detailed construction of $C_{\kappa}$ in section~\ref{sec:circuit}, this should enlighten them about most of the ideas of the construction. The formalist will find the tower of models and the links between them in section~\ref{sec:abstract_models}.

\section{Abstract models: Self-Describing Embedded Circuits and Locally Deterministic Patterns}
\label{sec:abstract_models}

\subsection{Embedded circuits}

A circuit is made of gates. Each of these gates has a \emph{function}, computing the outputs of the gates from the inputs. The gate is embedded on a unit square according to its \emph{wiring}, which states where its inputs come from and where its outputs go to. The wiring and the function must be compatible: the wiring must have as many inputs as the function.

\begin{definition}[Wire]
  A wire is a pair of a list of distinct directions $\vec{D} \in \Dir^*$ and a distinguished element $d \in D$. It is noted as a word on the alphabet $\{n, e, s, w\}$ containing the elements of $D$, with $d$ capitalized. For instance, the wire $((N, E), E)$ is noted $nE$, while $((N, E), N)$ is noted $Ne$.

  Given a wire $w = (\vec{D}, d)$, its opposite $-w$ is $(\vec{(-D_i)}, -d)$, i.e. $-Ne = Sw$.
  The set of wires is noted $\Wires$.
\end{definition}

\begin{definition}[Wiring]
  Given two integers $i, o$ such that $i + o \leq 4$, a \emph{$(i,o)$-wiring} $w$ (i.e. a wiring with \emph{in-degree} $i$ and \emph{out-degree} $o$) is given by
  \begin{itemize}
  \item a partition of $\Dir$ into three sets, $\winputset[w]$, $\woutputset[w]$ and $\winertset[w]$, where $\winputset[w]$ is ordered
  \item a map $\woutputmap[w]: \woutputset[w] \to \{0, \ldots, o - 1\}$;
  \item a map $\woutputwires[w]: \woutputset[w] \to \Wires$ such that $-d$ is the distinguished element of  $\woutputwires[w](d)$.
  \end{itemize}

  The set of all wirings is noted $\Wirings$.
\end{definition}

\begin{definition}[Input / Output / Inert Sides, Degree]
  Given a wiring $w \in \Wirings$,
  the elements of $\winputset[w]$ are the \emph{input sides} of $w$, the elements of $\woutputset[w]$ are its \emph{output sides}, and the elements of $\winertset[w]$ are its \emph{inert sides}.
  The in-degree of $w$ is the number of its input sides, its out-degree of $w$ is its number of output sides.
\end{definition}

%An element $d \in \woutputset[w]$ together with the value of $\wnextinputmap[w](d)$ is an output \emph{wire}. The set of output wires of $w$ is $\woutputwires[w]$. Thus, $Ne \in \woutputwires[w]$ denotes that $N \in \woutputset[w]$ and $\wnextinputmap[w](N) = (N, E)$, while $En \in \woutputwires[w]$ denotes that $E \in \woutputset[w]$ and $\wnextinputmap[w](E) = (N, E)$. For a given $w$, either, both or none can be true.

Additionally, each output side also determines the \emph{input} sides on the following gate and their ordering. That is, a wiring $w$ with $nE \in \woutputwires[w]$ can only have a wiring $w'$ with $\winputset[w'] = (S, W)$  to its right.

The graphical representation of a wiring is given on \cref{fig:wiring_example}. Each direction in $\winputset[w]$ has an incoming arrow, and each direction in $\woutputset[w]$ has an outgoing arrow. If $\winputset[w]$ has only one element, the corresponding arrow is straight and simple. Likewise for the outgoing arrow when $\woutputwires[w](d)$ is a singleton. When $\woutputwires[w](d)$ is made of two adjacent directions, the corresponding arrows bend to come near each other; likewise, modulo rotation, when $\woutputwires[w](N) \in \{Ne, eN\}$, the corresponding arrow bends left, and right if it is $Nw$ or $wN$. For pairs of opposite directions, an arrow with a double tip is used. Triple and quadruple inputs are not used in this paper. Inputs are numbered according to the order of $\winputset[w]$ and output wires are numbered according to $\woutputmap[w]$.

\begin{figure}
  \centering
  \begin{tikzpicture}
    \matrix (b) [matrix of nodes, column sep=10em, row sep=10ex, every node/.style={draw, circle}]{
  $w_1$ & $w_2$ & $w_3$ \\
};

\begin{scope}[every node/.style={font=\footnotesize}, entry/.style={pos=.9,auto}, exit/.style={pos=.1,auto,swap}]
\draw (b-1-1)+(-1,0) edge[-{Stealth}] node[entry] {$0$} (b-1-1);
\draw[-{Stealth}] (b-1-1) edge node[exit] {$0$} node[entry] {$0$} ++(1,0);
\draw[-{Stealth}] (b-1-1) edge[bend right] node[exit] {$1$} node[entry] {$0$} ++(0,1);
\draw[-{Stealth}] (b-1-1) edge[bend left] node[exit] {$1$} node[entry] {$1$} ++(0,-1);

\draw (b-1-2)+(-1,0) edge[-{Stealth}, bend right] node[entry] {$0$} (b-1-2);
\draw (b-1-2)+(0,-1) edge[-{Stealth}, bend left] node[entry, pos=.8] {$1$} (b-1-2);
\draw (b-1-2) edge[-{Stealth}{Stealth}] node[exit] {$0$} node[entry] {$0$} ++(0,1);

\draw (b-1-3)+(-1,0) edge[-{Stealth}{Stealth}] node[entry] {$0$} (b-1-3);
\draw (b-1-3)+(1,0)  edge[-{Stealth}{Stealth}] node[entry] {$1$} (b-1-3);
\draw (b-1-3) edge[-{Stealth}, bend left] node[exit] {$0$} node[entry] {$1$} ++(0,-1);
\end{scope}

%%% Local Variables:
%%% mode: latex
%%% TeX-master: "../main.stacs"
%%% End:
  \end{tikzpicture}
  \caption{The graphical representation of three wirings: \( w_1 = \{ \winputset = (W), \woutputset = \{ N, S, E \}, \woutputmap = \{ N \to 1, E \to 0, S \to 1 \}, \woutputwires = \{ Se, W, wN \} \} \), \( w_2 = \{ \winputset = (W, S), \woutputset = \{N\}, \woutputmap = \{ N \to 0 \}, \woutputwires = \{ Sn \} \} \) and \( w_3 = \{ \winputset = (W, E), \woutputset = \{ S \}, \woutputmap = \{ S \to 0 \}, \woutputwires = \{ wN \}  \} \). Neighboring input wires bend to come together into the wiring. Double arrows indicate a pair of opposite input arrows. The output wire in direction $d$ bends to mirror \(\woutputwires[w](d)\).}
  \label{fig:wiring_example}
\end{figure}

\begin{definition}[Gate]
  Given an alphabet $\Sigma$ two integers $i, o$, a gate $g$ is a pair $\{\gatefun: f, \gatewiring: w\}$, with $f$ a function $f: \Sigma^i \to \Sigma^o$ and $w$ an $(i, o)$-wiring.

  The set of gates on alphabet $\Sigma$ is noted $\Gates{\Sigma}$.
\end{definition}

These gates may be placed onto a subset of the grid $\Z^2$ to make \emph{circuits}. An example of a circuit, a multiplier built from a handful of adders and auxiliary gates is given on \cref{fig:mod3-adder}.

%
% TODO: inputs = {N, E} quand les inputs sont à gauche et en bas???
%
\begin{definition}[Circuit]
   Let $\Sigma$ be an alphabet. Let $D$ be an oriented subgraph of $\Z^2$, $\vec{I}$ be a sequence of arcs from $\Z^2 \setminus D$ to $D$ and $\vec{O}$ a sequence of arcs from $D$ to $\Z^2 \setminus D$.

   A circuit $C$ with dependency graph $D$, input bus $\vec{I}$ and output bus $\vec{0}$ is a map $D \to \Gates{\Sigma}$ such that:
   \begin{itemize}
   \item for any arc $E$ in direction $d \in \Dir$ between two positions $a, b \in D$, let $w_a = \gatewiring[C(a)]$ and $w_b = \gatewiring[C(b)]$; then \( d \in \woutputset[w_a] \), \( -d \in \winputset[w_b] \), and \(\woutputset[w_a](d) = (-d, \winputset[w_b])\),
   \item for any adjacent positions $a, b$ in $\mathbb{Z}^2$ with no arc between $a$ and $b$ in $D$, $d \in \winertset[ { \gatewiring[C(a)]  } ]$ and $-d \in \winertset[ {\gatewiring[ {C(b)} ]} ]$
   \item and for any arc $e$ of $\mathbb{Z}^2$ in direction $d \in \Dir$ between two positions $a \in D$ and $b \in  \Z^2 \setminus D$, either:
    \begin{itemize}
    \item $d \in \winertset[ { \gatewiring[ {C(a)} ] } ]$ and $e$ is neither an element of $\vec{I}$ nor of $\vec{O}$,
    \item $d \in \woutputset [{ \gatewiring[ {C(a)} ] } ]$ and $e$ is an element of $\vec{O}$,
    \item $-d \in \winputset[ { \gatewiring[ {C(a)} ] } ]$ and $-e$ is an element of $\vec{I}$.
    \end{itemize}
   \end{itemize}

   The support of $C$ is the vertex-set of $D$.
\end{definition}

\begin{figure}
  \centering
  \begin{tikzpicture}
    \matrix (c) [every node/.style={draw, circle}, matrix of nodes, column sep=3em, row sep = 7ex, a/.style={rectangle, draw=none, inner sep=1mm}] {
  &       & |[a]| $I_0$ & |[a]| $I_1$ &  & \\
  & $i_1$	& $i_2$       & $m$         & $i_1$ & \\
  |[a]| $I_2$	& $i_1$ & $m$         & $a$ & $a$        & |[a]| $O_2$ \\
  &       & |[a]| $O_0$ & |[a]| $O_1$ & & \\
};

\begin{scope}[every node/.style={font=\footnotesize}, entry/.style={pos=.9,auto}, exit/.style={pos=.1,auto,swap}, value/.style={pos=.5, fill=gray!20}]
\draw[->] (c-1-3) edge[bend right] node[entry] {$0$} node[value] {$2$} (c-2-3);
\draw[->] (c-1-4) edge[bend right] node[entry] {$0$} node[value] {$3$} (c-2-4);
\draw[->] (c-2-3) edge[bend right] node [exit] {$0$} node[entry] {$0$} node[value] {$2$} (c-3-3);
\draw[->] (c-2-4) edge[bend right] node [exit] {$0$} node[entry] {$0$} node[value] {$1$} (c-3-4);
\draw[->] (c-3-2) edge node [exit] {$0$} node[entry] {$0$} node[value] {$7$} (c-2-2);
\draw[->] (c-3-3) edge node[exit] {$0$} node[value] {$4$} (c-4-3);
\draw[->] (c-3-4) edge node[exit] {$0$} node[value] {$2$} (c-4-4);
\draw[->] (c-2-4) edge node[exit] {1} node[value] {$2$} node[entry] {0} (c-2-5);
\draw[->] (c-3-5) edge node[exit] {0} node[value] {$2$} (c-3-6);

\draw[->] (c-2-2) edge[bend left] node[exit] {$0$} node[entry, swap] {$1$} node[value] {$7$} (c-2-3);  % (c-2-3);
\draw[->] (c-2-3) edge[bend left] node[exit] {$1$} node[entry, swap] {$1$} node[value] {$7$} (c-2-4);

\draw[->] (c-3-1) edge node[entry] {$0$} node[value] {$7$} (c-3-2);
\draw[->] (c-3-2) edge[bend left] node[exit] {$0$} node[entry, swap] {$1$} node[value] {$7$} (c-3-3);
\draw[->] (c-3-3)  edge[bend left] node[exit] {$1$} node[entry, swap] {$1$} node[value] {$1$} (c-3-4);
\draw[->] (c-3-4) edge[bend left] node[exit] {$1$} node[value] {0} node[entry] {0} (c-3-5);
\draw[->] (c-2-5) edge[bend right] node[exit] {0} node[entry, swap] {1} node[value] {2} (c-3-5);
\end{scope}

%%% Local Variables:
%%% mode: latex
%%% TeX-master: "../main"
%%% End:
  \end{tikzpicture}
  \[
    \begin{cases}
      i_1 : \Sigma \to \Sigma \\
      i_1(x) = x\\
      i_2 : \Sigma^2 \to \Sigma^2 \\
      i_2(x,y) = (x,y)\\
    \end{cases}\quad%
    \begin{cases}
      m : \Sigma^2 \to \Sigma^2 \\
      m(x,y) = (xy \bmod 10, \quot{xy}{10})\\
      a : \Sigma^2 \to \Sigma^2 \\
      a(x,y) = ((x + y) \bmod 10, \quot{(x + y)}{10})\\
    \end{cases}
  \]
  
  \caption{A multiplier circuit $M_2^1$ and the definition of the functions of its gates.  The alphabet is the set of digits $\mathbb{Z} / 10 \mathbb{Z}$. The values in the grey squares are an example of evaluation: $32 \times 7 = 224$}
  \label{fig:mod3-adder}
\end{figure}

% Given a circuit $C$, and an arc $e = (a \to b)$ in its input bus, the \emph{input wire} associated with $e$ is the pair $v@b$, with $v$ the wire $v = (\direction(e), \winputset[ {\gatewiring[C(b)]} ])$. For instance, if, as depicted \ref{fig:mod3-adder}, the gate at $(0, 3)$ in $C$ has a wiring $w$ with $\winputset[ { \gatewiring[w] } ] = (W, S)$ and $\vec{I}_1 = ((-1, 3) \to (0, 3))$, then $Ws@(0,3)$ is the input wire number $1$ of $C$.
% % In the remainder of the section, $\vec{I}$ will generally be given as a sequence of wires rather than mere arcs.
% Symmetrically, the \emph{output wire} associated with an element $e = \vec{O}_i = (a \to b)$ is  the pair $v@a$ with $v = (\direction(e), \winputset[ {\gatewiring[C(a)]} ])$. On figure~\ref{fig:kappa_normal1}, the output wire number $2$ of the circuit is $En@(5, 1)$.\todo{Est-ce que ce paragraphe est utile?}

Let $C$ be a circuit, and $A$ the arc-set of its dependency graph $D$. A function $v: A \mapsto \Sigma$ \emph{conforms} to $C$ at a position $\vec{z}$ in the support of $C$, if for any \(d \in \woutputset[ {\gatewiring[ C(\vec{z}) ]} ] \), \( v(o) = (\gatefun[C(v)])(v(i_0), \ldots, v(i_m))_k \) with $i_0, \ldots, i_m$ the input arcs of $C(\vec{z})$, $o$ its output arc in direction $d$ and $k = \woutputmap[ {\gatewiring[ C(\vec{z}) ]} ](d)$.

A circuit is \emph{well-founded} when its dependency graph has no cycle or infinite backwards paths. For a well-founded circuit, given a vector of inputs $\vec{ \imath }$ indexed by $\vec{I}$, there is a \emph{unique} function $\innerevalfunc{C}(\vec{ \imath }, -)$ which conforms to $C$ and such that the values on $\vec{I}$ are $\vec{ \imath }$. The circuit function $\evalfunc{C}: \Sigma^{ \vec{I}} \to \Sigma^{\vec{O}}$ is defined by $\evalfunc{C}(\vec{ \imath }) = (\innerevalfunc{C}(\vec{ \imath }, O_k ))_k$.

A circuit is \emph{finitely rooted} when its input bus is finite and its number of gates with in-degree $0$ is finite. It is \emph{evaluable} if it is well-founded and finitely rooted.

The circuit $M^1_2$ on figure~\ref{fig:mod3-adder} is a multiplier. Take two natural integers $a < 10, b < 100$, pose $b = b_0 + 10 b_1$ with $\forall i, 0 \leq b_i < 10$. Let $\vec{\imath}_{a,b} = (a_0, b_0, b_1)$. Then $\evalfunc{C}(\vec{\imath}_{a,b}) = \vec{o}$, with $ab = \sum o_i 10^i$. Figure~\ref{fig:mod3-adder} gives an example of the evaluation of the circuit on inputs $32$ and $7$; the values of $\innerevalfunc{C}(\vec{i}, -)$ on each arc are given on the figure.

When the alphabet $\Sigma$ is structured in two layers, i.e. $\Sigma = A \times B$, the function $f$ of a gate $g$ may act independently on layer $A$ and layer $B$, i.e. $f = f_A \otimes f_B$. Then, taking $g_A = \{ \gatewiring : w, \gatefun: f_A \} \in \Gates{A}$ and $g_B = \{ \gatewiring : w, \gatefun: f_B \} \in \Gates{B}$, $g$ can be written as $g_A \otimes g_B$. In other words, the tensor product $\otimes$ applies to pairs of gates \emph{with the same wiring}.

% % When the alphabet $\Sigma = A \times B$ is structured in two layers $A$ and $B$, it is possible to create a gate in $g_A \otimes g_B \in \Gates{\Sigma}$ by combining a gate $g_A \in \Gates{A}$  and a gate $g_B \in \Gates{B}$, each acting on one layer. This product of gates only makes sense if the wirings of $g_A$ and $g_B$ are the same. %compatible.

% % \begin{definition}[Compatible wirings]
% %   Let $w_0$, $w_1$ be two $(i, o)$-wirings, $w_0$ and $w_1$ are compatible with input permutation $\sigma_i$ if:
% %   \begin{eqnarray*}
% %     \winputset[w_0] &=& \winputset[w_1]\\
% %     % \winputmap[w_1] &=& \sigma_i \circ \winputmap[w_1]\\
% %     \woutputset[w_0] &=& \woutputset[w_1]\\
% %     \wnextinputmap[w_0] &=& \wnextinputmap[w_1].\\
% %   \end{eqnarray*}
% % \end{definition}

% % \begin{definition}[Gate superposition]
% %   Let $g_0 \in \Gates{A}, g_1 \in \Gates{B}$ such that $\gatewiring{g_0}$ and $\gatewiring{g_1}$ are compatible.
% %   Their superposition is the gate $g_0 \otimes g_1 \in \Gates(A \times B)$ which collects the inputs of $g_0$ and $g_1$ from their input sides, feeds them to their functions, then distributes the outputs to the output sides.
% % \end{definition}

\subsection{Self-description}

The process of self-assembly in the aTAM and the evaluation of an evaluable embedded circuit are somewhat alike, in that information propagates from an initial region outwards. In both processes, there is no global synchronisation, but each step can take place as soon as its inputs are ready. In the aTAM, the initial region is the seed, while in a circuit, it is the input bus together with the gates of in-degree \(0\).

There are three differences between these processes. First, the input bus of a circuit does not have an analog in the aTAM. Hence, aTAM systems are akin to \emph{closed} circuits. Secondly, there can be competition between attachments in the aTAM, as well as mismatches, both of which are ruled out in cricuits. This entails that an aTAM derived from a circuit by the compilation process detailed below will not exhibit mismatches or concurrent attachments. The last and most important difference is that in a circuit, the outputs of each gate depends not only on its input, but also on the function of the gate. For an aTAM system to simulate a circuit, this information needs to be derived from the inputs of the gate. Self-description captures the possibility to do so.

\begin{definition}[Self-describing circuit]
  \label{def:self_describing}
  A normal circuit $C$ on alphabet $\Sigma$ is \emph{self-describing} on input vector $\vec{ \imath }$ if there is a decoding function \(\decgate: (\Sigma \times \Dir)^{<4} \to \Gates{\Sigma}\) such that for any position $p$ with incoming arcs \((e_0, \ldots, e_{k-1}) = p + (\winputset[ {\gatewiring[C(p)] } ])\):
  \[\decgate((\innerevalfunc{C}(\vec{\imath}, e_0), d_0), \ldots, (\innerevalfunc{C}(\vec{\imath}, e_{k-1}), d_{k-1})) = C(p),\]
  where $d_j$ is the direction of $e_j$.
\end{definition}

A closed self-describing circuit really is a circuit with only one function of each in-degree. Having one function per in-degree is a harmless technicality: in a circuit, successive gates have compatible wirings, so the in-degree of each gate is known from the wiring of any of its predecessor gates, whether or not the circuit is self-describing.

\begin{lemma}
  \label{lem:one-gate}

  Let $C$ be a closed self-describing circuit. There is a closed circuit $C_1$ with only only one function of each in-degree such that $\innerevalfunc{C} = \innerevalfunc{C_1}$.
\end{lemma}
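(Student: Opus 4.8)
The plan is to build $C_1$ on the \emph{same} dependency graph, support and (empty, since $C$ is closed) input bus as $C$, to keep every wiring unchanged, and to replace only the gate functions by universal ones manufactured from the decoder $\decgate$. Concretely, for each in-degree $i \le 3$ I would exhibit a single function $f_i$ and set $\gatefun[C_1(p)] = f_i$ at every position $p$ whose wiring $w = \gatewiring[C(p)]$ has in-degree $i$, while leaving $\gatewiring[C_1(p)] = w$. Because the dependency graph and all wirings are untouched, $C_1$ is again a closed circuit with exactly the same arc-set as $C$, so $\innerevalfunc{C}$ and $\innerevalfunc{C_1}$ are functions on the same domain and may be compared arc by arc.

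The function $f_i$ is the ``decode-then-evaluate'' rule. Given the input values $\vec{v} = (v_0, \dots, v_{i-1})$ arriving along the input sides $\winputset[w] = (d_0, \dots, d_{i-1})$, it recovers $g = \decgate((v_0, d_0), \dots, (v_{i-1}, d_{i-1}))$ and returns $\gatefun[g](\vec{v})$, read out through $\woutputmap[w]$. Since the recovered gate has out-degree equal to that of $C(p)$, the returned tuple has the arity that $\woutputmap[w]$ expects; the only cosmetic point needed to make $f_i$ a single typed map across positions of differing out-degree is to let it emit one value per direction of $\Dir$ and to read off the coordinates selected by $\woutputset[w]$, which changes no arc value. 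For $i = 0$ the rule is degenerate and in fact forced: $\decgate$ of the empty tuple is one fixed gate, so all roots of $C$ already carry the same function.

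Correctness would follow by induction along the well-founded dependency graph. Assuming the values on all arcs strictly preceding $p$ already coincide in $C$ and $C_1$, the incoming values at $p$ are the same $\vec{v}$ in both circuits; by the self-describing property (\cref{def:self_describing}), $\decgate((v_0, d_0), \dots, (v_{i-1}, d_{i-1})) = C(p)$, so $f_i$ recovers \emph{exactly} the original gate and applies its function to the same inputs. Hence every arc out of $p$ carries the same value in $C_1$ as in $C$, and the induction yields $\innerevalfunc{C_1} = \innerevalfunc{C}$.

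The main obstacle is justifying that $f_i$ is a genuine \emph{single} function of its $i$ values: a priori, two in-degree-$i$ gates with different input-side sets could receive the same value tuple yet decode to different gates, forcing two outputs from $f_i$. The resolution is that the arguments passed to $\decgate$ carry the direction tags $d_j$, which are fixed by each gate's own wiring (retained in $C_1$) and, by self-description, are consistent with the recovered gate; so on the inputs that actually occur in the evaluation there is no clash, and this is where I expect to lean on normality of $C$ to argue that the input directions are determined rather than free. I would make this uniqueness explicit so that $f_i$ is well defined on exactly the tuples produced by $\innerevalfunc{C}$, which is all the induction above requires.
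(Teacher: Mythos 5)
Your proposal is correct and takes essentially the same route as the paper's own (one-sentence) proof: keep all wirings of $C$, and give every gate of in-degree $i$ the single ``decode-then-evaluate'' function $f_i : \vec{x} \mapsto (\gatefun[{\decgate(\vec{x})}])(\vec{x})$, whose agreement with $C$ follows from the self-describing property along the well-founded dependency graph. Your additional discussion of direction tags and mismatched out-degrees only makes explicit technicalities the paper leaves implicit.
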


\begin{proof}
  $C_1$ has the same wirings as $C$, and for each $i$, all of its gates of in-degree $i$ have as function \(f_i: x_0 \ldots x_{i-1} \mapsto (\gatefun[\delta_C(x_0, \ldots x_{i-1})])(x_0, \ldots x_{i-1})\), where $\delta_C$ is the decoding function of $C$.
\end{proof}

\subsection{Locally deterministic patterns}
\label{sec:loc_det_pat}

If $C$ is a self-describing closed circuit, then from $\innerevalfunc{C}$, one gets a coloring of the arcs of its dependency graph with the property of \emph{local determinism}.

\begin{definition}[Locally deterministic arc coloring]
  \label{def:loc_det_pat}
  Let $P$ be a coloring of the arcs of some acyclic oriented subgraph $G$ of $\Z^2$. For each vertex $v \in G$ with in-degree $\delta$, note \(\vec{\imath}_v = ((d_0, P(e_0)), \ldots, (d_{\delta-1}, P(e_{\delta-1})))\) for the input vector of $v$, where $e_i$ is the $i$-th incoming edge of \(v\) and $d_i$ its direction

  $P$ is locally deterministic if there  are two prediction functions $\pinputs: \Dir \times \Sigma \to \PDir$ and $\psymb: (\Dir \times \Sigma)^{< 4} \times \Dir \to \Sigma$ such that for each arc $e$ from vertex $a$ to vertex $b$ in direction $d \in \Dir$,
  \begin{itemize}
  \item \(P(e) = \psymb(\vec{\imath}_a, d)\)
  \item \(\pinputs(d, P(e))\) is the set of directions of the incoming arcs of $b$,
  \end{itemize}
\end{definition}

\begin{figure}
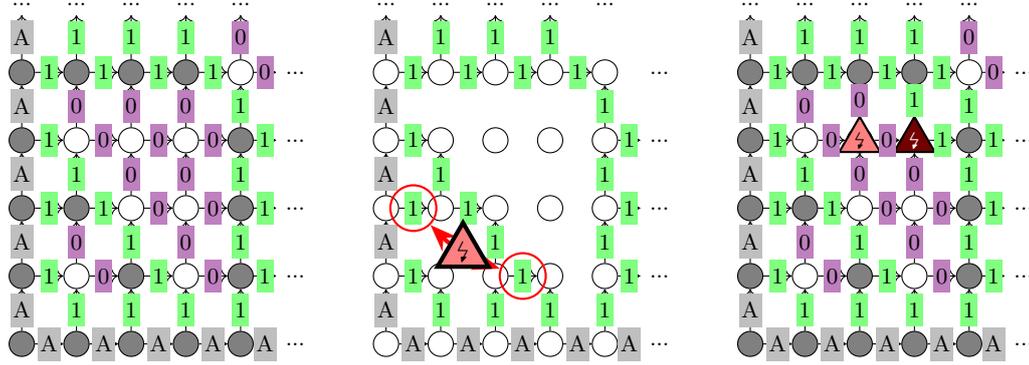

  \centering
  \def\SierpData{%
    1 1 1 1 1
    1 0 1 0 1
    1 1 0 0 1
    1 0 0 0 1
    1 1 1 1 0
  }
  \readarray\SierpData\Sierp[-,5]

  \def\z{\Sierp[\x,\y]}

  \begin{tikzpicture}[xscale=.48, yscale=.6, font=\footnotesize]
    \input{tikz/locally_det_pat_ok}
  \end{tikzpicture}\qquad%
  \begin{tikzpicture}[xscale=.48, yscale=.6, font=\footnotesize]
    \input{tikz/locally_det_pat_no_zero}
  \end{tikzpicture}\qquad%
  \begin{tikzpicture}[xscale=.48, yscale=.6, font=\footnotesize]
    \input{tikz/locally_nondet_pat}
  \end{tikzpicture}
  \caption{Three arc colorings on the alphabet $\Sigma = \{ A, \textcolor{violet}{0}, \textcolor{green!50!black}{1} \}$. All arcs go either to the right or up. The coloring $C_1$, in the upper left is locally deterministic. In the upper-right, $C_2$ is not because the two circled arcs have value $1$ and direction $E$, but the vertices they point into have different incoming directions ($\{E\}$ versus $\{N, E\}$). At the bottom, $C_3$ is not locally deterministic, because the two triangle vertices get the same incoming values, but have different outputs: $(0, 0)$ versus $(1, 1)$.}
  \label{fig:ex-locally-deterministic}
\end{figure}

\begin{lemma}
  \label{lem:self-descr-local-det}
  Let $C$ be a closed self-describing circuit on alphabet $\Sigma$, with dependency graph $D_C = (V, A)$.
  Let $C'$ be the self-describing circuit on alphabet $\Sigma \times \Wirings$ where each gate outputs its wiring in addition to its normal output.
  Then the function $\innerevalfunc{C'}$ is a locally deterministic coloring of $D_C$.
\end{lemma}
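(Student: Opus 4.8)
The plan is to read off the two prediction functions directly from the self-describing structure of $C$, exploiting that the $\Wirings$-component added in $C'$ makes the wiring of each gate \emph{visible on its outgoing arcs}. Write $\Sigma' = \Sigma \times \Wirings$ for the alphabet of $C'$ and $P = \innerevalfunc{C'}$ for the coloring to be analysed, so that the two witnesses of \cref{def:loc_det_pat} are sought with $\Sigma'$ in the role of the generic alphabet. By construction of $C'$, each gate copies its own wiring onto all of its outputs and leaves its $\Sigma$-output untouched, so for every arc $e$ with source $a$ one has $P(e) = (\innerevalfunc{C}(e), w_a)$, where $w_a = \gatewiring[C(a)]$; the circuit is closed, so the input vector is empty and I suppress it from the notation.

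For $\pinputs$ the wiring is already at hand. Given a direction $d$ and a color $(\sigma, w) \in \Sigma'$, I define $\pinputs(d, (\sigma, w))$ to be the underlying set of directions of the wire $\woutputwires[w](d)$. Then for an arc $e\colon a \to b$ in direction $d$, since $P(e)$ carries $w_a$, the compatibility condition in the definition of a circuit gives $\woutputwires[w_a](d) = (-d, \winputset[w_b])$, whose underlying set of directions is exactly $\winputset[w_b]$, i.e.\ the set of directions of the incoming arcs of $b$. This is precisely the second clause of \cref{def:loc_det_pat}.

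For $\psymb$ I recover the source gate from its inputs via self-description (\cref{def:self_describing}). Given an input vector $\vec{\imath}_a = ((d_0, (\sigma_0, w_0)), \ldots, (d_{\delta-1}, (\sigma_{\delta-1}, w_{\delta-1})))$ and a direction $d$, I first project away the $\Wirings$-components and transpose each pair to form $((\sigma_0, d_0), \ldots, (\sigma_{\delta-1}, d_{\delta-1}))$; applying $\decgate$ to this tuple returns the gate $C(a)$, from which I read its function $f_a = \gatefun[C(a)]$ and wiring $w_a = \gatewiring[C(a)]$. I then set $\psymb(\vec{\imath}_a, d) = (f_a(\sigma_0, \ldots, \sigma_{\delta-1})_k,\; w_a)$ with $k = \woutputmap[w_a](d)$. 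For the outgoing arc $e$ in direction $d$, the $\Wirings$-component $w_a$ matches $P(e)$ by construction of $C'$, while the $\Sigma$-component matches because $\innerevalfunc{C}$ conforms to $C$, giving $\innerevalfunc{C}(e) = f_a(\sigma_0, \ldots, \sigma_{\delta-1})_k$. This is the first clause of \cref{def:loc_det_pat}; gates of in-degree $0$ are handled by the same recipe with the empty tuple, $\decgate$ returning the unique source gate.

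The only real care needed is bookkeeping about \emph{orderings and formats}. The decoding function $\decgate$ expects its arguments as $(\Sigma \times \Dir)$-pairs listed in the order of $\winputset[w_a]$, whereas the input vector of \cref{def:loc_det_pat} lists $(\Dir \times \Sigma')$-pairs in the order of the incoming edges of $a$. I expect this to be the main (if modest) obstacle: one must check that these two orderings coincide—which they do, since the incoming arcs of $a$ are $a + \winputset[w_a]$ in both places—and that projecting to $\Sigma$ and transposing each pair lands in the domain of $\decgate$. With that alignment settled, both witnesses are total and correct, so $P = \innerevalfunc{C'}$ is a locally deterministic coloring of $D_C$.
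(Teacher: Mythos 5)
Your proof is correct and takes essentially the same route as the paper's own: both read $\pinputs$ off the wiring component carried on each arc (via $\woutputwires$), and both obtain $\psymb$ by feeding the $\Sigma$-projections of the incoming values, paired with their directions, to the decoding function $\decgate$ and attaching the recovered wiring to the computed output. If anything, your version is slightly more explicit than the paper's, which omits the output index $\woutputmap[w_a](d)$ and compresses the reordering/format bookkeeping into a single sentence.
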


\begin{proof}
%  By lemma~\ref{lem:one-gate}, without loss of generality all the gates of $C$ with in-degree $i$ have the same  function $f_i$.

  A candidate function for $\pinputs$ takes as input the direction $d$ of an arc $e: a \to b$, and the value of $\innerevalfunc{C'}(e)$, that is, an element of $\Sigma$ corresponding to $\innerevalfunc{C}(e)$ as well as the wiring $w$ of the gate at $a$. Thus, the function $(d, x, w) \mapsto \woutputwires[w](d)$ must, by the constraints on neighboring gates in a circuit, output the set of directions of the incoming arcs at $b$.

  Likewise, a candidate function for $\psymb$ takes as input the direction of the arc $e: a \to b$, and a vector $\vec{m} \otimes \vec{d}$ of the inputs coming into $a$ with their directions. Reorder $\vec{m} \otimes \vec{d}$ so that $\vec{d}$ is in the order of the inputs of $C(a)$. Let $\{ \gatewiring: w, \gatefun : f \} = \decgate(\vec{m} \otimes \vec{d})$. Then taking $\psymb(e) = (f(\vec{m}), w)$ works.
\end{proof}

\paragraph*{From locally deterministic patterns to aTAM systems}

\begin{definition}[vertex type, atlas]
  Let $P$ be a coloring of the arcs of some acyclic oriented subgraph $G$ of $\Z^2$. For a vertex $v$ of $G$, its \emph{vertex type} is the partial function $\bar{v}: d \in \Dir \mapsto (o, x)$, where $o = -d$ if the edge $e$ in direction $d$ from $v$ is an incoming edge, $o = d$ if it is an outgoing edge, and $x \in \Sigma$ is $P(e)$. If $v$ has no edge in direction $d$, then $\bar{v}(d)$ is undefined.

  The \emph{atlas} of $P$ is the set of its vertex types.
\end{definition}

A locally deterministic coloring can be reconstructed from its atlas and the positions of its in-degree $0$ vertices.

\begin{lemma}
\label{lem:same-atlas-same-pattern}
  Let $P_1, P_2$ be locally deterministic colorings of the arcs of some oriented subgraphs of $\Z^2$, $G_1$ and $G_2$ respectively. If
  \begin{enumerate}[(1)]
  \item $G_1$ and $G_2$ are acyclic and have no infinite backwards paths,
  \item $P_1$ and $P_2$ have the same atlas,
  \item $G_1$ and $G_2$ have the same vertices with in-degree $0$,
  \end{enumerate}
  then $G_1 = G_2$ and $P_1 = P_2$.
\end{lemma}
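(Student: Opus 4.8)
The plan is to reconstruct both colorings outward from their shared set of in-degree-$0$ vertices by induction on a depth parameter, showing at each stage that the two colorings agree not merely on colors but on the entire local picture, i.e.\ on the vertex types $\bar{v}$. First I would isolate the consequence of \cref{def:loc_det_pat} that does the real work: in a locally deterministic coloring the \emph{complete} vertex type $\bar{v}$ --- which directions carry incoming arcs, which carry outgoing arcs, and all of their colors --- is a function of the incoming signature $\vec{\imath}_v$ alone. The colors of the outgoing arcs are $\psymb(\vec{\imath}_v, d)$ by definition, and the set of directions bearing outgoing arcs is likewise determined by $\vec{\imath}_v$; this is precisely what fails for the coloring $C_3$ of \cref{fig:ex-locally-deterministic}, where two vertices with identical inputs produce different outputs. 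Dually, $\pinputs$ guarantees that the incoming signature of the head of an arc can be read off from that arc's direction and color, so the induction can always move forward consistently.

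Next I would turn the shared-atlas hypothesis into a usable form. The \textbf{claim} is that in a single locally deterministic coloring with atlas $\mathcal{A}$, two distinct types of $\mathcal{A}$ have distinct incoming signatures. Indeed, both types are realized by actual vertices of the coloring; were their incoming signatures equal, those two vertices would witness a failure of the determinism of the full type recorded above. Applying this to the common atlas $\mathcal{A}$ of $P_1$ and $P_2$ yields the key \emph{transfer principle}: a vertex whose incoming arcs carry a prescribed set of directions and colors has \emph{the same} type in $P_1$ as in $P_2$, since that type is the unique element of $\mathcal{A}$ with the given incoming signature.

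The induction itself runs on $\operatorname{depth}(v)$, the length of the longest directed path ending at $v$. By hypothesis (1) the graphs are acyclic with no infinite backward path, and every vertex has in-degree at most $4$; König's lemma then makes the backward-reachable set of each vertex finite, so $\operatorname{depth}(v)$ is a well-defined natural number and every arc strictly increases depth. For the base case, hypothesis (3) gives that the depth-$0$ vertices (those of in-degree $0$) coincide; each has empty incoming signature, so by the transfer principle its type agrees in $P_1$ and $P_2$. For the inductive step, a vertex $b$ of depth $k+1$ has all of its in-neighbours at depth $\le k$; by the induction hypothesis these vertices together with their types agree across the two colorings, so the outgoing arcs landing on $b$ and their colors are the same in $P_1$ and $P_2$. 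Hence $b$'s incoming signature agrees, and the transfer principle forces $b$'s type --- and its very presence, and its depth --- to agree as well. The argument is symmetric, so the depth-$(k+1)$ vertices of $G_1$ and $G_2$ coincide with matching types. Since every vertex has finite depth, taking the union over all $k$ shows that $G_1$ and $G_2$ have the same vertices with the same types, whence $G_1 = G_2$ and $P_1 = P_2$.

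The main obstacle is the first step: extracting from local determinism that the \emph{set of outgoing directions} of a vertex --- not merely the colors of arcs already known to exist --- is determined by the incoming signature. Everything downstream (the atlas-uniqueness claim and the depth induction) is essentially bookkeeping once that determinism of the full type is in hand; conditions (1), (2) and (3) then enter respectively as the well-foundedness needed to run the induction, the atlas-uniqueness underlying the transfer principle, and the shared root set supplying the base case.
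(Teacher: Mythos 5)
Your induction is exactly the paper's proof: the paper's entire argument for \cref{lem:same-atlas-same-pattern} is the single sentence ``By induction on the longest path to each position in $G_1$,'' and your depth induction --- well-foundedness from hypothesis (1), the base case from hypothesis (3), an atlas-based transfer principle from hypothesis (2) --- is precisely the elaboration that sentence calls for. One step of your inductive argument is glossed but patchable: knowing that the in-neighbours of $b$ inherited from $G_1$ reappear in $G_2$ with the same types does not by itself exclude \emph{additional} incoming arcs of $b$ in $G_2$; to exclude them, observe that if some type of the common atlas has an incoming arc of direction $d$ and color $c$, then applying $\pinputs$ of $P_1$ (resp.\ $P_2$) to any vertex realizing such a type shows that \emph{all} atlas types with such an arc have the same set of incoming directions, so the two colorings' $\pinputs$ functions agree on $(d,c)$ and $b$'s incoming direction set is the same in both graphs; the colors then agree because each incoming arc's source $b-d'$ is a common vertex of known type.

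However, the point you flag as the main obstacle is not provable, and your unease was justified: as literally written, \cref{def:loc_det_pat} does \emph{not} imply that the set of outgoing directions of a vertex is a function of its incoming signature, and the coloring $C_3$ of \cref{fig:ex-locally-deterministic} only witnesses failure of \emph{color} determinism on arcs that exist at both vertices, not failure of \emph{existence} determinism. In fact the lemma is false under the literal reading: over $\Sigma=\{x\}$, let $G_1$ be the directed path $(0,0)\to(1,0)\to(2,0)$ and $G_2$ the directed path $(0,0)\to(1,0)\to(2,0)\to(3,0)$, all arcs colored $x$. Both colorings satisfy \cref{def:loc_det_pat} (take $\psymb \equiv x$ and $\pinputs(E,x)=\{E\}$), both have $(0,0)$ as their unique in-degree-$0$ vertex, and both have the same three-element atlas (source, interior, sink types), yet $G_1\neq G_2$. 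So no argument can close the gap you identified; instead the definition must be read (or amended) so that the prediction functions also decide \emph{whether} an arc leaves in each direction, e.g.\ $\psymb$ valued in $\Sigma\cup\{\bot\}$ with $\bot$ meaning ``no arc.'' That stronger reading is what the paper's one-line proof silently assumes, what \cref{lem:det-pat-atam} also needs (otherwise the interior tile of the chain above could keep attaching forever), and it does hold for the colorings the paper actually constructs, since by \cref{lem:self-descr-local-det} the messages carry the emitting gate's wiring, which fixes the output sides of the next gate. Under that reading your first step is true by definition and your induction, patched as above, is a complete proof.
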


\begin{proof}
  By induction on the longest path to each position in $G_1$.
\end{proof}

Finally, a locally deterministic pattern with a unique in-degree 0 vertex can be self-assembled in the aTAM.

\begin{lemma}\label{lem:det-pat-atam}
  Let $\Sigma$ be some finite alphabet, and let $P$ be a locally deterministic coloring of the arcs of some acyclic graph $G$ with no infinite backwards path. Assume the maximal in-degree of a vertex in $G$ is $\delta$ and that $G$ has only one vertex with in-degree $0$.

  Then there is an aTAM system $S_P$ with temperature $\delta$ with $P$ as its only final production.
\end{lemma}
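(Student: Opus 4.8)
The plan is to compile $P$ into a tile set that mirrors its atlas exactly, and to use the strength function to enforce ``a vertex may attach only once all of its incoming arcs are present,'' while leaning on local determinism (\cref{def:loc_det_pat}) to guarantee that the colours on those incoming arcs pin down the attaching tile uniquely. Concretely, I would take one tile $t_\tau$ per vertex type $\tau$ in the atlas of $P$ (a finite set, since $\Sigma$ is finite), with glue alphabet $\{\bot\}\cup(\Dir\times\Sigma)$; on side $d$ the tile $t_\tau$ carries the glue $\tau(d)=(o,x)$ when defined and $\bot$ otherwise. Because the first component $o$ of a vertex type is the \emph{absolute} direction of the incident arc, the two tiles sharing an arc $a\to b$ of direction $o$ and colour $x$ both present the glue $(o,x)$ on their common edge, so they bind. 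The seed is the single tile $t_{\bar{v}_0}$ at the unique in-degree-$0$ vertex $v_0$, and the temperature is $\delta$. The crucial device is the strength function: for each nonempty $D\subseteq\Dir$ with $|D|\le\delta$ I fix once and for all a map $\operatorname{part}_D\colon D\to\mathbb{Z}_{>0}$ with $\sum_{d\in D}\operatorname{part}_D(d)=\delta$ (possible exactly because $|D|\le\delta$), and set $\strength(\bot)=0$ and $\strength((o,x))=\operatorname{part}_{\pinputs(o,x)}(-o)$. This is well defined, as every glue on a tile is the direction/colour of a genuine arc of $P$, so $\pinputs(o,x)$ is defined, contains the head-side $-o$, and has at most $\delta$ elements. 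Its purpose is that the glues entering a vertex $z$ (with incoming direction set $D$) have strengths $\operatorname{part}_D(d)$ summing to exactly $\delta$, while any proper subset sums to strictly less.

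The heart of the proof is a single invariant, proved by induction along assembly sequences: every production $Q$ satisfies $Q\subseteq P$ (domain contained in $G$, tiles agreeing with $P$) \emph{and} $\dom Q$ is \emph{downward closed} — if $v\in\dom Q$ then all in-neighbours of $v$ lie in $\dom Q$. The base case is the seed. For the step, suppose $t$ attaches at $z\notin\dom Q$ with binding $\ge\delta$. Each neighbour contributing positive strength forces a matching non-$\bot$ glue, hence a genuine arc of $P$ between $z$ and that neighbour, so $z\in G$. Downward closure then forbids any \emph{out}-neighbour of $z$ from being present (it would have $z$ as an in-neighbour, forcing $z\in\dom Q$), so every contributing neighbour is an in-neighbour of $z$. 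By the strength design the present in-neighbours contribute $\sum_{d\in D'}\operatorname{part}_D(d)$ with $D'\subseteq D$, where $D=\pinputs(o,x)$ is the incoming set of $z$; this reaches $\delta$ only when $D'=D$. Hence \emph{all} in-neighbours of $z$ are present, the binding is exactly $\delta$, and $\dom Q\cup\{z\}$ stays downward closed. Finally $t=t_{\bar z}$: $t$ agrees with $\bar z$ on every incoming glue of $z$, and through $\pinputs$ each such glue already encodes that the head's incoming set is $D$, so $t$'s own incoming set is $D$ too; thus $t$ and $t_{\bar z}$ have the same input vector, and since local determinism makes the whole vertex type a function of its input vector (outputs via $\psymb$, the in/out partition via $\pinputs$), we get $t=t_{\bar z}$. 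I will also invoke the standard aTAM fact that attaching a tile with binding at least the temperature to a stable assembly yields a stable assembly, so these are genuine attachments.

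For existence and uniqueness of the terminal production I would argue as follows. Since $G$ has no infinite backward path and in-degrees are at most $\delta$, König's lemma gives every vertex a finite ancestor set, so $V(G)$ admits an enumeration in which each in-neighbour precedes its vertex; attaching along it is a valid assembly sequence whose limit is $P$, each step binding with strength exactly $\delta$. Hence $P$ is a production, and it is terminal because any position outside $G$ sees only $\bot$ glues (strength $0<\delta$) from its $G$-neighbours. For uniqueness, let $Q$ be any terminal production; by the invariant $\dom Q\subseteq G$ is downward closed. If $\dom Q\neq G$, choose $v\in G\setminus\dom Q$ of minimal rank (longest backward path); its in-neighbours have strictly smaller rank, hence lie in $\dom Q$, so $t_{\bar v}$ could attach at $v$ with binding $\delta$, contradicting terminality. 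Therefore $\dom Q=G$ and $Q=P$, giving $P$ as the only final production.

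I expect the main obstacle to be reconciling the three distinct ways an attachment could misbehave with the requirement that the temperature be \emph{exactly} $\delta$: premature attachment on a strict subset of the inputs, ``backward'' attachment onto an already-placed successor, and attachment of a tile that merely happens to agree on the matched sides. The first is defeated by the positive-integer partition of $\delta$ — note that a uniform strength $1$ fails by integrality, e.g.\ $\delta=4$ with in-degree $3$ — the second by the downward-closure invariant, and the third by the observation that, via $\pinputs$, an incoming glue colour already determines the entire input structure of its head, so local determinism upgrades ``agrees on the matched sides'' to ``equal input vector.'' Making these three mechanisms interlock cleanly is the delicate part of the argument.
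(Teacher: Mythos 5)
Your proposal is correct and is essentially the paper's own proof: the same tile set (one tile per vertex type of the atlas, with the null glue on sides carrying no arc), the same device of making each vertex's incoming glue strengths positive integers summing to exactly $\delta$ (the paper's choice of $\delta+1-|\vec{\imath}|$ for the first input and $1$ for the others is just one instance of your $\operatorname{part}_D$ scheme, and your integrality remark is exactly why a uniform strength $1$ would not do), the same downward-closed-sub-pattern invariant proved by induction on attachments, and the same terminality argument via a vertex of $G \setminus \dom Q$ all of whose predecessors are already placed. Two minor remarks: under the paper's conventions $\pinputs(o,x)$ contains the arc direction $o$ itself rather than $-o$ (a harmless re-indexing of $\operatorname{part}_D$), and your closing appeal to ``local determinism makes the whole vertex type a function of its input vector'' --- including which sides carry \emph{outgoing} arcs, something \cref{def:loc_det_pat} fixes only for the colours of arcs that exist --- is precisely the same implicit step the paper takes when it asserts that every production is an initial subset of $P$, so you are no worse off than the paper itself.
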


\begin{proof}
  Let $A_P$ be the atlas of $P$, $\pinputs$ and $\psymb$ the prediction functions of $P$. By convention, suppose $\pinputs$ returns an input vectors which is ordered clockwise, starting from direction $N$.

  First define the glues of $S_p$ and their strength function $\strength$. Each glue is a pair $(s, d) \in \Sigma \times \Dir$. For any symbol $s \in \Sigma$ and direction $d \in \Dir$, let $\vec{\imath} = \pinputs(s, d)$. If $d$ is the first element of $\vec{\imath}$, then its strength is \(\strength((s,d))=(\delta + 1 - |\vec{\imath}|\)), otherwise it is $1$. Hence, for any vertex type $v \in A_P$ with input vector $\vec{\imath}_v$, \(\sum_{g \in \vec{\imath}_v} \strength(g) = \delta\).

  Then define the set of tile types $S_p$ to be $A_p$, with the glue on the $d$ side of the tile type $\bar{v}$ being $\bar{v}(d)$ if defined, and the null glue $\null$ otherwise. The seed tile of $S_p$ is the only vertex type of $A_p$ with in-degree $0$.

  Indeed, any production of $S_P$ is an initial subset of $P$, as can be seen by induction on the attachments.

  Then, consider a production $p$ of $S_P$. If \(G\) contains some position not covered by $p$, then because $G$ contains neither loops nor infinite paths, it must contain some $v$ with all its predecessors within \(\operatorname{dom}(p)\). But then the total strength of glues into $v$ is $\delta$, and the tile corresponding to $P(v)$ must be attachable there. Hence $p$ is not terminal.

  Finally, $S_P$ assembles $P$.
\end{proof}

Putting all this together, it is possible to compile a self-describing circuit into a self-assembling system.

\begin{theorem}
  \label{thm:circuit-to-atam}
  Let $C$ be a self-describing normal circuit with only one gate without inputs. Then there is an aTAM system $S_C$ which strictly self-assembles $\operatorname{dom}(C)$.
\end{theorem}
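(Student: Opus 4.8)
The plan is to assemble the three preceding lemmas into a pipeline that turns $C$ into a locally deterministic coloring and then into an aTAM system. First I would pass to the augmented circuit $C'$ on the alphabet $\Sigma \times \Wirings$ in which every gate emits its own wiring alongside its ordinary output, exactly as in \cref{lem:self-descr-local-det}. Because $C$ is normal it is in particular closed, so $C$ is a closed self-describing circuit and that lemma applies: the coloring $P := \innerevalfunc{C'}$ of the arcs of the dependency graph $D_C = (V,A)$ is \emph{locally deterministic}. This is the step where the self-describing hypothesis is actually spent, since local determinism of $P$ is derived from the decoding function $\decgate$ of $C$.

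Next I would check that $P$ meets the hypotheses of \cref{lem:det-pat-atam} on $G = D_C$. Normality (equivalently, well-foundedness) guarantees that $D_C$ is acyclic and has no infinite backwards path. The maximal in-degree $\delta$ of a vertex is finite, being bounded by the wiring constraint $i + o \le 4$. Finally, the assumption that $C$ has a single gate without inputs says precisely that $D_C$ has exactly one vertex of in-degree $0$; closedness is what guarantees this gate has no hidden incoming arc from an input bus, so it genuinely is the unique root. \cref{lem:det-pat-atam} then produces an aTAM system, which I take as $S_C$, of temperature $\delta$ whose unique final production is $P$.

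It remains to identify the domain of that production with $\dom(C)$. The system built in \cref{lem:det-pat-atam} places at each vertex $v \in V$ the tile encoding the vertex type of $v$, so any terminal production covers exactly the vertex set $V$; and $V$ is by definition the support of $C$, i.e.\ $\dom(C)$. Since $P$ is the \emph{only} terminal production, every terminal production has domain exactly $\dom(C)$, which is what strict self-assembly requires.

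The main point to handle carefully is the bookkeeping between the two guises of $P$ --- as an arc-coloring of $D_C$ and as an aTAM assembly supported on the vertices of $D_C$ --- together with checking that normality really delivers both ingredients needed downstream: closedness (so the single in-degree-$0$ gate is the seed and no input bus contributes extra arcs) and well-foundedness (so the inductive argument underlying \cref{lem:det-pat-atam} terminates). \cref{lem:one-gate} is not required for this chain, but could be inserted to collapse the tile set to one tile per in-degree should a leaner system be wanted.
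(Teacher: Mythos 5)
Your proposal is correct and follows exactly the paper's route: the paper's entire proof is the two-lemma citation of \cref{lem:self-descr-local-det} and \cref{lem:det-pat-atam}, which is precisely the pipeline you spell out (augment the circuit so each gate emits its wiring, obtain a locally deterministic coloring, then compile it into an aTAM system). The hypothesis-checking and bookkeeping you add --- acyclicity and well-foundedness of $D_C$, the unique in-degree-$0$ vertex serving as seed, and identifying the domain of the unique terminal production with $\dom(C)$ --- is exactly what the paper leaves implicit.
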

\begin{proof}
  By lemmas~\ref{lem:self-descr-local-det} and~\ref{lem:det-pat-atam}.
\end{proof}

\section{A Self-Describing Embedded Circuit for the Sierpinski Cacarpet}
\label{sec:circuit}

The next step is to define a self-describing circuit on the Sierpinski Cacarpet.

\begin{theorem}
  \label{thm:cacarpet_circuit}
  There is an evaluable circuit $C_{\square}$ with domain $\cacarpet$ and with an empty input bus which is self describing. Moreover, $C_{\square}$ has only one gate without inputs.
\end{theorem}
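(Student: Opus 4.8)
The plan is to realize $C_{\square}$ as a fixed point of a substitution that lifts $\kappa$ from shapes to circuits. Concretely, I would design an alphabet $\Sigma$ together with a gate-valued coloring so that each position of $\cacarpet$ receives a gate, the wirings of neighbouring gates agree, and the whole map is invariant under a colored substitution $\sigma_C$ modelled on $\kappa = \sigma_K$. Since $(0,0) \in K$ and $\cacarpet = \bigcup_i \kappa^i(\{(0,0)\})$, iterating $\sigma_C$ from a single seed gate placed at the origin produces, in the limit, a coloring whose support is exactly $\cacarpet$; the empty input bus is obtained by taking the seed gate to have in-degree $0$ and making every boundary side of the outermost blocks inert. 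The substitution must be chosen so that at every scale $n$ the circuit on $K^{n}$ consists of $|K|$ suitably rotated/renamed copies of the circuit on $K^{n-1}$, glued along their shared borders and augmented by a thin layer of ``coordinate'' gates carrying, across each border, the address of the copy within the enclosing level-$(n+1)$ block. This mirrors the multiscale layered constructions of Mozes and Goodman-Strauss, while the seeded nature of the assembly lets a single origin gate play the role of the root.

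Next I would check that the resulting map is a genuine circuit and is evaluable. Wiring compatibility amounts to verifying, block by block, that whenever two positions of $\cacarpet$ are adjacent the output wire of one matches the input side of the other (and that a missing neighbour forces both incident sides into $\winertset[w]$); because the holes of $K$ are the central $\{2,3\}^2$ square, the wires only have to be routed around these holes, which is a finite case check on the generator that then propagates through the substitution. Finite-rootedness is immediate: the input bus is empty and, by construction, the origin is the unique in-degree-$0$ gate. For well-foundedness I would orient every wire so that information flows monotonically outward from the origin, along a fixed spanning orientation of $K$ that is preserved by $\sigma_C$; then each position of $\cacarpet$ is reached from the origin by a finite directed path, so the dependency graph has neither cycles nor infinite backward paths, and the circuit is evaluable.

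The heart of the proof is self-description: exhibiting $\decgate$ recovering $C_{\square}(p)$ from the colors and directions of the arcs entering $p$. Here I would exploit the address information carried on the wires. The values entering a gate encode the position of its level-$0$ cell within its enclosing level-$1$ block, that block's position within the level-$2$ block, and so on up the finitely many scales that are ``active'' at $p$; from this address one reads off which cell of which copy the gate occupies, hence its $\gatewiring$ and its $\gatefun$. The decoding map is then the finite function sending such an address, packaged as an element of $(\Sigma\times\Dir)^{<4}$, to the corresponding gate. Consistency of $\decgate$ across scales is exactly the fixed-point property of $\sigma_C$: the address seen locally at level $0$ agrees with the one dictated by the enclosing blocks because $C_{\square}$ is $\sigma_C$-invariant. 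Once $\decgate$ is in hand, Definition~\ref{def:self_describing} is satisfied and, together with the single in-degree-$0$ gate, the hypotheses of Theorem~\ref{thm:circuit-to-atam} all hold.

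The main obstacle will be designing the carried information so that it is simultaneously sufficient, consistent, and bounded. Sufficiency means the address must pin down the gate uniquely; consistency means one alphabet and one decoding must serve at every scale, forcing the address to be reconstructible incrementally from the neighbours rather than handed down globally; and boundedness means the alphabet is finite, so only finitely many scales can ever be in flight on a single wire, and information about deeper scales has to be folded into the recursive structure rather than stored explicitly. Reconciling these with the geometry of $K$ --- in particular routing enough wires around the central hole to transmit the address in both cardinal directions without creating a cycle --- is the delicate part, and it is precisely the limited cross-sectional bandwidth of the iterates of the generator that the later sections will exploit to separate the assemblable shapes from the rest.
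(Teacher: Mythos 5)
Your overall strategy --- a circuit obtained as the fixed point of a substitution lifted from $\kappa$, evaluability from a unique in-degree-$0$ seed gate and substitution-preserved acyclicity, and a decoding function reading address information off the incoming wires --- is the same skeleton as the paper's construction, and that part is sound. The genuine gap is at the point you yourself defer to as ``the delicate part'': the boundedness of the carried information. Your plan has each wire encode ``the position of its level-$0$ cell within its enclosing level-$1$ block, that block's position within the level-$2$ block, and so on up the finitely many scales that are active at $p$.'' But there is no finite set of scales ``active'' at a position: a gate deep inside $\cacarpet$ has a nontrivial address at unboundedly many scales, and in particular the entry gate of a level-$n$ meta-block, for \emph{every} $n$, has a function whose parameters are the identity (wiring and label) of that level-$n$ block, so the decoding function must recover level-$n$ data for arbitrarily large $n$ from messages over a finite alphabet. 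Saying that deeper scales must be ``folded into the recursive structure'' names exactly the problem the proof has to solve; it is not yet a mechanism, and without one $\decgate$ cannot be defined at the meta-gate boundaries, which is precisely where self-description is nontrivial.

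The paper's resolution --- the actual content of Section~\ref{sec:circuit} --- is a two-layer message structure with a self-simulation property, and nothing equivalent appears in your proposal. Layer 1 messages carry a single level of data (the position within the current meta-gate plus the \emph{parent's} wiring and label); this decodes every gate except the entry gate of each meta-gate, where the function is a reparenting $\frepar{}$ and Lemma~\ref{lem:layer1-self-descr} returns $\bot$. Layer 2 repairs those failures: its messages identify an \emph{ancestor} meta-gate, and since each wire at one level is realized by two wires at the level below, the pair of layer-2 messages crossing a meta-gate boundary jointly encodes, via the dual operations $\extractfunc$ and $\embedfunc$, the single message of the parent level (Lemmas~\ref{lem:eval_subst1}, \ref{lem:eval_subst2} and~\ref{lem:eval_mg}). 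Level-$n$ information is therefore never stored on any one wire but distributed over the $2^n$ level-$0$ wires crossing a level-$n$ boundary, and the meta-gates of layer 2 simulate layer 2 itself, so a single finite alphabet serves all scales; a further subtlety you would also have to address is that the label substitution $\kappa_2$ must manipulate labels rather than gate functions, because the gate functions themselves invoke $\kappa_2$ and the definition would otherwise be circular. Some mechanism of this kind --- distributed, self-simulating encoding of the hierarchy rather than explicit addresses --- is indispensable to close the gap between your outline and a proof.
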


The remainder of this section is the description of $C_{\square}$. This circuit is built from two sets of messages: layer 1 messages in $\Sigma_1$ and layer 2 messages in $\Sigma_2$, and three fractal structures: a wiring layer $W_{\square}: K^\infty \to \Wirings$, and two function layers, $F_1$ operating on $\Sigma_1$ and $F_2$ operating on $\Sigma_2$.

The wiring layer $C_{w}: K^{\infty} \to \Wirings$ is the fixed point of a substitution $\kappa_w: \Wirings \to \Wirings^K$ starting from a seed wiring $s_w$:
\[
  \begin{cases}
    C_{w}(\vec{0}) = s_w\\
    C_{w}(\vec{z}) = \kappa_w(C_{w}\lfloor \frac{z}{K} \rfloor)(z \bmod K)\\
  \end{cases}
\]

The definition of the function layers is a bit more indirect: there are two sets of labels $L_1$ and $L_2$ respectively, with two rules $\kappa_1: \Wirings \to L_1^K$ and $\kappa_2: \Wirings \times L_2 \to L_2^K$. The rule $\kappa_2$ takes the form of a substitution.

Their fixpoints, starting from two seed labels $\mathfrak{s}_1$ and $\mathfrak{s}_2$ define two tiling of $K^\infty$ with labels of $L_1$ and $L_2$ respectively: $\Lambda_1$ and $\Lambda_2$. On layer $1$, $\Lambda_1$ is defined for each position according to the wiring of its parent:
\[
  \Lambda_1(\vec{z}) = \kappa_1(C_{w}(\lfloor\frac{z}{K}\rfloor))(z \bmod K)
\]

On layer $2$, $\Lambda_2$ is defined for each position according to the wiring \emph{and label} of its parent in a substitutive manner:
\[
  \begin{cases}
    \Lambda_2(\vec{0}) = \mathfrak{s}_2\\
    \Lambda_2(\vec{z}) = \kappa_2(C_{w}(\lfloor\frac{z}{K}\rfloor), C_{2}(\lfloor\frac{z}{K}\rfloor))(z \bmod K)\\
  \end{cases}.
\]

Finally, the actual gate functions are defined from two functions, $\finst{1}$ on layer 1 and $\finst{2}$ on layer 2, which define a function on $\Sigma_1$ (respectively $\Sigma_2$) from four elements, two wirings and two labels in $L_1$ (respectively $L_2$), one each for the gate and its parent. Together with the substitutions $\kappa_i$, they define a circuit $\mu_i(w, l)$ with domain $K$ known as the layer-$i$ meta-gate obtained by $w$ and $l$:
\[
  \begin{cases}
    \mu_1(w, l): K \to \Gates{\Sigma_1}\\
    \mu_1(w, l): \vec{z} \mapsto \{ \gatewiring: \kappa_w(w)(\vec{z}), \gatefun: \operatorname{instantiate}_1(w, l, \kappa_w(w)(\vec{z}), \kappa_1(w)(\vec{z}))\}\\
    \mu_2(w, l): K \to \Gates{\Sigma_2}\\
    \mu_2(w, l): \vec{z} \mapsto \{ \gatewiring: \kappa_w(w)(\vec{z}), \gatefun: \operatorname{instantiate}_2(w, l, \kappa_w(w)(\vec{z}), \kappa_2(w, l)(\vec{z}))\}.\\
\end{cases}
\]

In this circuit, the wiring of each gate is given by $\kappa_w(w)$, and its function is given by instantiating the label given by $\kappa_i$. When iterating this process, a gate $g$ appearing at position $\vec{z}$ in some parent meta-gate $\mu_i(l,w)$, the child meta-gate $\mu_i(g)$ associated with $g$ is defined as $\mu_i(\gatewiring[g], \kappa_i(l,w))$. The functions $\operatorname{instantiate}_i$ are each injective in their last argument, ensuring this does not create any ambiguity.

Finally, these meta-gates beget the two layers of circuits $C_1$ and $C_2$ by:
\[
  \begin{cases}
    C_i(\vec{0}) = \{ \gatewiring: s_w, \gatefun: s_i \}\\
    C_i(\vec{z}) = \mu_i(C_i(\lfloor\frac{\vec{z}}{K}\rfloor)(\vec{z} \bmod K)\\
  \end{cases}.
\]

By this definition, the wirings of $C_1(\vec{z})$ and $C_2(\vec{z})$ are the same ---they are given by $C_w$, so $C_\square: \vec{z} \mapsto C_1(\vec{z}) \otimes C_2(\vec{z})$ defines a circuit with alphabet $\Sigma_1 \times \Sigma_2$ and domain $K^\infty$.

The next subsections are the description of $C_w$, followed by those of $C_1$, then $C_2$, and finally the proof of that $C_{\square}$ is self-describing.

\subsection{The wiring layer}

The seed wiring $s_w$ is represented on the left of figure \ref{fig:kappa_seed}. On the right of the same figure is its image $\kappa_w(s_z)$.

\begin{figure}
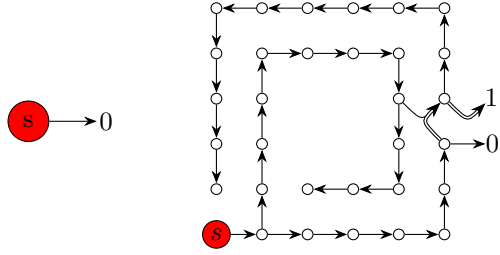

  \centering
  \begin{tikzpicture}
    \matrix[column sep=3em] {
      \input{tikz/seed_alone}&
      \input{tikz/seed_local}\\
    };
  \end{tikzpicture}
  \caption{$\kappa_1(w_s, \gcolor{seed})$, the first iteration of $\kappa_1$ on the seed gate. The wirings are given graphically; all gates have the \gcolor{normal} label, except the $s$ at $(0,0)$ which has label \gcolor{seed}. For the wiring with two inputs, the fat arrow represents the input number 0; all output wires have number 0: $\woutputmap[w](d) = 0$}
  \label{fig:kappa_seed}
\end{figure}

For any wiring $w$, $\kappa_1(w)$ will only contain wirings with at most two inputs sides, and these input sides are adjacent. Thus, all gates in $C_w$ have at most two input sides and they are adjacent. Hence, $\kappa_w$ only needs to be defined on wirings with that property.

The definition of $\kappa_w$ is isotropic: $\kappa_w$ commutes with a rotation of $\pi / 2$ and with reflections. This property will be upheld simply by giving the definition of $\kappa_w$ \emph{up to rotation and reflection}. 

Lastly, for each input wire in $w$ there are two wires in the input bus of $\kappa_w(w)$, and for each output wire in $w$ there are two wires in the output bus of $\kappa_w(w)$. The position of these wires are as follows, up to rotation:

\begin{tabular}{cc|ccc}
  \multicolumn{2}{c}{input of $w$} & \multicolumn{3}{c}{input $i$ of $\kappa_w(w)$}\\
  $\winputset[w]$ & direction & position & direction & $\winputset[i]$ \\
  \hline
  \multirow{2}{*}{$(W)$} & \multirow{2}{*}{$W$} & $(0,2)$ & $W$ & $(W)$\\
                           &                      & $(0,3)$ & $W$ & $(S, W)$\\
  \hline
  \multirow{4}{*}{$(S, W)$} & \multirow{2}{*}{$W$} & $(0,0)$ & $W$ & $(S, W)$\\
                              &                      & $(0,1)$ & $W$ & $(S, W)$\\
                              & \multirow{2}{*}{$S$} & $(0,0)$ & $S$ & $(S, W)$\\
                              &                      & $(1,0)$ & $S$ & $(S, W)$\\
  \hline
  \hline
  \multicolumn{2}{c}{output of $g$} & \multicolumn{3}{c}{output $o$ of $\kappa_w(g)$}\\
  $\woutputwires[w](d)$ & direction $d$ & position & direction $d'$ & $\woutputwires[i](d')$ \\
  \hline
  \multirow{2}{*}{$(W)$} & \multirow{2}{*}{$E$} & $(5, 2)$ & $E$            & $W$\\
                           &                      & $(5, 3)$ & $E$            & $(S, W)$\\
  \hline
  \multirow{2}{*}{$(S, W)$} & \multirow{2}{*}{$E$} & $(5, 0)$ & $E$            & $(S, W)$\\
                              &                      & $(5, 1)$ & $E$            & $(S, W)$\\
  \hline
  \multirow{2}{*}{$(S, W)$} & \multirow{2}{*}{$N$} & $(0, 5)$ & $N$            & $(S, W)$\\
                              &                      & $(1, 5)$ & $N$            & $(S, W)$\\
  \hline
\end{tabular}

Thus, if $w$ has $\winputset[w] = (W)$, $\woutputset[w] = \{N, E, S\}$ with $\woutputwires[w](N) = Sw$, $\woutputwires[w](E) = sW$ and $\woutputwires[w](S) = N$, then $\kappa_w(w)$ has the corresponding input and output busses and wires:
\[
  \begin{cases}
    \vec{I} = (\wirepos{(-1,2) \to (0, 2)}{W}, \wirepos{(-1,3) \to (0, 3)}{Ws})\\
    \vec{O} = (\wirepos{(0, 5) \to (0, 6)}{Sw}, \wirepos{(1, 5) \to (0, 6)}{Sw}, \wirepos{(5, 0) \to (6,0)}{Ws}, \wirepos{(5, 1) \to (6, 1)}{Ws}, \wirepos{(2,0) \to (2, -1)}{N}, \wirepos{(3,0) \to (3, -1)}{Nw})\\
  \end{cases}
\]

\begin{figure}
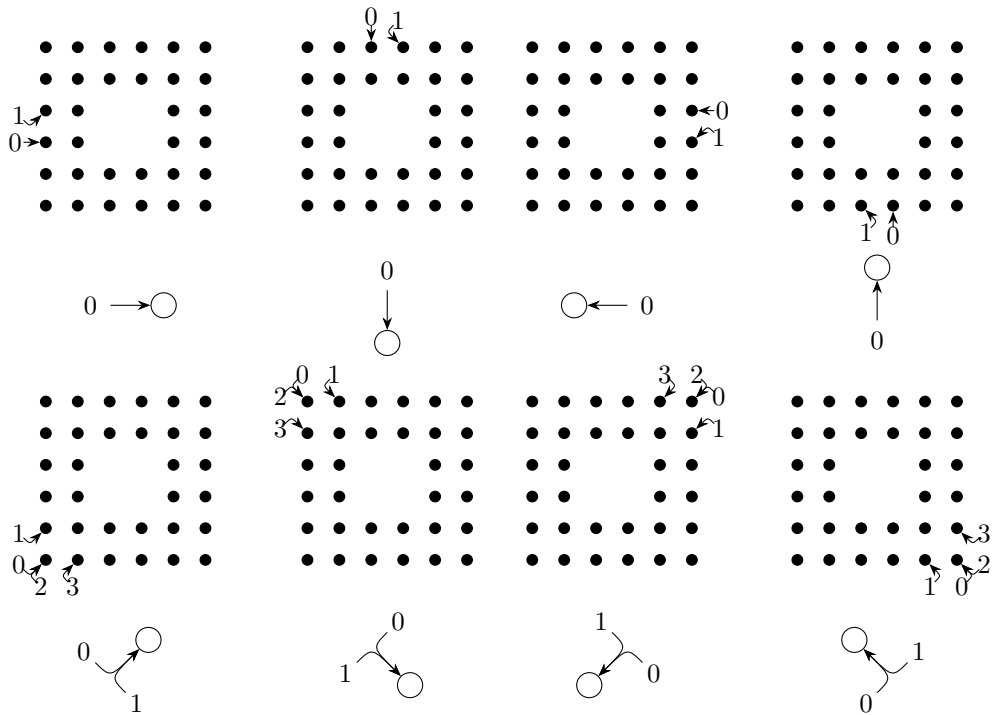

  \centering
  \begin{tikzpicture}
    \matrix[column sep=2em, row sep=0] {
      \scoped[rotate=180, scale=.7]{\input{tikz/in_wires_E}} &%
      \scoped[rotate=90, scale=.7]{\input{tikz/in_wires_E}} &%      
      \scoped[rotate=0, scale=.7]{\input{tikz/in_wires_E}} &%
      \scoped[rotate=-90, scale=.7]{\input{tikz/in_wires_E}} &\\
      \scoped [rotate=180]{\input{tikz/empty_gate_E}}; &
      \scoped [rotate=90]{\input{tikz/empty_gate_E}};  &% 
      \scoped [rotate=0]{\input{tikz/empty_gate_E}};   &
      \scoped [rotate=-90]{\input{tikz/empty_gate_E}}; &\\
      \scoped[rotate=180, scale = .7]{\input{tikz/in_wires_NE}} &
      \scoped[rotate=90, scale=.7]{\input{tikz/in_wires_NE}} &
      \scoped[rotate=0, scale=.7]{\input{tikz/in_wires_NE}} &
      \scoped[rotate=-90, scale=.7]{\input{tikz/in_wires_NE}} &\\
      \scoped [rotate=180]{\input{tikz/empty_gate_NE}};  &
      \scoped [rotate=90]{\input{tikz/empty_gate_NE}};   &
      \scoped [rotate=0]{\input{tikz/empty_gate_NE}};    &
      \scoped [rotate=-90]{\input{tikz/empty_gate_NE}};  &\\      
    };
  \end{tikzpicture}
  \caption{Location of the input wires in $\kappa_1(g)$ according to the input sides of $w = \gatewiring{g}$. The output wires mirror the input sides of the neighboring tiles.}

  \label{fig:kappa_wires}
\end{figure}

\paragraph*{Wirings with no inputs}

All wirings without input are sent by $\kappa_w$ to the array of wirings represented on figure~\ref{fig:kappa_seed}. The seed wiring $s_w$ is the only one to effectively appear when iterating $\kappa_w$.

\paragraph*{Wirings with one input}

A wiring with one input is cut according to figure~\ref{fig:kappa_normal1}.

\begin{figure}
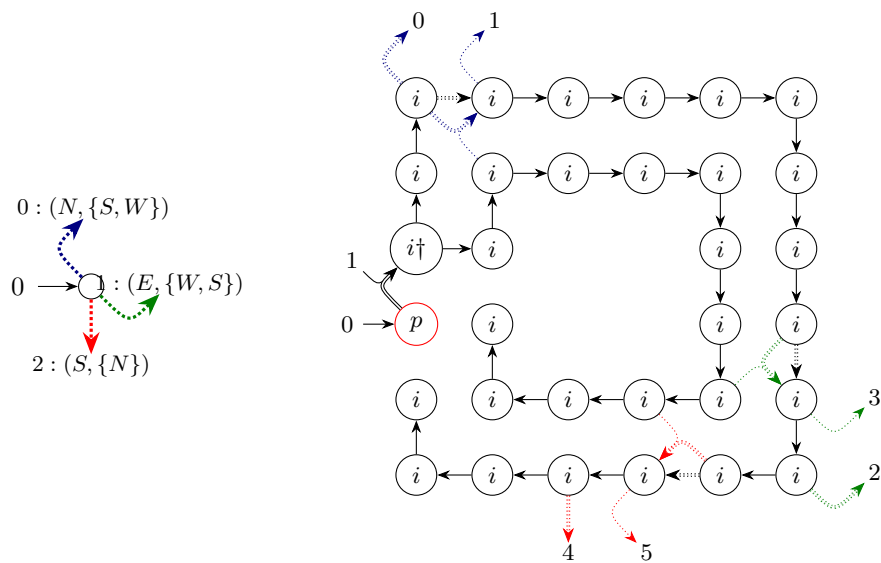

  \centering
  \begin{tikzpicture}
    \matrix[column sep=3em, row sep=5mm, cells={anchor=center}] {
      \scoped [rotate=180]{\input{tikz/empty_gate_E_to_N_nW_Se}};&
      \scoped [rotate=180]{\input{tikz/local_normal1}};\\
    };
  \end{tikzpicture}
  \caption{$\kappa_1(w, l)$ when $\winputset[w] = (W)$. The wirings are given graphically; on layer 1, all gates have the \gcolor{normal} label (black), except for the gate at $(0, 3)$ which has label \gcolor{input} (red). Dotted segments are parts of the wiring only if the corresponding element is in the outputs of $w$. For wirings with two inputs, the fat arrow represents the first input. The $\dagger$ marks the potential position for label $\gcolor{dA}$ on layer 2.}
  \label{fig:kappa_normal1}
\end{figure}

The wirings in $\kappa_w(g)$ depend on the input and outputs of $w$. Since $w$ has only one input, it is sufficient, up to rotation, to examine the case where $\winputset[w] = \{W\}$. The black arrows of figure~\ref{fig:kappa_normal1} depict the case where $\woutputset[w] = \emptyset$. For each $d \in \woutputset[w]$, some extra wires are needed. The additional wires for an output in the $N$ direction depend on whether the directions appearing in $\woutputwires[w](N)$ are $\{S\}$, $\{S, E\}$ or $\{S, W\}$. The wires for the other output directions are derived from these by rotation. Figure~\ref{fig:kappa_normal1} has one direction with each case, showing the complete range of possibilities for the extra wires; they are represented in dotted lines on the figure.

\paragraph*{Wirings with two (adjacent) inputs} The image of such a wiring $w$ by $\kappa_w$ is defined on figure~\ref{fig:kappa_input}, which is rotated and reflected so that the $S$ side of the figure is mapped to the $0$ input of $w$, and the $W$ side to the $1$ input of $w$.%\ruggedtodo{\textsc{reformuler}}

\begin{figure}
  \centering
    \begin{tikzpicture}
    \matrix[column sep=3em, row sep=5mm, cells={anchor=center}] {
      \input{tikz/empty_gate_SW_input2}&
      \input{tikz/local_input2};\\
    };
  \end{tikzpicture}
  \caption{$\kappa_1(w, l)$ when $\winputset{w} = (S, W)$. The wirings are given graphically; all gates have the \gcolor{normal} label except for the input gate at $(0, 0)$ which has label $\gcolor{input}$. This configuration gets rotated and reflected according to the inputs of the gate, so that the arrows on the outer ring form a path from the side with input $0$ to the side with input $1$, ``the long way around''. The '*' marks the \emph{special} position on layer 2, and the $\dagger$ marks the position where label $\gcolor{dA}$ may appear.}
  \label{fig:kappa_input}
\end{figure}

\begin{lemma}
  \label{lem:wiring-normal}
    For each gate $w$, $\kappa_w(w)$ is the wiring of a normal circuit. Moreover, $C_w$ is the wiring of a normal, closed circuit.
\end{lemma}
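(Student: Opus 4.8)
The plan is to read the circuit axioms directly off the pictorial definition of $\kappa_w$ for the first claim, and then to lift the result to the fixed point $C_w$ by an induction on scale, the engine of which is a single \emph{preservation} property of $\kappa_w$. For the first claim I regard $\kappa_w(w)$ as a map $K \to \Wirings$ and check the three conditions of the definition of a circuit, together with normality (at most two, and adjacent, input sides per gate). Since $\kappa_w$ commutes with the rotations and reflections of the square, it suffices to treat one representative of each class, indexed by the number of input sides of $w$: the inputless case of \cref{fig:kappa_seed}, the one-input case of \cref{fig:kappa_normal1}, and the two-adjacent-input case of \cref{fig:kappa_input}. For each figure I read off the interior arcs and confirm that every arrow leaving a cell in direction $d$ is met by the corresponding input arrow of its neighbour in direction $-d$ (the first condition), and that every pair of adjacent cells with no arrow between them is inert on both facing sides (the second condition); both are immediate from the drawings. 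Normality is visible on the same figures and is exactly the property announced before them.

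The third (boundary) condition and the consistency of the input and output buses follow from the placement tables: each input side of $w$ contributes precisely two wires to $\vec I$ and each output side precisely two wires to $\vec O$, at the listed positions and with the listed wire shapes, whereas every side of $w$ declared inert contributes no boundary wire at all. Reading the tables confirms that the sides marked input, output, or inert by the resulting wirings agree with membership in $\vec I$, $\vec O$, or neither, as the boundary condition demands.

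For the second claim the crux is the following preservation property of $\kappa_w$: if two wirings $w_a, w_b$ are compatible across a direction $d$ --- meaning $w_a$ outputs on side $d$ into the input $-d$ of $w_b$, or both are inert there --- then after $\kappa_w(w_a)$ and $\kappa_w(w_b)$ are placed in adjacent copies of $K$ their shared boundary is again consistent. This is exactly what the placement tables encode: an output of $w_a$ in direction $d$ refines into two boundary wires whose positions, translated by the copy width, land opposite the two boundary wires produced by the matching input of $w_b$ in direction $-d$, with agreeing wire shapes; an inert parent side produces no crossing wire on either copy. Granting this, I argue by induction on the scale at which two adjacent positions $\vec z, \vec z + d$ first fall into a common copy. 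If they lie in the same innermost copy, their edge is interior to some $\kappa_w(w)$ and is covered by the first claim. Otherwise they occupy adjacent sub-copies of a common parent; the two parent gates are themselves adjacent gates of a circuit one scale up, hence compatible by the induction hypothesis, so preservation makes the two sub-copies compatible along the boundary carrying the edge. The mixed case, where one of $\vec z, \vec z + d$ lies in $\cacarpet$ and the other does not, reduces to the same statement: a crossing wire there would refine a parent wire aimed at an absent parent, which the parent edge's own consistency forbids. Normality propagates since every wiring in $C_w$ is some $\kappa_w(w)(\vec z \bmod K)$, already normal; and $C_w$ is closed because $s_w$ has no input side and $\kappa_w$ creates an input wire only as a refinement of a parent input, of which there are none at the top level, so the input bus of $C_w$ is empty.

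The main obstacle is the bookkeeping inside the preservation property: for every entry of the input and output tables, and modulo the fourfold symmetry, one must verify that the two refined wires of an output on one side sit exactly opposite the two refined wires of the matching input on the other, and that their distinguished (capitalised) directions agree. Controlling the fractal's holes --- ensuring no wire is ever directed at an absent neighbour --- is folded into this same check, performed once at the base scale and then carried upward by the induction.
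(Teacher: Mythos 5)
Your proposal is correct and follows essentially the same route as the paper: the paper's own (very terse) proof likewise reads the base properties off the schemata and rests on exactly your preservation property --- that matching sides of $w, w'$ yield matching corresponding sides of $\kappa_w(w)$ and $\kappa_w(w')$ --- with closedness following because each iterate of $\kappa_w$ on $s_w$ has no inputs. You merely make explicit the scale induction and the boundary bookkeeping that the paper leaves implicit.
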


\begin{proof}
  This follows from observation of the schemata describing each $\kappa_w(w)$ and observing that for two wirings $w, w'$, if an output side in direction $d$ of $w$ matches an input side in direction $-d$ of $w'$, then the corresponding sides of $\kappa_w(w)$ and $\kappa_w(w')$ also match.

  Additionally, each iteration of $\kappa_w$ on $s_w$ is without inputs, so $C_w$ is closed.
\end{proof}

\subsection{Layer 1}
\label{sec:local_layer}

The components of layer 1 are an alphabet $\Sigma_1$, a finite set of labels $L_1$, a substitution $\kappa_1: \Wirings \mapsto L_1^K$, and for each label $l \in L_1$, and an instantiation function $\operatorname{instantiate}_1$. Layer 1 can then be realized as described above into a circuit with the wirings of $C_w$ and the functions given by instantiating the fixpoint of $\kappa_1$.

The set of labels is $L_1 = \{\gcolor{s}_1, \gcolor{normal}, \gcolor{input} \}$.

The elements of $\Sigma_1$ are \emph{layer 1 messages}.

\begin{definition}[Layer 1 Message]
  A layer 1 message is a triple $\{\lmpos = \vec{z} \in K_1, \lmpwiring = w \in \Wirings, \lmpar = l \in L_1\}$. \end{definition}

The circuit $C_1$ is going to be defined by iterating $\kappa_w$ and $\kappa_1$ starting from $s_w$ and $\gcolor{s}_1$, then instantiating the labels.

\paragraph*{Labels}

The label function $\kappa_1: \Wirings \to L_1^K$ assigns the label $\gcolor{normal}$ at all positions except:
\begin{itemize}
\item $\kappa_1(s_w)(\vec{0}) = \gcolor{s}_1$
\item if $w$ has at least one input, then $\kappa_1(w)(\vec{z}) = \gcolor{input}$ for the position $\vec{z}$ which receives the input wire $0$ in $\kappa_w(w)$.
\end{itemize}

\paragraph*{The seed function}

The starting label is $\gcolor{s}_1$, it only ever appears at position $(0,0)$ in $\kappa_1(s_w, \gcolor{s}_1)$.

Since $s_1$ has no inputs, its associated function $f_s = \finst{1}(s_w, \gcolor{s}_1, s_w, \gcolor{s}_1)$ is a constant function with value $f_{s}() = \{\lmpos: (0, 0), \lmpcolor: \gcolor{s}_1, \lmpwiring: s_w\}$.

\paragraph*{Gate functions for layer 1}

There are two types of functions for the gates output by $\kappa_1(g)$ other than the seed. They have a either an \emph{increment} function $\fincr{k,D}$ with $k \in \{1, 2\}$ and $D \in \Dir$ or a \emph{reparenting} function $\frepar{k, D, w, c}$, with $k \in \{1, 2\}$, $D \in \Dir$, $w$ a wiring and $c \in C$ a color. The versions with $k=2$ take two inputs but ignore the second one. The functions $\fincr{}$ and $\frepar{}$ are defined as follows:

\begin{eqnarray*}
  \lmpar[\fincr{1, D}(m)] &=& \lmpar[m],\\
  \lmpos[\fincr{1, D}(m)] &=& \lmpos{m} + D\\
  \lmpar[\frepar{1, D, w, c}(m)] &=& (w, c)\\
  \lmpos[\frepar{1, D, w, c}(m)] &=& \lmpos[m] + D\\
  \fincr{2, D}(m, m') &=& \fincr{1, D}(m)\\
  \frepar{2, D, w, c}(m, m') &=& \frepar{1, D, w, c}(m).\\
\end{eqnarray*}

The function of each gate is fixed from its label and wiring, and those of its parent through \finst{1} as follows:
\[
  \begin{cases}
    \finst{1}(w_p, l_p, w, \gcolor{normal}) = \fincr{k, D} \\
    \finst{1}(w_p, l_p, w, \gcolor{input}) = \frepar{k, D, w_p, l_p},\\
  \end{cases}
\]
where $k$ is the number of inputs of $w$, and $D$ is the direction of its first input.

\paragraph*{Behavior and Self-Description of Layer 1}

The circuit $C_1$ obtained from $\kappa_1$ is normal, by lemma~\ref{lem:wiring-normal}. Let $e = \innerevalfunc{C_1}: K^\infty \times \Dir \to \Sigma_l$ be the evaluation function of $C_1$. That function $e$ enjoys a simple description, which reflects the fact that in $C_1$, the different meta-gates do not actually communicate. On layer 2 however, there will be some communication between meta-gates, as described in the next section.

\begin{lemma}
  \label{lem:local_eval}
  Let $w \in \Wirings$, $l \in L_1$ and $a: p \mapsto p'$ be an internal or outgoing arc in $\kappa_w(w)$. If $w$ has no inputs, assume $l = \gcolor{s}_1$.

  For any input $\vec{\imath}$ of $\mu_1(w, l)$, let $m = \innerevalfunc{\mu_1(w,l)}(\vec{\imath}, a)$ be the value of arc $a$ on input $\vec{\imath}$. Then $\lmpos[m] = p$, $\lmpwiring[m] = w$ and $\lmpcolor[m] = l$.
\end{lemma}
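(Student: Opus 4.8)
The plan is to prove the three equalities $\lmpos[m]=p$, $\lmpwiring[m]=w$ and $\lmpcolor[m]=l$ simultaneously, by induction on the well-founded structure of the meta-gate circuit $\mu_1(w,l)$. First I would record that $\mu_1(w,l)$ is a finite, acyclic, normal circuit: its wirings are those of $\kappa_w(w)$, which is normal by \cref{lem:wiring-normal}, and its domain is the finite set $K$, so it is evaluable and $\innerevalfunc{\mu_1(w,l)}$ is well defined. This lets me induct on the length of the longest backward path in the dependency graph ending at the source $p$ of the arc $a$.

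For the base case I would treat the gates with no internal predecessor. If $w$ has no input, then by hypothesis $l=\gcolor{s}_1$, the only such gate is the seed at $(0,0)$, whose constant function outputs $\{\lmpos:(0,0),\lmpcolor:\gcolor{s}_1,\lmpwiring:s_w\}$; since here $p=(0,0)$, $w=s_w$ and $l=\gcolor{s}_1$, all three fields match. If $w$ has inputs, the base gate is the one receiving input wire $0$, which by the definition of $\kappa_1$ carries the label $\gcolor{input}$ and hence, through $\finst{1}(w,l,\cdot,\gcolor{input})=\frepar{k,D,w,l}$, the reparenting function. The crucial point is that $\frepar{}$ overwrites the parent fields with exactly $(w,l)$, so $\lmpwiring=w$ and $\lmpcolor=l$ hold independently of the incoming data; the position becomes $\lmpos[\vec{\imath}_0]+D$, and I would read off from the $\kappa_w$ input-bus placement and \cref{fig:kappa_input} that $D$ is precisely the arc direction of input wire $0$, so the canonical source position of that wire is advanced to the gate's own coordinate $p$.

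For the inductive step I would take an internal arc $a:p\to p'$ whose source gate at $p$ carries the $\gcolor{normal}$ label, hence the increment function $\fincr{k,D}$ via $\finst{1}(w,l,\cdot,\gcolor{normal})$. The key structural observation, visible in \cref{fig:kappa_normal1} and \cref{fig:kappa_input}, is that every such gate takes its \emph{first} input along an internal arc, any external input appearing only as a second, ignored argument --- this is exactly the role of the $k=2$ variants $\fincr{2,D}(m,m')=\fincr{1,D}(m)$. Thus the value driving the output is the message $m$ on the first incoming internal arc, to which the induction hypothesis applies: $\lmpwiring[m]=w$, $\lmpcolor[m]=l$, and $\lmpos[m]$ is the source of that arc. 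Since $\fincr{}$ copies the parent fields and sets the position to $\lmpos[m]+D$ with $D$ the direction of the first input arc, the parent fields remain $(w,l)$ and the position telescopes to exactly $p$. This is the formal content of the remark that on layer $1$ distinct meta-gates do not communicate: each propagated message depends only on the single internal chain leading back to the reparenting (or seed) gate.

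The step I expect to be the main obstacle is the position bookkeeping at the base gate: making precise that the external message on wire $0$ enters with the canonical source position dictated by the $\kappa_w$ input-bus placement, so that one application of $\frepar{}$ lands the position on $p$, together with checking that $D$ in each increment is genuinely the arc direction so the running sum telescopes. The remainder --- preservation of the parent fields and the reduction of two-input gates to their first argument --- is routine once the internal-first wiring convention is pinned down from the figures.
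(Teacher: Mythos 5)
Your induction is structurally the same as the paper's own proof: the same well-founded induction over the dependency graph of the meta-gate, the same two base cases (the seed gate when $w$ has no inputs, the unique $\gcolor{input}$ gate otherwise), and the same inductive step, including the correct observation that every external wire other than wire $0$ enters its gate as a \emph{second} argument and is discarded by $\fincr{2,D}$. The divergence --- and it is a genuine gap, located exactly at the step you flag as the main obstacle --- is in the base case for meta-gates with inputs. The lemma asserts its conclusion \emph{for any} input vector $\vec{\imath}$ of $\mu_1(w,l)$, so you are not entitled to assume that the message on wire $0$ ``enters with the canonical source position dictated by the $\kappa_w$ input-bus placement'': nothing constrains $\lmpos[\vec{\imath}_0]$. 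Under your reading of the reparenting function, the output of the $\gcolor{input}$ gate has position $\lmpos[\vec{\imath}_0]+D$, which for an adversarial $\vec{\imath}_0$ is not $p$; and since every downstream $\fincr{}$ gate merely adds its first-input direction, the error propagates to every internal and outgoing arc of the meta-gate. As proposed, your argument proves the position claim only for inputs whose wire-$0$ message happens to carry the matching position --- a strictly weaker statement, which would also defeat the purpose of the universal quantifier when the lemma is used to build $\decgate$ in \cref{lem:layer1-self-descr}: there, each meta-gate must be analysable independently of what its neighbours feed it, and your version would force an extra induction across the meta-gates of $C_1$. (Note too that your ``canonical source position'' is not even well-typed: the source of an incoming wire lies outside $K$, so either the position field leaves its domain or the addition must wrap around.)

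The paper closes this case the other way around: the reparenting gate overwrites the \emph{entire} message, position field included. Its proof asserts that the outputs of the $\gcolor{input}$ gate satisfy $\lmpos = \vec{z}_i$, $\lmpwiring = w$ and $\lmpcolor = l$, where $\vec{z}_i$ is that gate's own location --- a constant, with no reference whatsoever to the incoming message. This is realizable because $\vec{z}_i$ is the entry position of input wire $0$ of $\kappa_w(w)$, hence computable from the parameter $w$ that $\frepar{}$ carries (equivalently, from the arity and first-input direction, given the rotation conventions of \cref{fig:kappa_normal1} and \cref{fig:kappa_input}). You read the displayed equation $\lmpos[\frepar{1,D,w,c}(m)] = \lmpos[m]+D$ literally, which is understandable since that is what the paper writes; but under that literal reading the lemma itself is false for arbitrary $\vec{\imath}$, so the only reading consistent with both the statement and the paper's proof is the position-resetting one. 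Replacing your canonical-position assumption by this observation repairs your proof; everything else in it goes through as in the paper.
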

\begin{proof}
  
  By induction on the non-incoming arcs of the (acyclic) dependency graph $D$ of $\kappa_w(w)$.

  If $w$ is the seed wiring $s_w$, then the root of $D$ is also $s_1$ and its outputs satisfy $\lmpos[e_g(a)] = (0, 0)$, $\lmpwiring[e_g(a)] = \gatewiring[s_1]$ and $\lmpcolor[e_g(a)] = \gcolor{s}_1$.

  Otherwise, $\kappa_1(w)$ has a unique position $\vec{z_i}$ with label $\gcolor{input}$, corresponding to a gate in $\mu_1(l,w)$ with function $\frepar{k, D, w, l}$, for some $k$ and $D$. By definition of $\frepar{}$, its outputs satisfy $\lmpos[e_g(a)] = \vec{z_i}$ , $\lmpwiring[e_g(a)] = w$ and $\lmpcolor[e_g(a)] = l$.

  In both cases, each other gate $g'$ at position $\vec{z}$ of $\mu_1(l,w)$ has function $\fincr{k, D}$, where $k$ is the number of inputs of $g'$, and $D$ is the direction of its first input arc $i = \vec{z} - D \overset{D}{\mapsto} z$. By induction, $\lmpos[e_g(i)] = z - D$ and $\lmpar[e_g(i)] = (\gatewiring[g], \gatecolor[g])$. By definition of $\fincr{k, D}$, each of the output arcs $o$ of $g'$ verify $\lmpos[e_g(o)] = (\vec{z} - D) + D = \vec{z}$ and $\lmpwiring[e_g(a)] = w$ and $\lmpcolor[e_g(a)] = l$.
\end{proof}

Additionally, $C_1$ is ``mostly self-describing'': the first input of a gate $g$ is enough to recover $g$, except for the value of $w$ and $c$ in gates with a function of the form $\frepar{k, D, w, c}$.

\begin{lemma}
  \label{lem:layer1-self-descr}
  For a position  $p \in K^{\infty}$, let $ \vec{e} = \winputset{\gatewiring[C_1(p)]}$ be the input arcs of $C_1(p)$; let $d_i$ be the direction of $e_i$.
  
  There is a function \(\decgate: \Sigma_1 \times \Dir \to \Gates{\Sigma_1} \cup \bot\) such that for all $p \in K^{\infty}$,
  \begin{itemize}
  \item if \(\gatefun[C_1(p)]\) is $\fincr{k, d_0}$, then \(\decgate(\innerevalfunc{C_1}(e_0), d_0) = C_1(p)\)
  \item if \(\gatefun[C_1(p)]\) is $\frepar{k, d_0, -, -}$ then \(\decgate(\innerevalfunc{C_1}(e_0), d_0) = \bot\).
  \end{itemize}
\end{lemma}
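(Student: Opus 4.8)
The plan is to define $\decgate$ as a one-step inverse of the evaluation of $C_1$, built on \cref{lem:local_eval}: every internal or outgoing arc of a meta-gate $\mu_1(w_p,l_p)$ carries a message that records the position of the arc's source (in the local $K$-coordinates of the meta-gate) together with the meta-gate's own parent wiring $w_p$, and this value does not depend on the meta-gate's external inputs. So on input $(x,d_0)\in\Sigma_1\times\Dir$ I set $w_p=\lmpwiring[x]$ and $\vec z=\lmpos[x]+d_0$, the candidate local position of the gate to be reconstructed. If $\vec z\notin K$, return $\bot$. Otherwise put $w=\kappa_w(w_p)(\vec z)$ and $l=\kappa_1(w_p)(\vec z)$; if $l=\gcolor{normal}$ return the gate $\{\gatewiring:w,\gatefun:\fincr{k,d_0}\}$ with $k$ the in-degree of $w$, and if $l\in\{\gcolor{input},\gcolor{s}_1\}$ return $\bot$.

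For the $\fincr{}$ case, fix $p$ with $\gatefun[C_1(p)]=\fincr{k,d_0}$ and write $w_p=C_w(\quot p K)$, $l_p=\Lambda_1(\quot p K)$ for its meta-gate's parameters. The key structural fact — implicit in the proof of \cref{lem:local_eval} — is that an $\fincr{}$ gate always receives its input $0$ along an arc internal to its own meta-gate. Granting this, $e_0$ is internal, so \cref{lem:local_eval} gives $\lmpos[m_0]=(p\bmod K)-d_0$, $\lmpwiring[m_0]=w_p$ and $\lmpcolor[m_0]=l_p$ for $m_0=\innerevalfunc{C_1}(e_0)$. Then $\decgate$ computes $\vec z=p\bmod K\in K$, whence $w=\kappa_w(w_p)(p\bmod K)=\gatewiring[C_1(p)]$ and $l=\kappa_1(w_p)(p\bmod K)=\Lambda_1(p)=\gcolor{normal}$; since $\fincr{k,d_0}$ is determined by $k$ and $d_0$ alone, the reconstructed gate equals $C_1(p)$.

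For the $\frepar{}$ case I must show $\decgate$ outputs $\bot$, and I split on whether the reparenting gate's input-$0$ arc is internal. The small geometric lemma here is that $\lmpos[m_0]+d_0\in K$ exactly when $e_0$ is internal: an external arc crosses the boundary between two adjacent level-$1$ meta-gates in direction $d_0$, so by the tiling of $K^\infty$ its source sits at the extreme coordinate of $K$ in that direction (coordinate $5$ for $E,N$ and $0$ for $W,S$), and adding $d_0$ leaves $\{0,\dots,5\}^2\supseteq K$; an internal arc instead lands on $p\bmod K\in K$. Hence if input-$0$ is external then $\vec z\notin K$ and $\decgate$ returns $\bot$ at the first test, while if it is internal then \cref{lem:local_eval} yields $\lmpwiring[m_0]=w_p$ and $\vec z=p\bmod K$, so the computed label is $l=\Lambda_1(p)=\gcolor{input}$ and $\decgate$ returns $\bot$ at the second test. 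Either way the output is $\bot$. (The seed gate carries neither an $\fincr{}$ nor a $\frepar{}$ function and has no input $0$, so it falls outside the scope of the statement.)

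The hard part is the structural property used in the second paragraph, namely that $\fincr{}$ gates are always fed their input $0$ from within their own meta-gate; this is where the routing geometry of $\kappa_w$ really enters. I would discharge it exactly as the bookkeeping in \cref{lem:local_eval} does: the unique reparenting gate is the root of the meta-gate's internal data flow, the secondary external input wires are always routed into non-first input slots (and are discarded by $\fincr{2,\cdot}$ and $\frepar{2,\cdot}$), and every other internal arc is the input $0$ of its head. Checking this requires reading off the schemata of \cref{fig:kappa_normal1,fig:kappa_input}, but the isotropy of $\kappa_w$ reduces that verification, and the boundary-crossing computation above, to a single case up to rotation and reflection.
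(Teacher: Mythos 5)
Your proof is correct and takes essentially the same route as the paper's: the paper also defines \(\decgate\) by reading \(\lmpwiring[m]\), \(\lmpar[m]\) and \(\lmpos[m]\) off the incoming message, testing whether the shifted position stays inside \(\{0,\ldots,5\}^2\) (returning \(\bot\) otherwise), returning the corresponding gate of \(\mu_1(\lmpwiring[m],\lmpar[m])\), and citing Lemma~\ref{lem:local_eval} for correctness. Your write-up merely makes explicit what the paper's three-line proof leaves implicit — notably the structural fact that \(\fincr{}\) gates receive their input \(0\) on an arc internal to their meta-gate, and the boundary computation for external arcs — and your additional test on the label \(\gcolor{input}\) is redundant (a reparenting gate's first input is always external) but harmless.
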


\begin{proof}
  The function $\decgate$ is defined as follows: let $m \in \Sigma_1$ be a local message, and $d \in \Dir$. Let $p = \lmpos[m]$, if $p - d \notin \{0, \ldots, 5\}^2$, then $\decgate(m, d) = \bot$. Otherwise, $\decgate(m, d)$ is the gate at position $p$ in $\mu_1(\lmpwiring[m], \lmpar[m])$.
  
  By lemma~\ref{lem:local_eval}, $\decgate$ satisfies the lemma.
\end{proof}

Moreover, when $\decgate(e(e_O), d_0) = \bot$, $g$ itself cannot be determined, but its label and wiring can, as well as $\mu_1(g)$.

\begin{corollary}
  \label{thm:layer1-decode}
    For a position  $p \in K^{\infty}$, let $ \vec{e} = \winputset[\gatewiring[C_1(p)]]$ be the input arcs of $C_1(p)$; let $d_i$ be the direction of $e_i$,
  
    \begin{itemize}
    \item there is a function \(\decgatew: \Sigma_1 \times \Dir \to \Wirings\) such that for each position $p \in K^{\infty}$, $\decgatew(\innerevalfunc{C_1}(e_0), d_0) = \gatewiring[C_1(p)]$
    \item there is a function \(\decgatel: \Sigma_1 \times \Dir \to \Wirings\) such that for each position $p \in K^{\infty}$, $\decgatel(\innerevalfunc{C_1}(e_0), d_0) = \Lambda_1(p)$
%    \item there is a function \(\decgatecut: \Sigma_1 \times \Dir \to \Gates{\Sigma_1}^{K}\) such that for all $p \in K^{\infty}$, $\decgatecut(\innerevalfunc{C_1}(e_0), d_0) = \mu_1(C_1(p))$.\ruggedtodo{Utile?}
\end{itemize}
\end{corollary}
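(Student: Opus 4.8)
The plan is to refine the decoding function of \cref{lem:layer1-self-descr} so that it always outputs the wiring and the label of $C_1(p)$, including in the case where that lemma was forced to return $\bot$. Everything rests on \cref{lem:local_eval}: the first-input message $m = \innerevalfunc{C_1}(e_0)$ records the local position $\lmpos[m] \in \{0,\ldots,5\}^2$ of its source gate, together with the wiring $\lmpwiring[m]$ and label $\lmpcolor[m]$ of the \emph{parent} of the meta-gate that produced it. First I would read off from $m$ and $d_0$ whether the arc $e_0$ stays inside $p$'s own meta-gate; as in \cref{lem:layer1-self-descr}, this is exactly the membership test $\lmpos[m] + d_0 \in \{0,\ldots,5\}^2$, and it coincides with the distinction between $C_1(p)$ carrying an increment function $\fincr{}$ and carrying a reparenting function $\frepar{}$.

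In the internal ($\fincr{}$) case, $e_0$ is an internal arc of $p$'s meta-gate $\mu_1(w,l)$, so $\lmpwiring[m] = w$ and $p \bmod K = \lmpos[m] + d_0$ by \cref{lem:local_eval}. The defining equations of $C_w$ and of $\Lambda_1$ then give $\gatewiring[C_1(p)] = \kappa_w(w)(\lmpos[m] + d_0)$ and $\Lambda_1(p) = \kappa_1(w)(\lmpos[m] + d_0)$, so I would set $\decgatew(m, d_0)$ and $\decgatel(m, d_0)$ to these two values. Note that $\Lambda_1(p) = \gcolor{normal}$ here, since the only non-$\gcolor{normal}$, non-seed label $\gcolor{input}$ sits at the input-$0$ position, which is precisely the boundary case.

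In the boundary ($\frepar{}$) case, $p$ is the gate receiving input wire $0$ of its meta-gate $\mu_1(w,l)$, so $\Lambda_1(p) = \gcolor{input}$ by the definition of $\kappa_1$, and $\decgatel$ returns $\gcolor{input}$. The wiring needs an argument: although $w$ itself is lost, $\kappa_w(w)$ places its input wires at positions and orientations that depend only on $\winputset[w]$ (the outputs of $w$ only perturb the wirings along the output boundary, as one reads from the $\kappa_w$ schemata), so that $\gatewiring[C_1(p)] = \kappa_w(w)(p \bmod K)$ is determined by $\winputset[w]$ and $d_0$ alone. It remains to recover $\winputset[w]$ from $m$: here $m$ is an output of a neighbouring meta-gate, so by \cref{lem:local_eval} it carries that neighbour's parent wiring $\lmpwiring[m]$ and its source position $\lmpos[m]$, which together pin down the coarse output wire feeding $p$'s meta-gate; by the wiring-compatibility of normal circuits (\cref{lem:wiring-normal}), an output wire $\woutputwires[w](d)$ encodes exactly the input sides of the gate it points into, which is $\winputset[w]$.

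The main obstacle is this last case, and it is exactly what the corollary buys over \cref{lem:layer1-self-descr}: even though the reparenting function $\frepar{k, d_0, w, l}$ is genuinely unrecoverable from the first input (the parent data $w$ and $l$ are simply absent from $m$), the strictly weaker data --- the \emph{wiring} and the \emph{label} --- do survive. The delicate point to check against the $\kappa_w$ tables is that the wiring at an input-boundary position is insensitive to the output structure of the parent wiring, so that knowing merely $\winputset[w]$, rather than all of $w$, is enough to reconstruct $\gatewiring[C_1(p)]$.
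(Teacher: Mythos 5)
Your proof is correct and follows the paper's own argument: the same case split inherited from \cref{lem:layer1-self-descr}, and in the $\bot$ case the same two observations, namely that the label there is necessarily $\gcolor{input}$ and that the wiring of the gate receiving input $0$ of a meta-gate is insensitive to the output structure of the parent wiring. The only (harmless) divergence is how you index that last fact: the paper keys the wiring on the gate's position within its meta-gate, recovered from $\lmpos[m]$ and $d_0$, whereas you recover $\winputset[w]$ from the parent wiring $\lmpwiring[m]$ carried by the message, using the wiring-compatibility constraint of the circuit one level up --- two equivalent ways of reading off the same finite table.
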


\begin{proof}
%  $\decgatecut$ can be built from $\decgatew$ and $\decgatel$.

  For \decgatew{} and \decgatel{}, it suffices to observe that whenever $\decgate(\innerevalfunc{C_1}(e_0), d_0) = \bot$ at some position $p$, $\innerevalfunc{C_1}(p)$ has label \gcolor{input}, and its wiring only depends on its position within its metagate.
\end{proof}

\subsection{Layer 2}

A second layer is needed in order to get full self-description of the circuit on $K^\infty$. This layer routes global information between the meta-gates so that the input gate of each meta-gate can be indentified by its incoming global message. The construction needs to ``tie the knot'', so it is not only needed to recover the identity of the input gate on the layer 1, but also on layer 2 itself.

This layer is defined by a set $L_2$ of labels, an alphabet $\Sigma_2$, the label substitution $\kappa_2$ and its instantiation function. In contrast with layer 1, $\kappa_2$ takes as input a wiring, as well as a label. This makes the definition of $\Lambda_2$ recursive. In contrast with layer 1, on layer 2, the two-step definition of gates using labels and the function $\operatorname{instantiate}$ is actually \emph{needed} because of a subtelty related to the tying of the knot. The functions of some gates make use of $\kappa_2$. Thus $\kappa_2$ shall not directly manipulate the gates or their function, lest the definition of layer 2 becomes cyclical and possibly ill-founded. %In other words, each gate function is allowed to observe which function is associated with each gate in a meta-gate, but not to actually run them.

The set of layer 2 labels is $L_2 = \{\gcolor{s}, \gcolor{i_1}, \gcolor{i_2}, \gcolor{dL}, \gcolor{dA} \}$.

\paragraph*{Layer 2 messages}

\tikzexternaldisable
A message on layer 2 is an element of $\Sigma_2$; it is built from:
\begin{itemize}
\item $\mloc[m] \in L_2$\tikzmark{local},
\item $\mglob[m]$\tikzmark{global}, itself consisting of:
  \begin{itemize}
  \item $\gmcolor[ {\mglob[m]} ] \in L_2$.\tikzmark{anc-layer2}
  \item $\gmanclocal[ {\mglob[m]} ] \in \Sigma_1$\tikzmark{anc-layer1}
  \end{itemize}
\end{itemize}

These messages make $C_2$ self-describing by completing the information available in layer 1 and used in lemma~\ref{lem:layer1-self-descr}. A message $m_2$ output by a gate $g_2$ in $\kappa_2(p)$ \surligne[text-local-part]{purple!30!white}{identifies $p$} through $\mloc[m_2]$ if $g_2$ is not the input gate of $\kappa_2(p)$, as in lemma~\ref{lem:layer1-self-descr}. The global part $\mglob[m_2]$ \surligne[text-global-part]{yellow}{identifies \emph{some ancestor} of $g_2$}, on layer 1 through $\gmanclocal[ {\mglob[m]} ]$ and on layer 2 through $\gmcolor[ {\mglob[m]} ]$. This global information will enable the determination of the entry gate of each meta-gate, on both layers.%\ruggedtodo{Reformuler?} %When $g_2$ is the entry gate of its meta-gate $\kappa_2(g'_1, g'_2)$, the ancestor $\gmanclocal{\mglob{m}}$ identifies is its parent.
\tikz[remember picture] \draw[overlay, purple!10!white, thick] (pic cs:local) to[bend left=15] (text-local-part);
\tikz[remember picture] \draw[overlay, yellow!50!white, thick] (pic cs:global) -- (text-global-part);
\tikzexternalenable
From two messages $m_l, m_g \in \Sigma_2$, two messages $(m_1, m_2) = \extractfunc(m_l,m_g) \in \Sigma_1 \times \Sigma_2$ can be extracted by reading the local information from $m_l$, and the global information from $m_g$, as follows:
\[
  \begin{cases}
    m_1 = \gmanclocal[ {\mglob[m_l]} ]\\
    \mloc[m_2] = \gmcolor[ {\mglob[m_l]} ]\\
    \mglob[m_2] = \mglob[m_g].\\
  \end{cases}
\]

The converse operation, embedding, takes as input three messages, a payload $(p_1, p_2) \in (\Sigma_1 \times \Sigma_2)$ and a context $c \in \Sigma_2$, and yields two messages $m_l$ and $m_g$
\[
  \begin{cases}
    \mloc[m_l] = \mloc[m_g] = \mloc[c]\\
    \gmcolor[ {\mglob[m_l]} ] = \mloc[p_2].\\
    \gmanclocal[ {\mglob[m_l]} ] = p_1\\
    \mglob[m_g] = \mglob[p_2].\\
  \end{cases}
\]

Extracting and embedding are dual operations, in the sense that  \[\forall c, p_1, p_2, \extractfunc \circ \embedfunc(c, p_1, p_2) = (p_1, p_2).\]

\paragraph*{The substitution $\kappa_2$}

Given a wiring $w$ and a label $l \in L_2$, the substitution $\kappa_2$ yields a label for each position in $K$;
\begin{itemize}
\item $\gcolor{s}_2$ for the seed gate, i.e. for position $(0,0)$ if $w$ has no inputs;
\item $\gcolor{dL}$ for the gate receiving the input number 0 of the meta-gate;
\item $\gcolor{dA}$ for the gate receiving the input number 1 of the meta-gate whenever $l = \gcolor{dL}$;
\item when $w$ has two inputs, there is a \emph{special} position within the meta-gate where the value depends on $l$ as follows:
  \begin{itemize}
  \item $\gcolor{i2}$ if $l \in \{\gcolor{i2}, \gcolor{dL}\}$,
  \item $\gcolor{dA}$ if $l = \gcolor{dA}$;
  \end{itemize}
\item otherwise, $\gcolor{i_1}$ for gates with one input and $\gcolor{i_2}$ for gates with two inputs.
\end{itemize}

The special position is marked on figures~\ref{fig:kappa_normal1} and~\ref{fig:kappa_input} by a star.

\paragraph*{The layer-2 seed gate}

The function $f^2_s = \finst{2}(s_w, \gcolor{s}_2, s_w, \gcolor{s}_2)$ of the seed gate is a constant function with value $f_{s}^2() = \{\mloc: \gcolor{s_2}, \mglob: \{\gmcolor: \gcolor{s_2}, \gmanclocal: f^1_s() \} \}$. Recall that $f^1_s()$ is the message output by the seed gate on layer 1.

\paragraph*{Gate functions for layer 2}

The functions of the gates depend on their label and on the direction $D$ of their first input, as dictated by the function $\operatorname{instantiate}$:%\ruggedtodo{\textsc{reformuler}}
\[
  \begin{cases}
    \operatorname{instantiate}(D, \gcolor{s}) = f^2_s\\
    \operatorname{instantiate}(D, \gcolor{i1}): x \mapsto x\\
    \operatorname{instantiate}(D, \gcolor{i2}): (x, y) \mapsto (x, y)\\
    \operatorname{instantiate}(D, \gcolor{dL}) =  \fdecode[D]{L} \circ \fgincr{G}{D}\\
    \operatorname{instantiate}(D, \gcolor{dA}) = (m_l, m_g) \mapsto \fdecode[D]{A}(\fgincr{G}{D}(m_l), m_g),\\
  \end{cases}
\]
where $D$ is the direction of the first input of $w_p$.

The gate with labels \gcolor{i1} or \gcolor{i2} are wires; their function is the identity function of arity $1$ or $2$ respectively.

Given $m \in \Sigma_2$, the increment functions $\fgincr{G}{D}$ increments $\lmpos[\gmanclocal{\mglob{m}}]$ by the unit vector of direction $D$.

The two decoding functions $\fdecode{L}$ and $\fdecode{A}$ are based on the function $\fdecode{}: \Dir \times \Sigma_1 \times \Sigma_2 \to \Sigma_1 \times \Sigma_2$ defined as follows: let $m_1 \in \Sigma_1, m_2 \in \Sigma_2$, pose
$z = \lmpos[m_1]$, $z_a = \lmpos[ {\gmanclocal[ {\mglob[m_2]} ]} ]$, $a = \gmanclocal[ {\mglob[m_2]} ] $. Let $l_p = \kappa_2(\lmpwiring[a], \lmpcolor[a], \gmcolor[ {\mglob[m_2]}], p')$ if $z \in K_1$, and $\gcolor{dL}$ otherwise. Then $\fdecode[D]{}(m_1, m_2)$ is the pair $(m'_1, m'_2)$ with:
\begin{eqnarray*}
  m'_1 &=& \begin{Bmatrix}
    \lmpos&:& z \operatorname{mod} K\\
    \lmpwiring&:& \decgatew(a, D)\\
    \lmpcolor&:& \decgatel(a, D)\}
  \end{Bmatrix}\\
  m'_2 &=& \begin{Bmatrix}
    \mloc &:& p\\
    \gmcolor[ { \mglob } ] &: &\gmcolor[\mglob{m}]\\
    \gmanclocal[ { \mglobb } ] &: &
      \begin{Bmatrix}
        \lmpwiring &:& \lmpwiring[ { \gmanclocal[ { \mglob[m] } ] } ]\\
        \lmpcolor &:& \lmpcolor[ { \gmanclocal[ { \mglob[m] } ] } ]\\
        \lmpos &:& z_a \bmod K_1\\
      \end{Bmatrix}
    \end{Bmatrix}
\end{eqnarray*}

% The definition of $\fdecode{}$ is somewhat lengthy, the important bit about that function is its properties. First of these, $\fdecode{}$ plugs the hole in the self-description of layer 1.

% \begin{lemma}
%   Let $D \in \Dir, p \in K^\infty$, $g = C_1(p)$, $e$ the input wire number $0$ of
%   $g$ in $C_1$. Let $m = \fincr{D}(\innerevalfunc{C_1}(e))$ and $m'$ be such such that $\gmanclocal{\mglob{m'}} = m$. Pose $(m'_1, m'_2) = \fdecode{}(m')$. Then $g = \kappa_1(\lmpwiring{m'_1}, \lmpcolor{m'_1}, \lmpos{m'_1})$.
% \end{lemma}

% \begin{proof}
%   First note that by the definition of $\kappa_1$, every meta-gate in $C_1$ indeed has output wires, and they all carry the same value, so the statement of the lemma does make sense.
  
%   By construction, since $m'_1$ is built from $\decgatew$ and $\decgatel$.
% \end{proof}

For a direction $D$ and a message $m_2 \in \Sigma_2$, take an arbitrary $m_1 \in \Sigma_1$ and let $(m'_1, m'_2) = \fdecode[D]{}(m_1, m_2)$; the value of $m'_2$ does not depend on $m_1$, so $\fdecode[D]{L}(m_2)$ is defined to be the $m'_2 \in \Sigma$ returned by $\fdecode{}(m_1, m_2)$ for any $m_1$.

% An analogous property holds on layer 2, where $\fdecode{L}{D}$ yields the label of non-seed gates given 

% \begin{lemma}
%   Let $l_1 \in L_1, l_2 \in L_2, w \in \Wirings$. Let $g_1$ be a gate of layer 1 with label $l_1$ and wiring $w$, and $g_2 = \{ \gatewiring: w, \gatefun: \operatorname{instantiate}(l_2) \}$.

%   Let $p \in K$, $l'_2 = \kappa_2(w, l_2, l_2, p)$, and $g'_2$ the corresponding gate.

%   If $g'_2$ is not the seed gate, let $D$ be the direction of its input number $0$.

%   Then for any $m$ such that $\gmcolor{\mglob{m}} = l_2$, $\lmpos[\gmanclocal{\mglob{m}}] = (p - D \operatorname{mod} K)$, $\lmpcolor[\gmanclocal{\mglob{m}}] = l_1$ and $\lmpwiring[\gmanclocal{\mglob{m}}] = w$,
%   \[ \mloc{\fdecode{L}{D}(m)} = l'_2 \]
% \end{lemma}

For a direction $D$ and $m_l, m_g \in \Sigma_2$, $\fdecode[D]{A}(m_l, m_g)$ is defined as follows. Let $(m_1, m_2) = \extractfunc(m_l,m_g)$, $(m'_1, m'_2) = \fdecode[D]{}(m_1, m_2)$. Then $\fdecode[D]{A}(m_l, m_g)$ is the pair $m'_l, m'_g$ with:
\[
  \begin{cases}
    \mloc[m'_l] = \mloc[m'_g] = \mloc[m_l]\\
    \gmanclocal[ { \mglob[m'_l] } ] = m'_1\\
    \gmcolor[ { \mglob[m'_l] } ] = \mloc[m'_2]\\
    \mglob{m'_g} = \mglob[m']\\
  \end{cases}
\]

The function $\fdecode{A}$ is engineered in order to enjoy the following property, a kind of commutation between $\fdecode{}$ and $\extractfunc$.

\begin{lemma}
  \label{prop:decodeL_incr}
  For any direction $D \in \Dir$,
  \[\fdecode[D]{} \circ \extractfunc = \extractfunc \circ \fdecode[D]{A} \]
\end{lemma}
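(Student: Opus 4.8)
The plan is to reveal that $\fdecode[D]{A}$ is, by construction, the conjugation of $\fdecode[D]{}$ by the pair $(\extractfunc, \embedfunc)$, so that the identity falls out of the duality between $\extractfunc$ and $\embedfunc$ already recorded in the text, with no component-level computation needed. Fix $D \in \Dir$ and an arbitrary pair $(m_l, m_g) \in \Sigma_2 \times \Sigma_2$. Set $(m_1, m_2) = \extractfunc(m_l, m_g)$ and $(m'_1, m'_2) = \fdecode[D]{}(m_1, m_2)$; then by definition the left-hand side evaluated at $(m_l, m_g)$ is precisely the pair $(m'_1, m'_2)$.

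First I would match the defining clauses of $\fdecode[D]{A}(m_l, m_g) = (m'_l, m'_g)$ against those of $\embedfunc(c, p_1, p_2)$ under the substitution $c = m_l$, $p_1 = m'_1$, $p_2 = m'_2$. The clause $\mloc[m'_l] = \mloc[m'_g] = \mloc[m_l]$ is the context clause $\mloc[m_l] = \mloc[m_g] = \mloc[c]$; the clause $\gmanclocal[{\mglob[m'_l]}] = m'_1$ is $\gmanclocal[{\mglob[m_l]}] = p_1$; the clause $\gmcolor[{\mglob[m'_l]}] = \mloc[m'_2]$ is $\gmcolor[{\mglob[m_l]}] = \mloc[p_2]$; and the global clause $\mglob[m'_g] = \mglob[m'_2]$ is $\mglob[m_g] = \mglob[p_2]$. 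Hence $\fdecode[D]{A}(m_l, m_g) = \embedfunc(m_l, m'_1, m'_2)$, that is, $\fdecode[D]{A}(m_l, m_g) = \embedfunc\bigl(m_l, \fdecode[D]{}(\extractfunc(m_l, m_g))\bigr)$.

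Second, I would apply $\extractfunc$ to this equality and invoke the stated duality $\extractfunc \circ \embedfunc(c, p_1, p_2) = (p_1, p_2)$ with $c = m_l$, which gives
\[ \extractfunc\bigl(\fdecode[D]{A}(m_l, m_g)\bigr) = \extractfunc\bigl(\embedfunc(m_l, m'_1, m'_2)\bigr) = (m'_1, m'_2) = \fdecode[D]{}\bigl(\extractfunc(m_l, m_g)\bigr), \]
which is exactly $(\fdecode[D]{} \circ \extractfunc)(m_l, m_g)$. Since $(m_l, m_g)$ was arbitrary, this establishes the claimed identity.

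The only delicate point---and the one I expect to be the main obstacle---is the bookkeeping in the clause-matching of the first step: one must check that $\embedfunc$ reads nothing from its context $c$ beyond $\mloc[c]$, so that feeding the raw input $m_l$ (rather than some component derived from $m'$) as context is legitimate, and that the four output fields of $\fdecode[D]{A}$ correspond field-for-field to those of $\embedfunc$. Once this pattern-match is confirmed, the duality does all the work; in particular the precise internal action of $\fdecode[D]{}$ on $m_1$ and $m_2$ never has to be unfolded.
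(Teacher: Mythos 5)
Your proof is correct and takes essentially the same route as the paper, whose entire proof is the single phrase ``By computation'': your clause-for-clause identification $\fdecode[D]{A}(m_l,m_g)=\embedfunc\bigl(m_l,\fdecode[D]{}(\extractfunc(m_l,m_g))\bigr)$ followed by the stated duality $\extractfunc\circ\embedfunc(c,p_1,p_2)=(p_1,p_2)$ is exactly the computation that $\fdecode{A}$ was, in the paper's own words, ``engineered'' to make immediate. One remark: the last clause of the paper's definition of $\fdecode[D]{A}$ reads $\mglob[m'_g]=\mglob[m']$ with $m'$ undefined---a typo for $m'_2$---and your reading of it as $\mglob[{m'_2}]$ is the unique one under which the definition is well-formed and your pattern-match against $\embedfunc$ (hence the lemma) goes through.
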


\begin{proof}
  By computation.
\end{proof}

\paragraph*{Properties of $C_2$}

The messages passing through the circuit $C_2$ built above hold all the necessary information for $C_\square$ to be self-describing: in other words, $C_2$ carries all the information needed to determine the entry gate of each meta-gate in $C_1$, as well as the information needed to determine each of its gates.

The $\gmanclocal[\mglob]$ part of the messages on input $0$ each meta-gate simulate the gates of layer 1, as long as each meta-gate simulating a gate with label $\gcolor{input}$ receives on input $1$ the message of its parent meta-gate.

\begin{lemma}
  \label{lem:eval_layer1}

  let $w \in \Wirings$ with $k$ inputs, $l_2 \in L_2$, and $M$ be the circuit $\mu_2(w, l_2)$. Note that $M$ has $2k$ inputs. Let $D$ be the direction of the first input of $w$.

  Let $\vec{\imath} \in \Sigma^{2k}$, and let $\vec{o}$ be the output of $M$ on input $\vec{\imath}$. Pose $i_a = \gmanclocal[\mglob[i_0]]$ and $i_p = \gmanclocal[\mglob[i_1]]$.

  Then, if $g_1 = \decgate(i_a, D) \neq \bot$, then $o_0$ is the output of $g_1$ on input $i_a$; else, if $\decgate(i_a, D) = \bot$, then $o_0$ is the output of $\decgate_c(i_p, D)(\lmpos[i_a])$ on input $i_a$.
\end{lemma}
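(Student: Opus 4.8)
The plan is to evaluate $M = \mu_2(w, l_2)$ by following its global-ancestor channel $\gmanclocal[\mglob]$ from the inputs to the output $o_0$, reading off the transformation it undergoes from the gate functions. By \cref{lem:wiring-normal}, $M$ is a normal circuit, so $\innerevalfunc{M}$ is well defined and computable by induction along the acyclic dependency graph, exactly as in \cref{lem:local_eval}. The simplification here is that, apart from the decoding gates sitting at the two input positions, every gate of $M$ carries a label $\gcolor{i_1}$ or $\gcolor{i_2}$ and is therefore the identity; such gates transmit the global-ancestor message verbatim. Hence the induction collapses to tracking what happens at the input-$0$ gate (label $\gcolor{dL}$) and, when present, the input-$1$ gate (label $\gcolor{dA}$), and the whole statement reduces to computing the effect of those two gates on $i_a$ and $i_p$.

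First I would dispatch the case $g_1 = \decgate(i_a, D) \neq \bot$. By the proof of \cref{lem:layer1-self-descr} this means $\lmpos[i_a] - D \in K$, so $g_1$ is a plain increment gate $\fincr{k, D}$. The input-$0$ gate applies $\fdecode[D]{L} \circ \fgincr{G}{D}$, where $\fgincr{G}{D}$ advances the ancestor's position by $D$ while $\fdecode[D]{L}$ preserves its wiring and label, reducing the position modulo $K$; in the interior this reduction is trivial, and a field-by-field comparison then shows that the global-ancestor part delivered to $o_0$ equals $\fincr{k,D}(i_a) = g_1(i_a)$. This step is essentially \cref{lem:local_eval} read on the global channel rather than on the bare layer-$1$ message.

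The substantive case is $\decgate(i_a, D) = \bot$, where $i_a$ sits at the entry of its meta-gate and its wiring and label cannot be recovered from $i_a$ alone --- the ``tying the knot'' situation. This is precisely the entry-gate configuration in which the governing label forces $l_2 = \gcolor{dL}$, so that $\kappa_2$ installs a $\gcolor{dA}$ gate at the input-$1$ position with function $(m_l, m_g) \mapsto \fdecode[D]{A}(\fgincr{G}{D}(m_l), m_g)$, fed the parent channel carrying $i_p$. Here I would expand $\fdecode[D]{A}$ through the dual pair $\extractfunc$/$\embedfunc$, invoke the commutation identity $\fdecode[D]{} \circ \extractfunc = \extractfunc \circ \fdecode[D]{A}$ of \cref{prop:decodeL_incr}, and match the result field by field against $\decgatecut(i_p, D)(\lmpos[i_a])$ applied to $i_a$, with the entry gate's wiring and label supplied by $\decgatew$ and $\decgatel$ of \cref{thm:layer1-decode}. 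I expect this matching to be the main obstacle: one must verify that the data $\decgatecut$ reconstructs from the parent $i_p$ coincides with what the substitution $\kappa_2$ assigns to the entry gate at position $\lmpos[i_a]$, and that the position bookkeeping --- the increment together with the $\bmod K$ reductions inside $\fdecode{}$ --- lines up consistently across the two hierarchical levels. A secondary point to nail down is the compatibility invariant itself, namely that the value-based split on whether $\decgate(i_a, D) = \bot$ agrees with the structural presence of the $\gcolor{dA}$ gate dictated by $l_2$, so that the routing always delivers the intended channel to $o_0$; once these identifications are checked, $o_0$ carries exactly the claimed value in both cases.
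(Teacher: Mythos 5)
Your route is the same as the paper's --- the paper's entire proof is ``By computation'', and your proposal expands that computation in the right spirit (identity gates transmit, decoders at the entries do the work, \cref{prop:decodeL_incr} and \cref{thm:layer1-decode} close the entry case). But the expansion has a concrete structural error: you claim that apart from the two input positions, every gate of $M$ has label $\gcolor{i_1}$ or $\gcolor{i_2}$ and is therefore the identity. By the definition of $\kappa_2$, when $w$ has two inputs there is also the \emph{special} position (the `$*$' in \cref{fig:kappa_input}), which carries the label $\gcolor{dA}$ whenever $l_2 = \gcolor{dA}$; a $\gcolor{dA}$ gate applies $\fdecode[D]{A}$ after an increment and rewrites the $\gmanclocal[\mglob]$ channel, so it is not the identity. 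Hence for $l_2 = \gcolor{dA}$ your induction does not collapse onto the two entry gates, and the wire-following must additionally account for this interior decoder (this is exactly why \cref{lem:eval_subst2} isolates $\gcolor{dA}$ as a separate case); as written, your argument covers only $l_2 \in \{\gcolor{i_1}, \gcolor{i_2}, \gcolor{dL}\}$.

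The second gap is the ``compatibility invariant'' you defer to the end: it cannot be nailed down inside this lemma, because it is false at this level of generality. The label $l_2$ is a free parameter and $\decgate(i_a, D) = \bot$ is a property of the input value $i_a$ alone, so nothing in the hypotheses forces $l_2 = \gcolor{dL}$ in the second case. If $l_2 \neq \gcolor{dL}$, then $\kappa_2$ places no $\gcolor{dA}$ gate at the input-$1$ position, the message $i_p$ is never consulted, and the $\gmanclocal[\mglob]$ part of $o_0$ keeps the wiring and label it inherited from $i_a$ (which is all that $\fdecode[D]{L}$ preserves) rather than acquiring $\decgatew(i_p, D)$ and $\decgatel(i_p, D)$; the claimed equality then fails. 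The agreement between the value-level test $\decgate(i_a, D) = \bot$ and the structural presence of the $\gcolor{dA}$ gate is an invariant of the ambient circuit $C_\square$, guaranteed by \cref{lem:eval_mg} and \cref{lem:info-loc} about the messages that actually reach each meta-gate; your proof must either import it as an explicit hypothesis on $\vec{\imath}$ or restrict the second case to $l_2 = \gcolor{dL}$, and field-by-field matching alone will not supply it.
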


\begin{proof}
  By computation.
\end{proof}

The rest of the $\mglob$ part of the messages allows $C_2$ to simulate itself.

\begin{lemma}
  \label{lem:eval_subst1}

  Let $D \in \Dir$, $w \in \Wirings$ with one input in direction $D$, $l_2 \in L_2$. Let $f = \operatorname{instantiate}(D, l_2)$ and $C = \mu_2(w, l_2)$.

  Let $c \in \Sigma_1, i_1 \in \Sigma_1, i_2 \in \Sigma_2$ and $(m_l, m_g) = \embedfunc(c, i_1, i_2)$. Let $(o_0, o_1)$ be the outputs of $C$ on input $(m_l, m_g)$. Then $\extractfunc(o_0, o_1) = f(i_2)$.
\end{lemma}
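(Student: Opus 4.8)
The plan is to evaluate the meta-gate $C = \mu_2(w, l_2)$ directly on the embedded pair $(m_l, m_g) = \embedfunc(c, i_1, i_2)$, propagating these two layer-2 messages along the acyclic dependency graph of $C$, and to check that extracting the two outputs yields $f(i_2)$. The only non-mechanical ingredients will be the commutation identity of Lemma~\ref{prop:decodeL_incr}, namely $\fdecode[D]{} \circ \extractfunc = \extractfunc \circ \fdecode[D]{A}$, and the embed/extract duality $\extractfunc \circ \embedfunc = \mathrm{id}$.

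First I would pin down the gates of $C$. Since $w$ has a single input in direction $D$, the definition of $\kappa_2$ together with the schema of Figure~\ref{fig:kappa_normal1} shows that the only gates carrying a decode function are the one receiving the meta-gate's input wire $0$ (label $\gcolor{dL}$) and, when $l_2 = \gcolor{dL}$, the one receiving input wire $1$ (label $\gcolor{dA}$); every remaining gate is a pure relay with label $\gcolor{i1}$ or $\gcolor{i2}$, whose instantiated function is the identity. Consequently the evaluation of $C$ collapses: the relay gates transport their incoming messages unchanged, so each of the two outputs of $C$ is (a relay of) a message produced at one of the distinguished entry gates. The bookkeeping here is to trace, wire by wire along Figure~\ref{fig:kappa_normal1}, how $m_l$ and $m_g$ reach those entry gates and how the entry-gate outputs are routed back to the output bus.

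The heart of the argument is at the entry gates. By definition of $\operatorname{instantiate}$, their functions are the global increment $\fgincr{G}{D}$ followed by $\fdecode[D]{L}$ (resp. $\fdecode[D]{A}$), where $D$ is the direction of $w$'s input --- the very direction appearing in $f = \operatorname{instantiate}(D, l_2)$. Applying the entry gate to the embedded messages and then extracting, Lemma~\ref{prop:decodeL_incr} lets me move $\extractfunc$ across the internal $\fdecode[D]{A}$, turning it into $\fdecode[D]{}$ acting on the extraction of the incremented messages; since extraction commutes with the global increment and $\extractfunc \circ \embedfunc = \mathrm{id}$ recovers the payload $(i_1, i_2)$, this equals $\fdecode[D]{}$ applied to the incremented payload, which is by definition $f(i_2)$. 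The layer-1 content sitting inside $\fdecode[D]{}$ is supplied by the layer-1 decoding of Corollary~\ref{thm:layer1-decode}, so the present lemma only needs the layer-2 commutation. A short case analysis on $l_2$ then matches the one or two decode gates against the corresponding branch of $\operatorname{instantiate}$.

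The main obstacle is the positional and routing bookkeeping rather than any conceptual hurdle: one must verify that the single global increment applied inside $C$ is exactly the one that $f$ carries (no relay introduces a spurious increment), that the global part $\mglob$ of $m_g$ reaches the output with $\mglob[i_2]$ intact, and that the input reordering at the two-input entry gate respects the convention fixing the direction $D$. Once this routing is checked against Figure~\ref{fig:kappa_normal1}, the claimed equality follows directly from Lemma~\ref{prop:decodeL_incr} and the embed/extract duality, in the ``by computation'' spirit of the neighbouring lemmas.
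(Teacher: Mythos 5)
Your proposal is correct and takes essentially the same route as the paper: the paper's proof also proceeds by case analysis on $l_2$ (which, for a one-input wiring, can only be $\gcolor{i_1}$ or $\gcolor{dL}$), follows the wirings through the relay gates, and resolves the $\gcolor{dL}$ case by unfolding the decode machinery, at the same ``by computation'' level of detail that you adopt for the routing bookkeeping. The only cosmetic difference is that the paper's two-line proof cites the definition of $\fdecode[D]{L}$ directly where you route the computation through Lemma~\ref{prop:decodeL_incr} together with the embed/extract duality --- which is the same unfolding, and is in fact exactly how the paper itself handles the $\gcolor{dL}$ case of the two-input analogue, Lemma~\ref{lem:eval_subst2}.
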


\begin{proof}
  The proof proceeds by case on $l_2$, which can be either $\gcolor{i_1}$ or $\gcolor{dL}$.
  If $l_2 = \gcolor{i_1}$, then $f$ is the identity function, and it suffices to follow the wirings to check the result.

  If $l_2 = \gcolor{dL}$, then following the wirings reduces the desired equality to the definition of $\fdecode[D]{L}$.

  % If $l_2 = \gcolor{s}$, then $ $.
  % \ruggedtodo{Do the case study.}
\end{proof}

\begin{lemma}
  \label{lem:eval_subst2}
  Let $w \in \Wirings$ with two inputs, $l_2 \in L_2$. Let $f = \operatorname{instantiate}(l_2)$ and $C = \mu_2(w, l_2)$.

  For $k \in \{0, 1\}$, let $c^k \in \Sigma_1, i_1^k \in \Sigma_1, i_2^k \in \Sigma_2$ and $(m_l^k, m_g) = \embedfunc(c^k, i_1^k, i_2^k)$. Let $(o_0^0, o_1^0, o_0^1, o_1^1)$ be the outputs of $C$ on input $(m_l^0, m_g^0, m_l^1, m_g^1)$. Then for $k \in \{0, 1\} \extractfunc(o_0^k, o_1^k)$ is the $k$-th component of $f(i_2^0, i_2^1)$.
\end{lemma}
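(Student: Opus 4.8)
The plan is to follow the template of the one-input Lemma~\ref{lem:eval_subst1}, arguing by cases on the label $l_2$ and tracing the internal wirings of the meta-gate $C = \mu_2(w, l_2)$. Since $w$ has two inputs, the function $f = \operatorname{instantiate}(l_2)$ that $C$ must realize has arity two, so only the labels whose instantiation is binary can occur, namely $\gcolor{i_2}$ and $\gcolor{dA}$; the labels $\gcolor{s}$, $\gcolor{i_1}$ and $\gcolor{dL}$ instantiate to functions of arity $0$ or $1$ and are irrelevant here.

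In the case $l_2 = \gcolor{i_2}$, the instantiated function $f$ is the identity on pairs and, by the definition of $\kappa_2$, every internal gate of $\mu_2(w, \gcolor{i_2})$ is a wire (label $\gcolor{i_1}$ or $\gcolor{i_2}$, whose instantiation is the identity). Tracing the four input wires $(m_l^0, m_g^0, m_l^1, m_g^1)$ along the wiring shows that the $k$-th pair of outputs $(o_0^k, o_1^k)$ equals the $k$-th pair of inputs $(m_l^k, m_g^k) = \embedfunc(c^k, i_1^k, i_2^k)$; the extract/embed duality then recovers the payload, so $\extractfunc(o_0^k, o_1^k)$ is the $k$-th component of $f(i_2^0, i_2^1) = (i_2^0, i_2^1)$.

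The case $l_2 = \gcolor{dA}$ carries the content of the lemma. Here $\kappa_2(w, \gcolor{dA})$ prescribes the non-wire gates that do the decoding, including the $\gcolor{dA}$ gate sitting at the special position, whose function is $\operatorname{instantiate}(D, \gcolor{dA}) = (m_l, m_g) \mapsto \fdecode[D]{A}(\fgincr{G}{D}(m_l), m_g)$, while the remaining gates are wires or increments. I would trace the two embedded inputs through the wiring, verifying that the first input reaches the decoding gate with exactly the increment $\fgincr{G}{D}$ accumulated along its path and that the second input (the parent message) arrives unmodified, and that the decoding gate's output is then carried undisturbed to the meta-gate outputs. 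This reduces the raw outputs to a single application of $\fdecode[D]{A}$, and applying $\extractfunc$ componentwise, together with the commutation Lemma~\ref{prop:decodeL_incr}, $\fdecode[D]{} \circ \extractfunc = \extractfunc \circ \fdecode[D]{A}$, and the extract/embed duality, rewrites the extracted outputs as $f(i_2^0, i_2^1)$ read componentwise.

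The main obstacle is the bookkeeping and the alignment in the $\gcolor{dA}$ case: one must check that the wirings route the two inputs to the decoding gate with the increment applied to the first input and nothing applied to the second, and that the commutation of $\extractfunc$ with $\fdecode{A}$ turns the embedded-message computation performed inside the meta-gate into the intended payload-level function. This is exactly the step where the self-reference of layer $2$ --- the circuit simulating itself --- is discharged, so it must be shown that the decode carried out at the meta level and the decode encoded in $f$ coincide once the messages are extracted.
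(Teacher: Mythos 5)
Your case analysis rests on a false premise, and the case it throws away is the one the whole lemma exists for. You discard $l_2 = \gcolor{dL}$ on the grounds that $\operatorname{instantiate}(D, \gcolor{dL}) = \fdecode[D]{L} \circ \fgincr{G}{D}$ is unary and hence incompatible with a two-input wiring $w$. But the pairing of label $\gcolor{dL}$ with a two-input wiring does occur in $C_2$: by definition of $\kappa_2$, the gate receiving input number $0$ of a meta-gate gets label $\gcolor{dL}$, and when the parent wiring has two inputs, that gate is the one at position $(0,0)$ in \cref{fig:kappa_input}, whose wiring has two input sides (one from $S$, one from $W$). The arity objection does not exclude it, because unary functions are lifted to two-input gates by ignoring the second argument --- the convention the paper states explicitly for layer 1 (``the versions with $k=2$ take two inputs but ignore the second one'') and which is in force on layer 2 as well. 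Since \cref{lem:eval_mg} and the self-description theorem need the present lemma for \emph{every} (wiring, label) pair occurring in $C_\square$, omitting $\gcolor{dL}$ is a genuine gap, not a harmless restriction.

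This omission also explains why your placement of \cref{prop:decodeL_incr} is off. In the paper's proof the commutation $\fdecode[D]{} \circ \extractfunc = \extractfunc \circ \fdecode[D]{A}$ is invoked exactly for $l_2 = \gcolor{dL}$: in that case, and only in that case, $\kappa_2$ places a $\gcolor{dA}$ gate at the meta-gate's input-number-$1$ position (``$\gcolor{dA}$ for the gate receiving the input number 1 of the meta-gate whenever $l = \gcolor{dL}$''), and the commutation is what converts that internal gate's computation on embedded messages into the payload-level $\fdecode[D]{}$, of which $\fdecode[D]{L}$ --- the function the meta-gate must simulate --- is a component. For $l_2 = \gcolor{dA}$, which you present as carrying the content of the lemma, the paper needs only wire-following: the internal $\gcolor{dA}$ gate sits at the special position, and its output is already an embedding of the decoded payload by the very definition of $\fdecode{A}$ together with extract/embed duality; applying the commutation lemma there pushes in the wrong direction, since it extracts the internal output into $\Sigma_1 \times \Sigma_2$ rather than keeping the $\Sigma_2 \times \Sigma_2$ pair whose extraction must match $f(i_2^0, i_2^1)$. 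So as written, your proof skips the one delicate case and spends the delicate tool on a case that does not need it; it does not establish the lemma.
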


\begin{proof}
  The proof proceeds by case on $l_2$. If $l_2 \in \{\gcolor{s}, \gcolor{i2}, \gcolor{dA} \}$, following the wirings in $\kappa_2(w, l_1, l_2)$ confirms that the lemma holds.

  If $l_2 = \gcolor{dL}$, the lemma follows from lemma~\ref{prop:decodeL_incr} by again following the wirings.
\end{proof}

Together, these properties entail a substitutive structure of the messages in $C_\square$. Going up the hierarchy, the message between two gates of $C_\square$ can be extracted from the messages between the corresponding meta-gates.

\begin{lemma}
  \label{lem:eval_mg}
  Let $a = p \to p'$ be a wire between two positions $p, p'$ of $K^{\infty}$ in direction $D$. Let $a_l, a_g$ be the two wires crossing the edges between $p K$ and $p' K$ in clockwise order looking in direction $D$ (i.e., if $D$ is $E$, $a_l$ is the northernmost of the two; if $D$ is $S$, the westernmost…).

  Let $m = \innerevalfunc{C_\square}(a)$, $l = (l_1, l_2) = \innerevalfunc{C_\square}(a_l)$ and $g = (g_1, g_2) = \innerevalfunc{C_\square}(a_g)$.

  Then $\extractfunc(l_2, g_2) = m$.
\end{lemma}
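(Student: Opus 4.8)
The plan is to prove a slightly stronger statement than the one asked for, by well-founded induction along the dependency order of $C_\square$ (equivalently, along the scale of the self-similar structure, which is legitimate since $C_\square$ is evaluable and hence has no infinite backward paths). The strengthening is that $(l_2, g_2)$ is not merely \emph{some} preimage of $m$ under $\extractfunc$, but is exactly $\embedfunc(c, m_1, m_2)$, where $m = (m_1, m_2)$ and $c$ is the context carrying the local data shared by the two boundary wires. Since $\extractfunc \circ \embedfunc$ is the identity on payloads, the stated equality $\extractfunc(l_2, g_2) = m$ is then an immediate corollary; keeping track of the ``embedded form'' is precisely what lets the induction chain across successive meta-gates.

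First I would use the fixed-point definition $C_\square(\vec z) = \mu(C_\square(\lfloor \vec z / K\rfloor))(\vec z \bmod K)$ to identify, inside the block $pK$, the sub-circuit $\mu_2(\gatewiring[C_\square(p)], \Lambda_2(p))$ on layer $2$ (and $\mu_1$ on layer $1$). Then the two boundary wires $a_l, a_g$ are exactly the two output wires of this meta-gate associated with the coarse output direction $D$, as laid out in the output table of $\kappa_w$; the clockwise convention fixes which of them plays the role of the ``local'' wire $m_l$ and which the ``global'' wire $m_g$ in $\embedfunc$ and $\extractfunc$.

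For the inductive step, the fine inputs entering the block $pK$ are the boundary wires of the predecessor blocks, which by the induction hypothesis are already in embedded form of the coarse inputs of $p$. Feeding these into the meta-gate and applying Lemma~\ref{lem:eval_subst1} when $p$ has coarse in-degree one, or Lemma~\ref{lem:eval_subst2} when it has in-degree two, yields $\extractfunc(l_2, g_2) = f(\ldots) = m$, where $f$ is the layer-$2$ function of $C_\square(p)$. To re-establish the invariant for the next scale I would trace the message through the meta-gate: the decoding gates (labels $\gcolor{dL}$, $\gcolor{dA}$) emit their result already in embedded form by the very definition of $\fdecode[D]{A}$ (its output $(m'_l, m'_g)$ has equal local parts and the payload split exactly as $\embedfunc$ prescribes), while the remaining gates (labels $\gcolor{i1}$, $\gcolor{i2}$) are identity wires that carry embedded form unchanged to the output bus. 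Hence the two output wires are again of the form $\embedfunc(c', m_1, m_2)$. The base case is the seed block, checked by direct computation from $f^2_s$ and the wiring $\kappa_w(s_w)$.

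The main obstacle is the $\gcolor{dL}$/$\gcolor{dA}$ case of the inductive step, where the reconstructing gate must interact correctly with the embed/extract bookkeeping rather than merely shuttle a message along. There I would invoke Lemma~\ref{prop:decodeL_incr}, $\fdecode[D]{} \circ \extractfunc = \extractfunc \circ \fdecode[D]{A}$, to show that applying the decoder to the embedded inputs and then extracting agrees with extracting first and decoding afterwards; this commutation is exactly what aligns the coarse gate output $m$ with $\extractfunc(l_2, g_2)$ while preserving the embedded form. Everything else is careful but routine: matching the clockwise pair $a_l, a_g$ to the correct two entries of the meta-gate's output bus, and verifying that the local context $c$ is transported correctly through the identity wires.
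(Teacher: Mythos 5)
Your proposal is correct and follows essentially the same route as the paper, whose entire proof reads ``By induction on $p$, following the wires of $C_\square$'': your well-founded induction along the dependency order, carried out via Lemmas~\ref{lem:eval_subst1}, \ref{lem:eval_subst2} and~\ref{prop:decodeL_incr}, is precisely the elaboration that one-liner intends. Your strengthened invariant --- tracking that $(l_2,g_2)$ is literally $\embedfunc(c, m_1, m_2)$ rather than merely a preimage under $\extractfunc$ --- is the right detail to make the induction chain across meta-gates, and is exactly what the paper leaves implicit by stating Lemmas~\ref{lem:eval_subst1} and~\ref{lem:eval_subst2} with embedded-form inputs.
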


\begin{proof}
  By induction on $p$, following the wires of $C_\square$.
\end{proof}

%Going down the hierarchy, the local part of the messages (layer 1 and \mloc on layer 2) describe the parent of each gate and the position of the gate within that parent's meta-gate.

\begin{lemma}
  \label{lem:info-loc}
  Let $\vec{p} \in K^\infty$. Let $g_1 = C_1(\vec{p})$, $g_2 = C_2(\vec{p})$, $g'_1 = C_1(\lfloor \frac{\vec{p}}{K} \rfloor)$, $g'_2 = C_2(\lfloor \frac{\vec{p}}{K} \rfloor)$. Let $g = g_1 \otimes g_2 = C_\square(\vec{p})$, and $g' = g'_1 \otimes g'_2 = C_\square(\lfloor \frac{\vec{p}}{K} \rfloor)$. Let $e$ be an output wire of $g$, and $m_1 \times m_2 = \innerevalfunc{C_\square}(e)$ its value in $C_\square$.

  Then $\mloc[m_2]$ is the label of $g'_2$, $\lmpcolor[m_1]$ is the label of $g'_1$, $\lmpwiring[m_1]$ is $\gatewiring[g']$, and $\lmpos[m_1]$ is $\vec{p} \bmod K$.
\end{lemma}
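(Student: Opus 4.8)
The plan is to localize the whole computation to a single meta-gate and then read off the four equalities: the three layer-1 ones from \cref{lem:local_eval} and the layer-2 one from a parallel argument on $\mloc$. Write $q = \lfloor \tfrac{\vec p}{K}\rfloor$ and $w' = \gatewiring[g']$. Because both layers carry the common wiring coming from $C_w$, we have $\gatewiring[g'_1] = \gatewiring[g'_2] = w'$, so $g' = g'_1 \otimes g'_2$ is legitimate, and by the defining recursion of $C_\square$ the gate $g = C_\square(\vec p)$ occupies position $\vec p \bmod K$ inside the child meta-gate of $g'$, namely $M = \mu_1(w', \Lambda_1(q)) \otimes \mu_2(w', \Lambda_2(q))$; the wire $e$ is an internal or outgoing arc of $M$ leaving that position. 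The first step I would record is a locality fact: the gates of $C_\square$ whose support lies in the block $qK$ are exactly those of $M$, with identical wirings and functions, and since $C_\square$ is evaluable, $\innerevalfunc{C_\square}$ restricted to the arcs of $M$ coincides with $\innerevalfunc{M}$ run on the messages entering that block. From there I can reason purely inside $M$.

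For $m_1$ I would just apply \cref{lem:local_eval} to $\mu_1(w', \Lambda_1(q))$ and the arc $e$ out of $\vec p \bmod K$: it returns $\lmpos[m_1] = \vec p \bmod K$, $\lmpwiring[m_1] = w' = \gatewiring[g']$ and $\lmpcolor[m_1] = \Lambda_1(q)$, i.e. the label of $g'_1$. The case $q = \vec 0$ is the seed meta-gate and is covered by the same lemma. This settles three of the four equalities at once.

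For $\mloc[m_2]$ I would mimic the induction of \cref{lem:local_eval}, now on the acyclic dependency graph of $\mu_2(w', \Lambda_2(q))$, with the target invariant that every arc carries $\mloc = \Lambda_2(q)$. The shape of $\kappa_2$ makes this plausible: apart from the seed, the only gates that are not $\Sigma_2$-identities are the decoding gates sitting at the input positions (labels $\gcolor{dL}$ and, where it occurs, $\gcolor{dA}$), while the identity gates $\gcolor{i_1}, \gcolor{i_2}$ transport $\mloc$ verbatim. The load-bearing gate is the $\gcolor{dL}$ gate at input $0$: it runs $\fdecode[D]{L} \circ \fgincr{G}{D}$, the increment leaves $\mloc$ alone, and $\fdecode[D]{L}$ rewrites $\mloc$ with the layer-2 label obtained by evaluating $\kappa_2$ on the ancestor recorded in the incoming global message. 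So the crux is to show that this recomputed label is exactly $\Lambda_2(q)$, i.e. that the global field entering $M$ at input $0$ describes the parent of $g'$ one level up.

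That identification is where I expect the real work to be, and I would extract it from the ``knot-tying'' lemmas already in place: \cref{lem:eval_layer1} guarantees that the $\gmanclocal$ component faithfully simulates the layer-1 gates, while \cref{lem:eval_subst1} and \cref{lem:eval_subst2} (together with the scaling of messages in \cref{lem:eval_mg}) guarantee that the $\mglob$ component simulates $C_2$ one scale up, so that a single $\kappa_2$-step performed by $\fdecode[D]{L}$ lands precisely on $\Lambda_2(q)$. The only point needing extra care is the reparenting gate $\gcolor{dA}$, whose output copies $\mloc$ from its propagated input rather than recomputing it; I would check, using the same substitutive bookkeeping, that this propagated input already carries $\Lambda_2(q)$, so that the invariant — and hence $\mloc[m_2] = \Lambda_2(q) = $ the label of $g'_2$ — survives at every outgoing arc of $M$, including $e$.
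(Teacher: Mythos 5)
Your proof is correct and follows essentially the same route as the paper's: the paper likewise combines \cref{lem:eval_mg} (input $0$ of each meta-gate encodes $l_1$, $l_2$ and $w$) with the decoding performed by the $\gcolor{dL}$ gate sitting at that input, and then observes that the remaining gates of the meta-gate preserve the local part of the messages. If anything, your write-up is more detailed than the paper's three-sentence proof, since you settle the three layer-1 equalities directly via \cref{lem:local_eval} and explicitly verify the $\gcolor{dA}$ reparenting case.
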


\begin{proof}
  By the previous lemma, input $0$ of each meta-gate $\mu_1(l_1, w) \otimes \mu_2(l_2, w)$ encodes $l_1, l_2$ and $w$. The gate after that input has label $\gcolor{dL}$, so by definition of its function $\fdecode{L}$, its output satisfies the lemma. The other gates in the meta-gate preserve the local part of the messages.
\end{proof}

This local information is just what is needed to reconstruct each gate from its \emph{output}, which is just short of self-description.

\begin{corollary}
  There are is a function $\decgate': \Sigma_1 \times \Sigma_2 \to \Wirings \times L_1 \times L_2 \times K$ such that for any position $p \in K^{\infty}$ and any wire $a: p \to p'$ of $C_\square$, \[\decgate'(\innerevalfunc{C_\square}(a)) = (\Lambda_1(\quot{p}{K}), \Lambda_2(\quot{p}{K}), C_w(\quot{p}{K}), p \bmod K).\]
\end{corollary}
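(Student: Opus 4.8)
The plan is to recognise that this corollary is simply the packaging of Lemma~\ref{lem:info-loc}: that lemma already traces the messages through the meta-gates and identifies every field, so here I only need to read the four desired quantities off the pair of messages carried by the wire $a$, and then match them up. Concretely, I would define $\decgate'$ as the projection onto the relevant fields of its two arguments,
\[
  \decgate'(m_1, m_2) = (\lmpcolor[m_1],\ \mloc[m_2],\ \lmpwiring[m_1],\ \lmpos[m_1]),
\]
reading the local wiring/label/position data off the layer-1 component $m_1$ and the layer-2 label off $m_2$ (following the order of the displayed equality).

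Then fix $\vec{p} \in \cacarpet$ and a wire $a : p \to p'$ of $C_\square$. By the definition of a circuit, $a$ is an \emph{outgoing} wire of the gate $C_\square(p)$, which is exactly the hypothesis of Lemma~\ref{lem:info-loc} with $\vec{p} = p$ and $e = a$. Writing $m_1 \times m_2 = \innerevalfunc{C_\square}(a)$, that lemma gives directly that $\lmpcolor[m_1]$ is the label of $C_1(\quot{p}{K})$, that $\mloc[m_2]$ is the label of $C_2(\quot{p}{K})$, that $\lmpwiring[m_1] = \gatewiring[C_\square(\quot{p}{K})]$, and that $\lmpos[m_1] = p \bmod K$. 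It remains only to rename these in terms of the target quantities: by construction the label of $C_i(\quot{p}{K})$ is $\Lambda_i(\quot{p}{K})$ (this is precisely how $\Lambda_1,\Lambda_2$ are defined, as the label-tilings from which $C_1,C_2$ are instantiated through $\mu_1,\mu_2$), and $\gatewiring[C_\square(\quot{p}{K})] = C_w(\quot{p}{K})$ since every gate of $C_\square$ inherits its wiring from $C_w$. Substituting yields
\[
  \decgate'(\innerevalfunc{C_\square}(a)) = (\Lambda_1(\quot{p}{K}),\ \Lambda_2(\quot{p}{K}),\ C_w(\quot{p}{K}),\ p \bmod K),
\]
as required.

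I expect no genuine obstacle: the substitutive heavy lifting is already discharged in Lemma~\ref{lem:info-loc}, which itself rests on Lemmas~\ref{lem:eval_mg}, \ref{lem:eval_layer1}, \ref{lem:eval_subst1} and~\ref{lem:eval_subst2}. The only real care needed is bookkeeping, namely checking that the field-access names line up correctly (so that $\decgate'$ pulls $\lmpcolor$, $\lmpwiring$ and $\lmpos$ from the layer-1 message and $\mloc$ from the layer-2 message) and that the identification ``label of $C_i$ equals $\Lambda_i$'' is simply read off the definitions of $C_1$, $C_2$, $\Lambda_1$ and $\Lambda_2$ rather than re-derived. There is a harmless mismatch between the stated codomain $\Wirings \times L_1 \times L_2 \times K$ and the component order of the displayed equality; I follow the equality, which is what must be verified.
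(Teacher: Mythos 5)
Your proof is correct and takes essentially the same approach as the paper: the paper states this corollary without any proof, as an immediate consequence of Lemma~\ref{lem:info-loc}, and your argument is precisely that packaging --- define $\decgate'$ as the projection onto the fields $\lmpcolor$, $\mloc$, $\lmpwiring$, $\lmpos$, apply Lemma~\ref{lem:info-loc} to the wire $a$ (which is indeed an output wire of $C_\square(p)$), and close with the definitional identifications of the labels of $C_1, C_2$ with $\Lambda_1, \Lambda_2$ and of the common wiring with $C_w$. The ordering mismatch between the stated codomain and the displayed tuple is present in the paper itself and is harmless, as you note.
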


With a tiny bit of extra work, each gate $g$ can be reconstructed from its input number 0, making $C_\square$ self-descriptive.

\begin{theorem}
  The circuit $C_\square$ is self-descriptive.
\end{theorem}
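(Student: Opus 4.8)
The plan is to exhibit a decoding function $\decgate$ witnessing Definition~\ref{def:self_describing}: one that recovers $g = C_\square(p)$ from the values carried on the arcs entering $p$ together with their directions. The preceding corollary already reconstructs an arbitrary gate from (the message carried on) one of its \emph{output} wires, returning the tuple $(\Lambda_1(\quot{p}{K}), \Lambda_2(\quot{p}{K}), C_w(\quot{p}{K}), p \bmod K)$ that determines the gate. Since every input wire of $g$ is an output wire of a predecessor, the first move is to apply that corollary to the value on input $0$ of $g$, which lies on the arc $a\colon q \to p$ with $q = p - d_0$ and $d_0$ the direction of input $0$; this hands back the full data of the predecessor $C_\square(q)$. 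The entire task then reduces to converting predecessor data into the data of $p$ itself.

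First I would read off $q \bmod K$ and $d_0$ to decide whether $p$ lies in the same meta-gate as $q$ or is the \emph{entry gate} of a fresh meta-gate, a test that amounts to checking whether the arc $a$ stays inside the copy of $K$ surrounding $p$. In the interior case $\quot{p}{K} = \quot{q}{K}$, so the parent wiring $C_w(\quot{p}{K})$ and both labels $\Lambda_1(\quot{p}{K}), \Lambda_2(\quot{p}{K})$ are unchanged while $p \bmod K = (q \bmod K) + d_0$; feeding these through $\kappa_w$, $\kappa_1$, $\kappa_2$ and the instantiation functions reconstructs $g$ exactly as in the proof of Lemma~\ref{lem:info-loc}. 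This is the ``tiny bit of extra work'' that separates reconstruction-from-output from reconstruction-from-input.

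The crux is the entry-gate case, where $q$ sits in a \emph{sibling} meta-gate and hence its local data pins down the parent of $q$ rather than the parent of $p$. Here I would invoke the global component $\mglob$ of the layer-$2$ message on input $0$ (and, for a two-input entry gate, the parent message arriving on input $1$, as used by $\decgatecut$ in Lemma~\ref{lem:eval_layer1}). By the knot-tying lemmas of layer $2$---Lemmas~\ref{lem:eval_subst1}, \ref{lem:eval_subst2}, \ref{lem:eval_mg}, and the commutation of Lemma~\ref{prop:decodeL_incr}---this global part carries precisely the message entering the parent meta-gate of $p$, so that reading it with $\decgatew$ and $\decgatel$ yields $(\Lambda_1(\quot{p}{K}), \Lambda_2(\quot{p}{K}), C_w(\quot{p}{K}))$ and the entry position $p \bmod K$, from which $g$ is again reconstructed. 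The main obstacle is exactly this case: one must check that the global messages are mutually consistent all the way up the infinite hierarchy---that the recursion implicit in decoding an entry gate from an ancestor message is well-founded and agrees with the local decoding at every level---which is the very reason layer $2$ was engineered to ``tie the knot''. Assembling the two cases into a single $\decgate$ and verifying it returns $C_\square(p)$ at every position completes the proof.
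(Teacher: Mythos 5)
Your proposal is correct and follows essentially the same route as the paper's own proof: both apply the preceding corollary ($\decgate'$) to the message on input number $0$, then split into the same-meta-gate case (increment the position and re-instantiate through $\kappa_w$, $\kappa_1$, $\kappa_2$) and the entry-gate case, which is resolved by pulling the parent meta-gate's data out of the global layer-2 information via Lemma~\ref{lem:info-loc} and the knot-tying lemmas. The only (trivial) omission is the base case of the unique gate with no inputs, which the paper dispatches with $g = s_1 \otimes s_2$; beyond that, your treatment of the entry-gate case is if anything more explicit than the paper's, since it addresses recovering the full parent data $(C_w(\quot{p}{K}), \Lambda_1(\quot{p}{K}), \Lambda_2(\quot{p}{K}))$ needed to reconstruct the entry gate's \emph{function}, where the paper's text only discusses its label and wiring.
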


\begin{proof}
  Each gate $g$ in position $p$ can be reconstructed from its set of input directions and the message on its input number $0$.

  If $g$ has no inputs, then $g = s_1 \otimes s_2$.

  Otherwise, let $D$ be the direction of its first input wire, and $m = (m_1, m_2)$ the message coming into its first wire.

  Let $(w, l_1, l_2, p') = \decgate'(m)$. Note that $p' = (p - D) \bmod K$. If $p' + D$ belongs to $K$, then $\bigquot{p}{K} = \bigquot{p - D}{K}$, so $p$ belongs to the same meta-gate as its predecessor $p - D$, hence $C_\square(p)$ is $\mu_1(w, l_1)(p' + D) \otimes \mu_2(w, l_2)(p' + D)$.

  Otherwise, since $p' + D \notin K$, it must be the case that $\bigquot{p}{K} = \bigquot{p-D}{K} + D$: $p$ and its predecessor $p - D$ are in neighboring meta-gates. Let $a = \bigquot{p}{K}$ be the position of the parent gate. The position $a \bmod k$ of $a$ within its meta-gate is the position where input $0$ enters that meta-gate.

  A close examination of the values of $\mu_1(w, l_1)$ and $\mu_2(w, l_2)$ for all $w, l_1, l_2$ reveals that in each (non-seed) meta-gate the label and wiring of the gate in the position where input $0$ comes only depends on the input directions of $w$. By lemma~\ref{lem:info-loc}, those directions are exactly those of $\woutputwires[ { \lmpwiring[ { \gmanclocal[ { \mglob[m_2] } ] } ] } ](D)$.
\end{proof}

This concludes the proof of \cref{thm:main}. By theorem~\ref{thm:circuit-to-atam}, this means that there is an aTAM system $S_\square$ which strictly self-assembles $K^\infty$.

\section{A Characterization of Admissible Generators}
\label{sec:limits}

The previous construction can be adaptated to any self-similar discrete fractal within which the communication pattern used by $C_\square$ can be embedded. Whether a particular fractal is amenable to hosting such a communication pattern depends on the ability of its generator $G$ to transport information to copies of itself around it.

\begin{definition}
  Let $G$ be a finite subset of $\mathbb{N}^2$, the grid $G^{\#}$ is the subset of $\mathbb{Z}^2$ defined by:
  \[ G^{\#} = \{\ p \in \mathbb{Z}^2 | p \bmod G \in G \} \]

  % Let $\{d,  d'\} \subset \Dir$, the $d-d'$-flow in $G$ 

  The grid neighborhood graph $G^{+}$ of $G$ is the subgraph of $G^{\#}$ induced by the distance $1$ neighborhood of $G$:
  \[ G^+ = G \cup \{ p \in G^{\#} | \exists d \in \Dir, p + d \in G \}. \]

  For a direction $d \in \Dir$, the $d$-port in $G^{+}$ is \( G^{+d} = \{ p \in G^{\#} | p - d \in G \}. \)

  For $d, d' \in \Dir$, the $(d, d')$-bandwidth of $G$ is the number \( G[d \leftrightarrow d'] \) of vertex-disjoint paths from $G^{+d}$ to $G^{+d'}$ in $G^+$.
\end{definition}

\begin{figure}
  \centering
  \includegraphics[width=\textwidth]{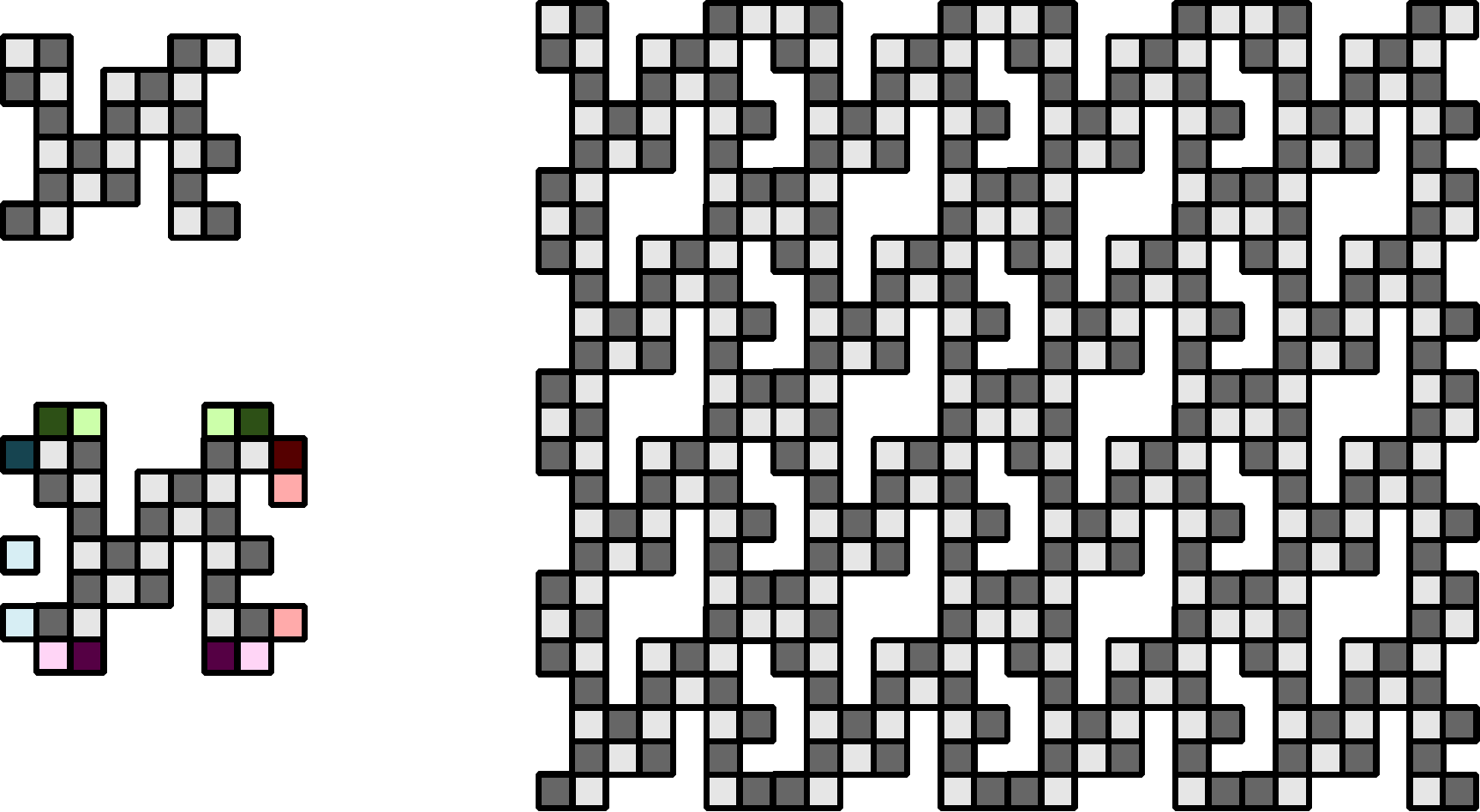}
  \caption{A finite shape $G$, its grid neighborhood $G^{+}$ and its grid graph $G^{\#}$.}
  \label{fig:grid_neigh}
\end{figure}

In order to compare generators, the classical notions of \emph{subgraph} and \emph{graph subdivision} is useful, accounting for marked vertices.

\begin{definition}[Pointed subgraph]
  Let $G, H$ be two graph, each with marked vertices. The graph $H$ is a \emph{pointed subgraph} of $G$ if $H$ is a subgraph of $G$ in such a way that any marked vertex of $H$ is mapped to a marked vertex of $G$. The graph $G$ may have extra marked vertices.
\end{definition}

\begin{definition}[Edge subdivision]
  Let $G = (V, E)$ be a graph, with some marked vertices $(v_0, \ldots, v_{k-1}) \in V^k$. The edge subdivision operation for an edge $e = \{u, v\} \in E$ is the deletion of $e$ from $G$ and the addition of a new vertex $w \notin V$ and of the edges $\{u,w\}$ and $\{w,v\}$.

  This operation generates a new graph H, where the same vertices are marked as in $G$.
  \[H = (V \cup \{w\}, (E \setminus \{u,v\}) \cap \{\{u,w\}, \{w,v\}\})\]
\end{definition}

\begin{definition}[Graph Subdivision]
A graph with marked vertices which has been derived from $G$ by a sequence of edge subdivision operations is called a pointed subdivision of $G$.
\end{definition}

\begin{definition}[Subconnector]
  Let $G, H$ be finite, connected shapes of $\mathbb{N}^2$, with $(0, 0) \in G \cap H$.

  Then $H$ is a subconnector of $G$, written $H \preceq G$ if $G^+$ has a (pointed) subgraph which is a (pointed) subdivision of $H^+$.
\end{definition}

The notion of subconnector is well-suited to the study of substitutions given the following properties.

\begin{remark}
  \label{lem:preceq_sigma}
  Let $G, H, I$ be  finite, connected shapes of $\mathbb{N}^2$, with $G \preceq H$, then:
  \begin{eqnarray*}
    \sigma_I(G) \preceq \sigma_I(H)\\
    \sigma_G(I) \preceq \sigma_H(I)\\
  \end{eqnarray*}
\end{remark}

\begin{lemma}
  \label{lem:build_through_subconnector}
  Let $G \ni (0,0)$ a finite, connected subgraph of $\mathbb{N}^2$ such that $K \preceq G$. Then there is a self-descriptive circuit with domain $G^\infty$.
\end{lemma}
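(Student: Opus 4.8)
The plan is to replay the construction of $C_\square$ from \cref{thm:cacarpet_circuit} over the generator $G$ in place of $K$, using the relation $K \preceq G$ to transport the communication pattern that drove that construction. First I would unwind what $K \preceq G$ gives concretely: a pointed subgraph $S$ of $G^+$ together with an isomorphism between $S$ and a pointed subdivision of $K^+$ sending each port $K^{+d}$ into the matching port $G^{+d}$. This yields, for every cell $z \in K$, a witness cell $\phi(z) \in G$, and for every grid-edge of $K^+$ a simple path in $G^+$ joining the images of its endpoints, all port-respecting and pairwise vertex-disjoint. Since subdivision preserves the number of vertex-disjoint paths, the port capacities of $G$ dominate those of $K$, so the two wires per port-direction that $\kappa_w$ routes through $K$ can be routed through $G$ as well.

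With $\phi$ fixed, I would define a wiring substitution $\kappa_w^G : \Wirings \to \Wirings^G$ mirroring $\kappa_w$. On each image cell $\phi(z)$ I copy the wiring that $\kappa_w$ assigns to $z$, adjusted to the directions in which the paths of $\phi$ actually enter and leave $\phi(z)$; along each subdivided edge I place straight or bent wire-wirings that carry a message unchanged between the endpoints. Because $\phi$ preserves the port incidences, the neighbor-compatibility checked in \cref{lem:wiring-normal} survives verbatim, so the fixed point $C_w^G$ is again the wiring of a normal, closed circuit, now over $G^\infty$ (the iterated embedding $K^i \preceq G^i$ follows from \cref{lem:preceq_sigma} by induction and transitivity). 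The cells of $G$ not met by $\phi(K^+)$ are then covered by grafting, from the already-wired region, a spanning family of extra identity wires reaching every remaining cell; these carry a dedicated \emph{filler} message so that the support becomes exactly $G^\infty$ while the dependency graph stays acyclic and keeps a single in-degree-$0$ gate inherited from the $K$-seed.

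The two function layers transplant in the same way. The label substitutions $\kappa_1, \kappa_2$, the instantiation functions, and the decode functions are used unchanged on the $\phi$-image, while every wire and filler gate carries the identity function together with a message recording its position within $G$ and its role. The per-meta-gate statements (\cref{lem:eval_layer1} through \cref{lem:eval_mg}) only follow wirings locally and only see the insertion of identity wire-gates, so they carry over, giving the \extractfunc/\embedfunc{} duality and the substitutive structure of the messages on $G^\infty$. Self-description then follows as in the final theorem of \cref{sec:circuit}: a real gate is recovered from the message on its input $0$ by the same decode argument, now reading the entry direction off the position recorded in that message rather than off the ambient geometry, and wire and filler gates are recovered because their messages encode position and role uniquely.

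The main obstacle is keeping self-description intact across the re-routing. Two separate difficulties appear: after subdivision the physical directions of the wires inside a copy of $G$ no longer coincide with those inside $K$, so the decode functions must reconstruct a gate from directions the original proof never had to confront; and the filler region must be wired so that it creates neither cycles nor two distinct gates sharing the same input-$0$ message. Reconciling the transplanted geometry with unambiguous decoding, while guaranteeing that every cell of $G^\infty$ is reached from a single root, is where the real work of the proof concentrates.
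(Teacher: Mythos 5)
Your overall plan coincides with the paper's: transport the $C_\square$ construction along the subdivision embedding of $K^+$ into $G^+$, turn subdivision vertices and leftover cells of $G$ into neutral wire gates (the paper realizes these as pending leaves of the transported pattern, with label $\gcolor{normal}$ on layer 1 and $\gcolor{i_1}$ on layer 2), widen $\Sigma_1$ so that positions live in $G$ rather than $K$, and argue that the per-meta-gate lemmas and self-description survive. However, your proposal stops exactly where the paper's proof does its one piece of real work, and the difficulty you flag is mislocated. The problem is not that the decode functions must face new directions; it is that the substitution itself becomes undefined. When a gate of $K$ with two \emph{adjacent} inputs is re-routed through $G^+$, the two incoming paths may reach its representative cell from \emph{opposite} directions, so the transported pattern contains wirings with $\winputset = \{E, W\}$ (up to rotation). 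Your $\kappa_w^G$ is declared to ``mirror $\kappa_w$'', but $\kappa_w$ is only defined on wirings whose (at most two) inputs are adjacent, so the recursive expansion of these new wirings is not defined, and the fixed point $C_w^G$ does not exist until this case is supplied.

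The paper's proof fills precisely this hole as its first step: it completes $\kappa_w$ with an explicit image for opposite-input wirings (\cref{fig:subst_w_extra}), observing that this is the only new case that can arise (gates still have at most two inputs, either adjacent or opposite), and that in this case no position needs to be provisioned for the label $\gcolor{dA}$ on layer 2, since both inputs necessarily come from inside the same parent meta-gate. With that extension, decoding needs no new argument: messages still record the wiring, label and position of the parent, so the arguments of \cref{lem:eval_layer1} through \cref{lem:info-loc} go through, and your worry about ambiguous filler gates dissolves because the fillers are fixed once and for all inside each $\gamma_w(w)$ and their messages record their position in $G$. So your transplantation and filler steps are sound, but without the extended substitution case the construction cannot even be iterated; that missing case is a genuine gap rather than deferred routine work.
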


\begin{proof}
  First, notice that $\kappa_w$ can be completed to cover the case where $w$ has two opposite input directions, as represented on figure \ref{fig:subst_w_extra} (modulo rotation and reflection).

  \tikzexternaldisable
  \begin{figure}
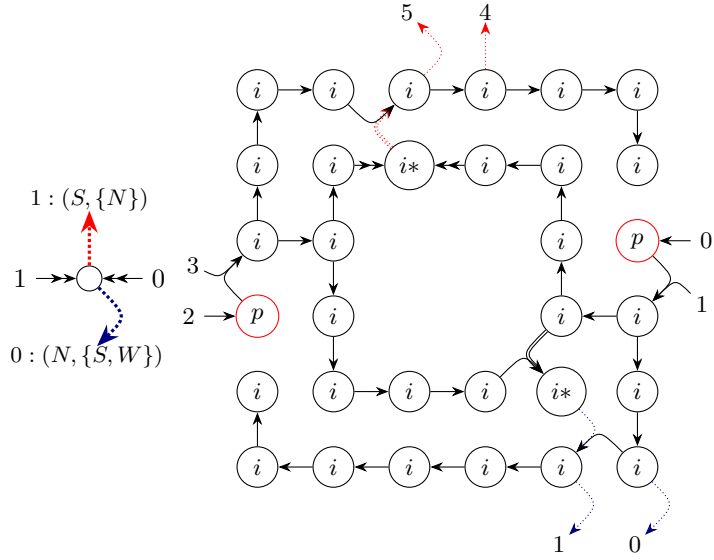

    \centering
    \begin{tikzpicture}
      \matrix {
        \begin{scope}
          \input{tikz/empty_gate_EW_to_N_Se}
        \end{scope}; &
        \begin{scope}
          \input{tikz/local_opposite}
        \end{scope}; \\
      };
    \end{tikzpicture}
    \caption{The value of $\kappa_w(w)$ when $\winputset[w] = \{E, W\}$. This case is not needed in section \ref{sec:circuit}, but it is necessary to generalize $\kappa_w$ to $G$ when $K^+$ is a subdivision of $G^+$. No position needs to be provisionned for $\gcolor{dA}$, since both inputs must be in the same metagate.}
    \label{fig:subst_w_extra}
  \end{figure}
  \tikzexternalenable

  Fix the vertices of $G^+$ which represent the vertices of $K^+$, the ones that sit in the middle of its edges, and the ones which are pending leaves.

  Then it is possible to define a substitution $\gamma_w: \Wirings \to \Wirings^G$ by using $\kappa_w$ to fix a subdivision of $G^+$, then adding the extra vertices to $\gamma_w(w)$. This process may create vertices in $\gamma_w(w)$ with two opposite inputs, for which the extra cases of figure~\ref{fig:subst_w_extra} are necessary.

  \begin{figure}
    \centering
    \includegraphics[width=.5\textwidth]{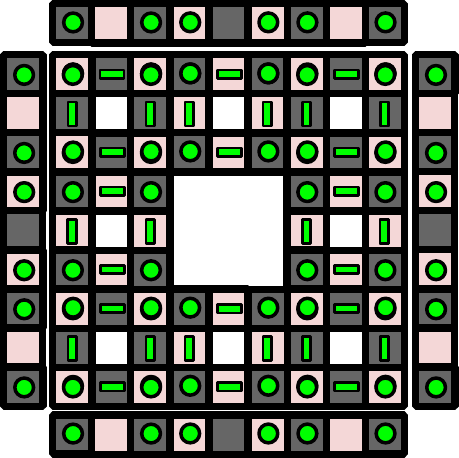}
    \caption{The second iteration $S$ of the Sierpinski Carpet (in the center) is a subconnector of $K$: $S^+$ contains a subdivision of $K^+$, in which the cells with a circle correspond to vertices of $K^+$, and those with a bar are obtained by dividing the edge of $K^+$ the bar represents. Not all edges are subdivided: adjacent circles in $G^+$ are indeed adjacent in $K^+$.}
    \label{fig:gamma_w}
  \end{figure}

  The label substitutions $\kappa_1$ and $\kappa_2$ can also be adaptated to $G^+$, defining $\gamma_1$ and $\gamma_2$. In $\gamma_1$, all vertices which do not represent a vertex of $K$ have label $\gcolor{normal}$. For layer $2$, each vertex of $G^+$ representing a vertex of $K$ keeps its label, each new vertex gets a label \gcolor{i_1}.
  
  Set $\Sigma_1^G = \Sigma_1$, except that $\lmpos$ takes values in $G$ rather than $K$. Then $\Sigma_2^G$ is $\Sigma_2$, except that $\gmanclocal$ takes values in $\Sigma_1^G$ rather than $\Sigma_1$.

  Then a circuit $C_G$ on $\Sigma_1^G \times \Sigma_2^G$ can be defined from $\gamma_w$, $\gamma_1$ and $\gamma_2$ like $C_\square$. That circuit $C_G$ has the same properties as $C_\square$ and is also self-descriptive. When accounting for the local part of the messages, gates which are upstream from all the $K$-vertices show the local message of the corresponding input meta-gate.

  Thus $C_G$ is self-descriptive.
\end{proof}

The question is now which $G$ are such that $G^{+}$ has a subdivision of $K^+$ as a subgraph. This condition seems constraining since $K$ is a rather large graph with its 32 vertices. Yet, one can make any $G$ larger by iterating the substitution generated by $G$ before trying to self-assemble $G^{\infty}$.

\begin{remark}
  Let $G \ni (0, 0)$ be a finite shape of $\mathbb{N}^2$. For any $k > 0$, $G^{\infty} = (G^k)^\infty$.
\end{remark}

% In~\cite{DBLP:journals/nc/PatitzS10}, Patitz and Summers observe that for $G^\infty$ to self-assemble strictly, it must avoid pinch-points.

By iterating $\sigma_G$, it is quite easy to get an instance of $K$ as a subconnector, as long as one starts with sufficient vertical and horizontal bandwidth.

\begin{lemma}
  Let $G \ni (0, 0)$ be a finite shape of $\mathbb{N}^2$. If $G[N \leftrightarrow S] \geq 2$ and $G[E \to W] \geq 2$, then $K \preceq G^3$
\end{lemma}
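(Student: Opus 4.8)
The plan is to reduce the statement to two small base cases via the substitutive identity $K = \sigma_Q(R)$, where $Q = \{0,1\}^2$ is the solid $2\times 2$ square and $R = \{0,1,2\}^2 \setminus \{(1,1)\}$ is the $3\times 3$ ring. This identity is immediate from the definition of $\sigma_Q$: applying $\sigma_Q$ replaces each of the eight pixels of $R$ by a solid $2\times 2$ block, so $\sigma_Q(R)$ fills $\{0,\dots,5\}^2$ except for the block corresponding to the missing pixel $(1,1)$, namely $\{2,3\}^2$; this is exactly $K$.

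I would then prove the two subconnector relations (A)~$Q \preceq G$ and (B)~$R \preceq G^2$ and chain them. Note first that $\preceq$ is transitive (a subgraph of a subdivision that is itself a subdivision is again a subdivision), and that $\sigma_G(G^2) = \sigma_G^3(\{(0,0)\}) = G^3$. From (A), Remark~\ref{lem:preceq_sigma} gives $\sigma_Q(R) \preceq \sigma_G(R)$; from (B) it gives $\sigma_G(R) \preceq \sigma_G(G^2)$. Hence
\[ K = \sigma_Q(R) \preceq \sigma_G(R) \preceq \sigma_G(G^2) = G^3, \]
which is the claim. Case (B) then reduces to a finite verification once (A) holds: one checks by inspection that the $3\times 3$ ring is a subconnector of the solid $4\times 4$ square, $R \preceq Q^2$ (route the eight-cycle of $R$ along the boundary of a $3\times 3$ sub-block of the $4\times 4$ grid inside $(Q^2)^+$, sending the three ports of $R^+$ on each side to three of the four corresponding ports of $(Q^2)^+$ and subdividing edges as needed), and applies (A) twice through Remark~\ref{lem:preceq_sigma} to get $Q^2 = \sigma_Q(Q) \preceq \sigma_G(Q) \preceq \sigma_G(G) = G^2$; transitivity then yields $R \preceq Q^2 \preceq G^2$.

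The core of the argument is case (A): producing a pointed subdivision of $Q^+$ inside $G^+$ from the two bandwidth hypotheses. The graph $Q^+$ is a $4$-cycle on core vertices $a,b,c,d$, with exactly two pendant ports (one $N$- and one $S$-port, resp.\ one $E$- and one $W$-port) attached along each pair of opposite sides. The hypotheses $G[N \leftrightarrow S] \geq 2$ and $G[E \leftrightarrow W] \geq 2$ provide in $G^+$ two vertex-disjoint paths $P_L, P_R$ from $G^{+N}$ to $G^{+S}$ and two vertex-disjoint paths $Q_B, Q_T$ from $G^{+W}$ to $G^{+E}$. Relabelling so that $P_L$ lies weakly west of $P_R$ and $Q_B$ weakly south of $Q_T$, I would first argue that every vertical path crosses every horizontal path: in the grid graph two unit edges meet only at a common lattice vertex, and since the ports $G^{+N}, G^{+E}, G^{+S}, G^{+W}$ occupy the top, right, bottom and left of $G$ in cyclic order, the endpoints of an $N$–$S$ path interleave on the boundary with those of an $E$–$W$ path, so a Jordan-curve argument forces an intersection at a shared vertex. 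Choosing $a = P_L \cap Q_B$, $b = P_R \cap Q_B$, $c = P_R \cap Q_T$, $d = P_L \cap Q_T$ as branch vertices, the arcs $Q_B[a,b], P_R[b,c], Q_T[c,d], P_L[d,a]$ form the required $4$-cycle, and the residual stubs of the four paths reach the two $N$-, two $S$-, two $E$- and two $W$-ports of $G^+$, realizing the eight pendant ports of $Q^+$ and respecting the pointed condition.

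The delicate point, and the main obstacle, is to make this configuration clean: a priori the vertical and horizontal paths may cross several times and in a tangled cyclic order, so that the four chosen arcs overlap or a port-stub re-enters the cycle. I would handle this by an uncrossing step on the constantly many intersection points: restricting each path to its portion between its extreme crossings removes interference from the port-stubs, and the west/east ordering of $P_L, P_R$ together with the south/north ordering of $Q_B, Q_T$ forces the crossing pattern to be the grid one, so that the four arcs are internally disjoint. This is precisely the $k=2$ case of the standard extraction of a grid subdivision from $k$ vertex-disjoint horizontal and $k$ vertex-disjoint vertical paths that pairwise cross in a planar graph. Once (A) is established with the ports correctly matched, the chain displayed above delivers $K \preceq G^3$.
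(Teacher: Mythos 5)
Your proposal is correct and takes essentially the same route as the paper: the paper likewise extracts $H=\{0,1\}^2 \preceq G$ from the two bandwidth hypotheses (your step (A), which the paper asserts in one line and you work out in more detail), pumps this through Remark~\ref{lem:preceq_sigma} and transitivity, and closes with a finite verification. The only difference is bookkeeping: the paper checks $K \preceq H^3$ directly against the $8\times 8$ grid, whereas you factor $K=\sigma_Q(R)$ and check the ring $R \preceq Q^2$ against the $4\times 4$ grid before chaining back up; both reductions are sound.
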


\begin{proof}
  Let $H = \{0, 1\}^2$. If $G[N \leftrightarrow S] \geq 2$ and $G[E \to W] \geq 2$, then $H \preceq G$: pick two disjoint North-South paths, two disjoint East-West paths, their four intersections can act as the vertices of $H$.

  By remark~\ref{lem:preceq_sigma}, since $H \preceq G$, $H^3 \preceq G^3$. But $H^3$ is an $8 \times 8$ grid, so $K \preceq H^3$.

  Hence, $K \preceq G^3$.
\end{proof}

Finally, this characterization is tight, by a generalization of the impossibility result of Hendricks et al.~\cite{hendricks_hierarchical_2020}.

\begin{theorem}
  Let $G \ni (0, 0)$ a finite shape of $\mathbb{N}$.

  If for all $k > 0$, there is an $x_k$ such that there is only one $y$ with $(x_k, y) \in (G^k)^+$ and $(x_k, y+1) \in (G^k)^+$, then $G^\infty$ cannot be strictly self-assembled in the aTAM model unless $G = \{\vec{0}\}$.
\end{theorem}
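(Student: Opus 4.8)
The plan is to argue by contradiction, using the Tree Pump Lemma (\cref{lem:tree_pump}) as the engine. Suppose some aTAM system $\mathcal S$ with $n$ tiles strictly self-assembles $G^\infty$ while $G \neq \{\vec 0\}$; by the reduction to a one-tile seed used for \cref{lem:tree_pump} I may assume $|\seed[\mathcal S]| = 1$. Since $(0,0)\in G$ and $G\neq\{\vec 0\}$, the shape $G^\infty$ is infinite, unbounded in the vertical direction $N=(0,1)$, and $\sigma_G$-invariant ($\sigma_G(G^\infty)=G^\infty$, hence $\sigma_G^k(G^\infty)=G^\infty$); being a non-degenerate self-similar fractal it is not ultimately periodic.

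First I would convert the hypothesis into a self-similar family of \emph{width-one vertical pinches}. For each $k$, the column $x_k$ meets $(G^k)^+$ in a set of heights containing exactly one pair of consecutive integers, i.e.\ a single vertical edge $\eta_k$; this is the only place where that column transmits information upward across the corresponding level. Lifting through $\sigma_G^k(G^\infty)=G^\infty$, these pinches nest inside $G^\infty$: along the rescaled column there is a vertical fiber that, at every scale $k$, passes a level where $G^\infty$ offers a single vertical link, and the spacings between successive pinch levels grow like the diameter of $G^k$, hence without bound.

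Next I would thread a \emph{skinny} sub-assembly through these pinches and feed it to \cref{lem:tree_pump} with distinguished direction $\vec d = N$. Cutting $G^\infty$ along the single edge $\eta_k$ isolates the part that grows upward through the fiber; the aim is that the width-one crossing confines this part to bounded treewidth, so that it avoids $C_m[\mathcal S]$ for the relevant $m$, while unboundedness upward keeps it out of $B_{F(n,m),N}[\mathcal S]$. The lemma then delivers an ultimately periodic segment, of some period $\vec p$ with $\vec p\cdot N>0$, in $P_{N}[\mathcal S]$. Because the crossing at each pinch is a single tile, this periodic segment can be \emph{spliced}: repeating (or deleting) one period re-uses the same unique interface and so yields another locally valid, terminal production of $\mathcal S$. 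A bounded period cannot track the exponentially growing pinch spacings found above, so the spliced production places (or omits) a tile at a level where $G^\infty$ does not, giving a terminal production whose domain differs from $G^\infty$ --- contradicting strict self-assembly. The only way out is for the fiber to be degenerate, which forces $G=\{\vec 0\}$.

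The hard part will be the middle-to-last step: extracting a genuinely skinny object from $G^\infty$, which is globally cycle-rich (the holes of a Sierpinski-type generator \emph{do} encircle arbitrarily large squares), so \cref{lem:tree_pump} cannot be applied to the whole production. Since the single-vertical-edge condition is strictly weaker than bounded vertical bandwidth, the argument must truly \emph{direct the firehose}: localize the excursion and altitude machinery underlying \cref{lem:tree_pump} to the strip cut out at $\eta_k$, prove that strip has bounded treewidth, and verify that the splice both remains stable (only the one interface must match, which is where width-one is essential) and provably changes the assembled domain. Establishing that the pumped, periodic behaviour is incompatible with the ever-growing pinch spacing --- rather than merely reproducing one of the legitimate periodic sub-assemblies of $G^\infty$, such as a full boundary row or column --- is the subtle point on which the whole impossibility hinges.
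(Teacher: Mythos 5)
Your setup (the self-similar pinches, \cref{lem:tree_pump} as the engine) is the same as the paper's, but the step your whole argument rests on --- that cutting at the width-one pinch leaves a region of \emph{bounded treewidth}, so that productions growing through $\eta_k$ avoid $C_m[\mathcal S]$ and the lemma is forced to answer with the periodic case $P_{N}[\mathcal S]$ --- is not merely ``the hard part'': it is false in general, and the paper never claims it. The hypothesis constrains a \emph{single column} of $(G^k)^+$ per scale and says nothing about the rest of the generator; nothing prevents $G$ from containing a hole attached to the pinch column through a one-cell-wide stem, in which case $G^\infty$ encircles arbitrarily large squares, and those squares occur \emph{beyond every pinch}, since the region beyond the pinch of scale $k$ contains copies of $G^j$ for arbitrarily large $j$. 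A width-one attachment bounds the interface, not the internal treewidth of what is attached. This is precisely why the paper keeps \emph{both} horns of the dichotomy of \cref{lem:tree_pump}, applied with increasing $m$: either some terminal production beyond the pinch contains a periodic upward path, or there are productions whose fill-in encircles arbitrarily large squares; the only conclusion it draws, valid in both cases, is that for each $k$ there is a production in which some $k \times k$ region is reachable from the seed only through the pinch.

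Two further steps are missing. First, to invoke \cref{lem:tree_pump} beyond a pinch you must seed a subsystem with the single interface tile and exclude the bounded case $B_{F(n,m),N}[\mathcal S]$; ``$G^\infty$ is unbounded upward'' does not do this, because it is a statement about the target shape, not about the dynamics of that subsystem (a given interface glue might admit terminal growths that stay low). The paper has to earn this point: it introduces the sets $F_k$ of glues that can appear at a pinch before anything exists beyond it, the extents $u_t$ and $d_t$, proves $d_t$ is finite (no infinite southward growth inside $\mathbb{N}^2$), and shows that some glue of $F_\infty$ has $u_t=\infty$, since otherwise, beyond a far enough pinch line, the assembly would be confined to thin horizontal bands around interface points separated by $w^s$ and could not have domain $G^\infty$. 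Second, your endgame is exactly the point you concede is unresolved: pumping or deleting periods of the periodic path may simply reproduce a legitimate periodic subset of $G^\infty$ (a boundary column, say), and your ``period cannot track the pinch spacing'' argument is never made precise --- nor could it be, in the branch of the dichotomy where no periodic path exists. The paper's contradiction is obtained differently: by the pigeonhole principle over the infinitely many scales, two pinch interfaces carry the same tile; exchanging the futures of these two positions (a window-movie exchange in the spirit of \cref{lem:wml}) turns any production at one scale into a valid production at the other, yielding a translation of $\mathbb{Z}^2$ that preserves either arbitrarily large cycles of $G^\infty$ or its intersection with a half-plane, and it is \emph{this} translation invariance that is incompatible with a non-trivial $G$.
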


\begin{proof}
  Let $w$ and $h$ be the width and height of $G$. Assume that there is an aTAM $\mathcal{G}$ which self-assembles $G^{\infty}$ at temperature $\tau$.

  Let $C_k$ be the set of all glues which appear on the eastern edge of position $(x_k, y)$ for some value of $y$. Since $\mathcal{G}$ assembles $G^{\infty}$, all the $C_k$ are non-empty, so let $C_{\infty} = \bigcap_k (\bigcup_{k' > k}C_{k'})$ be the set of glues appearing in infinitely many $C_k$; $C_{\infty}$ is also non-empty. Let $F_k \subseteq C_k$ be the set of glues $g$ such that there is a position $\vec{z} = (x_k, y)$ and a production $\Pi_{k,t}$ of $\mathcal{G}$ where:
  \begin{itemize}
  \item the eastern glue of $\Pi_{k,t}(\vec{z})$ is $g$
  \item $\Pi_{k,t}$ contains no tile right of $x_k$
  \end{itemize}
  Like $C_\infty$, $F_\infty = \bigcap_k(\bigcup_{k' > k} F_{k'})$ is non-empty. The sets $C_k$ and $F_k$ are illustrated on \cref{fig:Ck_and_Fk}.

  \begin{figure}
    \centering
    \begin{tikzpicture}[scale=.48, yscale=-1]
      \tikzpicturedependsonfile{tikz/Ck_and_Fk.tex}
      \input{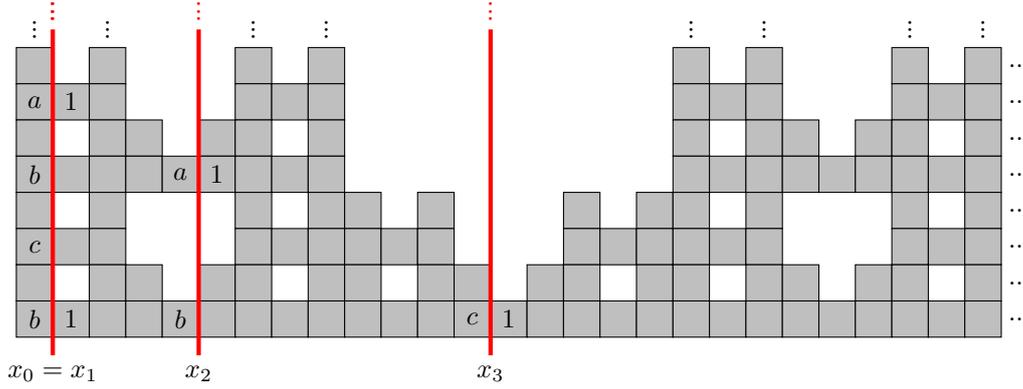}
    \end{tikzpicture}
    \caption{The set $C_k$ is the set of all glues which attach across the vertical lines at coordinate $x_k$. Ignoring any tiles which are not visible in the picture, $C_0 = C_1 = \{ a \cdot E, b \cdot E, c \cdot E \}$, $C_2 = \{ a \cdot E, b \cdot E\}$ and $C_3 = \{ c \cdot E \}$. The tiles marked with a $1$ are the ones which appear in their column in some production without other tiles right of the red line; the subset $F_k$ of $C_k$ contains their east glues: , $F_0 = F_1 = \{ a \cdot E, b \cdot E \}$, $C_2 = \{ a \cdot E \}$ and $C_3 = \{ c \cdot E \}$}
    \label{fig:Ck_and_Fk}
  \end{figure}
  
  Now let $t$ be a glue of $\mathcal{G}$, and $s_t$ be a (new) tile with $t$ on its eastern side, and $\epsilon$ on all other sides. Let $\mathcal{G}_t[x \geq 1]$ be the set of productions of $\mathcal{G}$, starting from the seed configuration with $s_t$ at $\vec{0}$ and attaching no tiles at positions $x < 1$. Let $u_t = \min_{p \in \mathcal{G}_t[x>1]} \max \{y | (x,y) \in p\}$ and $d_t =  \min_{p \in \mathcal{G}_t[x>1]} \min \{y | (x,y) \in p\}$. These two quantities are illustrated on \cref{fig:ut_and_dt}.

  \begin{figure}
    \centering
    \begin{tikzpicture}
      \input{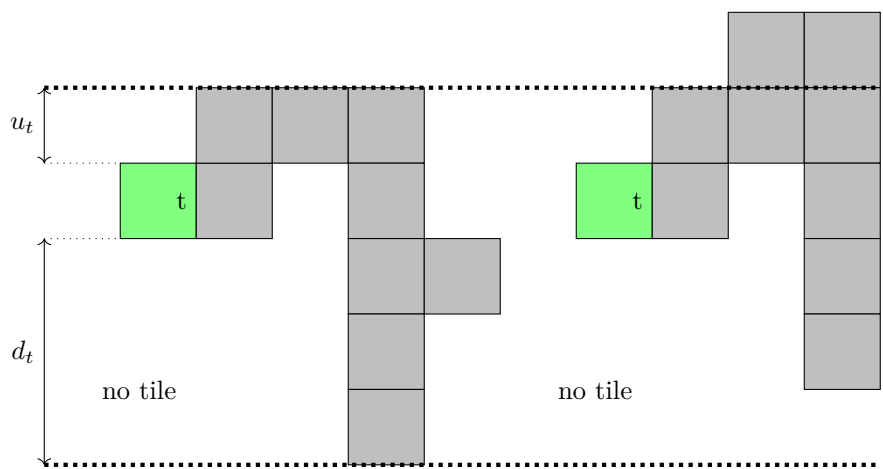}
    \end{tikzpicture}
    \caption{The definition of $u_t$ and $d_t$ given the two productions reachable from the glue $t$, without going left of the seed.}
    \label{fig:ut_and_dt}
  \end{figure}

  Since $G^\infty \subset \mathbb{N}^2$, it does not contain any infinite southward path. Therefore, for any $k$ and $t \in F_k$, $d_t > - \infty$, otherwise from $\Pi_{k,t}$ it would be possible to produce paths going arbitrarily far down, out of $\mathbb{N}^2$. Let $d_\infty = \max_{t \in F_{\infty}} \{d_t  | t \in \mathbb{G} \}$, $d_\infty$ is finite. On the other hand, $u_\infty = \max \{ u_t | t \in F_{\infty} \}$ must be infinite. Indeed, if it is finite, let $s = u_\infty -d_\infty + 1$. Consider a maximal production $\Pi^*$ obtained by only placing tiles left of the vertical line $L^*$ at $x$-coordinate $x_s$. Starting from $\Pi^*$, the only attachable positions are isolated points on $L^*$, separated by a distance at least $w^s$. From each of them, it is possible to grow an arm which reaches no higher than $s$ upwards, whence, as illustruted on \cref{fig:dt_finite_ut_infinite}, the part right of $L^*$ of that production is contained within a union of $(u_\infty - d_\infty)$-width horizontal bands, so its domain cannot be $G^\infty$.%\ruggedtodo{Could it actually, but anyway, then it is ripe for pumping}.

  \begin{figure}
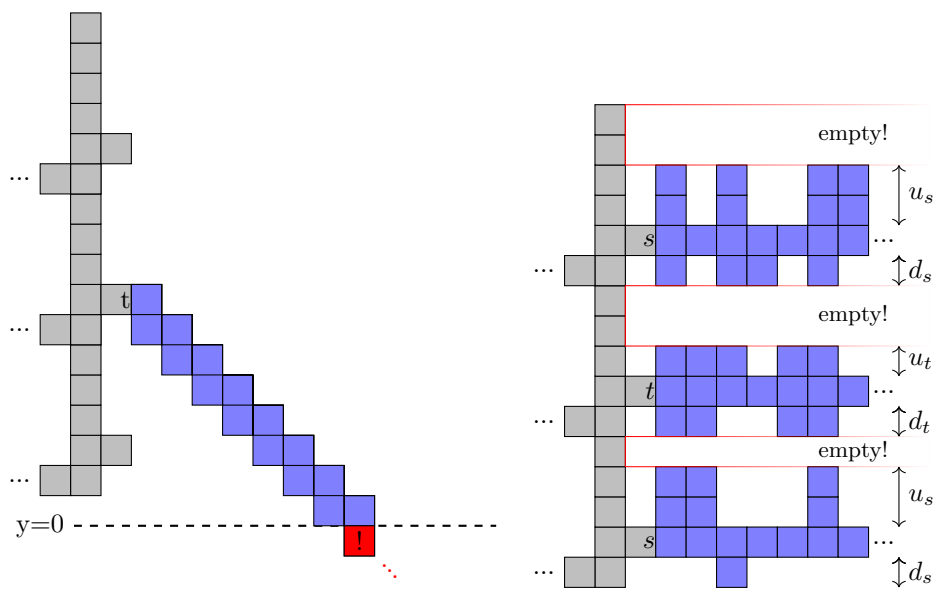

    \centering
    \begin{tikzpicture}[scale=.4]
      \input{tikz/dt_finite}
    \end{tikzpicture}%
    \quad%
    \begin{tikzpicture}[scale=.4]
      \input{tikz/ut_infinite}
    \end{tikzpicture}
    \caption{Faulty terminal productions which can be reached, left if some $d_{t}$ is infinite, right if all $u_t$ are finite: $\min_t(d_t) = -1$, $u_t = 1$, $u_s = 2$.}
    \label{fig:dt_finite_ut_infinite}
  \end{figure}

  For $t$ such that $u_t = +\infty$, all terminal productions on the right half-plane reach infinitely high. Thus, by applying~\cref{lem:tree_pump} with increasing values of $m$, either one of these terminal productions contains a periodic path going up, or there are productions $P$ with arbitrarily large squares in their fill-in $\complete{P}$. In both cases, for each $k$, there is a production $F_k$ for which some $k \times k$ square is inaccessible from any point $(0, y)$ with $y > 0$ without crossing $F_k$, as illustrated on~\cref{fig:peephole}.
    
  For a large enough level of substitution, any glue $t$ for which $u_t$ is finite does not even grow one ``meta-cell'' up. Let $F_{!} = \{ g \in F_\infty | u_g = +\infty \}$, and $s_{!}$ be such that $\max \{ u_g | g \notin F_{!} \} < h^{s_{!}}$ and $|d_\infty| <  h^{s_{!}}$.

  Consider a vertical line $L_x$ at some position $x$ between copies of $G^{s_!}$ in $G^\infty$. Let $P_y$ be a maximal subproduction of $P$ which can be assembled without attaching any tile right of $L_y$. In $P_y$, let $\vec{z}_{x} = (x, y)$ be the lowest position on $L_{x}$ containing a tile $T$ of $F_{s!}$.
  %, note $t = T \cdot E$. Let $P^!$ be the production obtained by pasting some production $\Pi_! \in \mathcal{G}_t[x \geq 1]$ to the right of $\vec{z}^!$. In productions obtained from $P^!$, the positions to the right of the path formed by $\ell \Pi_!$ attach to the seed through $\vec{z}^!_n$. As seen previously, this part right of $\ell \Pi_!$ can be taken to be $F_k$, and thus contain parts of $G^\infty$ which circle arbitrarily large squares.

  \begin{figure}
    \centering
    \begin{tikzpicture}
      \tikzpicturedependsonfile{tikz/peephole.tex}
      \input{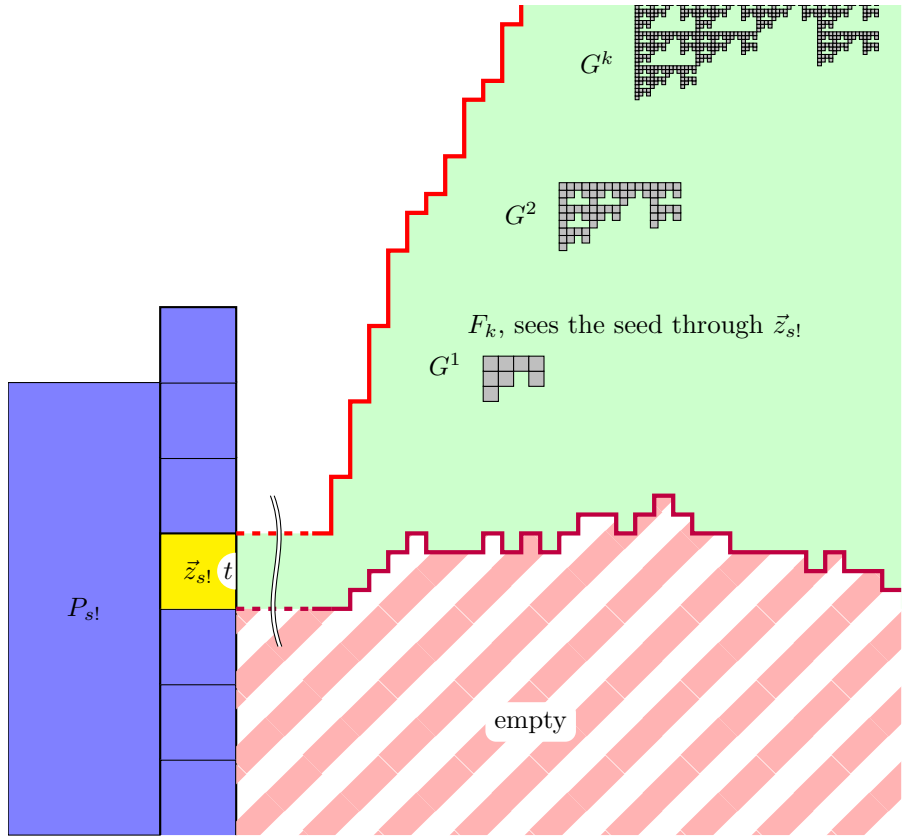}
    \end{tikzpicture}
    \caption{The part of the production that sees the seed only through $\vec{z}_!$ contains cycles of $G^\infty$ of arbitrary size.}
    \label{fig:peephole}
  \end{figure}

  By the pigeonhole principle, there must be two vertical lines $L_n$ and $L_m$ between copies of $G^{s_{!}}$ such that $P_n(\vec{z}_n) = P_m(\vec{z}_m)$. Let $\vec{v}$ be the vector $\vec{z}_m - \vec{z}_n$. By construction of $P_n$ and $P_m$, for any production $P_n^{!}$ obtained from $P_n$, there is a production $P_m^{!}$ obtained from $P_m$ such that the part of $P_n^{!}$ above and to the right of $\vec{z}_n$ and the part of $P_m^{!}$ above and to the right of $\vec{z}_m$ are identically filled. Hence, there is a vector which preserves either arbitrarily large cycles of $G^\infty$ or its intersection with a half-plane, thus $G$ must be trivial.
\end{proof}

This leaves open the case where in all the iterates $(G^k)^+$, there is a vertex which disconnects either the north and south or the east and west, but no straight line which crosses $(G^k)^+$ only once. A more precise variant of the previous proof takes care of that case.

\begin{lemma}
  \label{thm:kill_llama}
  Let $G \ni (0, 0)$ a finite shape of $\mathbb{N}$, and for $k \in \mathbb{N}$, $G^k = \sigma_G^k(\{(0, 0)\})$.

  If $\sup_{k}(G^k[N \leftrightarrow S]) = 1$ or $\sup(G^k[E \leftrightarrow W]) = 1$, then $G_\infty$ cannot be strictly self-assembled in the aTAM model unless it is trivial.
\end{lemma}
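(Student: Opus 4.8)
The plan is to run a more careful version of the previous theorem, in which the straight separating line is replaced by the single cut vertex furnished by Menger's theorem. Reflecting across the main diagonal swaps the two bandwidth conditions and preserves $\mathbb{N}^2$, so assume $\sup_k (G^k[E \leftrightarrow W]) = 1$; since $G$ is non-trivial and connected, every $(G^k)^+$ has an east--west path, so its $(E,W)$-bandwidth is exactly $1$ for each $k \geq 1$. By Menger's theorem each $(G^k)^+$ then contains a single vertex $v_k$ whose removal separates the east port from the west port: every east--west path funnels through $v_k$. Write $A_k$ for the component of $(G^k)^+ \setminus \{v_k\}$ meeting the east port and $B_k$ for the one meeting the west port; the only edges joining $A_k$ to $B_k \cup \{v_k\}$ are among the four sides of $v_k$. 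This single vertex plays exactly the role the single line crossing played before.

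Because the seed lies on the west side, in any production of $\mathcal{G}$ every tile of $A_k$ is placed downstream of the one tile occupying $v_k$, so all glue information feeding $A_k$ is carried by the sides of that single tile. As in the previous proof, define $C_k$ to be the set of tiles $\mathcal{G}$ can place at $v_k$, and $F_k \subseteq C_k$ those realizable by a production placing no tile strictly inside $A_k$. Since $\mathcal{G}$ assembles $G^\infty$, both are non-empty, and by finiteness of the tileset the limit sets $C_\infty$ and $F_\infty$ (tiles occurring for infinitely many $k$) are non-empty as well.

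For each tile $t$, reuse the quantities $u_t$ and $d_t$ recording the greatest upward and downward reach forced on the sub-assembly grown east of $t$ without returning west of the bottleneck. Since $G^\infty \subseteq \mathbb{N}^2$ has no infinite southward path, every $d_t$ is finite, hence so is $d_\infty$. If $u_\infty$ were finite then the whole region east of $v_k$ would be confined to horizontal bands of height $u_\infty - d_\infty$ and could not cover $G^\infty$, exactly as in Figure~\ref{fig:dt_finite_ut_infinite}; therefore some $t \in F_\infty$ has $u_t = +\infty$. Applying the Tree Pump Lemma (\cref{lem:tree_pump}) with growing $m$ to such a $t$, every terminal production grown through the bottleneck either carries a periodic path climbing in the $+y$ direction or encircles arbitrarily large squares in its fill-in.

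Finally, fix a substitution level $s_{!}$ large enough that every tile with finite $u_t$ cannot climb across even one copy of $G^{s_{!}}$, and consider the bottleneck vertices of the successive copies of $G^{s_{!}}$ encountered along the east direction inside $G^\infty$, together with the lowest tile of $F_{s_{!}}$ sitting at each. These range over infinitely many positions, so by the pigeonhole principle two of them carry the same tile, producing a translation vector $\vec{v}$ with $\vec{v} \cdot E > 0$. Since everything grown beyond a bottleneck is determined by its tile, $\vec{v}$ reproduces either arbitrarily large encircled squares of $G^\infty$ or an entire eastward half-plane of it; either way $G^\infty$ is ultimately periodic in a cardinal direction, which a non-trivial self-similar fractal cannot be, forcing $G = \{\vec{0}\}$. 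The main obstacle --- and the reason a ``more precise variant'' is needed rather than the previous argument verbatim --- is that the bottleneck is now a single vertex rather than a straight line: the two sides $A_k$ and $B_k$ interlock and $v_k$ may drift as $k$ grows, so the real work is to verify that ``downstream of $v_k$'' is a well-defined, scale-coherent region, so that $u_t$, $d_t$ and the pigeonhole over bottleneck tiles retain the force they had for the line crossing.
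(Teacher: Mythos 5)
Your plan is the same one the paper follows (its ``east--west bridges'' are exactly your Menger vertices $v_k$), but the step that carries all the weight is false as stated, and it is precisely the point the paper's sketch is careful about. You claim that in any production ``every tile of $A_k$ is placed downstream of the one tile occupying $v_k$'', and later that ``everything grown beyond a bottleneck is determined by its tile''. That is true inside one abstract copy of $(G^k)^+$, but not in $G^\infty$: the fractal contains infinitely many level-$k$ copies, and the east component $A_k$ of one copy borders other copies along its north and south sides, so an assembly can reach $A_k$ without ever occupying $v_k$ --- climb the west parts of the copies above, cross east through the bridge of one of \emph{those} copies, and descend east-part-to-east-part back into $A_k$. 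Concretely, take the H-shaped generator $G = \{(0,0),(0,1),(0,2),(1,1),(2,0),(2,1),(2,2)\}$, which has east--west bandwidth $1$ at every level; in $G^2$ the path $(0,0),(0,1),(0,2),(0,3),(0,4),(1,4),(2,4),(2,3),(2,2)$ reaches the cell $(2,2)$, which lies in the east component of the first level-$1$ copy, while avoiding that copy's bridge $(1,1)$ entirely. Consequently $\{v_k\}$ is not a cut of $G^\infty$, your sets $F_k$ and quantities $u_t, d_t$ do not control what is assembled east of $v_k$, and the final pigeonhole step, which needs ``beyond the bottleneck'' to be determined by the bottleneck tile, collapses.

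The paper's (admittedly telegraphic) proof differs exactly here: the object it pigeonholes on is not a single vertex but a genuine plane-separating cut, one per $k$, which must thread through the bridge of \emph{every} copy it meets --- infinitely many cells, one per level-$k$ copy stacked in the column --- and the substantive claim is that such a cut can be chosen, non-straight, yet confined to a bounded band around the bridges, so that the glue information crossing it is bounded per copy and the movie/pigeonhole machinery of the preceding theorem still applies. Your closing paragraph correctly senses the difficulty ($A_k$ and $B_k$ interlock, $v_k$ drifts), but frames it as verifying that ``downstream of $v_k$'' is well defined; it cannot be verified, because it is false. The bottleneck object itself has to be changed from one vertex to a bounded-width cut through all the bridges, and establishing that bounded band is the actual content of the lemma; it is missing from your proposal.
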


\begin{proof}
  Omitted due to laziness~\cite{lafargue_droit_1880}. In the previous proof, take a cut per $k$ which passes through the first occurence of each level of east-west bridges. Note that these cuts are not necessarily straight, but they remain within a bounded band around the bridges. The proof concludes as the previous one.
\end{proof}

\cref{thm:kill_llama} gives a dichotomy between generators for which the associated fractal can be strictly self-assembled and those for which it cannot. There is a polynomial time algorithm for determining which of \cref{lem:build_through_subconnector} or \cref{thm:kill_llama} applies.

\begin{theorem}
  There is a polynomial time algorithm which, on input $G$ decides whether:
  \begin{itemize}
  \item there is a $k$ such that $G^k[N \leftrightarrow S] \geq 2$ and $G^k[E \leftrightarrow W] \geq 2$ and thus $G^\infty$ can be strictly self-assembled in the aTAM, or
  \item for all $k$, $G^k[N \leftrightarrow S] < 2$ and thus $G^\infty$ cannot be strictly self-assembled in the aTAM, or
  \item for all $k$, $G^k[E \leftrightarrow W] < 2$ and thus $G^\infty$ cannot be strictly self-assembled in the aTAM.
  \end{itemize}

  Notice that the second and third case are \emph{not} mutually exclusive.
\end{theorem}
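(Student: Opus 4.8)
The plan is to reduce the decision to tracking, as $k$ grows, the two axis bandwidths $G^k[N\leftrightarrow S]$ and $G^k[E\leftrightarrow W]$, capped at the threshold $2$. First I would record that the three bullets are governed precisely by the two quantities $\nu = \sup_k G^k[N\leftrightarrow S]$ and $\eta = \sup_k G^k[E\leftrightarrow W]$: since $G$ is connected both suprema are at least $1$; the positive case follows from \cref{lem:build_through_subconnector} applied to $G^k$ together with the lemma giving $K \preceq (G^k)^3 = G^{3k}$ (using $(G^k)^m = G^{km}$); and the two negative cases are exactly \cref{thm:kill_llama}. Thus it suffices to decide, for each axis, whether its supremum is $\ge 2$ or equal to $1$, and to justify that when both suprema are $\ge 2$ a single common $k$ works.

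That common witness is supplied by monotonicity: I claim each sequence $k \mapsto G^k[d\leftrightarrow d']$ is non-decreasing. Writing $G^{k+1} = \sigma_G(G^k)$, a maximal family of $G^k[N\leftrightarrow S]$ vertex-disjoint North--South paths in $(G^k)^+$ lifts to the same number of disjoint paths in $(G^{k+1})^+$: every macro-pixel on such a path is a copy of $G$, and since $G$ is connected each macro-step---vertical or sideways---carries at least one micro-path through the corresponding copy (using $G[N\leftrightarrow S]\ge 1$ and $G[E\leftrightarrow W]\ge 1$), while vertex-disjointness of the macro-paths keeps their micro-realisations in disjoint columns of copies. (Equivalently one may invoke $G^k \preceq G^{k+1}$ and \cref{lem:preceq_sigma}, disjoint paths being preserved through subdivisions.) Hence each supremum is attained at a finite level, and if $G^{k_1}[N\leftrightarrow S]\ge 2$ and $G^{k_2}[E\leftrightarrow W]\ge 2$ then $k=\max(k_1,k_2)$ witnesses both, so the first bullet is equivalent to $\nu\ge 2 \wedge \eta \ge 2$.

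To make this effective without building the exponentially large $G^k$, I would summarise each level by a constant-size \emph{routing profile}: using the decomposition $G^{k+1} = \sigma_{G^k}(G)$, the graph $(G^{k+1})^+$ is an arrangement of at most $|G|$ copies of $(G^k)^+$ in the fixed template $G$, adjacent copies glued along their ports. The profile of a level records, capped at $2$, how many vertex-disjoint paths it can route \emph{simultaneously} between its four ports; there are only finitely many such profiles. The bandwidths of $G^{k+1}$ are then obtained from the profile of $G^k$ by a max-flow computation on the constant-size template $G$ with each macro-pixel replaced by the gadget prescribed by that profile, defining a transition $\Phi$ with $\mathrm{profile}(G^{k+1}) = \Phi(\mathrm{profile}(G^k))$ computable in time polynomial in $|G|$. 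Because profiles are capped and, by the monotonicity above, non-decreasing along $\Phi$, the sequence stabilises after a number of steps bounded independently of $|G|$; iterating $\Phi$ from $\mathrm{profile}(G)$ to its fixed point and reading off the two axis entries decides whether $\nu\ge 2$ and whether $\eta\ge 2$, hence which bullet holds. Each of the $O(1)$ steps is a polynomial max-flow, so the procedure runs in polynomial time.

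The main obstacle is the correctness of this summary step: the four pairwise bandwidths of a level do \emph{not} by themselves determine how many paths can be routed through it simultaneously, since vertex-disjointness couples the different port pairs. The remedy is exactly to let the profile carry the full (capped) multi-port routing capability rather than the six pairwise numbers---a still-finite object, because the total routed flow is capped at $2$---and to check that $\Phi$ so defined reproduces the capped bandwidths of $G^{k+1}$; monotonicity of max-flow in the gadget capacities and the locality of a value-$2$ flow or a value-$1$ cut make this verification routine. Pinning down this profile transition, together with the uniform bound on the number of iterations to stabilisation, is where essentially all the work lies; granting it, the trichotomy and its polynomial-time decidability follow as above.
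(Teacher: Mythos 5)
Your route is genuinely different from the paper's, but its load-bearing step has a real gap. Everything rests on the compositionality claim that the capped routing profile of $G^{k+1}$ is a function $\Phi$ of the capped profile of $G^k$, computable by a max-flow on the template with each macro-pixel replaced by a constant-size gadget. The failure mode this has to survive is not the one your ``remedy'' addresses: two vertex-disjoint paths in $(G^{k+1})^+$ may enter and leave a \emph{single} copy of $(G^k)^+$ many times, so the routing they induce inside that copy consists of three, four, or arbitrarily many vertex-disjoint segments. A profile that only records simultaneous routability of total flow at most $2$ says nothing about whether such multi-segment routings are feasible, so the gadget graph can have max-flow $<2$ even though $G^{k+1}[E\leftrightarrow W]\geq 2$; i.e.\ $\Phi$ can underestimate. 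To repair this you would need either a uniform bound on the number of re-entries needed by some optimal path system (not obvious, and not argued), or a finite summary that handles unbounded multiplicity. This is exactly where the paper dualizes instead of pushing paths through: since only the threshold $2$ matters, Menger's theorem converts the question into the existence of a single disconnecting vertex, and single-vertex cuts compose \emph{exactly} under substitution --- a vertex disconnects a pair of ports of $G^k$ iff its macro-pixel is a disconnector of $G$ and its position inside that copy is a disconnector of $G^{k-1}$ for the caused pairs $\Cause(v,D)$. That recursion produces a causation graph of size $O(|G|)$ in which ``bandwidth $<2$ at every level'' is equivalent to a reachable cycle; no profile, no stabilization argument, and no multiplicity issue ever arises.

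There is also a secondary gap in your monotonicity lemma. You infer ``both suprema are at least $1$'' from connectedness of $G$, and then use $G[N\leftrightarrow S]\geq 1$, $G[E\leftrightarrow W]\geq 1$ to lift each macro-step. But realizing a macro-step requires an actual edge between adjacent copies of $G$ in the periodic arrangement $G^{\#}$, and connectedness of $G$ does not provide it: for $G=\{(0,0),(0,1),(1,1),(1,2)\}$, which is connected and contains $(0,0)$, vertically adjacent copies share no edge (already $G^2$ is disconnected), so no northward macro-step can be lifted and the monotonicity argument collapses in precisely the regime it is supposed to classify. This case can be detected and shunted into the negative bullets, but your proof does not do so. Finally, note that monotonicity is in any event dispensable: if $\Phi$ really were a function on a finite profile set, the orbit of $\mathrm{profile}(G)$ would be eventually periodic with constant preperiod and period, and reading off the finitely many profiles that ever occur would decide all three bullets. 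So the entire weight of your proposal rests on the compositionality claim, which is the one thing left unproved.
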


\begin{proof}
  Without loss of generality, assume $G$ is connected.

  First, use a max-flow algorithm to compute $G[d \leftrightarrow d']$ for all directions $d$, $d'$.
  If both $G[E \leftrightarrow W]$ and $G[N \leftrightarrow S]$ are $2$ or more, then the first case applies.

  Let $D$ be a set of pairs of directions, $v$ is a $D$-disconnector if $v$ disconnects $G^{+d}$ from $G^{+d'}$ for any $(d, d') \in D$. If $G[E \leftrightarrow W] = 1$, then there is at least one $\{EW\}$-disconnector. For any $D$-disconnector, define $\Cause(v, D)$ as the set of pair of directions $(\delta, \delta')$ such that there are $(d, d') \in D$ and a path from $G^{+d}$ to $G^{+d'}$ which enters $v$ from direction $\delta$ and leaves it through direction $\delta'$.

  Compute the disconnect causation graph $\Delta$. It is a directed graph whose vertices are couples $(v, D)$ where $v$ is a $D$-disconnector, and there is an arc from $(v, D)$ to each vertex $(v', \Cause(v, D))$. The causation graph $\Delta$ has size at most $64 |G|$ and the existence of each of its arcs can be tested in polynomial time.

  The graph $\Delta$ contains a cycle reachable from $(v, \{EW\})$, if and only if for all $k$, $G^k[E \leftrightarrow W] < 2$, likewise for $N$ and $S$. Indeed, for $k > 0$, a vertex $v$ is a $D$-disconnector in $G^k$ if and only if:
  \begin{itemize}
  \item The vertex of $\quot{v}{G^{k-1}}$ of $G$ corresponding to the copy of $G^{k-1}$ containing $v$ is a $D$-disconnector, and
  \item the vertex $(v \bmod G^{k-1})$ of $G^{k-1}$ corresponding to the position of $v$ within that copy is a $\Cause(v,D)$-disconnector.
  \end{itemize}

  Hence, from $\Delta$, it is possible to distinguish between the three cases.
\end{proof}

\appendix
\section{Proof of the Tree Pump Lemma}
\label{apx:tree_pump}

The proof of~\cref{lem:tree_pump} needs some ancillary definitions. The objects they introduce may seem overly sophisticated in view of the concreteness of the statement of \cref{lem:tree_pump}, but one needs to soldier on. The reader may find solace in knowing that they are the result of persistent failure in the author's search of a more down-to-earth argument. The proof plainly needs systems endowed with the ability to escape the confines of the Euclidian plane $\mathbb{Z}^2$ and those of a merely infinite time. Hence, the definition and use of \emph{Ordinal-length} sequences of \emph{Free Assemblies}.

\subsection{Free Assemblies}

 In this appendix, assemblies sit on a graph called an \emph{assembly support} instead of $\mathbb{Z}^2$.

\begin{definition}[Assembly support]
  An \emph{assembly support} is a directed graph $G$ with labels in $\{N, E, S, W\}$ on its arcs, such that:
\begin{itemize}
\item each vertex has at most one incoming arc with each label,
\item each vertex has at most one outgoing arc with each label,
\item for any cycle $c$ of $G$, the sum of the labels of the arcs of $c$ is equal to $\vec{0}$ as a vector of $\mathbb{Z}^2$,
\item for any path $\pi$ of $1$ or $3$ arcs from $u$ to $v$, if the sum in $\mathbb{Z}^2$ of the labels of $\pi$ is $\vec{d} \in \Dir$, then there is an arc from $v$ to $u$ with label $-\vec{d}$.
\end{itemize}
\end{definition}

An example of a correct assembly support is given on \cref{fig:ext_arcs_steps_holes}, while \cref{fig:bad-ass-support} shows some counter-examples of labeled graphs which are not assembly supports. Because of the conditions on the incoming and outgoing arcs of each vertex, it makes sense to represent the vertices as tiles, and arcs as sides the tiles share. This hopefully helps build the intuition that assembly support are ``like $\mathbb{Z}^2$, but long separated paths which should come to the same position may avoid each other''. On \cref{fig:bad-ass-support}, the direction corresponding to each side of the tile is given explicitely. With correct Assembly Supports, like the ones on \cref{fig:ext_arcs_steps_holes}, the convention is that the direction of each side corresponds roughly to its direction on the page, and small adjustments allow tiles which would otherwise be adjacent on the page not to be. The appearance of tiles avoiding each other by going in the third dimension is deliberate, as a way to build intuition.

An \emph{embedding} from an assembly support $G$ to another assembly support $G'$ is a graph embedding which preserves the label of the arcs. An \emph{isomorphism} between $G$ and $G'$ is a graph isomorphism preserving the arc labels.

\begin{definition}[Perspective difference, Translation]
  For any two vertices $(z, z')$ of a connected component of an assembly support $G$, their \emph{perspective difference} $z' \pd z$ is $\sum_{a \in p} a$, where $p \in \Dir^*$ is the sequence of labels of a path from $z$ to $z'$. For any embedding $e: G \to \mathbb{Z}^2$, $e(z') - e(z) = z' \pd z$. By convention, $z, z'$ are in different connected components, $z' \pd z = \infty$.

  A translation between two subgraphs $A \subset G$ and $B \subset G'$ of two assembly support is an isomorphism from $A$ to $B$ which preserves perspective differences. Any isomorphism between connected subgraphs is a translation.
\end{definition}

For any Assembly Support $G$ and $z \in G$, $\ezEmbed[z, G]: G \to \mathbb{Z}^2$ is the embedding $z' \mapsto (z' \pd z)$. Since changing $z$ in $\ezEmbed[z, G]$ only amounts to a translation, it will generally be omitted. The index $G$ will be omitted as well when this causes no ambiguity.

\begin{figure}
  \centering
  \newcommand{\crookedTile}[5]{
  \begin{scope}
    \filldraw[fill=gray!25!white] #1 -- node (s) [above] {S} #2 -- node (e) [left] {E} #3 -- node (n) [below] {N} #4 -- node (w) [right]{W} cycle;
    \node at (barycentric cs:s=1,e=1,n=1,w=1) {#5};
  \end{scope}
}

\begin{tikzpicture}
  \begin{scope}[scale=1.3]
    \crookedTile{(0,0)}{(1,0)}{(1,1)}{(0,1)}{$z$}
    \filldraw[fill=gray!25!white] (1,0) -- node[above] {W} (2,0) -- node[left] {S} (2,1) -- node[below] {E} (1,1) -- node[right] {N} cycle;
    \node at (1.5,.5) {$u$};
    \crookedTile{(1,-1)}{(2,-1)}{(2,0)}{(1,0)}{$t$}

    \begin{scope}[scale=1.2,shift={(0,2)}]

      \crookedTile{(-1,0)}{(0,0)}{++(108:1)}{++(-1,0)}{$o$}
      \begin{scope}[shift={(72:-1)}]
        \crookedTile{(-1,0)}{(0,0)}{++(72:1)}{++(-1,0)}{$n$}
      \end{scope}
      \crookedTile{(0,0)}{++(36:1)}{++(108:1)}{++(36:-1)}{$p$}
      \begin{scope}[shift={(-36:1)}]
        \crookedTile{(0,0)}{++(36:1)}{++(-36:-1)}{++(36:-1)}{$e$}
      \end{scope}

      % \crookedTile{(0,0)}{++(1,0)}{++(108:1)}{++(-1,0)}
      % \crookedTile{(1,0)}{++(36:1)}{++(108:1)}{++(36:-1)}
      % \crookedTile{(1,0)}{++(-36:1)}{++(36:1)}{++(-36:-1)}
      % \crookedTile{(72:-1)}{++(0:1)}{(1,0)}{(0,0)}
    \end{scope}

    % peut aussi s'épeller aproximatif
    \begin{scope}[shift={(6,0)}]
      \crookedTile{(0,0)}{(1,0)}{(0.833,1)}{(0,1)}{$i$}
      \crookedTile{(1,0)}{(2,0)}{(1.833,1)}{(0.8333,1)}{$s$}
      \crookedTile{(-1,0)}{(0,0)}{(0,1)}{(-0.8333,1)}{$m$}
      \crookedTile{(-2,0)}{(-1,0)}{(-0.8333,1)}{(-1.8333,1)}{$\epsilon$}
      \crookedTile{(0.8333,1)}{(1.8333,1)}{(1.666,2)}{(0.666,2)}{$c$}
      \crookedTile{(-1.8333,1)}{(-0.8333,1)}{(-0.666,2)}{(-1.666,2)}{$d$}
      \crookedTile{(0.666,2)}{(1.666,2)}{(1.5,3)}{(0.5,3)}{$o$}
      \crookedTile{(-1.666,2)}{(-0.666,2)}{(-.5,3)}{(-1.5,3)}{$e$}
      \crookedTile{(0.5,3)}{(1.5,3)}{(1.333,4)}{(0.333,4)}{$u$}
      \crookedTile{(-1.5,3)}{(-0.5,3)}{(-0.333,4)}{(-1.333,4)}{$t$}
      \crookedTile{(-0.5,3)}{(0.5,3)}{(0.333,4)}{(-0.333,4)}{$n$}
    \end{scope}
  \end{scope}

\end{tikzpicture}
  \caption{\emph{Ceci n'est pas un Assembly Support}: three labelled graphs which fail to be assembly supports. Each vertex is represented by a tile, the labels of its sides are the labels of its outgoing arcs. A label on a side with no neighbor corresponds to an exterior arc. The bottom-left graph has a path of length $1$ with label $E$ from $z$ to $u$ but the $-E = W$ neighbor of $u$ is $t$. Also, the $N$ neighbor of $t$ is $u$, which has no $S$ neighbor. The one in the top right has a $3$ arc path ``$nope$'' whose labels sum to $N + E + S = E$, but $n$ is not the $W$ neighbor of $e$ (actually, there is none). The one on the right has a cycle ``$\epsilon\, miscounted$'' for which the sum of the labels is $3E + 3N + 2W + 3S = E$, which is not $\vec{0}$.}
  \label{fig:bad-ass-support}
\end{figure}

The analysis in the Tree Pump Lemma relies on an examination of the \emph{holes} of the productions of $\Ss$. These holes need to be suitably defined now that the assemblies are no longer necessarily planar.

\begin{definition}[Exterior Arc]
  An \emph{exterior arc} of an assembly support $G$ is a pair of a vertex $v$ of $G$ and a direction $d$ such that $v$ has no arc labelled $d$ in $G$.
\end{definition}

\begin{definition}[Converging Exterior Arcs, Growth]
  Two exterior arcs $(v, d)$, $(v', d')$ are \emph{converging} when $v' \pd v = d' - d$ --they virtually point to the same empty position.

  Given an assembly support $G$ and a set $Z$ of converging arcs, the assembly support $G + Z$ ($G$ grown by $Z$) is $G \cup \{\zeta\}$, where there is an arc $z \to \zeta$ (for $z \in G$) in direction $d$ whenever $(z, d) \in Z$.
\end{definition}

\begin{definition}[Step, Exterior Path, Hole]
  There is a \emph{step} between two arcs or exterior arcs $(v, d)$ and $(v', d')$ if:
  \begin{itemize}
  \item $G = G'$ and $d, d'$ form a $\pm \frac{\pi}{2}$ angle,
  \item $d = d'$ and there is an (non-exterior) arc from $v$ to $v'$ with label $d''$, with $d, d''$ forming a $\pm \frac{\pi}{2}$ angle
  \item $d, d'$ form a $\pm \frac{\pi}{2}$ angle and there is a cell $v''$ with $v'' + d = v$ and $v'' + d' = v'$.
  \end{itemize}

  A (simple) \emph{exterior path} is a sequence $P = (p_0, p_1, \ldots, p_{k})$ of arcs, such that:
  \begin{itemize}
  \item for each $i$, there is a step between $p_i$ and $p_{i+1}$, and
  \item for $0 < i < k$, $p_i$ is an exterior arc.
  \end{itemize}

  A \emph{hole} is a simple exterior path which loops back to its starting exterior arc.

  The \emph{perimeter} of a hole $H = (h_0, \ldots h_k = h_0)$ is the set of vertices appearing in the exterior vertices $h_i$.
\end{definition}

These definitions are illustrated on~\cref{fig:ext_arcs_steps_holes}, which features an assembly support with two holes of perimeter $20$. These holes are ``morally infinite'', so they resist attempts at a definition of area ---at least the author's naive ones. For an assembly which embeds into $\mathbb{Z}^2$, the free definition of hole corresponds to the intuition of a hole in an assembly, up to the detail that the exterior of the assembly forms a hole, so that all finite assemblies have at least one hole. Note that exterior paths are oriented: they turn in the positive direction around each tile.

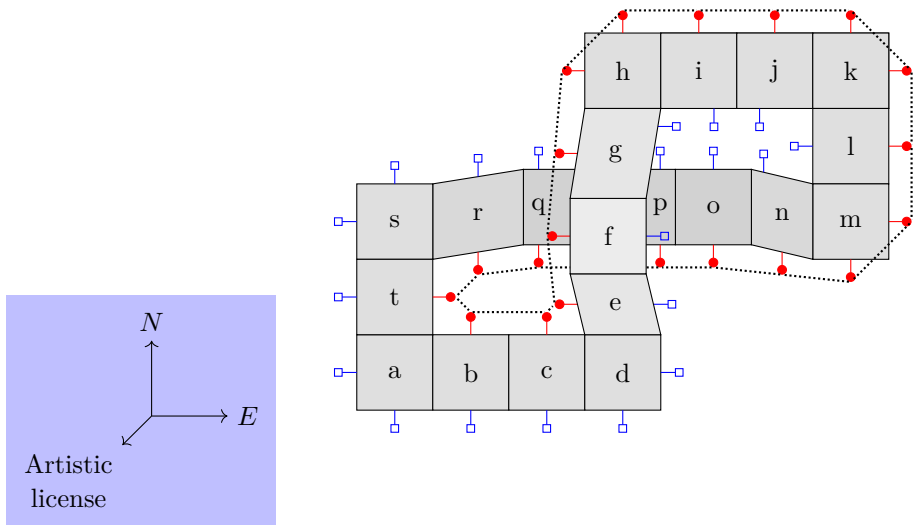
\begin{figure}
  \centering
  \begin{tikzpicture}
\matrix[left=3em of {(0,0)}, fill=blue!25!white]{
    \begin{scope}
      \draw[->] (0, 0, 0) -- +(xyz cs:x=1) node[right] {$E$};
      \draw[->] (0, 0, 0) -- +(xyz cs:y=1) node[above] {$N$};
      \draw[->] (0, 0, 0) -- +(xyz cs:z=1) node[below left, align=center] {Artistic\\ license};
    \end{scope}\\
};   

  \foreach \x / \y / \lab in { 0 / 0 / a, 1 / 0 / b, 2 / 0 / c, 3 / 0 / d }
  \filldraw[fill=gray!25!white] (\x,\y) rectangle +(1,1) node [pos=.5] {\lab};

  \foreach \x / \y / \lab in { 3 / 4 / h}
  \filldraw[fill=gray!25!white] (\x,\y) rectangle +(1,1)  node[pos=.5] {\lab};

  \foreach \x / \y / \lab in { 4 / 4 / i, 5 / 4 / j, 6 / 4 / k}
  \filldraw[fill=gray!25!white] (\x,\y) rectangle +(1,1)  node[pos=.5] {\lab};

  \foreach \x / \y / \lab in { 4 / 4 / i, 5 / 4 / j, 6 / 4 / k}
  \filldraw[fill=gray!25!white] (\x,\y) rectangle +(1,1)  node[pos=.5] {\lab};

  \foreach \x / \y / \lab in { 6 / 3 / l, 6 / 2 / m}
  \filldraw[fill=gray!25!white] (\x,\y) rectangle +(1,1)  node[pos=.5] {\lab};

  \filldraw[fill=gray!30!white] (5, 2, -.5) -- ++(1, 0, .5) -- ++(0, 1, 0) -- ++(-1, 0, -.5) -- cycle;
  \node at (5.5, 2.5, -.25) {n};
  
  \foreach \x / \y / \lab / \labshift in { 4 / 2 / o / 0, 3 / 2 / p / .3, 2 / 2 / q / -.3}
  \filldraw[fill=gray!35!white] (\x,\y, -.5) rectangle +(1,1, 0)  node[pos=.5, shift={(\labshift, 0)}] {\lab};

  \filldraw[fill=gray!30!white] (1, 2, 0) -- ++(1, 0, -.5) -- ++(0, 1, 0) -- ++(-1, 0, .5) -- cycle;
  \node at (1.5, 2.5, -.25) {r};

  \foreach \x / \y / \lab in {0 / 2 / s, 0 / 1 / t}
  \filldraw[fill=gray!25!white] (\x,\y, 0) rectangle +(1,1, 0)  node[pos=.5] {\lab};

  \foreach \x / \y / \z / \dir [count=\n] in { %
    1.5 / 1 / 0 / 90, 2.5 / 1 / 0 / 90, 3 / 1.5 / .25 / 180, 3 / 2.5 / .5 / 180, 3 / 3.5 / .25 / 180, %
    3 / 4.5 / 0 / 180, 3.5 / 5 / 0 / 90, 4.5 / 5 / 0 / 90, 5.5 / 5 / 0 / 90, 6.5 / 5 / 0 / 90, %
    7 / 4.5 / 0 / 0, 7 / 3.5 / 0 / 0, 7 / 2.5 / 0 / 0, 6.5 / 2 / 0 / -90, 5.5 / 2 / -.25 / -90, %
    4.5 / 2 / -.5 / -90,  3.8 / 2 / -.5 / -90, 2.2 / 2 / -.5 / -90, 1.5 / 2 / -.25 / -90, 1 / 1.5 / 0 / 0 %
  }
    \draw[-{Circle}, red] (\x, \y, \z) -- +(\dir:.3) node[name=circle\n, inner sep=0] {};
    %\draw[-{Circle[open]}, red] (\x, \y, \z) -- +(\dir:.3); % node[name=circle\n, inner sep=0] {};

  \foreach \x / \y / \z / \dir [count=\n] in { %
    .5 / 0 / 0 / -90, 1.5 / 0 / 0 / -90, 2.5 / 0 / 0 / -90, 3.5 / 0 / 0 / -90,
    4 / .5 / 0 / 0, 4 / 1.5 / .25 / 0, 4 / 2.5 / .5 / 0, 4 / 3.8 / .1 / 0, %
    4.7 / 4 / 0 / -90, 5.3 / 4 / 0 / -90, %
    6 / 3.5 / 0 / 180, 5.2 / 3 / -.4 / 90, %
    4.5 / 3 / -.5 / 90,  3.8 / 3 / -.5 / 90, 2.2 / 3 / -.5 / 90, 1.5 / 3 / -.25 / 90, .5 / 3 / 0 / 90, %
    0 / 2.5 / 0 / 180, 0 / 1.5 / 0 / 180, 0 / .5 / 0 / 180 % %
  }
  {
    \draw[-{Square[open]}, blue] (\x, \y, \z) -- +(\dir:.3) node[name=square\n, inner sep=0] {};
  }

  \foreach \k [remember=\k as \lastk (initially 20)] in {1, ..., 20}
  \draw[thick, densely dotted] (circle\lastk) -- (circle\k);

  \filldraw[fill=gray!20!white] (3, 1, 0) -- (3, 2, .5) -- (4, 2, .5) -- (4, 1, 0) -- cycle;
  \node at (3.5, 1.5, .25) {e};

  \filldraw[fill=gray!15!white] (3, 2, .5) rectangle +(1,1, 0)  node[pos=.5] {f};

  \filldraw[fill=gray!20!white] (3, 3, .5) -- (3, 4, 0) -- (4, 4, 0) -- (4, 3, .5) -- cycle;
  \node at (3.5, 3.5, .25) {g};

\end{tikzpicture}
  \caption{An example of an assembly support, the shifts in the 3rd dimension between $e$ and $g$ as well as between $r$ and $n$ are artistic license to show the absence of adjacencies between those two branches. The sum of the vectors between neighbors on the cycle $abcdefghijklmnopqrsta$ is $3E + 4N + 3E + 2S + 6E + 2S = \vec{0}$. The sum of vectors between neighbors on the path $fedcbatsrqp$ is $2S + 3W + 2N + 3E = \vec{0}$, yet the vertices $f$ and $p$ are distinct. The small pending vertices next to the tiles are the exterior arcs; they form two holes, one for the square exterior arcs, and one for the circles. The dotted lines shows the steps between the exterior arcs of the ``circle'' hole. The step between the exterior arcs $(m, S)$ and $(m, E)$ stems from the first case of the definition, the one between $(b, N)$ and $(c, N)$ from the second case, and the one between $(t, E)$ and $(r, S)$ because of the third. In this last case, $s$ acts as the pivot between $r$ and $s$. Both holes have perimeter 20.}
  \label{fig:ext_arcs_steps_holes}
\end{figure}

\begin{definition}[Free Assembly]
  A free assembly $A$ of the TAS $\Ss$ is composed of an assembly support $\support[A]$ and a total function $\tiles[A]: \support[A] \rightarrow \tileset[\Ss]$.

  For a TAS $\Ss$, the set of free assemblies of $\Ss$ is written $\FreeAssemblies{\Ss}$.

  % An assembly $A$ is \emph{exact} if the function $\tiles[A]$ is total.
\end{definition}

When there is an injective embedding $e: \support[A] \to G$, it naturally defines a free assembly $e(A)$ with $\support[e(A)] = e(\support[A])$ and for $z \in e(\support[A])$, $e(A)(e(z)) =  A(z)$. An assembly $A$ \emph{embeds} into $\mathbb{Z}^2$ if $\ezEmbed$ is a one-to-one embedding $\support[A] \to \mathbb{Z}^2$; as above, this embedding naturally defines an assembly $\ezEmbed(A)$.

\begin{definition}[Free Attachment, Free Assembly Sequence]
  Let $A, A'$ be free assemblies of some TAS system $\Ss$, and $\eta: A \to A'$ an embedding. The tuple $(A, A', \eta)$ is an assembly candidate if:
  \begin{itemize}
  \item there is a $z \in \support[A']$ such that $\eta: \support[A] \to \support[A'] \setminus \{ z \}$ is a translation,
  \item for any $z' \in \support[A]$, $\tiles[A](z') = \tiles[A](\eta(z'))$.
  \end{itemize}

  The definitions of attachment, assembly sequence, production and terminal production extend to free assemblies. The $\mathbb{Z}^2$ version of each definition is simply the particular case where every assembly embeds into $\mathbb{Z}^2$.

  For a TAS $\Ss$, the set of free assembly sequences of $\Ss$ is written $\FreeSeqs{\Ss}$.  
\end{definition}

The definition of free attachment is illustrated on~\cref{fig:free-attach}. The fact that $\eta$ is an isomorohpism ensures that parts of the assembly which do not touch the position $z$ are unaffected by the attachment. This preserves the locality of the aTAM process, in contrast to what happens in the FTAM~\cite{DBLP:journals/nc/Durand-LoseHPPS20} for instance. Hence, while the above definition of attachment is given from the point of view of $A'$, ``after the fact'', from the point of view of $A$, an attachment candidate can be defined from a tile type $t$ and a set $Z$ of convergent exterior arcs: $t@Z$ is the attachment candidate $(A, A', \eta)$ where:
\begin{itemize}
\item $\support[A'] = \support[A] + Z$, and $\zeta$ is the vertex of $\support[A']$ which is not in $\support[A]$,
\item $\eta$ is the identity on $\support[A]$,
\item $\tiles[A'](\zeta) = t$,
\item $\tiles[A'](z) = \tiles[A](z)$ whenever $z \neq \zeta$.
\end{itemize}

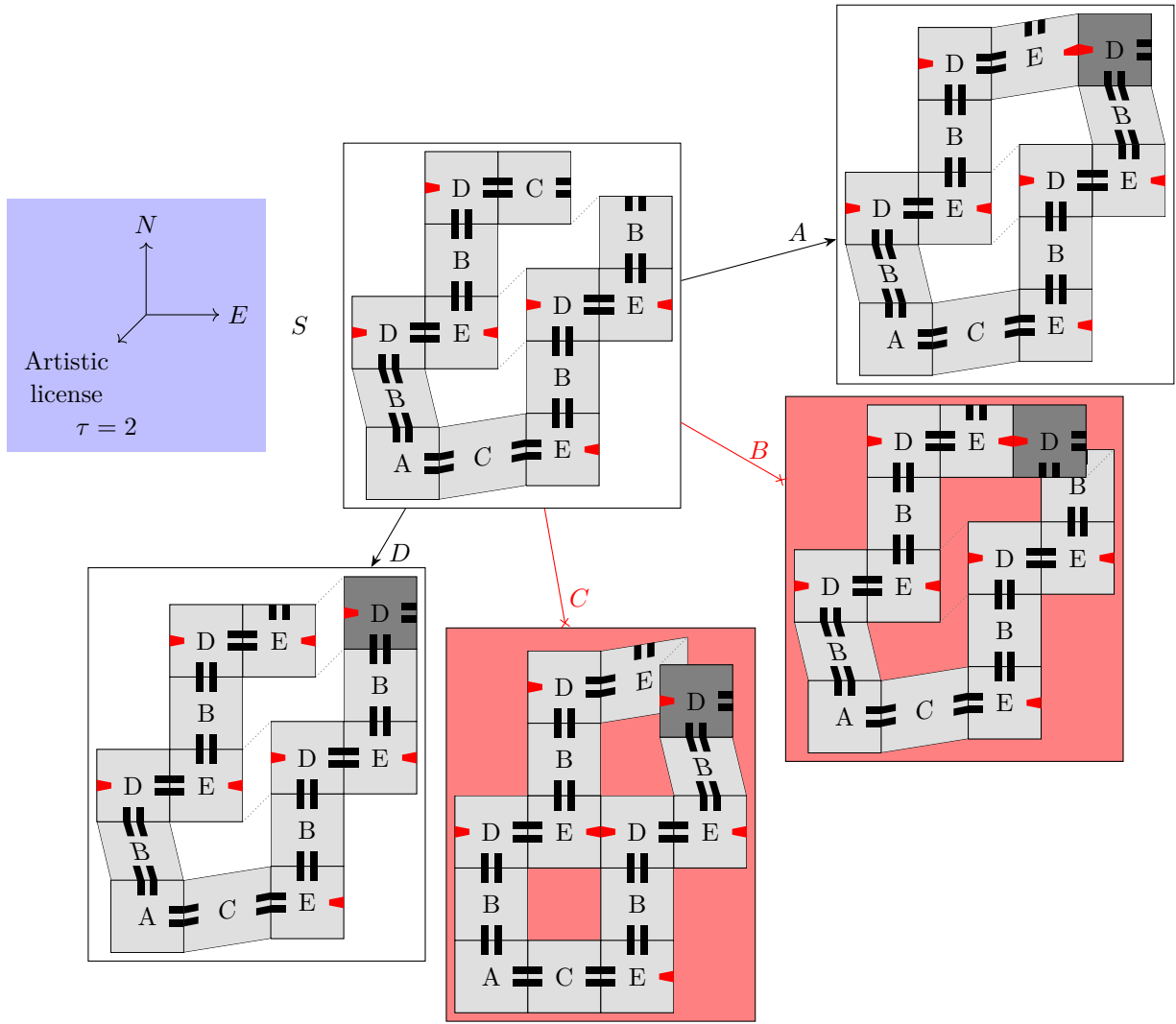
\begin{figure}
  \centering
  \begin{tikzpicture}

  \tikzset{
    tileA/.pic={
      \clip (0, 0) rectangle (1,1);
      \filldraw[fill=gray!25!white] (0, 0) rectangle (1, 1) node[anchor=center, pos=.5] {A};
      \pic at (.5, 1) {doubleGlue};
      \pic[rotate=-90] at (1, .5) {doubleGlue};
    }
  }
  
  \tikzset{
    tileB/.pic={
      \clip (0, 0) rectangle (1,1);
      \filldraw[fill=gray!25!white] (0, 0) rectangle (1, 1) node[anchor=center, pos=.5] {B};
      \pic at (.5, 1) {doubleGlue};
      \pic[rotate=180] at (.5, 0) {doubleGlue};
    }
  }

  \tikzset{
    tileBUp/.pic={
      \clip (0, 0, 0) -- (0, 1, 0.5) -- (1, 1, 0.5) -- (1, 0, 0) -- cycle;
      \filldraw[fill=gray!25!white] (0, 0)  (0, 0, 0) -- (0, 1, 0.5) -- (1, 1, 0.5) -- (1, 0, 0) -- cycle;
      \node[xslant=-.3, anchor=center] at (.5, .5, .25) {B};
      \pic[xslant=-.3] at (.5, 1, .5) {doubleGlue};
      \pic[rotate=180, xslant=-.3] at (.5, 0, 0) {doubleGlue};
    }
  }

  \tikzset{
    tileC/.pic={
      \clip (0, 0) rectangle (1,1);
      \filldraw[fill=gray!25!white] (0, 0) rectangle (1, 1) node[anchor=center, pos=.5] {C};
      \pic[rotate=-90] at (1, .5) {doubleGlue};
      \pic[rotate=90] at (0, .5) {doubleGlue};
    }
  }

  \tikzset{
    tileCDown/.pic={
      \clip (0, 0, 0) -- (1, 0, -.5) -- (1, 1, -.5) -- (0, 1, 0) -- cycle;
      \filldraw[fill=gray!25!white] (0, 0, 0) -- (1, 0, -.5) -- (1, 1, -.5) -- (0, 1, 0) -- cycle;
      \node[anchor=center, yslant=.2] at (.5, .5, -.25) {C};
      \pic[yslant=.2, rotate=-90] at (1, .5, -.5) {doubleGlue};
      \pic[yslant=.2, rotate=90] at (0, .5, 0) {doubleGlue};
    }
  }

  \tikzset{
    tileD/.pic={
      \clip (0, 0) rectangle (1,1);
      \filldraw[fill=gray!25!white] (0, 0) rectangle (1, 1) node[anchor=center, pos=.5] {D};
      \pic[rotate=-90] at (1, .5) {doubleGlue};
      \pic[rotate=180] at (.5, 0) {doubleGlue};
      \pic[rotate=-90] at (0, .5) {singleGlue};
    }
  }

  \tikzset{
    tileDnew/.pic={
      \clip (0, 0) rectangle (1,1);
      \filldraw[fill=gray] (0, 0) rectangle (1, 1) node[anchor=center, pos=.5] {D};
      \pic[rotate=-90] at (1, .5) {doubleGlue};
      \pic[rotate=180] at (.5, 0) {doubleGlue};
      \pic[rotate=-90] at (0, .5) {singleGlue};
    }
  }

  \tikzset{
    tileE/.pic={
      \clip (0, 0) rectangle (1,1);
      \filldraw[fill=gray!25!white] (0, 0) rectangle (1, 1) node[anchor=center, pos=.5] {E};
      \pic[rotate=90] at (0, .5) {doubleGlue};
      \pic at (.5, 1) {doubleGlue};
      \pic[rotate=90] at (1, .5) {singleGlue};
    }
  }

  \tikzset{
    tileEDown/.pic={
      \clip (0, 0, 0) -- (1, 0, -.5) -- (1, 1, -.5) -- (0, 1, 0) -- cycle;
      \filldraw[fill=gray!25!white] (0, 0, 0) -- (1, 0, -.5) -- (1, 1, -.5) -- (0, 1, 0) -- cycle;
      \node[anchor=center, yslant=.2] at (.5, .5, -.25) {E};
      \pic[yslant=.2, rotate=90] at (0, .5, 0) {doubleGlue};
      \pic[yslant=.2] at (.5, 1, -.25) {doubleGlue};
      \pic[yslant=.2, rotate=90] at (1, .5, -.5) {singleGlue};
    }
  }

  \tikzset{
    tileF/.pic={
      \clip (0, 0) rectangle (1,1);
      \filldraw[fill=blue!25!white] (0, 0) rectangle (1, 1)  node[anchor=center, pos=.5] {F};
      \pic[rotate=180] at (.5, 0) {doubleGlue};
      %\pic at (.5, 1) {doubleGlue};
      \pic[rotate=180] at (.5, 1) {singleGlue};
    }
  }

  \tikzset{
    startAss/.pic={
      \pic {tileA};
      \pic at (0, 1, 0) {tileBUp};
      \pic at (0, 2, .5) {tileD};
      \pic at (1, 2, .5) {tileE};
      \pic at (1, 3, .5) {tileB};
      \pic at (1, 4, .5) {tileD};
      \pic at (2, 4, .5) {tileC};
      \pic at (1, 0, 0) {tileCDown};
      \pic at (2, 0, -.5) {tileE};
      \pic at (2, 1, -.5) {tileB};
      \pic at (2, 2, -.5) {tileD};
      \pic at (3, 2, -.5) {tileE};
      \pic at (3, 3, -.5) {tileB};
      \draw[densely dotted, gray] (2, 2, .5) -- (2, 2, -.5);
      \draw[densely dotted, gray] (2, 3, .5) -- (2, 3, -.5);
      \draw[densely dotted, gray] (3, 4, .5) -- (3, 4, -.5);
    }
  }

  \tikzset{
    attachDown/.pic={
      \pic {tileA};
      \pic at (0, 1, 0) {tileBUp};
      \pic at (0, 2, .5) {tileD};
      \pic at (1, 2, .5) {tileE};
      \pic at (1, 3, .5) {tileB};
      \pic at (1, 4, .5) {tileD};
      \pic at (2, 4, .5) {tileE};
      \pic at (1, 0, 0) {tileCDown};
      \pic at (2, 0, -.5) {tileE};
      \pic at (2, 1, -.5) {tileB};
      \pic at (2, 2, -.5) {tileD};
      \pic at (3, 2, -.5) {tileE};
      \pic at (3, 3, -.5) {tileB};
      \pic at (3, 4, -.5) {tileDnew};
      \draw[densely dotted, gray] (2, 2, .5) -- (2, 2, -.5);
      \draw[densely dotted, gray] (2, 3, .5) -- (2, 3, -.5);
      \draw[densely dotted, gray] (3, 4, .5) -- (3, 4, -.5);
      \draw[densely dotted, gray] (3, 5, .5) -- (3, 5, -.5);
    }
  }

  \tikzset{
    attachUp/.pic={
      \pic {tileA};
      \pic at (0, 1, 0) {tileBUp};
      \pic at (0, 2, .5) {tileD};
      \pic at (1, 2, .5) {tileE};
      \pic at (1, 3, .5) {tileB};
      \pic at (1, 4, .5) {tileD};
      \pic at (2, 4, .5) {tileE};
      \pic at (1, 0, 0) {tileCDown};
      \pic at (2, 0, -.5) {tileE};
      \pic at (2, 1, -.5) {tileB};
      \pic at (2, 2, -.5) {tileD};
      \pic at (3, 2, -.5) {tileE};
      \pic at (3, 3, -.5) {tileB};
      \pic at (3, 4, .5) {tileDnew};
      \draw[densely dotted, gray] (2, 2, .5) -- (2, 2, -.5);
      \draw[densely dotted, gray] (2, 3, .5) -- (2, 3, -.5);
      \draw[densely dotted, gray] (4, 4, .5) -- (4, 4, -.5);
    }
  }

    \tikzset{
    attachBoth/.pic={
      \pic {tileA};
      \pic at (0, 1, 0) {tileBUp};
      \pic at (0, 2, .5) {tileD};
      \pic at (1, 2, .5) {tileE};
      \pic at (1, 3, .5) {tileB};
      \pic at (1, 4, .5) {tileD};
      \pic at (2, 4, .5) {tileEDown};
      \pic at (1, 0, 0) {tileCDown};
      \pic at (2, 0, -.5) {tileE};
      \pic at (2, 1, -.5) {tileB};
      \pic at (2, 2, -.5) {tileD};
      \pic at (3, 2, -.5) {tileE};
      \pic at (3, 3, -.5) {tileBUp};
      \pic at (3, 4, 0) {tileDnew};
      \draw[densely dotted, gray] (2, 2, .5) -- (2, 2, -.5);
      \draw[densely dotted, gray] (2, 3, .5) -- (2, 3, -.5);
    }
  }

    \tikzset{
    attachUpSolder/.pic={
      \pic {tileA};
      \pic at (0, 1, 0) {tileB};
      \pic at (0, 2, 0) {tileD};
      \pic at (1, 2, 0) {tileE};
      \pic at (1, 3, 0) {tileB};
      \pic at (1, 4, 0) {tileD};
      \pic at (2, 4, 0) {tileEDown};
      \pic at (1, 0, 0) {tileC};
      \pic at (2, 0, 0) {tileE};
      \pic at (2, 1, 0) {tileB};
      \pic at (2, 2, 0) {tileD};
      \pic at (3, 2, 0) {tileE};
      \pic at (3, 3, 0) {tileBUp};
      \pic at (3, 4, .5) {tileDnew};
      \draw[densely dotted, gray] (3, 5, .5) -- (3, 5, -.5);
    }
  }

  \node[matrix, draw] (start) at (0, 0) {\pic {startAss};\\};
  \node[matrix, draw] (both) at (15:7) {\pic {attachBoth};\\};
  \node[matrix, fill=red!50!white, draw] (up) at (-30:7) {\pic {attachUp};\\};
  \node[matrix, fill=red!50!white, draw] (solder) at (-80:7) {\pic {attachUpSolder};\\};
  \node[matrix, draw] (down) at (-120:7) {\pic {attachDown};\\};
  \node[left=1em of start] {$S$};

  \matrix[left=3em of start, fill=blue!25!white]{
    \begin{scope}
      \draw[->] (0, 0, 0) -- +(xyz cs:x=1) node[right] {$E$};
      \draw[->] (0, 0, 0) -- +(xyz cs:y=1) node[above] {$N$};
      \draw[->] (0, 0, 0) -- +(xyz cs:z=1) node[below left, align=center] {Artistic\\ license};
    \end{scope}\\

    \node {$\tau = 2$};\\
  };

  \draw[-{Stealth}] (start) -- (both) node[near end, above] {$A$};
  \draw[-{Rays[]}, red] (start) -- (up) node[near end, above] {$B$};
  \draw[-{Rays[]}, red] (start) -- (solder) node[near end, right] {$C$};
  \draw[-{Stealth}] (start) -- (down) node[near end, right] {$D$};
%  \pic at (60:7) {attachUpSolder};
%  \pic at (90:7) {attachDown};
  
  % \node[matrix, draw] {\pic {startAss};\\}
  % [grow=south east, level distance=7cm, sibling distance=6cm] child {
  %   node[matrix, draw] {\pic {attachBoth};\\}
  %   edge from parent[-{Stealth[]}]
  % }
  % child {
  %   node[matrix, draw=red] {\pic {attachUp};\\}
  %   edge from parent[red, -{Stealth[]}] node {\emoji{no-entry}}
  % }
  % child {
  %   node[matrix, draw=red] {\pic {attachUpSolder};\\}
  %   edge from parent[red, -{Stealth[]}] node {\emoji{no-entry}}
  % }
  % child {
  %   node[matrix, draw] {\pic {attachDown};\\}
  %   edge from parent[-{Stealth[]}]
  % };

\end{tikzpicture}

%%% Local Variables:
%%% mode: latex
%%% TeX-master: "../main"
%%% TeX-command-extra-options: "-shell-escape"
%%% TeX-engine: luatex
%%% End:
  \caption{Some possible and impossible free attachments from a starting free assembly $S$, at temperature $\tau$. Again, the third coordinate is a visual aid to show non-adjacencies; small dashed lines link points which embed in the same position in $\mathbb{Z}^2$. Assemblies $A$ and $D$ are reachable through one attachment $t_D@z$, where in each case, $z$ is the position in dark gray. The assembly $B$ is an attachment candidate, but it is not stable, since the new tile $t_D$ only binds through a strength-$1$ glue. The assembly $C$ can not even form an attachment candidate from $S$, as $\support[C]$ does not contain an isomorphic copy of $\support[S]$}
  \label{fig:free-attach}
\end{figure}

For clarity, a ``normal'' assembly will be referred to as a $\mathbb{Z}^2$-assembly, likewise for the other concepts which were just endowed with a ``free'' variant.

Embeddings act not only on assembly supports and assemblies, but also on assembly sequences. In doing so, they may break the correctness of attachments if they are not one-to-one; when that happens, the sequence is cut short.

\begin{definition}[Assembly Sequence Embedding]
  Let $\alpha = (A_0 \to A_1 \to \ldots) \in \FreeSeqs{\Ss}$, let $G = \support[(\lim \alpha)]$ and $G'$ an assembly support. Let $e: G \to G'$ be an embedding; since up to translation, $\support[A_0] \subset \support[A_1] \subset \ldots \subset \support[(\lim \alpha)]$, for each $i$, $e$ induces an embedding from $\support[A_i]$ into $G'$.

  Let $k$ be the largest $i$ such that $e$ is an isomorphism on $\support[A_i]$, then \(e(\alpha)\) is the assembly sequence $(e(A_0), \ldots, e(A_k))$.
\end{definition}

\subsection{Ordinal Assembly Sequences}

It is often practical to consider what happens in an assembly sequence $\alpha$ after it has placed an infinity of tiles. For this, \emph{ordinal} assembly sequences come useful.

\begin{definition}[Ordinal Assembly Sequence]
  Let $o \in \omega_1$ be a countable ordinal, an $o$-Assembly Sequence starting from an assembly $A_0$ is a sequence of length $o$ of free assemblies such that for any $i < j \leq o$, there is an injective embedding $\eta_{i \to j}: A_i \to A_j$ with:
  \begin{itemize}
  \item for $i < o$, $(A_i, A_{i+1}, \eta_{i \to i+1})$ is a free attachment,
  \item for $i < j < k \leq o$, $\eta_{j \to k} \circ \eta_{i \to j}  = \eta_{i \to k}$
  \end{itemize}

  The $i$-th production of $\alpha$ is $A_i$, and its $i$-th attachment is $\eta_{i \to i+1}$.
  
  For an $o$-Assembly Sequence $\alpha$, the notation $\lim \alpha = A_o$ is consistent with the case of usual Assembly Sequences, i.e. $\omega$-Assembly Sequences.
\end{definition}

These sequences correspond to the intuitive generalization of attachment sequences to ``times larger than infinity''. In particular, any tile attached through an ordinal attachment sequence reaches the starting assembly of the sequence through a finite number of attachments. Indeed, for a tile to have been added through an ordinal assembly sequence $\alpha$ at time $t$, the neighbors to which it attaches must have been attached at some time $t' < t$. Since these times are ordinals, such an decreasing sequence of times reaches $0$ in a finite number of steps. In particular, these ordinal assembly sequences produce the same productions as the usual $\omega$-assembly sequences.

This remark formalizes thus:
\begin{remark}
  \label{lem:pos-finite-time}
  Let $\Ss$ be an assembly system, $o$ a countable ordinal, $\alpha$ an $o$-Assembly Sequence, $k < o$ and $t@Z$ the $k$-th attachment of $\alpha$. There is a sequence $\alpha'$ of length $o'$ and a bijection $\iota: o \to o'$ such that:
  \begin{itemize}
  \item for all $t < o'$, the attachments $\alpha_t$ and $\alpha'_{\iota(t)}$ are the same,
  \item $\iota(k)$ is finite.
  \end{itemize}
\end{remark}

\begin{lemma}
  \label{lem:ord-omega}
  Let $\Ss$ be an assembly system, $o$ a countable ordinal, and $\alpha$ an $o$-Assembly Sequence. Then if $o$ is infinite, there is a $\omega$-Assembly Sequence $\alpha'$ such that $\lim \alpha' = \lim \alpha$.
\end{lemma}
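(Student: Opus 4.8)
The plan is to reorder the attachments of $\alpha$ into an ordinary sequence of length $\omega$, placing the tiles of $\lim \alpha$ in an order that always respects the bonds through which they attached. First I would record the combinatorial skeleton of $\alpha$. Since $o$ is a countable infinite ordinal and every free attachment adds exactly one vertex to the support, the set $V = \support[{\lim\alpha}] \setminus \support[A_0]$ of newly placed vertices is countably infinite. For each $v \in V$, attached at some stage $k_v < o$, let its \emph{parents} be the (at most four) neighbours of $v$ already present in $A_{k_v}$; by the definition of a free attachment these were placed at strictly smaller stages. Taking the transitive closure of the parent relation gives an ancestor set $\mathrm{Anc}(v) \subseteq V \cup \support[A_0]$.

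The crucial finiteness fact is that each $\mathrm{Anc}(v)$ is finite, which is essentially the content of \cref{lem:pos-finite-time}. Following parents repeatedly yields a strictly decreasing sequence of ordinal attachment times, which must terminate after finitely many steps, so every branch of the dependency tree rooted at $v$ is finite; as the tree is also finitely branching (four sides per tile), König's lemma makes it finite. Thus every tile of $\lim\alpha$ depends, transitively, on only finitely many others.

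With this in hand I would build $\alpha'$ greedily. Fix any enumeration $v_0, v_1, \ldots$ of $V$ and maintain a current production $B_n$, starting from $B_0 = A_0$. At stage $n$, if $v_n$ has not yet been placed, I place — in an order refining the dependency relation restricted to the finite set $\mathrm{Anc}(v_n) \cup \{v_n\}$ — all of its still-missing ancestors and then $v_n$ itself. This is finitely many attachments per stage, so the overall sequence has length $\omega$ and places every $v_n$ by stage $n$, whence $\lim \alpha' = A_0 \cup V = \lim \alpha$. It then remains to check that each one-tile step $B \to B \cup \{z\}$ is a genuine attachment. Validity is immediate since $z$ is placed exactly once. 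For stability, observe that when $z$ is placed all of its parents already lie in $B$, so $z$ binds to $B$ with at least the strength $\tau = \temperature[\Ss]$ it enjoyed in $\alpha$; adding further neighbours only raises the binding. Since $B$ is stable by induction, a routine cut argument finishes it: any cut of $B \cup \{z\}$ either isolates $z$ — with strength equal to $z$'s binding, hence $\geq \tau$ — or restricts to a cut of $B$ (strength $\geq \tau$) whose crossing weight the extra bonds of $z$ can only increase.

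The hard part is exactly this last verification: because $\alpha'$ may surround $z$ with tiles that were \emph{absent} when $z$ attached in $\alpha$, one must argue that the reordering never destabilises an intermediate production. The key is the monotonicity of binding strength under added neighbours, combined with the inductive stability of $B$, which together make the cut argument go through. Everything else is bookkeeping made possible by the finiteness of each $\mathrm{Anc}(v)$.
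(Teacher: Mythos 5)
Your proof is correct in substance, but it is genuinely not the paper's proof. The paper proceeds by transfinite induction on $o$: the base case is $o=\omega$; at a successor $o=p+1$ the induction hypothesis yields an $\omega$-sequence reaching $A_p$, into which the one remaining attachment is spliced at a finite time, once the finitely many vertices it binds to have appeared; at a limit ordinal, countability gives cofinality $\omega$, and the proof diagonalizes over a cofinal chain of stages $\epsilon(0)<\epsilon(1)<\cdots$ and the $\omega$-sequences reaching each $A_{\epsilon(i)}$. You replace all of this with a single uniform scheduling argument: every tile of $\lim\alpha$ has a finite ancestor set (strictly decreasing ordinal attachment times plus K\H{o}nig's lemma --- exactly the content of \cref{lem:pos-finite-time}), so greedily inserting each tile after its finitely many ancestors yields an $\omega$-sequence with the same limit, and stability of each reordered attachment follows from your monotonicity-plus-cut argument. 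What your route buys: limit ordinals never have to be treated at all, which is worthwhile because the limit case is the most delicate step of the paper's proof (its ``lexicographic enumeration'' of assemblies taken from different $\omega$-sequences is not literally a chain of attachments and needs a charitable reading). What the paper's route buys: each induction step only has to justify the insertion of one new attachment, whereas you must justify every attachment of the reordered sequence --- which your cut argument does handle.

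One caveat you should address to make this fully rigorous in the free-assembly setting of this appendix: adjacency there is not positional but is determined by the set $Z$ of converging exterior arcs chosen at each attachment $t@Z$ (tiles at geometrically adjacent positions may deliberately not be soldered, cf.\ \cref{fig:free-attach}). Hence $\lim\alpha'=\lim\alpha$ demands equality of assembly \emph{supports}, not merely of tile placements, and your remark that ``adding further neighbours only raises the binding'' secures stability but not support equality. The fix is to prescribe that each attachment of $\alpha'$ uses exactly the arcs of $\support[{\lim\alpha}]$ joining the new vertex to the already-placed vertices, and to check inductively that every intermediate assembly is an induced subgraph of $\support[{\lim\alpha}]$; then any arc forced by the assembly-support axioms in $\alpha'$ is already an arc of $\lim\alpha$, and no arc of $\lim\alpha$ is omitted. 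This is routine bookkeeping once stated (and the paper's own successor step is no more explicit about it), but it is where a referee would probe.
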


\begin{proof}
  The proof goes by transfinite induction.

  If $o = \omega$, then $\alpha$ itself satisfies the conclusion of the lemma.

  If $o = p + 1$ is a successor ordinal, by induction hypothesis, there is an $\omega$-assembly sequence $\alpha''$ such that $\lim \alpha'' = A_p$, where $A_p$ is the $p$-th production of $\alpha$. Let $t@Z$ be the $p$-th attachment of $\alpha$;  there is an integer $k$ such that at time $k$, all the origin vertices of the arcs in $Z$ are attached in $\alpha''$. The attachment sequence $\alpha' = (\alpha''_0, \ldots, \alpha''_k, t@Z, \alpha''_{k+1}, \ldots)$ satisfies the conclusion of the lemma.

  If $o$ is a limit ordinal, there is a sequence $(\alpha'^i)_{i < o}$ of $\omega$-assembly sequences such that for each $i$, $\lim \alpha'^i = A_i$. Let $A'^i_j$ be the $j$-th production of $\alpha'^i$. Since $o$ is countable, it has cofinality $\omega$; pick an increasing sequence $\epsilon: \omega \to o$ such that $\sup_{i < \omega} \epsilon(i) = o$. Let $\alpha'$ be the enumeration of the set $\{A'^{\epsilon(i)}_j | j \leq i < \omega \}$ ordered lexicographically according to $(i, j)$. This sequence $\alpha'$ is an attachment sequence, and it satisfies $\lim \alpha' = \bigcup_{i < o} (A'^{\epsilon(i)}_i) = \lim \alpha$.
\end{proof}

% In the remainder of this appendix, all assembly sequences are ordinal assembly sequences.

\subsection{Holes and Fizziness}
\label{sec:fizziness}

The Tree Pump lemma is about tree-like assemblies. Since trees are acyclic graphs, the \emph{holes} of the assemblies play an important part in characterizing how these tree-like assemblies behave. Fizziness is the tendency of assembly sequences to create a lot of holes.

\begin{definition}[Fizziness]
  Let $A, A'$ be two free assemblies with $\support[A] \subset \support[A']$, the \emph{fizziness} $\fizziness(A, A')$ is the number of holes of $A'$ whose perimeter is not contained in $A$.
  
  Let $\alpha = (A_0 \rightarrow A_1 \rightarrow \ldots)$ be an Assembly Sequence in $\FreeSeqs{\Ss}$. The fizziness of $\alpha$, noted $\fizziness(\alpha)$ is the sequence $i \mapsto \fizziness(A_i, A_i+1)$.

  A sequence $\alpha$ is more fizzy than $\beta$, written $\alpha \moreFizzy \beta$ if $\fizziness(\alpha) > \fizziness(\beta)$ lexicographically.
\end{definition}

\begin{lemma}[Maximal Fizziness]
  Let $\Ss$ be a seeded assembly system. Assume $\seed[\Ss]$ has a finite number of non-null glues on its external edges. Then there is an $\omega$-Assembly Sequence $\alpha_{\max{}} \in \FreeSeqs{\Ss}$ with maximal fizziness among $\omega$-Assembly Sequences.
\end{lemma}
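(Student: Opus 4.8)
The plan is to build $\alpha_{\max}$ greedily, one attachment at a time, guided by a lookahead to the supremum of all possible continuations, in the spirit of a König's-lemma / compactness argument. The two structural facts that make this work are that the tree of finite prefixes branches finitely and that the per-step fizziness is bounded.

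First I would record two finiteness facts. Since any valid attachment binds with strength at least $\temperature[\Ss]$, it must meet a non-null glue; as $\seed[\Ss]$ carries only finitely many non-null glues on its external edges and each attachment covers finitely many glues and creates at most the four sides of a tile as new external glues, an easy induction shows that every finite free assembly reachable from $\seed[\Ss]$ has only finitely many external non-null glues, hence admits only finitely many attachment candidates. Thus the tree $T$ whose nodes are the finite assembly sequences extendable to an $\omega$-sequence is finitely branching. Second, passing from $A_i$ to $A_{i+1}$ adds a single vertex $\zeta$, and a hole of $A_{i+1}$ whose perimeter escapes $A_i$ must run through $\zeta$; since $\zeta$ has at most four exterior arcs, at most a constant number $B$ of such holes can appear, so $\fizziness(A_i,A_{i+1}) \le B$. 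Consequently the fizziness of every $\omega$-sequence lives in $\{0,\dots,B\}^{\omega}$, which is \emph{complete} for the lexicographic order: every set of such sequences admits a supremum.

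For an extendable prefix $p \in T$, let $W(p)$ be the lexicographic supremum of $\fizziness(\alpha)$ over all $\omega$-sequences $\alpha$ extending $p$; this is well defined by the previous paragraph. The key observation is that $W(p) = \max_q W(q)$, where $q$ ranges over the extendable children of $p$: every completion of $p$ passes through exactly one child $q$ and restricts to a completion of $q$, so $W(p) \le \max_q W(q)$, while $W(q) \le W(p)$ is immediate; the maximum is attained because $p$ has finitely many children. Note that $W(q)$ begins with the fizziness sequence of $p$ followed by the immediate fizziness of the step $p \to q$, so choosing a child of maximal $W$ automatically maximizes the immediate fizziness first and only then the future potential. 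This lookahead is precisely what prevents the greedy choice from painting us into a corner, where maximizing one step would leave no continuation able to reach the global supremum later.

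Now I would construct $\alpha_{\max}$: starting from $p_0 = (\seed[\Ss])$, repeatedly let $p_{i+1}$ be an extendable child of $p_i$ with $W(p_{i+1})$ maximal. Since $p_i$ is extendable it has at least one extendable child, so the construction never stalls and yields an $\omega$-sequence $\alpha_{\max}$, and $W(p_0) = W(p_1) = \cdots$ by the key observation. Writing $c_i = \fizziness(A_i,A_{i+1})$ for the steps actually chosen and $v = W(p_0)$, every completion of $p_i$ shares the prefix $(c_0,\dots,c_{i-1})$, so the equality $W(p_i) = v$ forces $v_j = c_j$ for all $j < i$; letting $i \to \infty$ gives $\fizziness(\alpha_{\max}) = (c_0,c_1,\dots) = v$. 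As $p_0$ is extended by every $\omega$-sequence, $v = W(p_0)$ is exactly the supremum of $\fizziness(\alpha)$ over all $\omega$-sequences, and it is attained by $\alpha_{\max}$; hence $\alpha_{\max} \moreFizzyEq \beta$ for every $\omega$-sequence $\beta$, which is the claimed maximality. The main obstacle is the one flagged above: lexicographic suprema of infinite sequences are in general not attained, so the naive "maximize the fizziness of each successive step" need not converge to a genuine $\omega$-sequence. Taming this requires both the finite branching of $T$ (which is exactly what the finiteness hypothesis on the seed's glues provides) and the per-step bound (which makes the value space lexicographically complete), with $W$ packaging the compactness argument into a clean inductive choice.
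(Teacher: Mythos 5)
Your proof is correct and takes essentially the same route as the paper: the paper's entire proof of this lemma is the phrase ``Using K\H{o}nig's Lemma'', and your argument is precisely that compactness argument carried out in detail --- finite branching of the prefix tree from the finite-glue hypothesis, the per-step bound on $\fizziness(A_i,A_{i+1})$ making the lexicographic supremum well-defined, and the lookahead-greedy choice of children being the standard extraction of a lexicographically maximal infinite branch from a finitely branching tree.
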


\begin{proof}
  Using K\H{o}nig's Lemma.
\end{proof}

Note that this lemma does not hold for $o$-assembly sequences with $o > \omega$. Indeed, after time $\omega$, there might be an infinity of possible attachments, thwarting K\H{o}nig's Lemma.

\begin{lemma}[Fizziness-Increase of Embeddings]
  Let $\Ss$ be a seeded TAS, and $\alpha \in \FreeSeqs{\Ss}$. Let $G = \support{(\lim \alpha)}$, $G'$ an assembly support, and $e: G \to G'$ an embedding.

  Then:
  \begin{enumerate}
  \item \label{lbl:embed_correct} $e(\alpha)$ is a free assembly sequence,
  \item \label{lbl:embed_heavier} $e(\alpha) \moreFizzyEq \alpha$
  \item \label{lbl:embed_same_weight} $e(\alpha) \moreFizzy \alpha$ if $e$ is not injective on $\dom(\lim \alpha)$.
  \end{enumerate}
\end{lemma}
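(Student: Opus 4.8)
The plan is to dispatch the three claims in order, using throughout that a label-preserving graph homomorphism sends any path to a path with the same sequence of labels, hence preserves label-sums and, within each connected component, perspective differences. Write $e(\alpha)=(e(A_0),\dots,e(A_k))$ with $k$ the largest index for which $e$ restricts to an isomorphism of $\support[A_i]$ onto the induced subgraph $G'[e(\support[A_i])]$. For claim (1), first note that each such induced subgraph is again an assembly support: the three universally quantified axioms pass to subgraphs immediately, and the one existential axiom (a $1$- or $3$-arc path whose labels sum to a unit vector forces a reverse/diagonal arc) is inherited because that path, lying wholly in the image, has both endpoints of the forced arc in the image, and the arc is present in $G'$. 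For $i<k$ the attachment $A_i \xrightarrow{t@Z} A_{i+1}$ adds one vertex $\zeta$; since $e$ is injective on $\support[A_{i+1}]$, the position $e(\zeta)$ is free and, as $e$ is even an isomorphism onto the induced image there, $e(\zeta)$ bonds to exactly the images of the $Z$-neighbours of $\zeta$. Its binding strength therefore equals that of $\zeta$, so it is at least $\temperature[\Ss]$, and $e(A_i)\to e(A_{i+1})$ is a valid stable free attachment; hence $e(\alpha)$ is a free assembly sequence.

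For claims (2) and (3), the key observation is that on the whole segment $0\le i<k$ the map $e$ is a label-preserving isomorphism from $\support[A_i]$ onto $e(A_i)$, so it carries exterior arcs, steps, holes and their perimeters bijectively; consequently $\fizziness(e(A_i),e(A_{i+1}))=\fizziness(A_i,A_{i+1})$ for each such $i$, and $\fizziness(e(\alpha))$ is \emph{exactly} the length-$k$ initial segment of $\fizziness(\alpha)$. If $e$ is injective on $\dom(\lim\alpha)$ and introduces no new adjacency, then $e$ is an isomorphism onto its induced image on all of $\support[\lim \alpha]$, $k$ is the full length, the two fizziness sequences are equal, and $e(\alpha)\moreFizzyEq\alpha$. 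Otherwise $e$ fails to be such an isomorphism at some finite stage --- either by identifying two vertices or by creating an adjacency absent from the free picture --- so $k$ is strictly less than the length of $\alpha$ and $\fizziness(e(\alpha))$ is a \emph{proper} initial segment of $\fizziness(\alpha)$. Here I would invoke the length convention built into $\moreFizzy$: a sequence that the embedding forces to halt is scored above any proper extension of it (the halting witnesses an overlap or incipient hole that the embedding cannot realise planarly), so the proper truncation gives $e(\alpha)\moreFizzy\alpha$. Since non-injectivity on $\dom(\lim\alpha)$ certainly produces a proper truncation, this proves (3), and (2) then holds in every case.

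The step I expect to be the main obstacle is the last one: pinning down, in a way consistent with the definition of $\moreFizzy$, why a forced identification or extra adjacency at the truncation stage must count as a strict gain in fizziness rather than a loss coming from the shorter length. I would settle it by examining the exterior paths of $e(A_k)$ around the offending position and showing that the first attachment $e$ cannot perform as an isomorphism is precisely one that would split a hole of the free assembly in two --- so that the ``missing'' final term of $\fizziness(e(\alpha))$ is in effect $+\infty$ --- together with a fizziness-preserving reindexing of the kind used for the ordinal-length sequences to line the two comparisons up. Everything before this point (the induced-subgraph bookkeeping of claim (1) and the step-by-step equality of fizziness on the common segment) is routine once the label- and perspective-preservation of $e$ is in hand.
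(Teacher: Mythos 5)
Your part (1), and your observation that on the segment where $e$ acts isomorphically the exterior arcs, steps and holes of $A_i$ and of $e(A_i)$ correspond, are both fine and agree with the paper. The genuine gap is in part (3), and it is exactly the step you flagged as the obstacle. The paper defines $\moreFizzy$ as \emph{plain} lexicographic comparison of the sequences $i \mapsto \fizziness(A_i, A_{i+1})$; there is no ``length convention'' scoring a forced truncation above its extensions, and under the usual convention a proper prefix with equal entries is lexicographically \emph{smaller}, so your argument, as written, yields $\alpha \moreFizzy e(\alpha)$ --- the reverse of what is claimed. Your fallback (the missing final term is ``in effect $+\infty$'') cannot be repaired either: that term lies beyond the last index of $\fizziness(e(\alpha))$, so it simply never enters the comparison. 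Since you also route the non-injective case of part (2) through part (3), that case inherits the same gap.

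The paper closes (3) without any convention on lengths, by exhibiting a strict increase at an index that \emph{both} sequences share. If $e(z_i) = e(z_k)$ with $z_i \neq z_k$ and $i < k$, consider the vertex $y$ to which $z_k$ attaches: the image of the arc $y \to z_k$ is an arc of $G'$ joining $e(y)$ to $e(z_k) = e(z_i)$, and both of its endpoints already lie in the image of $\support[A_{k-1}]$. Together with the image of a path from $y$ to $z_i$ inside $A_{k-1}$, this arc closes a cycle, hence produces a hole of some $e(A_j)$ with $j \leq k-1$ (i.e.\ strictly \emph{before} the truncation point) which has no counterpart in $A_j$; from the first attachment at which this hole appears, the fizziness of $e(\alpha)$ strictly exceeds that of $\alpha$ at a common index, which is what lexicographic strictness requires. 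Note that this argument forces a reading you explicitly excluded: the assemblies $e(A_j)$ must carry the adjacencies \emph{induced} by $G'$ (so that the extra arc, and hence the extra hole, are visible inside $e(\alpha)$), while the truncation index is governed only by vertex collisions. Under your stricter reading --- truncate at the first failure of isomorphism-onto-induced-image --- no extra adjacency can ever occur within $e(\alpha)$, your ``exact equality of fizziness on the common segment'' holds, and statement (3) becomes false as you have set it up; that is precisely the corner your proposal is stuck in, and no bookkeeping about the exterior paths of $e(A_k)$ at the offending position can get it out.
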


\begin{proof}
  Let $\alpha = (t_i@z_i)_i$ be an assembly sequence in $\FreeSeqs{\Ss}$.

  For \cref{lbl:embed_correct}, the attachments of $e(\alpha)$ are valid by definition of $k$. They are stable since for each attachment, the edges adjacent to $z_i$ which make this attachment stable are preserved by $e$.

  For \cref{lbl:embed_heavier}, for each assembly $A$ of $\alpha$ which is mapped injectively, each hole of $\alpha$ is mapped by $e$ to a hole of $e(A)$ with the same labels on its border.

  For \cref{lbl:embed_same_weight}, if $e$ is not injective on $\dom(\lim \alpha)$, then there are $i < k$ such that $e(z_i) = e(z_k)$. Since $\alpha$ is a valid assembly sequence, $z_i \neq z_k$, there are two different paths from the seed to $e(z_i)$. From these two paths, it is possible to construct a hole in $\dom(e(\alpha))$ which is not a hole in $\dom(\alpha)$. Hence, from the first attachment where this hole appears, the assemblies of $e(\alpha)$ are more fizzy than the corresponding ones in $\alpha$.
\end{proof}

\begin{lemma}[Maximal Sequences are Flat]
  Let $\Ss$ be a seeded TAS, $S$ an assembly of $\Ss$ and $X \subset \FreeSeqs{\Ss, S}$ be a set of Free, Ordinal Assembly Sequences such that for any $\alpha \in X$, there is a $\alpha' \in X$ such that $\alpha' \moreFizzyEq \ezEmbed(\alpha)$.
  
  Assume $\alpha$ is a free assembly sequence with maximal fizziness within $X$. Then $\alpha$ embeds into $\mathbb{Z}^2$.
\end{lemma}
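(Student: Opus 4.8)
The plan is to argue by contradiction, letting the previous lemma (Fizziness-Increase of Embeddings) do all of the heavy lifting, specialised to the single embedding $\ezEmbed$. Suppose, for contradiction, that $\alpha$ does \emph{not} embed into $\mathbb{Z}^2$. By the definition of embedding into $\mathbb{Z}^2$, this says exactly that $\ezEmbed$, viewed as an embedding of $\support[(\lim \alpha)]$ into $\mathbb{Z}^2$ (the latter regarded as an assembly support), fails to be injective on $\dom(\lim \alpha)$: two distinct vertices of the support have the same perspective difference from the basepoint, hence are squashed to a single cell of $\mathbb{Z}^2$. The only genuinely routine point to check here is this equivalence, namely that ``$\alpha$ does not embed into $\mathbb{Z}^2$'' is the same as ``$\ezEmbed$ is not injective on $\dom(\lim \alpha)$'', which holds because injectivity on the limit support is inherited by every production.

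Next I would apply the Fizziness-Increase of Embeddings lemma to $\alpha$ and the embedding $e = \ezEmbed : \support[(\lim \alpha)] \to \mathbb{Z}^2$. Its part~\ref{lbl:embed_correct} guarantees that $\ezEmbed(\alpha)$ is again a bona fide free assembly sequence, and its part~\ref{lbl:embed_same_weight} — the clause that fires precisely when $e$ is not injective on $\dom(\lim \alpha)$ — delivers the \emph{strict} inequality $\ezEmbed(\alpha) \moreFizzy \alpha$.

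Finally I bring in the closure hypothesis on $X$. Since $\alpha \in X$, that hypothesis produces some $\alpha' \in X$ with $\alpha' \moreFizzyEq \ezEmbed(\alpha)$. Chaining this with the strict inequality of the previous step, and using that the lexicographic comparison is transitive (a $\moreFizzyEq$ followed by a strict $\moreFizzy$ yields a strict $\moreFizzy$), I get $\alpha' \moreFizzy \alpha$ with $\alpha' \in X$, contradicting the assumed maximality of $\alpha$ within $X$. Therefore $\ezEmbed$ must be injective on $\dom(\lim \alpha)$, i.e. $\alpha$ embeds into $\mathbb{Z}^2$, as claimed. The main (and rather mild) obstacle is simply to be sure that the source of non-injectivity feeding clause~\ref{lbl:embed_same_weight} is \emph{the same} non-injectivity that defines failure to embed; once that is pinned down, the whole argument is just an application of the embedding lemma followed by one transitivity step against the closure property of $X$.
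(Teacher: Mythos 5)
Your proof is correct and follows essentially the same route as the paper's: the paper's own (one-line) argument is exactly that if $\alpha$ fails to embed, then $\ezEmbed(\alpha) \moreFizzy \alpha$ by the Fizziness-Increase lemma, contradicting maximality. You additionally spell out the step the paper leaves implicit --- invoking the closure hypothesis on $X$ to obtain $\alpha' \in X$ with $\alpha' \moreFizzyEq \ezEmbed(\alpha) \moreFizzy \alpha$ --- which is precisely the role that hypothesis plays in the statement.
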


\begin{proof}
  If $\alpha$ does not embed into $\mathbb{Z}^2$ by $\ezEmbed$, then $\ezEmbed(\alpha)$ is fizzier than $\alpha$ and thus $\alpha$ cannot be maximal in $X$.
\end{proof}

  \begin{figure}
    \centering
    \tikzexternaldisable

\tikzset{
  seed/.pic={
    \begin{scope}[scale=#1]
      \filldraw[fill={pink}] (0,0) rectangle +(1,1) node[pos=.5, font=\footnotesize] {\tikzpictext};
      \pic[scale=#1, rotate=180] at (0.5, 1) {singleGlue};
      \pic[scale=#1, rotate=90] at (1, 0.5) {singleGlue};
    \end{scope}
  }
}

\tikzset{
  tileA/.pic={
    \begin{scope}[scale=#1]
      \filldraw[fill={blue!50}] (0,0) rectangle +(1,1) node[pos=.5, font=\footnotesize] {\tikzpictext};
      \pic[scale=#1, rotate=-90] at (0, 0.5) {singleGlue};
      \pic[scale=#1, rotate=90] at (1, 0.5) {singleGlue};
    \end{scope}
  }
}

\tikzset{
  tileB/.pic={
    \begin{scope}[scale=#1]
      \filldraw[fill={yellow!50}] (0,0) rectangle +(1,1) node[pos=.5, font=\footnotesize] {\tikzpictext};
      \pic[scale=#1, rotate=-90] at (0, 0.5) {singleGlue};
      \pic[scale=#1, rotate=90] at (1, 0.5) {singleGlue};
      \pic[scale=#1] at (0.5, 0) {singleGlue};
    \end{scope}
  }
}

\tikzset{
  tileC/.pic={
    \begin{scope}[scale=#1]
      \filldraw[fill={green!50}] (0,0) rectangle +(1,1) node[pos=.5, font=\footnotesize] {\tikzpictext};
      \pic[scale=#1, rotate=180] at (0.5, 1) {singleGlue};
      \pic[scale=#1] at (0.5, 0) {singleGlue};
    \end{scope}
  }
}

\tikzset{
  tileD/.pic={
    \begin{scope}[scale=#1]
      \filldraw[fill={gray!50}] (0,0) rectangle +(1,1) node[pos=.5, font=\footnotesize] {\tikzpictext};
      \pic[scale=#1, rotate=-90] at (0, 0.5) {singleGlue};
      \pic[scale=#1, rotate=180] at (0.5, 1) {singleGlue};
    \end{scope}
  }
}

\tikzset{
  tileE/.pic={
    \begin{scope}[scale=#1]
      \filldraw[fill={purple!50}] (0,0) rectangle +(1,1) node[pos=.5, font=\footnotesize] {\tikzpictext};
      \pic[scale=#1, rotate=90] at (1, 0.5) {singleGlue};
      \pic[scale=#1, rotate=180] at (0.5, 1) {singleGlue};
    \end{scope}
  }
}

\begin{tikzpicture}[scale=.5]

  \draw[-{Stealth[]}, thick] (3,3) -- (3,5) node[pos=.5, right] {$\vec{d}$};
  \node at (8,4) {$\tau = 1$};            
  \tikzmath {
    int \t;
    for \t in {0,...,65} {
      coordinate \z;
      int \tmod;
      \tmod = mod(\t, 9);
      int \tdiv;
      \tdiv = \t / 9;
      let \tile = tileA;
      if \tmod == 0 then {
        let \tile = tileC;
      };
      if \tmod == 1 then {
        let \tile = tileB;
      };
      if \tmod == 2 then {
        let \tile = tileA;
      };
      if \tmod == 3 then {
        let \tile = tileB;
      };
      if \tmod == 4 then {
        let \tile = tileC;
      };
      if \tmod == 5 then {
        let \tile = tileD;
      };
      if \tmod == 7 then {
        let \tile = tileE;
      };
      if \tmod == 8 then {
        let \tile = tileC;
      };
      if \t == 0 then {
        let \tile = seed;
      };
      if \tmod == 0 then {
        int \y;
        \y = \t / 9;
        \z = (0, \y);
      } else {
        if \tmod < 4 then {
          \z = (\tmod + 3 * \tdiv,0);
        } else {
          if \tmod < 6 then {
            \z = (3 + 3 * \tdiv,-\tmod + 3);
          } else {
            if \tmod < 8 then {
              \z = (3 * \tdiv - \tmod + 8,-2);
            } else {
              \z = (3 * \tdiv + 1,-1);
            };
          };
        };
      };
      {
        \begin{scope}[shift={(\z)}]
          \pic["\t"] {\tile=.5};
        \end{scope}
      };
    };
  }
  \node[rotate=90] at (0.5, 9) {$\ldots$};
  \node at (25, 0.5) {$\ldots$};
  \node[left] at (0,0) {$\alpha' = $};    
\end{tikzpicture}\\
\begin{tikzpicture}
  \node[rotate=90] {$\moreFizzy$};
\end{tikzpicture}\\
\begin{tikzpicture}[scale=.5]
  \foreach \t in {0,...,65} {
    \tikzmath {
      coordinate \z;
      int \tmod;
      \tmod = mod(\t, 9);
      int \tdiv;
      \tdiv = \t / 9;
      int \thuit;
      if \t < 9 then {
        \thuit = \t;
      }
      else {
        if \t < 17 then {
          \thuit = \t - 1;
        } else {
          \thuit = \t - 2 * floor(\t / 9) + 1;
        };
      };
      let \tile = tileA;
      if \tmod == 0 then {
        let \tile = tileC;
      };
      if \tmod == 1 then {
        let \tile = tileB;
      };
      if \tmod == 2 then {
        let \tile = tileA;
      };
      if \tmod == 3 then {
        let \tile = tileB;
      };
      if \tmod == 4 then {
        let \tile = tileC;
      };
      if \tmod == 5 then {
        let \tile = tileD;
      };
      if \tmod == 7 then {
        let \tile = tileE;
      };
      if \tmod == 8 then {
        let \tile = tileC;
      };
      if \t == 0 then {
        let \tile = seed;
      };
      if \tmod == 0 then {
        int \y;
        \y = \t / 9;
        \z = (0, \y);
      } else {
        if \tmod < 4 then {
          \z = (\tmod + 3 * \tdiv,0);
        } else {
          if \tmod < 6 then {
             \z = (3 + 3 * \tdiv,-\tmod + 3);
           } else {
             if \tmod < 8 then {
               \z = (3 * \tdiv - \tmod + 8,-2);
             } else {
               \z = (\tmod + 3 * \tdiv,-\tmod);
             };
           };
        };
      };
      if \tmod == 8 then {} else {
        if \tmod == 0 then {} else {
          {
            \filldraw[fill={gray!50}] (\z) rectangle +(1,1) node[pos=.5, font=\footnotesize] {\thuit};
            \pic["\thuit"] at (\z) {\tile=.5};
          };
        };
      };
    }
  }
  \pic["0"] at (0,0) {seed=.5};
  \pic["8"] at (1,-1) {tileC=.5};

  \node at (25, 0.5) {$\ldots$};
  \node[left] at (0,0) {$\alpha = $};
\end{tikzpicture}

\begin{tikzpicture}[scale=.5]

  \tikzmath {
    int \t;
    for \t in {0,...,65} {
      coordinate \z;
      int \tmod;
      \tmod = mod(\t, 9);
      int \tdiv;
      \tdiv = \t / 9;
      int \thuit;
      if \t < 9 then {
        \thuit = \t;
      }
      else {
        if \t < 17 then {
          \thuit = \t - 1;
        } else {
          \thuit = \t - 2 * floor(\t / 9) + 1;
        };
      };
      let \tile = tileA;
      if \tmod == 0 then {
        let \tile = tileC;
      };
      if \tmod == 1 then {
        let \tile = tileB;
      };
      if \tmod == 2 then {
        let \tile = tileA;
      };
      if \tmod == 3 then {
        let \tile = tileB;
      };
      if \tmod == 4 then {
        let \tile = tileC;
      };
      if \tmod == 5 then {
        let \tile = tileD;
      };
      if \tmod == 7 then {
        let \tile = tileE;
      };
      if \tmod == 8 then {
        let \tile = tileC;
      };
      if \t == 0 then {
        let \tile = seed;
      };
      let \name = $.\thuit$;
      if \tmod == 0 then {
        int \y;
        \y = \t / 9;
        \z = (0, \y);
        if \t > 0 then {
          \y = \y - 1;
          if \y == 0 then {
            let \name = $1.0$;
          } else {
            let \name = $1.\y$;
          };
        };
      } else {
        if \tmod < 4 then {
          \z = (\tmod + 3 * \tdiv,0);
        } else {
          if \tmod < 6 then {
            \z = (3 + 3 * \tdiv,-\tmod + 3);
          } else {
            if \tmod < 8 then {
              \z = (3 * \tdiv - \tmod + 8,-2);
            } else {
              \z = (3 * \tdiv + 1,-1);
              if \t > 8 then {
                \y = \tdiv - 1;
                let \name = $2.\y$;
              };
            };
          };
        };
      };
      {
        \begin{scope}[shift={(\z)}]
          \pic["\name"] {\tile=.5};
        \end{scope}
      };
    };
  }
  \node[rotate=90] at (0.5, 9) {$\ldots$};
  \node at (25, 0.5) {$\ldots$};
  \node[left] at (0,0) {$\alpha'' = $};    
\end{tikzpicture}\\

\tikzexternalenable
    \caption{An assembly sequence $\alpha'$ of some temperature 1 aTAM, which yields a terminal production $F = \lim \alpha'$. Below a sequence $\alpha \moreFizzy \alpha'$ which yields a smaller production $P$ which is bounded in direction $\vec{d}$ (vertically). The assembly sequence $\alpha$ is fizzier since by step $50$, it just closed its seventh hole while $\alpha'$ is lagging at only 5 holes. Because $\alpha$ skips any unprofitable attachment, $\lim \alpha$ is very much not terminal: its seed (tile number $0$) as well as every tile attached at a time of the form $16 + 7k$ has an attachable yet unfilled position. It can thus be extended into an ordinal assembly sequence $\alpha''$ which does reach $\lim \alpha$, but in time $3 \omega$. The attachment times of the form $i.j$ in $\alpha''$ should be read as $i \omega + j$. At time $\omega$, $\alpha''$ has assembled $\lim \alpha$; at time $2 \omega$, it has added the upwards path, and at time $3 \omega$, it has added the last tile to each hole of $\lim \alpha$. }
    \label{fig:recalcitrant}
  \end{figure}

\subsection{The Window Movie Lemma}

The Window Movie Lemma~\cite{meunier_intrinsic_2014} applies to free assembly sequences as well as to $\mathbb{Z}^2$ assembly sequences. In order to account for holes, movies need to record not only the glues appearing on either side of the window, but also the paths between edges of the window through either side. 

\begin{definition}[Window, Frame, Movie]
  Let $\alpha = (A_0, A_1, \ldots)$ be an assembly sequence of $\FreeSeqs{\Ss}$. A \emph{window} $W$ of $\alpha$ is a cut of $\support[(\lim \alpha)]$ separating it into two connected components $\operatorname{Near}(W)$ and $\operatorname{Far}(W)$. 

  The width $w(W)$ of a window is the maximum perspective difference between two vertices along the window.

  Let $\operatorname{Arcs}(W)$ be the set of arcs (either normal or external) of $\support[(\lim \alpha)]$ through $W$.

  The \emph{frame} $f$ on $W$ at time $k$ records for each arc $a = (v, d) \in \operatorname{Arcs}(W)$:
  \begin{itemize}
  \item $\present[f(a)]$: whether $v \in \support[A_k]$, and if so,
  \item $\glue[f(a)] = \tiles[A_k](v)(d)$, as well as
  \item $\paths[f(a)]$ the  arc of $\operatorname{Arcs}(W)$ which is reachable from $a$ through an exterior path which does not cross $W$, if there is one.
  \end{itemize}
  
  The \emph{movie} associated with $W$ is the ordered set $\movie(W)$ of distinct frames appearing on $W$.
\end{definition}

%Note that the last item of a frame ($\cc[f]$) can be a proper subset of $\operatorname{Near}(W)$ or $\operatorname{Far}(W)$.

\begin{figure}
  \centering
  \tikzset{
  extPath/.style={-{Stealth[scale=.75]}}
}

\tikzset{
  doubleGlue/.pic={
    \begin{scope}[scale=.2]
      \fill[fill=black] (-.7, -1) rectangle (-.2, 0);
      \fill[fill=black] (.2, -1) rectangle (.7, 0);
    \end{scope}
  }
}

\tikzset{
  theTileA/.pic={
    \clip (0, 0) rectangle (1,1);
    \filldraw[fill=gray!25!white] (0, 0) rectangle (1, 1) node[anchor=center, pos=.5] {#1};
    \pic at (.5, 1) {doubleGlue};
    \pic[rotate=-90] at (1, .5) {doubleGlue};
  }
}

\tikzset{
  theTileB/.pic={
    \clip (0, 0) rectangle (1,1);
    \filldraw[fill=gray!25!white] (0, 0) rectangle (1, 1) node[anchor=center, pos=.5] {#1};
    \pic at (.5, 1) {doubleGlue};
    \pic[rotate=180] at (.5, 0) {doubleGlue};
  }
}

\tikzset{
  theTileC/.pic={
    \clip (0, 0) rectangle (1,1);
    \filldraw[fill=gray!25!white] (0, 0) rectangle (1, 1) node[anchor=center, pos=.5] {#1};
    \pic[rotate=-90] at (1, .5) {doubleGlue};
    \pic[rotate=90] at (0, .5) {doubleGlue};
  }
}

\tikzset{
  theTileD/.pic={
    \clip (0, 0) rectangle (1,1);
    \filldraw[fill=gray!25!white] (0, 0) rectangle (1, 1) node[anchor=center, pos=.5] {#1};
    \pic[rotate=-90] at (1, .5) {doubleGlue};
    \pic[rotate=180] at (.5, 0) {doubleGlue};
  }
}

\tikzset{
  theTileE/.pic={
    \clip (0, 0) rectangle (1,1);
    \filldraw[fill=gray!25!white] (0, 0) rectangle (1, 1) node[anchor=center, pos=.5] {#1};
    \pic[rotate=90] at (0, .5) {doubleGlue};
    \pic[rotate=180] at (.5, 0) {doubleGlue};
  }
}

\tikzset{
  theTileF/.pic={
    \clip (0, 0) rectangle (1,1);
    \filldraw[fill=gray!25!white] (0, 0) rectangle (1, 1) node[anchor=center, pos=.5] {#1};
    \pic at (.5, 1) {doubleGlue};
    \pic[rotate=90] at (0, .5) {doubleGlue};
  }
}

\begin{tikzpicture}
  \foreach \pos / \tile [count=\i from 0] in {(0,0)/A, (1,0)/C, (0,1)/B, (0,2)/B, (0,3)/D, (1,3)/C, (2,3)/E, (2,2)/B, (2,1)/B, (2,0)/F}
  {
    \begin{scope}[shift={(0,0)}]
      \pic at \pos {theTile\tile=\i};
    \end{scope}
    
    \begin{scope}[shift={(-4, -1.5)}]
      \foreach \frame in {\i, ..., 9} {
        \begin{scope}[shift={(4*.30*\frame, 0)}, transform shape, scale=.30]
          \pic at \pos {theTile\tile={}};
        \end{scope}
      }
    \end{scope}
  }
  \path[draw, red, ultra thick] (-.2, 2) node[left] {$W$}  -- (1.2, 2) (1.8, 2) -- (3.2, 2);
  \path[draw, red, dashed, -{Stealth[]}] (1.2, 2) to[bend left] node[above] {$(1, 0)$}(1.8, 2);

  \node[above left] at (-4.5, -1.5) {Assemblies};
  \node[left] at (-4.5, -2) {Frames};
  \node[left] at (-4.5, -3.5) {Movie};

  \tikzset{frame0/.pic={
      \fill[yellow!50!white] (-.05, -.25) rectangle (0.8, .25);
      \fill[yellow!50!white] (-.05, -.15) rectangle (0.8, .15);
      \path[draw, red, thick] (0, 0) -- (0.25, 0) (0.5, 0) -- (0.75, 0);
    }
  }

  \tikzset{frame1/.pic={\pic {frame0};}}
  
  \tikzset{frame2/.pic={
      \fill[yellow!50!white] (-.05, -.25) rectangle (0.8, .25);
      \pic[scale=.5] at (0.125,0) {doubleGlue};
      \draw[extPath] (0,0) to[out=-135, in=180] (.125, -.15) to [out=0, in=-45] (.25,0);
      \path[draw, red, thick] (0, 0) -- (0.25, 0) (0.5, 0) -- (0.75, 0);
    }}

  \tikzset{frame3/.pic={
      \fill[yellow!50!white] (-.05, -.25) rectangle (0.8, .25);
      \draw[extPath] (0,0) to[out=-135, in=180] (.125, -.15) to [out=0, in=-45] (.25,0);
      \pic[scale=.5] at (0.125, 0) {doubleGlue};
      \pic[scale=.5, rotate=180] at (0.125, 0) {doubleGlue};
      \draw[extPath, rotate around={180:(0.125,0)}] (0,0) to[out=-135, in=180] (.125, -.15) to [out=0, in=-45] (.25,0);
      \path[draw, red, thick] (0, 0) -- (0.25, 0) (0.5, 0) -- (0.75, 0);
    }}

  \tikzset{frame4/.pic={\pic {frame3};}}
  \tikzset{frame5/.pic={\pic {frame3};}}
  \tikzset{frame6/.pic={\pic {frame3};}}

  \tikzset{frame7/.pic={
      \fill[yellow!50!white] (-.05, -.25) rectangle (0.8, .25);
    \pic[scale=.5] at (0.125, 0) {doubleGlue};
    
    \draw[extPath] (0.75, 0) to[out=90, in=90] (0, 0);
    \pic [scale=.5,rotate=180] at (0.125, 0) {doubleGlue};
    \draw[extPath] (0,0) to[out=-135, in=180] (.125, -.15) to [out=0, in=-45] (.25,0);
    \draw[extPath] (.25, 0) to [out=45, in=135] (.5, 0);
    \pic [scale=.5,rotate=180] at (0.625, 0) {doubleGlue};
    \path[draw, red, thick] (0, 0) -- (0.25, 0) (0.5, 0) -- (0.75, 0);
  }}

\tikzset{frame8/.pic={
    \fill[yellow!50!white] (-.05, -.25) rectangle (0.8, .25);

    \pic[scale=.5] at (0.125, 0) {doubleGlue};
    \pic [scale=.5,rotate=180] at (0.125, 0) {doubleGlue};
    \pic[scale=.5] at (0.625, 0) {doubleGlue};
    \pic [scale=.5,rotate=180] at (0.625, 0) {doubleGlue};
    \path[draw, red, thick] (0, 0) -- (0.25, 0) (0.5, 0) -- (0.75, 0);

    \draw[extPath] (0.75, 0) to[out=90, in=90] (0, 0);
    \draw[extPath] (0,0) to[out=-135, in=180] +(.125, -.15) to [out=0, in=-45] (.25,0);
    \draw[extPath] (.25, 0) to [out=45, in=135] (.5, 0);
    \draw[extPath] (0.5, 0) to[out=-135, in=180] +(.125, -.15) to [out=0, in=-45] (0.75,0);
  }}
  
  \tikzset{frame9/.pic={
    \fill[yellow!50!white] (-.05, -.25) rectangle (0.8, .25);
    \pic[scale=.5] at (0.125, 0) {doubleGlue};
    \pic [scale=.5,rotate=180] at (0.125, 0) {doubleGlue};
    \pic[scale=.5] at (0.625, 0) {doubleGlue};
    \pic [scale=.5,rotate=180] at (0.625, 0) {doubleGlue};
    \path[draw, red, thick] (0, 0) -- (0.25, 0) (0.5, 0) -- (0.75, 0);

    \draw[extPath] (0.75, 0) to[out=90, in=90] (0, 0);
    \draw[extPath] (.25, 0) to [out=45, in=135] (.5, 0);
    \begin{scope}[rotate around={180:(.375,0)}]
      \draw[extPath] (0.75, 0) to[out=90, in=90] (0, 0);
      \draw[extPath] (.25, 0) to [out=45, in=135] (.5, 0);
  \end{scope}
  }}
  
  \begin{scope}[shift={(-4,-2)}]
    \foreach \i in {0, ..., 9} \pic[shift={(4*.30*\i,0)}] {frame\i};
  \end{scope}

  \begin{scope}[shift={(-3.9,-3.5)}]
    \fill[black] (-0.3,-1) rectangle (11.8,1);

    \begin{scope}[shift={(-.35,0)}]
      \foreach \x in {0, .4, ..., 12} {
        \fill[fill=white] (\x, -.65) rectangle +(.2, -.2);
        \fill[fill=white] (\x, .85) rectangle +(.2, -.2);
      }
    \end{scope}
    \foreach \i [count=\c from 0] in {0, 2, 3, 7, 8, 9}
    \pic[shift={(2*\c,0)}, scale=2] {frame\i};
  \end{scope}
\end{tikzpicture}

%%% Local Variables:
%%% mode: latex
%%% TeX-master: "../main"
%%% TeX-command-extra-options: "-shell-escape"
%%% TeX-engine: luatex
%%% End:
  \caption{Example assembly sequence $\alpha$ and window $W$. On the main picture, the labels of the tiles of $\lim \alpha$ are the order in which they attach. The small pictures are the successive assemblies of $\alpha$, their associated frames and the obtained movie. The movie is made up of the sequence of the unique frames, in order of apparition.}
  \label{fig:glue_movie_example}
\end{figure}

\cref{fig:glue_movie_example} gives an example free assembly sequence $\alpha$ with a window $W$. The arrows between the edges on either side of $W$ represent the relation ``$\paths$''.

\begin{lemma}[Window Movie Lemma]
  \label{lem:wml}
  Let $\alpha \in \FreeSeqs{\Ss, \sigma_A}$ and $\beta \in \FreeSeqs{\Ss, \sigma_B}$ two assembly sequences. Assume there are two windows $A$ in $\alpha$ and $B$ in $\beta$ and a translation $\tau$ such that $\tau(A) = B$, $\movie(A)$ is the same as $\movie(B)$ up to translation by $\tau$, and lastly $\tau$ can be extended to a translation from $\far(A)$ to $\far(B)$ which maps $\sigma_B \cap \operatorname{Far}(B)$ to $\sigma_A \cap \operatorname{Far}(A)$.

  Let $\alpha^\dagger = \alpha_{|\far(A)} \in \FreeSeqs{\Ss, \lim \alpha \cap \near(A)}$ be the sequence of attachments of $\alpha$ within $\far(A)$, and likewise $\beta^\dagger = \alpha_{|\far(B)} \in \FreeSeqs{\Ss, \lim \beta \cap \near(B)}$ be the sequence of attachments of $\beta$ witihn $\far(B)$.

  Let $\alpha'$ be the candidate assembly sequence consisting of $\alpha$ where for every $k$, the attachment $\alpha^\dagger_k$ is replaced with $\beta^\dagger_k$.
  
  Then there is a window $A'$ on $\alpha'$ and two translations $\tau_N: \operatorname{Near}(A') \to \operatorname{Near}(A)$ and $\tau_F: \operatorname{Far}(A') \to \operatorname{Far}(B)$ such that:
    \begin{equation}
    \begin{cases}
      \forall z  \in \operatorname{Near}(A'),  &\tiles[(\lim \alpha')](z) = \tiles[(\lim \alpha)](\tau_N(z))\\
      \forall z  \in \operatorname{Far}(A'), &\tiles[(\lim \alpha')](z) = \tiles[(\lim \beta)](\tau_F(z))
    \end{cases}\label{eq:transl}
  \end{equation}

  Moreover, if $\beta^\dagger \moreFizzy \alpha^\dagger$, then $\alpha' \moreFizzy \alpha$.
\end{lemma}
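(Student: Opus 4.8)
The plan is to treat the given candidate $\alpha'$ as fixed data and establish, in order, that it is a genuine free assembly sequence, that its limit carries the claimed window and translations, and finally that its fizziness dominates that of $\alpha$. The central structural observation is that, because $\movie(A)$ and $\movie(B)$ agree up to $\tau$, the two sequences present the \emph{same} ordered list of events at the window, so the interleaving implicit in the definition of $\alpha'$ (perform each near-attachment of $\alpha$ as is, and perform $\beta^\dagger_k$ in place of $\alpha^\dagger_k$) is unambiguous. Accordingly, I would define the support of $\lim\alpha'$ to be $\near(A)$, taken from $\lim\alpha$, amalgamated with $\far(B)$, taken from $\lim\beta$, by identifying the two copies of the window arcs through $\tau$ (and its extension to $\far$).

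First I would prove, by induction on the attachment time, the invariant that after $t$ steps the near part of the $t$-th production of $\alpha'$ is translation-equivalent to that of $\alpha$ and its far part is translation-equivalent to that of $\beta$, with the presence bits and glues on the window arcs exactly those prescribed by the common movie. Validity of each spliced attachment is then immediate, since the target position is unoccupied in the source sequence and the movie certifies the same occupancy pattern across the cut. Stability follows because the binding edges witnessing an attachment lie either strictly on one side of the window, where they are copied verbatim from the source sequence, or across it, where the $\glue$ and $\present$ records of the matching frame supply an identical binding. The window $A'$ is the image of $A$ in $\lim\alpha'$, and $\tau_N$, $\tau_F$ are the translations identifying $\near(A')$ with $\near(A)$ and $\far(A')$ with $\far(B)$; these yield equation~(\ref{eq:transl}) directly.

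The step special to the free setting is verifying that the amalgamated graph really is an assembly support and that its holes are correctly tallied; this is where the $\paths$ component of the movie earns its keep. When the two supports are glued along the window, a cycle or an exterior path may run from one side across the window into the other and back, so I would use the recorded $\paths$ relation — which tells, for each window arc and each side, the window arc reached from it by a non-crossing exterior path — to check the cycle condition and the $\pm\frac{\pi}{2}$ step conditions at the seam. The same data sorts the holes of $\lim\alpha'$ into three classes: those whose perimeter lies entirely in $\near(A)$, which match the holes of $\alpha$; those entirely in $\far(B)$, which match the holes of $\beta$; and those whose perimeter crosses the window, which are reconstructed from the near-side and far-side $\paths$ segments and so coincide, frame by frame, with the window-crossing holes common to $\alpha$ and $\beta$.

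For the fizziness comparison I would compare $\fizziness(\alpha')$ and $\fizziness(\alpha)$ termwise. At every near-step the two sequences create identical new holes, since $\alpha'$ and $\alpha$ perform the same attachment in the same near context and the window-crossing structure is governed by the shared movie; hence those entries agree. At every far-step, $\alpha'$ follows $\beta^\dagger$ while $\alpha$ follows $\alpha^\dagger$, and by the three-class analysis the number of genuinely new holes created equals the fizziness increment of $\beta^\dagger$, respectively of $\alpha^\dagger$ (the near side being frozen during a far-step, only the far-side $\paths$ data changes, and it evolves identically in $\alpha'$ and in $\beta^\dagger$). Thus the two fizziness sequences differ only at far-step positions, and the first lexicographic difference is inherited from the first place where $\fizziness(\beta^\dagger)$ exceeds $\fizziness(\alpha^\dagger)$; assuming $\beta^\dagger \moreFizzy \alpha^\dagger$, this forces $\alpha' \moreFizzy \alpha$. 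The hard part will be precisely the bookkeeping of window-crossing holes: ensuring that a hole straddling the seam is neither double-counted nor lost when the two sides are glued, which is exactly the consistency the $\paths$ data is engineered to guarantee.
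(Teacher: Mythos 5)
Your proposal is correct and follows essentially the same route as the paper's proof: splice the far attachments of $\beta$ into $\alpha$ across the matching window, establish validity and the translations $\tau_N,\tau_F$ by induction on attachment time, and compare fizziness by sorting holes into wholly-near, wholly-far, and window-crossing classes, with the $\paths$ data of the movie guaranteeing that crossing holes are counted identically on both sides. Your write-up is in fact more explicit than the paper's (which defers validity to ``the same as in the $\mathbb{Z}^2$ case'' and gives the hole-counting in one paragraph), but there is no difference in substance.
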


\begin{figure}
  \centering
  \input{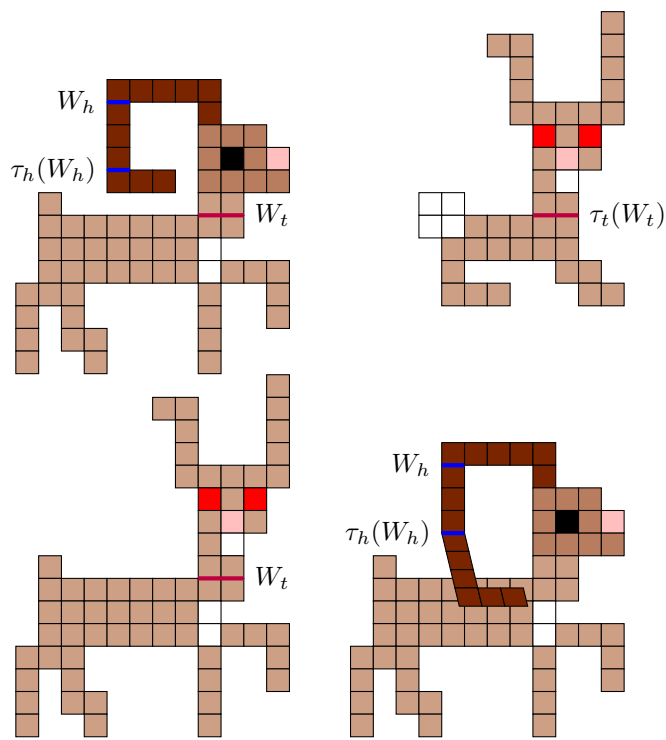}
  \caption{The Window Movie Lemma: consider two assembly sequences $\iota$ (the ibex) and $\beta$ (the bunny), with their productions depicted in the top row. The sequence $\iota$, has two windows $W_t$ and $W_h$. In $W_t$, $\far(W_t)$ is the head of the ibex, $\near(W_t)$ its body. In $W_h$, $\far(W_h)$ is the tip of the horn. Each of them is associated with a translation, $\tau_t$ and $\tau_h$ respectively, which sends the window to another window with the same movie, and satisfying the hypothesis of~\cref{lem:wml}. Applying \cref{lem:wml} in each case gives two new assembly sequences represented on the bottom row. Applying it with $W_t$ yields a fearsome chimera, while its application to $W_h$ yields an ibex with a horn so long it needs to bend in the third dimension to avoid piercing its spine, i.e. avoiding the conflict that would arise were our attachment sequences not free.}
  \label{fig:tpl}
\end{figure}

\begin{proof}
  Define $\alpha'$ as $\alpha$ where for each $k$, the $k$-th attachment $t@Z$ within $\far(A)$ has been replaced by the $k$-th attachment within $\far(B)$ of $\beta$. When doing this, for an attachment $t@Z$ within $\far(B)$, any arc $a = (v, d)$ in $Z$ with $v \in \near(B)$ is replaced by the arc $\tau^{-1}(a)$.

  The translations $\tau_N$ and $\tau_F$ can be defined inductively on the attachments of $\alpha'$, such that \cref{eq:transl} holds.
  
  The proof that $\alpha'$ is a valid free assembly sequence is the same as in the $\mathbb{Z}^2$ case. Note that since $B$ is a cut of $\support[(\lim \beta)]$, any attachment of $\beta$ in $\far(B)$ can be replayed in $\alpha'$ in $\far(A')$ without conflicts since they do not create any adjacency outside $\operatorname{Far}(\beta)$.

  Observe that every attachment in $\operatorname{Far}(B)$ which affects holes in $\beta$ affects the same number of holes in $\alpha'$: holes which are wholly within $\operatorname{Far}(B)$ have had all of their tiles attached in $\alpha'$, and holes which go through $B$ become holes through $A$ because the part within $\operatorname{Near}(B)$ has the same connections in $\operatorname{Near}(A)$ at that frame in the movie.
\end{proof}

The fact that the grafting process of \cref{lem:wml} is increasing in the fizziness of the far part of the assembly sequences implies that when the original assembly sequences have maximal fizziness, so does the chimera sequence $\alpha'$.

% \begin{corollary}
%   Let $X$ be a set of assembly sequences, suppose $\alpha$ and $\beta$ satisfy the hypothesis of \cref{lem:wml} and are of maximal fizziness within $X$.

%   If the sequences $\alpha'$ and $\beta'$ obtained by applying \cref{lem:wml} in both directions are also in $X$, they are of maximal fizziness within $X$ too.
% \end{corollary}\ruggedtodo{proof?}

\cref{lem:wml} will be most useful in this paper in the case where the cuts $A$ and $B$ are on the same branch of the assembly. In this case, by iterating \cref{lem:wml}, it is possible to get a production with a periodic subassembly.

\begin{corollary}
  \label{lem:celeri_branche}
  Let $\alpha = (A_i)_{i < o}$ be an assembly sequence with two windows $A$ and $B$ satisfying the hypothesis of \cref{lem:wml} and such that $\far(B) \subset \far(A)$. Let $\tau$ be the translation between $A$ and $B$. Then, there is an assembly sequence $\alpha^{\tau}$ such that $F = \lim \alpha^{\tau} \cap \far(A)$ verifies $\tau(F) \subset F$, and within $\near(B)$, $\alpha^{\tau}$ does the same attachments in the same order as $\alpha$.
\end{corollary}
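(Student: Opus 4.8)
The plan is to iterate the grafting operation of \cref{lem:wml} so as to stack infinitely many translated copies of the slab of $\lim\alpha$ caught between the two windows, and then to pass to the limit. Write $\vec p$ for the translation vector of $\tau$ (so $\tau(A)=B$), and let $S=\lim\alpha\cap\far(A)\cap\near(B)$ be the \emph{strip} of $\lim\alpha$ trapped between $A$ and $B$. Then $\lim\alpha\cap\near(B)$ splits as $(\lim\alpha\cap\near(A))\sqcup S$, and since $\far(B)\subseteq\far(A)$ the region $\far(A)$ splits within $\lim\alpha$ as $S$ together with $\far(B)$. The target production is $(\lim\alpha\cap\near(B))$ together with $F=\bigcup_{k\ge 0}\tau^{k}(S)$, for which $\tau(F)=\bigcup_{k\ge 1}\tau^{k}(S)\subseteq F$ holds by construction, and whose intersection with $\near(B)$ is exactly the $k=0$ copy $S$ sitting inside $\lim\alpha\cap\near(B)$.

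First I would build approximations $\alpha^{(n)}$. Set $\alpha^{(0)}=\alpha$. To pass from $\alpha^{(n)}$ to $\alpha^{(n+1)}$, apply \cref{lem:wml} with both sequences equal to $\alpha$, keeping the near side of the window $\tau^{n+1}(A)$ and grafting onto it the copy $\tau^{n+1}(\alpha_{|\far(A)})$ of the far part of $A$ translated by $\tau^{n+1}$. A direct induction shows that $\alpha^{(n)}$ assembles $(\lim\alpha\cap\near(A))\sqcup\bigsqcup_{k=0}^{n}\tau^{k}(S)\sqcup\tau^{n}(\far(B))$, i.e.\ $n{+}1$ stacked strips followed by the leftover far-$B$ region pushed up by $\tau^{n}$; one checks here that in $\alpha^{(n)}$ the region beyond $\tau^{n+1}(A)$ is precisely $\tau^{n}(\far(B))=\far(\tau^{n+1}(A))$, so $\tau^{n+1}(A)$ really is a window whose near side is $(\lim\alpha\cap\near(A))\sqcup\bigsqcup_{k=0}^{n}\tau^{k}(S)$. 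The sole nontrivial hypothesis of \cref{lem:wml} to verify at each step is that the movie of the kept window matches, up to the grafting translation $\tau^{n+1}$, the movie of $A$ in $\alpha$. This is where the single assumed equality $\movie(A)=\movie(B)$ (up to $\tau$) does all the work: the grafted region always carries a faithful $\tau^{n}$-translate of $\alpha$'s own far structure, so the movie seen at $\tau^{n+1}(A)$ equals the movie of $B$ in $\alpha$ shifted by $\tau^{n}$, and the required identity collapses to $\movie(B)=\movie(A)$ shifted by $\tau$, which is exactly the hypothesis. Every grafting is therefore legitimate, and since the two sequences are identical we only obtain $\alpha^{(n+1)}\moreFizzyEq\alpha^{(n)}$, so no stability or hole data is lost.

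Finally I would take the limit. The approximations agree with one another below the grafting point $\tau^{n}(A)$, which recedes to infinity, so the attachments inside each fixed slab $\tau^{k}(S)$ stabilise once $n>k$. Performing these stabilised blocks in order — all of strip $0$, then all of strip $1$, and so on — yields a single free assembly sequence $\alpha^{\tau}$; if a strip needs infinitely many attachments (because of its holes) this is organised as a transfinite sequence, which \cref{lem:ord-omega} compresses into an $\omega$-sequence with the same limit $(\lim\alpha\cap\near(A))\sqcup F$. By the same token no grafting ever disturbs $\near(B)$, since every window used is $\tau^{n}(A)$ with $n\ge 1$ and its near side contains $\near(B)$; hence within $\near(B)$ the sequence $\alpha^{\tau}$ performs exactly the attachments of $\alpha$ in their original order, and $F=\lim\alpha^{\tau}\cap\far(A)$ satisfies $\tau(F)\subseteq F$, as required.

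I expect the main obstacle to be the bookkeeping for the limit object rather than any single step: one must confirm that the stabilised slabs fit together into a \emph{valid} free assembly sequence — that the cross-window bindings remain stable and that holes are tallied correctly across each $\tau^{k}(A)$ interface. This is precisely the information recorded by the movie (glues \emph{and} exterior-path data) that \cref{lem:wml} preserves, so it reduces to the step-by-step legitimacy already established in the iteration rather than requiring a fresh planarity-free argument.
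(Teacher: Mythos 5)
Your overall strategy is the same as the paper's: iterate \cref{lem:wml} to stack $\tau$-translates of the strip $S=\far(A)\cap\near(B)$ beyond $B$, note that the attachments inside each fixed strip stabilise after finitely many graftings, and pass to a limit. Your iteration step, including the inductive argument that the movie seen at $\tau^{n+1}(A)$ in $\alpha^{(n)}$ is the $\tau^{n+1}$-translate of the movie of $A$, is sound and matches the paper's induction. The gap is in the limit. You build $\alpha^{\tau}$ by performing ``all of strip $0$, then all of strip $1$, and so on'', but this order is not in general a valid assembly sequence: an attachment inside strip $k$ may bind \emph{cooperatively across the window} $\tau^{k}(B)$. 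For instance at temperature $2$, a tile of $S$ adjacent to $B$ may need strength $1$ from a tile of $S$ together with strength $1$ from a tile of $\far(B)$ --- that is, from strip $k+1$ after grafting --- so it cannot be placed before strip $k+1$ has started. This is precisely why windows carry time-stamped \emph{movies} rather than just final glue profiles: tiles on the two sides of a window appear in an interleaved order that a correct sequence must respect. Nor can \cref{lem:ord-omega} rescue the construction, since it compresses a transfinite sequence into an $\omega$-sequence only when the transfinite sequence is already a valid assembly sequence, which is exactly what your strip-by-strip ordering fails to guarantee.

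The repair is the paper's bookkeeping: do not reorder at all. In the iterates, an attachment whose whole prefix lies within the strips assembled so far is never altered by later graftings (it is ``unchanging''); define $\alpha^{\tau}$ as the sequence of unchanging attachments \emph{in the interleaved positions they occupy in the iterates}. Then every finite prefix of $\alpha^{\tau}$ is a prefix of some valid $\alpha^{(n)}$, so validity (stability of each attachment, and the hole accounting) is inherited rather than re-proven, and the attachments within $\near(B)$ keep the order of $\alpha$ because no grafting ever touches them. With that replacement for your limit step, your argument coincides with the paper's proof.
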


\begin{proof}
  Let $\tau$ be the translation sending $A$ to $B$.
  
  Show by induction that for any $k$, there is a sequence $\alpha^k$ such that the movie on the windows $A$ and $B$ within $\alpha^k$ are the same as in $\alpha$, and for each $z \in \near(B) \cap \far(A)$ and $j \leq k$, $(\lim \alpha^k)(\tau^j(z)) = (\lim \alpha)(z)$, and the attachments done by $\alpha^k$ and $\alpha^j$ within $\bigcup_{i \leq j} \tau^i(\near(B) \cap \far(A))$ are the same.

  For $k = 0$, it suffices to pick $\alpha^0 = \alpha$.
  Assume that $\alpha^k$ satisfies the induction hypothesis. Then \cref{lem:wml} applies, and the assembly sequence it yields, $\alpha^{k+1}$, satisfies the induction hypothesis at rank $k+1$.

  For a pair $(j, k) \in \mathbb{N}$, if for all $i \leq j$, $\alpha^k_i \in \bigcup_{l \leq k} \tau^k(\near(B) \cap \far(A))$, then the $j$ first attachments are unchanging after rank $k$: for all $m \geq k$ and $i \leq j$, $\alpha^m_i = \alpha^k_i$. Let $\alpha^{\vec{p}}$ be the sequence of such unchanging attachments. Let $m$ be the supremum of the $j \in \mathbb{N}$ such that the $j$ first attachments are unchanging after some rank $k$. Define $\alpha^{\vec{p}}$ as the sequence of length $m$ with $\alpha^{\vec{p}}_j = \alpha^k_j$, where $k$ is such that the $j$ first attachments are unchanging.

  It is easy to check that indeed, $\lim \alpha^{\vec{p}} \cap \far(A)$ has period $\vec{p}$: any attachment in $\alpha^{\vec{p}}$ within $\far(A)$ gets picked up by subsequent applications of \cref{lem:wml} which translate it by $\vec{p}$.
\end{proof}

In this situation with a branch and two cuts with the same movie, it is also possible to cut assembly sequences short, so that they can do their business in $\near(A)$ without needing to mess with $\far(B)$.

\begin{corollary}
  \label{lem:couic-couic}
  Let $\alpha$ be an assembly sequence with two windows $A$ and $B$ satisfying the hypothesis of \cref{lem:wml} and such that $\far(B) \subset \far(A)$. Then there is an assembly sequence $\alpha'$ which does the same attachments as $\alpha$ within $\near(A)$, but has no attachment in $\far(B)$.
\end{corollary}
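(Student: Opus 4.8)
The plan is to obtain $\alpha'$ from $\alpha$ by discarding exactly the attachments that fall in $\far(B)$, and to use the cut structure together with the shared movie to show that this surgery leaves every attachment in $\near(A)$ intact. Concretely, I would process the attachments of $\alpha$ in their original order and keep an attachment $t@Z$ precisely when its landing position lies in $\near(B)$ and it is still stable against the tiles kept so far. The claim to establish is then twofold: that every attachment of $\alpha$ whose position lies in $\near(A)$ is kept, and that no kept attachment lands in $\far(B)$. The second point is immediate, since $\far(B)\cap\near(B)=\emptyset$.

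The key geometric observation is that, because $A$ and $B$ are distinct cuts with $B=\tau(A)$ a genuine (nonzero) translate further out and $\far(B)\subsetneq\far(A)$, the annulus $M=\far(A)\cap\near(B)$ separates $\near(A)$ from $\far(B)$: no single lattice arc joins a vertex of $\near(A)$ to a vertex of $\far(B)$, since such an arc would have to cross both windows at once. Hence when a tile $z\in\near(A)$ attaches in $\alpha$, every neighbour supplying it binding lies in $\near(A)\cup M=\near(B)$, never in $\far(B)$. So removing the $\far(B)$ tiles can only threaten the stability of tiles touching window $B$, i.e.\ the outer fringe of $M$, and those are not in $\near(A)$, so the conclusion tolerates dropping them.

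What remains, and what I expect to be the crux, is to rule out a cascade: a tile of $\near(A)$ might be supported (across window $A$) by an $M$-tile that is itself held in place only by a tile of $\far(B)$, so that discarding $\far(B)$ propagates inward and removes part of $\near(A)$. Here I would invoke the shared-movie hypothesis exactly as in the proofs of \cref{lem:wml} and \cref{lem:celeri_branche}: since $\movie(A)$ equals $\movie(B)$ up to $\tau$, the binding interface that $\far(B)$ presents across window $B$ is, frame by frame, a $\tau$-translate of the interface that $M$ presents across window $A$. Thus the support that $\far(B)$ supplies to the window-$B$ fringe can be replayed from within $\near(B)$ by grafting, via \cref{lem:wml}, a bounded copy of that interface in place of $\far(B)$, so that the fringe, and therefore all of $\near(A)$, remains stable while no tile is ever placed beyond window $B$. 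Formally this is an induction on the attachment order of $\alpha$, showing that the support chain of each $\near(A)$-attachment back to the seed (which lies in $\near(A)$) can be confined to $\near(B)$.

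The main obstacle is precisely this confinement of support chains: proving that reproducing $\near(A)$ never forces an attachment across window $B$. The cut geometry handles the ``no direct adjacency'' half, but the timing part, namely that the interface needed at window $A$ is genuinely regenerable from inside $\near(B)$ rather than only from $\far(B)$, is what the equality of movies buys us; assembling these two ingredients into a clean induction that also tracks holes in the free-assembly sense, as in \cref{lem:wml}, is the delicate step.
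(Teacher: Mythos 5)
There is a genuine gap, and it sits in both halves of your plan. First, the construction itself: you define $\alpha'$ as the subsequence of $\alpha$ obtained by keeping the attachments landing in $\near(B)$ that remain stable. But the $\alpha'$ demanded by the corollary is in general \emph{not} a subsequence of $\alpha$: at temperature $2$, a tile of the annulus $M = \far(A) \cap \near(B)$ sitting against window $B$ may attach cooperatively using a glue supplied across $B$ by a tile of $\far(B)$ (think of a path that exits through $B$, makes a U-turn in $\far(B)$ and comes back). Discarding $\far(B)$ then kills that annulus tile and, in cascade, every attachment of $\near(A)$ that it supports across window $A$ --- exactly the cascade you identify. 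The only way to keep $\near(A)$ intact is to place \emph{different} tiles in the annulus, namely the $\tau^{-1}$-translate of what $\alpha$ built beyond $B$; so the fix does not refine your filtering construction, it replaces it. This is what the paper does: it applies \cref{lem:wml} to the pair $(A, B)$ inside $\alpha$ itself, producing a chimera that agrees with $\alpha$ on $\near(A)$ and carries the translated far-of-$B$ content beyond $A$.

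Second, even granting the graft, your reading that a single grafting settles the matter is wrong, and that is where the real work lies. One application of \cref{lem:wml} does not empty $\far(B)$: the new far-of-$A$ content is the $\tau^{-1}$-translate of the old far-of-$B$ content, whose portion lying beyond $\tau(B)$ lands again beyond $B$. So each graft only pushes the offending attachments inward by one period and must be iterated; and for the iteration to terminate one needs the number of attachments involved to be finite, which fails for a general (possibly infinite, even ordinal) $\alpha$. The paper's proof supplies exactly this missing scaffolding: reorder $\alpha$ (\cref{lem:pos-finite-time}) so that the last attachment in $\near(A)$ adjacent to $A$ occurs at a finite time $j$, truncate to that finite prefix, iterate the graft finitely many times until no attachment remains in $\far(B)$, and finally replay the remaining attachments of $\alpha$ in $\near(A)$ --- legitimate because by time $j$ the movie on $A$ is over, so those later attachments are not adjacent to $A$ and need nothing from $\far(A)$. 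Your proposal has the right ingredient (\cref{lem:wml} plus equality of movies) but none of this termination-and-replay structure, and without it the induction you gesture at does not close.
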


\begin{proof}
  Let $j$ be the last time that $\alpha$ does an attachment in $\near(A)$ next to $A$, i.e. the date of the last frame in the movie of $A$ where a tile is attached on the near side of the window. By \cref{lem:pos-finite-time}, up to a reordering of $\alpha$, $j$ is finite.

    Let $\alpha^0$ be the prefix of length $j$ of $\alpha$. For each $k$, if $\alpha^k$ has any attachment in $\far(B)$, it is possible to use \cref{lem:wml} between $B$ and $A$ to obtain a sequence $\alpha^{k+1}$ with fewer attachments within $\far(B)$. Hence, there is a finite $k$ such that $\alpha^k$ has no attachments in $\far(B)$, and $\lim \alpha^k \cap \near(A) = \lim \alpha^0 \cap \near(A)$. Since by time $j$, the movie on $A$ is over, the rest of the attachments of $\alpha$ which take place in $\near(A)$ can be replayed after $\alpha^k$.
\end{proof}

\subsection{The Tree Pump Lemma}
\label{sec:tree_pump_tree_pump}

After all these considerations about the fantastic properties of Ordinal Free Assembly Sequences, it is now time to come back to Earth, or rather $\mathbb{Z}^2$. The object is now to prove \cref{lem:tree_pump}, for which an investigation of the properties systems whose $\mathbb{Z}^2$-assembly sequences do not circle large squares is in order.

The statement of \cref{lem:tree_pump} is given again, recall that it deals with the $\mathbb{Z}^2$-productions of the aTAM system $\Ss$, hence in its statement, $\TerminalProds{\Ss}$, $C_m[\Ss]$, $B_{F(n,m),\vec{d}}[\Ss]$ and $P_{\vec{d}}[\Ss]$ are sets of $\mathbb{Z}^2$-assemblies.

\treepump*

When considering the statement of \cref{lem:tree_pump}, the holes of the $\mathbb{Z}^2$-productions of $\Ss$ might as well be filled in. Hence, the definition of their fill-in, illustrated on \cref{fig:treedec}.

\begin{definition}[Fill-in]
  For a subgraph $D \subset \mathbb{Z}^2$, define the \emph{fill-in} $\complete{D}$ as the subgraph of $\mathbb{Z}^2$ induced by the positions $p$ such that there is no infinite path from $p$ in $\mathbb{Z}^2 \setminus D$.
\end{definition}

A $\mathbb{Z}^2$ assembly $A$ which does not circle any square larger than $m \times m$ looks like a tree, as expressed by the notion of \emph{Connected Treewidth}~\cite{DBLP:journals/combinatorica/DiestelM18}. This tree is obtained by grouping the positions of $\support[A]$ in connected sets of size at most $2m$, known as \emph{bags}, organized in a tree in such a way that:
\begin{itemize}
\item any arc of $\support[A]$ is between two positions which appear in the same bag, and
\item for any position $z \in \support[A]$, the set of bags which contain $z$ forms a subtree.
\end{itemize}

This decomposition is represented on \cref{fig:treedec}.

\begin{figure}
  \centering
  \tikzpicturedependsonfile{tikz/treedec}
  \input{tikz/treedec.tex}
  \caption{The domain $D$ of a production $P$, its fill-in $\complete{D}$ and an associated connected tree decomposition of $\treedec{D}$. The blue part of each bag $b$ of $\treedec{D}$ is its intersection $W_b$ with its parent, which disconnects the vertices appearing in its subtree from the rest of $\complete{D}$. The choice of the root of $\treedec{D}$ is arbitrary and does not affect the sets $W_b$, up to renaming.}
  \label{fig:treedec}
\end{figure}

\begin{lemma}
  \label{lem:treedec}
  Let $\Ss$ an aTAM system, $m$ an integer and $A$ a $\mathbb{Z}^2$ assembly of $\Ss$ such that $A \notin C_m[\Ss]$.

  Then $\support[A]$ has Connected Tree Width $2m$.
\end{lemma}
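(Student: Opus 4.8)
The plan is to pass from $A$ to its fill-in, observe that avoiding large squares forces the region to be geometrically \emph{thin}, and then read off a connected tree decomposition from the (tree-shaped) skeleton of a thin, simply connected region. Throughout write $D = \support[A]$ and $F = \complete{D}$.

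First I would restate the hypothesis on $F$. I claim $A \notin C_m[\Ss]$ is equivalent to $F$ containing no $m \times m$ square of cells. If $F$ contained such a square $Q$, then the connected component of $F$ meeting $Q$ is bounded and, by the very definition of the fill-in, has connected complement; its outer boundary is therefore a single cycle of $D$ whose enclosed region contains $Q$, i.e. an encirclement of an $m \times m$ square, so $A \in C_m[\Ss]$. Conversely an encircled square is retained in the fill-in. Since $D$ is connected and $F$ has no bounded complementary component, $F$ is a connected polyomino with connected complement, that is, a \emph{simply connected} polyomino, as in \cref{fig:treedec}.

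Next I would quantify the thinness. No $m \times m$ square means that every cell $p \in F$ satisfies $d_\infty(p,\ \Z^2 \setminus F) \le \lfloor m/2 \rfloor$: otherwise the closed $\ell_\infty$-ball of radius $\lfloor m/2 \rfloor$ about $p$ would be an $m \times m$ square lying entirely inside $F$. Thus $F$ is contained in the $\lfloor m/2\rfloor$-neighbourhood of its boundary — morally a curve of width at most $m$ drawn in the plane — and, being simply connected, that curve carries no cycle, so its skeleton is a \emph{tree}. From this skeleton $T$ I would build the decomposition: cut $F$ recursively along straight (horizontal or vertical) separating segments, each of at most $m$ cells, chosen transverse to the local arm of $F$; each such cut becomes an \emph{adhesion}. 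A bag is then the block of $F$-cells lying between two consecutive cuts (or a single junction block where several arms meet), which is a connected set of at most $2m$ cells. The first tree-decomposition axiom holds because every grid edge of $D$ is crossed by at most one cut and hence sits inside the single bag straddling that cut; the subtree axiom holds because a cell occurs only in the bags along the $T$-path between the two cuts bounding its block. Finally, intersecting every bag with $D$ (if one insists on bags inside $A$ proper) only deletes filled-in cells and preserves all three properties, yielding connected tree width at most $2m$ for $\support[A]$.

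The main obstacle is this last step: turning the informal ``thin simply connected region retracts onto a tree'' into a rigorous construction. The delicate points are (i) proving that short straight separating segments always exist in a piece that is still too large — a maximal axis-aligned run separates only when it reaches the complement on \emph{both} sides and does not merely cap off an arm, so one must argue that some arm or junction always admits a transverse cut of length $\le m$; (ii) balancing the recursion so that each piece lies between two such short cuts and therefore contains at most $2m$ cells; and (iii) the bookkeeping at branch cells, where three or more thin arms meet, where the junction bag and the subtree property must be checked by hand. Everything else — the equivalence of the first paragraph and the thinness estimate of the second — is routine once the separating-segment lemma is in place.
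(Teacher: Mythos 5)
Your first two paragraphs are fine up to the additive constants that the paper itself treats loosely, and your overall plan --- a direct geometric construction, instead of the paper's route of an excluded-grid-minor argument for ordinary treewidth at most $m$, combined with the absence of geodesic cycles of length more than $4$ in the fill-in and the cited Diestel--M\"uller theorem on connected treewidth --- would indeed be more elementary and self-contained if it worked. But the construction in your third paragraph fails at exactly the point you flag as (i), and the failure is not a matter of arguing more carefully: it is simply false that a thin, simply connected polyomino always admits short \emph{straight} separating cuts, so your recursion can stall on a block of size $\Theta(m^2)$.

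Concretely, let $w = m-1$ and let $F$ be the L-shaped union of a vertical strip of width $w$ (columns $0,\dots,w-1$, rows $0,\dots,N$) and a horizontal strip of height $w$ (rows $0,\dots,w-1$, columns $0,\dots,N$). This is the support of an assembly satisfying the hypothesis: it has no holes, its cycles encircle at most $(w-2)\times(w-2)$ squares, and it is thin in your $\ell_\infty$ sense. Now consider the $w \times w$ corner square. No straight segment of length at most $m$ separates $F$ there: a horizontal cut inside the corner, even one spanning all $w$ columns, is bypassed by going right into the horizontal arm, down past the cut's row, and back left; symmetrically for vertical cuts. The only short straight separating cuts are those crossing the two arms outside the corner square, so in your scheme the corner square lies in a single bag of at least $(m-1)^2$ cells, which exceeds $2m$ as soon as $m \ge 4$. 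Since adjacent bags of any tree decomposition must intersect in separators, any decomposition of this region into connected bags of size $O(m)$ necessarily uses \emph{bent} separators: monotone staircase paths from the inner corner to the outer corner, of length roughly $2m$. A correct direct proof therefore has to sweep the region with such monotone separators and handle how they interact at junctions --- a genuinely different and heavier construction than the one you describe. Absent that, you should either develop the staircase version in full or fall back on the paper's argument (treewidth at most $m$ via grid minors, no geodesic cycle longer than $4$ in the fill-in, then the connected-treewidth theorem).
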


\begin{proof}
Indeed, it has treewidth $m$: otherwise it would contain an $m \times m$ grid as a minor; that minor would have to be realized in $\mathbb{Z}^2$ as a subgraph which would encircle some $m \times m$ square. Moreover, since $\complete{P}$ does not encircle any empty position, it does not have any \emph{geodesic cycle} of length more than $4$.
\end{proof}

The point of using Connected Treewidth rather than the usual treewidth is that thanks to the connectivity of the bags of $\treedec{P}$, the distances in $\treedec{P}$ reflect those in $\complete{P}$:
\begin{itemize}
\item there is a function $r$ such that for any vertex $v \in \complete{P}$, the subtree of the bags of $\treedec{P}$ containing $v$ has a size at most $r(m)$,
\item there are two increasing functions $d_m, D_m$ such that for any vertices $u, v$ at distance $\delta$ in $\complete{P}$, any bags $B_u \ni u$ and $B_v \ni v$ of $\treedec{P}$ are at distance $\Delta$ with $d_m(\delta) \leq \Delta \leq D_m(\delta)$.    
\end{itemize}

As a consequence, if two tiles are far enough in an assembly $A$ and the path between them does not go through the seed, there must be two windows cutting that path which satisfy the hypothesis of \cref{lem:celeri_branche}.

\begin{lemma}
  Let $\Ss$ be an aTAM system with $n$ tiles and $m$ an integer.
  
  Let $\alpha = (A_i)_{i \leq o} \in \FreeSeqs{\Ss, A_0}$ be such that for all $i \leq o$, $\ezEmbed(A_i) \notin C_m[\Ss]$.

  There is a constant $F(n, m)$ such that if $z, z' \in \support[\ezEmbed(\lim \alpha)] \setminus \support[A_0]$ are such that the distance between $z'$ and  $z$ is greater than $F(n, m)$, then there are two windows $A$ and $B$ with $z \in \near(A)$ and $z' \in \far(B)$ which satisfy the hypothesis of \cref{lem:wml}.
\end{lemma}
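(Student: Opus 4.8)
The plan is to exploit the bounded Connected Treewidth of the productions in order to slice the path between $z$ and $z'$ into many cuts of bounded width, and then to run a pigeonhole argument over the \emph{finitely many} possible movies of such cuts.

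First I would set $P = \ezEmbed(\lim\alpha)$ and observe that, by hypothesis, no $A_i$ encircles an $m\times m$ square, so $P \notin C_m[\Ss]$; hence by \cref{lem:treedec} the support of $P$ has Connected Treewidth $2m$. Fix a connected tree decomposition $\treedec{P}$ of the fill-in $\complete{P}$ with bags of size at most $2m$, rooted at a bag meeting the seed $A_0$. For each non-root bag $b$, its intersection with its parent is a connected separator of at most $2m$ vertices, and the set of arcs of $\support[(\lim\alpha)]$ leaving the subtree below $b$ is a window $\omega_b$ in the sense of \cref{lem:wml}, with $\far(\omega_b)$ the subtree below $b$ and the seed contained in $\near(\omega_b)$. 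Its width is at most $2m$, so $\operatorname{Arcs}(\omega_b)$ has size at most $8m$, and consecutive windows down a branch are nested.

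Second --- and this is the heart of the matter --- I would bound the number of distinct \emph{types} of such windows, where two windows have the same type when a translation $\tau$ carries the arcs of one onto those of the other and identifies their movies. A window of width at most $2m$ meets at most $8m$ arcs, and each frame records only $\present$, $\glue$ and $\paths$ over these arcs, all drawn from finite sets (the glues of $\Ss$, and partial maps among at most $8m$ arcs); thus there are at most some $N_0(n,m)$ possible frames. Since $\movie(\omega_b)$ is by definition the ordered set of \emph{distinct} frames, its length is at most $N_0(n,m)$, whence there are at most some $T(n,m)$ possible (arc-geometry, movie) types up to translation. This finiteness is exactly what the free-assembly formalism buys: although holes are ``morally infinite'', a bounded-width window is crossed a bounded number of times and its movie remains finite.

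Third, I would choose $F(n,m)$ so that $d_m(F(n,m)) > T(n,m)$. Taking a geodesic from $z$ to $z'$ in $\complete{P}$ that avoids the seed, the bags $B_z \ni z$ and $B_{z'} \ni z'$ lie at tree-distance at least $d_m(\operatorname{dist}(z,z')) \ge d_m(F(n,m)) > T(n,m)$. Along the branch of $\treedec{P}$ descending from the least common ancestor of $B_z, B_{z'}$ towards $B_{z'}$ --- which, since the geodesic avoids the seed, we may arrange to be the long side with $z'$ deeper than $z$ --- one finds more than $T(n,m)$ nested windows, each separating $z$ (on the near side, together with the seed) from $z'$ (on the far side). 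By pigeonhole two of them, $A$ and $B$ with $\far(B) \subset \far(A)$, share the same type, yielding a translation $\tau$ with $\tau(A)=B$ and $\movie(A)$ equal to $\movie(B)$ up to $\tau$. Because the seed lies in $\near(A) \subseteq \near(B)$, both $A_0 \cap \far(A)$ and $A_0 \cap \far(B)$ are empty, so the condition on $\tau$ mapping $\sigma_B \cap \far(B)$ to $\sigma_A \cap \far(A)$ holds vacuously and $\tau$ extends across the far parts by nestedness. Thus $A$ and $B$ satisfy the hypotheses of \cref{lem:wml} with $z \in \near(A)$ and $z' \in \far(B)$, as required. The main obstacle is the second step: making precise that bounded-width free windows admit only finitely many movies --- in particular controlling the $\paths$ component and confirming that ``same type'' genuinely supplies a $\tau$ compatible with \cref{lem:wml}; the bookkeeping placing the seed on the near side (choosing the descending branch toward $z'$ as the long one via the seed-avoiding geodesic) is routine but must be done with care.
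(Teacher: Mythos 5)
Your proposal is correct and takes essentially the same route as the paper's own (very terse) proof: invoke \cref{lem:treedec} to get a connected tree decomposition with bags of size at most $2m$, bound the number of possible movies on width-$2m$ windows by a finite quantity depending only on $n$ and $m$, define $F(n,m)$ by inverting the distance-comparison function $d_m$, and pigeonhole along the tree path between the bags containing $z$ and $z'$. You in fact supply the details the paper leaves implicit---the construction of windows as bag/parent separators, the explicit counting of frames including the hole-connectivity data, and the vacuous discharge of the seed condition in \cref{lem:wml}---and the one step you flag as delicate (arranging that the pigeonholed windows put the seed and $z$ on the near side and $z'$ on the far side) is exactly the point that the paper's two-line proof also glosses over.
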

  
\begin{proof}
  let $W(n, m)$ be the number of movies with $n$ glues on a window of width $2m$. Note that $W(n, m)$ counts the number of arrangements of the edges within the window, the perspective difference between the connected components of the window, as well as the events on the frames of the movie. Let $F(n, m) = d^{-1}_m(W(n,m))$.
  
  Let $P = \support[\lim \alpha]$. By \cref{lem:treedec}, pick a tree decomposition $\treedec{P}$ with connected bags of size at most $2m$. Let $\delta$ be the distance between $z$ and $z'$. If $\delta \leq F(n, m)$, by the pigeonhole principle, there must be two windows $A$ and $B$ between $z$ and $z'$ with the same movie.
\end{proof}

In the case where the assembly embeds into $\mathbb{Z}^2$, one actually controls the direction of the vector between the two windows.

\begin{lemma}
  \label{lem:useless}
  Let $\Ss$ be an aTAM system with $n$ tiles, $m$ an integer and $\vec{d}$ a unit vector.
  
  Let $\alpha = (A_i)_{i \leq o} \in \FreeSeqs{\Ss, A_0}$ be such that for all $i \leq o$, $A_i$ embeds into $\mathbb{Z}^2$.

  There is a constant $F'(n, m)$ such that if $z, z' \in \support[\lim \alpha] \setminus \support[A_0]$ are such that the distance between $(z' - z) \cdot \vec{d} > F(n, m)$, then there are two windows $A$ and $B$ with $z \in \near(A)$ and $z' \in \far(B)$ which satisfy the hypothesis of \cref{lem:wml}, and such that the vector $\vec{p}$ of the translation between $A$ and $B$ satisfies $\vec{p} \cdot \vec{d} > 1$.
\end{lemma}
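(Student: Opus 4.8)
The plan is to keep the machinery of the non-directional lemma above and only sharpen the pigeonhole step so that the translation it produces points forward along $\vec{d}$. As there, set $P=\support[\lim\alpha]$ and fix, via \cref{lem:treedec}, a connected tree decomposition $\treedec{P}$ with bags of size at most $2m$. Reading the bags along the tree path from a bag containing $z$ to a bag containing $z'$ yields a sequence of separators $W_{(0)},\dots,W_{(K)}$, each a cut of width at most $2m$ with $z\in\near(W_{(i)})$, $z'\in\far(W_{(i)})$ and $\far(W_{(i+1)})\subseteq\far(W_{(i)})$, and --- exactly as in the non-directional lemma --- with the seed on the near side of the cuts we retain, so that the seed-matching clause of \cref{lem:wml} is vacuous.

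The one genuinely new ingredient is that, because every $A_i$ embeds into $\mathbb{Z}^2$, each window $W_{(i)}$ sits at an honest location: let $\pi_i$ be the $\vec{d}$-projection of a fixed vertex of $W_{(i)}$. Two consecutive separators lie inside one connected bag of size at most $2m$, so $\lvert \pi_{i+1}-\pi_i\rvert\leq 2m$; and since $W_{(0)}$ is within $2m$ of $z$ and $W_{(K)}$ within $2m$ of $z'$, one has $\pi_K-\pi_0 > (z'-z)\cdot\vec{d}-4m > F'(n,m)-4m$. The sequence $(\pi_i)$ need not be monotone, so I would extract a forward-increasing subfamily by first-crossing times: for each integer level $t\geq 1$ let $\tau_t$ be the least index with $\pi_{\tau_t}\geq\pi_0+t$, so that $\pi_0+t\leq\pi_{\tau_t}<\pi_0+t+2m$. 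Keeping the windows $W_{(\tau_{(2m+2)j})}$ for $j=1,2,\dots$ gives nested cuts whose projections strictly increase with consecutive gaps at least $2$, and there are at least $(\pi_K-\pi_0)/(2m+2)$ of them.

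Now I would choose $F'(n,m)$ large enough that this count exceeds $W(n,m)$, the number of movies on a width-$2m$ window with $n$ glues introduced in the preceding proof; then two of the kept windows, $A$ with the smaller projection and $B$ with the larger, carry the same movie. Being nested along the path, they satisfy $\far(B)\subseteq\far(A)$ with $z\in\near(A)$ and $z'\in\far(B)$, the seed-matching clause is vacuous, and the translation $\tau$ identifying the two equal movies is, by injectivity of $\ezEmbed$, the genuine $\mathbb{Z}^2$-translation by the vector $\vec{p}$ joining their positions. By construction $\vec{p}\cdot\vec{d}=\pi_B-\pi_A>1$, so $A$, $B$ and $\vec{p}$ meet the requirements of \cref{lem:wml} (and feed \cref{lem:celeri_branche}).

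The main obstacle is the bookkeeping that must hold all of these demands at once: the kept windows have to stay nested cuts that isolate the seed and $z$ from $z'$ while their $\vec{d}$-projections march strictly upward toward $z'$. The first-crossing-time device handles the directional monotonicity despite the wandering of $(\pi_i)$, but one must still check --- as in the non-directional lemma --- that restricting to the portion of the path on which the seed lies on the near side does not shorten the available $\vec{d}$-span below $F'(n,m)$; this is where the size of $F'(n,m)$ is actually spent.
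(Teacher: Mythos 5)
You follow the same route as the paper: the connected tree decomposition of \cref{lem:treedec}, the chain of nested separators along the tree path from $z$ to $z'$, and a pigeonhole over the $W(n,m)$ possible window movies, arranged so that the two matching windows are forced apart along $\vec{d}$. Your first-crossing-time extraction is in fact a welcome explicit version of a step the paper merely asserts (the existence of a subfamily of bags whose $\vec{d}$-projections advance with the index). But there is a genuine gap at the very point where this lemma adds something over its non-directional predecessor, namely the control of the direction of $\vec{p}$. You define $\pi_i$ as the projection of \emph{a fixed vertex} of $W_{(i)}$ and conclude with the identity $\vec{p}\cdot\vec{d}=\pi_B-\pi_A$. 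That identity is unjustified: $\vec{p}$ is the translation realizing the movie matching, so $\vec{p}\cdot\vec{d}=(\tau(v)-v)\cdot\vec{d}$ for every vertex $v$ of $A$, and nothing guarantees that $\tau$ sends your chosen representative of $A$ to your chosen representative of $B$. The discrepancy is bounded only by the spread of a window along $\vec{d}$, which for windows sitting inside connected bags of size at most $2m$ can be as large as roughly $2m$. Since consecutive kept windows are only guaranteed $\pi_B-\pi_A\geq 2$, all you can conclude is $\vec{p}\cdot\vec{d}\geq 2-2m$, which is not greater than $1$ once $m\geq 1$, and can even be negative.

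The repair is quantitative rather than structural, but as written the final inequality does not follow. Three ways out: (i) choose the representatives translation-equivariantly, e.g.\ let $\pi_i$ be the projection of the lexicographically least vertex of $W_{(i)}$ in the $\mathbb{Z}^2$ embedding; a translation maps lex-least to lex-least, so $\vec{p}\cdot\vec{d}=\pi_B-\pi_A$ becomes correct. (ii) Keep arbitrary representatives but space the kept windows by at least $4m+2$ first-crossing levels instead of $2m+2$, so that $\pi_B-\pi_A\geq 2m+2$ and hence $\vec{p}\cdot\vec{d}\geq(\pi_B-\pi_A)-2m\geq 2>1$. (iii) Do as the paper does and demand the directional separation uniformly over \emph{all} vertices of the selected bags (every vertex of a later bag is ahead of every vertex of an earlier one by at least the index difference), which transfers to $\vec{p}$ with no loss precisely because $\vec{p}\cdot\vec{d}=(\tau(x)-x)\cdot\vec{d}$ with $x\in A$ and $\tau(x)\in B$. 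Any of these costs only a factor depending on $m$ in $F'(n,m)$, so the statement is unaffected. Your closing concern about keeping the seed on the near side of the retained cuts is legitimate, but it is shared by the paper's own proof and by the non-directional lemma; it is not specific to your argument.
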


\begin{proof}
  Pick $F'(n, m) = 2 F(n,m) + 1$. If $(z' - z) \cdot \vec{d} > F'(n, m)$, then in $\treedec{\support[\lim \alpha]}$ there there are two bags, one $b$ containing $z$, the other $b' \ni z'$ such that on the path between $b_z$ and $b_{z'}$, there are $2 F(n,m) + 1$ bags $(b_i)_i$ such that for all $x \in b_i, y \in b_j, (x - y) \cdot \vec{d} \geq i - j$. Thus, there are $i, j$ such $j > i + 1$ and two windows, $A$ between $b_i$ and $b_{i+1}$, and $B$ between $b_j$ and $b_{j+1}$ which have the same movie. Because the windows $A$ and $B$ are separated by at least two elements of $(b_i)$, the translation vector $\vec{p}$ between them satisfies $\vec{p} \cdot \vec{d} > 1$.
\end{proof}
\tikzexternaldisable
Thus, there is a simple argument to prove \cref{lem:tree_pump}: pick \surligne[shaky_foundation]{orange!50!white}{a sequence of maximal fizziness} \surligne{orange!50!white}{which produces a terminal assembly} of $\Ss$ which is neither in $B_{F'(n,m),\vec{d}}[\Ss]$ nor in $C_m[\Ss]$. \cref{lem:useless} yields two windows with the same movie; by applying \cref{lem:celeri_branche} between them, a production in $P_{\vec{d}}[\Ss]$ appears. Alas, the \surligne[collapse]{orange!50!white}{sequence} on which  this argument is founded may simply fail to exist: the $\omega$-sequences of maximal fizziness may fail to reach a terminal production. One could extend them in order to reach a terminal production, but there may not be a sequence of maximal fizziness among these (ordinal) extensions.
\tikz[remember picture] \draw[overlay, orange!50!white, thick, ->] (collapse) to[bend right=15] (shaky_foundation);
\tikzexternalenable

% Thus, the last ingredient of the proof is a set $X$ of assembly sequences of $\Ss$ such that:
% \begin{itemize}
% \item any assembly sequence $\alpha \in X$ can be extended into a sequence $\alpha' \in X$ such that $\lim \alpha' \in \TerminalProds{\Ss}$,
% \item for any assembly sequence $\alpha \in X$, there is $\alpha' \in X$ such that $\alpha' \moreFizzyEq \ezEmbed(\alpha)X$,
% \item there is a sequence in $\alpha_{\max} \in X$ such that for any $\alpha \in X, \alpha_{\max} \moreFizzyEq \alpha$.
% \end{itemize}

Thus, it is necessary to prevent the pesky tendency ordinal sequences have to try and buy time by spacing out their attachment so as to generate an infinite family of sequences with increasing fizziness. \emph{Straight} sequences are the answer: they do not thave the wiggle room to misbehave.

\begin{definition}
  Let $\Ss$ be an aTAM system, and $\prec$ an arbitrary total order on its sequences. A free, ordinal assembly sequence $\alpha = (A_i)_{i < o} \in \FreeSeqs{\Ss}$ is $\prec$-\emph{straight} if for every pair of window $N, F$ such that $N$ and $F$ have the same movie and $\support[A_0] \subset \near(N)$, $\alpha$ is the smallest sequence for $\prec$ obtained by applying \cref{lem:celeri_branche} to $N$ and $F$ in $\alpha$.

  The order $\prec$ is henceforth fixed and left implicit.
\end{definition}

\begin{lemma}
  \label{lem:finite_choice}
  Let $\Ss$ be an aTAM system and $w$ an integer.

  Let $X \subset \FreeSeqs{\Ss}$ be a set of assembly sequences, and $d$ an integer.
  If for any sequence $\alpha \in X$ and any position $z \in \support[{ \lim \alpha }]$ at distance more than $d$ from $\seed[\Ss]$, there are two windows $A$ and $B$ such that $\support[ {\seed[\Ss]} ] \subset \near(A)$ and $z \in \far(B)$, then there is a finite number of straight sequences in $X$.
\end{lemma}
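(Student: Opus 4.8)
The plan is to show that a straight sequence in $X$ is pinned down by a bounded amount of data read off near the seed, and that this data ranges over a finite set. Throughout I read the hypothesis in the only way that makes the declared integer $w$ meaningful: for every $z$ at distance more than $d$ from $\seed[\Ss]$ the two witnessing windows $A$ and $B$ have width at most $w$, carry the same movie, and satisfy the hypotheses of \cref{lem:wml} (with $\far(B)\subset\far(A)$), so that \cref{lem:celeri_branche} is applicable to them. First I would record the finiteness of movies: since $\Ss$ has a finite tile set and windows of width at most $w$ admit only finitely many arrangements of arcs, perspective differences, and frames, only a finite number $W$ of distinct movies can occur on a window of width at most $w$ — this is exactly the count used earlier for $W(n,m)$.

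The key structural claim is that, for a straight $\alpha$, the limit production $\lim\alpha$ is determined by its restriction to the ball $\mathcal{B}_R$ of a fixed radius $R=R(d,w,\Ss)$ around the seed in $\support[\lim\alpha]$. I would argue this along each maximal simple path (ray) leaving the ball of radius $d$. By hypothesis every position at distance more than $d$ lies in $\far(B')$ for some window pair with $\support[\seed[\Ss]]\subset\near(A')$; cutting the ray at increasing distances produces a nested family of windows with shrinking far-sides, and by the pigeonhole principle among the $W$ movies two of them, say $A$ and $B$ with $\far(B)\subset\far(A)$, carry the same movie within a bounded number of steps. Straightness then forces $\alpha$ to be the $\prec$-minimal sequence produced by \cref{lem:celeri_branche} applied to $(A,B)$; in particular the part of $\lim\alpha$ beyond $A$ is the periodic repetition, under the translation $\tau$, of the slab lying between $A$ and $B$. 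Since the first such repeat occurs within distance bounded by $d$ plus $(W+1)$ times the (bounded) window spacing, and the period vector has bounded length, the whole ray is a periodic continuation of material already inside $\mathcal{B}_R$. Because each vertex of an assembly support has at most one neighbor per cardinal direction, the boundary of the radius-$d$ ball has boundedly many vertices, so boundedly many rays emanate from the core; the sub-branchings inside each periodic slab are themselves repeated periodically, and hence $\lim\alpha$ is the unique eventually-periodic tree extending $\lim\alpha\cap\mathcal{B}_R$.

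Finally I would count. The ball $\mathcal{B}_R$ contains at most $2\cdot 3^{R}$ vertices, each labelled by a tile from the finite set $\tileset[\Ss]$, so there are finitely many possibilities for $\lim\alpha\cap\mathcal{B}_R$ up to translation, and therefore finitely many possible limit productions of straight sequences by the previous paragraph. It then remains to recover a straight sequence from its limit production, and this is exactly where the fixed total order $\prec$ does its work: straightness demands that $\alpha$ be the $\prec$-smallest pumped sequence for every admissible window pair, so at most one straight sequence can have a given limit production. Combining the two finiteness facts yields finitely many straight sequences in $X$.

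The step I expect to be the main obstacle is the structural claim of the second paragraph: rigorously turning ``every far position is pumpable'' together with straightness into ``$\lim\alpha$ is an eventually-periodic tree encoded by a bounded ball'', in particular controlling the interaction between distinct rays and their periodically repeated sub-branchings, and checking that the nested windows along a ray can always be arranged so that \cref{lem:celeri_branche} genuinely applies (nesting $\far(B)\subset\far(A)$ together with a translation between $A$ and $B$ realizing the shared movie). The subsidiary delicate point is the last step, namely that $\prec$-minimality makes the passage from a limit production back to its straight assembly sequence single-valued.
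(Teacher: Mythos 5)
Your overall strategy---pin each straight sequence down by a bounded amount of data near the seed, then count---is the same one the paper uses: its entire proof is the one-line assertion that a straight sequence in $X$ is determined by what it does within radius $d$ of $\seed[\Ss]$. But you route the determination through the \emph{limit production}, and the step you yourself flag as ``subsidiary'' is a genuine gap: it is \emph{not} true that at most one straight sequence can have a given limit production. Straightness only compares $\alpha$ against sequences obtained by pumping $\alpha$ \emph{itself}, and by \cref{lem:celeri_branche} every such pumped sequence performs the attachments of $\alpha$ on $\near(F)$ \emph{in the same order} as $\alpha$. Consequently, $\prec$-minimality among pumped variants can never discriminate between two sequences that differ only in the order of their attachments near the seed, and it says nothing at all when no two windows of the production share a movie. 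Concretely: take a temperature-$1$ system whose single seed tile carries two distinct strength-$1$ glues, north and east, each accepting one tile, with all glues pairwise distinct. No two windows of the three-tile production have the same movie, so the straightness condition is vacuous; both orders of attaching the two tiles are straight sequences with the same limit. The last arrow of your count (finitely many balls $\Rightarrow$ finitely many limits $\Rightarrow$ finitely many straight sequences) therefore fails.

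The repair is exactly the paper's formulation: determine the \emph{sequence}, order of attachments included, rather than its limit. The restriction of $\alpha$ to the ball of radius $d$ around $\seed[\Ss]$ is a finite attachment sequence (each vertex of an assembly support has at most four neighbours, so the ball is finite), and over a finite tile set there are only finitely many such restricted sequences; straightness plus the hypothesis then forces everything outside the ball---both its content and its interleaving with the near part---to be the $\prec$-minimal pumped continuation of that restriction, hence a function of it. Your second paragraph (finiteness of movies on width-bounded windows, pigeonhole, periodicity of branches under straightness) is the right ingredient for that determination step, and is in fact more detailed than anything the paper writes; but it must be applied to the sequence data, where $\prec$-minimality genuinely is single-valued, not to the limit, which has already forgotten the ordering information you need to invert. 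A secondary caveat in that same paragraph: your nested windows along a ray are assumed to have width at most $w$ and bounded spacing, whereas the hypothesis only asserts the \emph{existence} of some witnessing pair per far position; quantifying the radius $R$ the way you do needs an extra justification that the paper, for its part, simply does not attempt.
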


\begin{proof}
  In this case, each straight sequence in $X$ is determined by what it does in a radius $d$ around $\seed[\Ss]$.
\end{proof}

\begin{lemma}
  let $\Ss$ be an aTAM system and $\alpha$ a straight assembly sequence of $\Ss$. There is a straight assembly sequence $\alpha'$ which extends $\alpha$ such that $\lim \alpha' \in \TerminalProds{\Ss}$.
\end{lemma}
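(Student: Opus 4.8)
The plan is to obtain $\alpha'$ by a fair transfinite greedy extension of $\alpha$ in which straightness is restored after every step. First I would recall the routine fact that any free assembly sequence extends to a terminal one: iterating ``attach some valid, stable tile'', taking unions at limit stages, and insisting on \emph{fairness} (fix an enumeration of the potential attachment sites so that every site which remains attachable is eventually filled) yields an ordinal assembly sequence whose limit lies in $\TerminalProds{\Ss}$. By \cref{lem:ord-omega} this limit is also the limit of an $\omega$-sequence, so ordinal length is never an obstruction in itself; the only ingredient missing from such a naive extension is straightness, and the entire difficulty is to reconcile ``reach a terminal production'' with ``stay straight''.

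To do so I would maintain, by transfinite recursion, an increasing family $(\alpha_\xi)$ of straight sequences with $\alpha_0 = \alpha$. At a successor stage, if $\lim \alpha_\xi$ is terminal I stop and take $\alpha' = \alpha_\xi$; otherwise I perform the $\prec$-least fair attachment and then \emph{straighten}: while there remains a pair of windows $N, F$ with the same movie, $\support[A_0] \subset \near(N)$ and $\far(F) \subset \far(N)$ for which the current sequence is not the $\prec$-minimal sequence produced by \cref{lem:celeri_branche}, I replace it by that minimal pumped sequence, and I use \cref{lem:couic-couic} to cut short any excursion that is already reproducible nearer the seed. At limit stages I take the union. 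Two facts must then be checked: that a directed union of straight sequences is again straight (which should follow from the fact that any witnessing window pair, together with its movie and translation, is already present at a bounded stage, so $\prec$-minimality propagates to the union), and that each straightening step actually terminates in a straight sequence rather than pumping forever.

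The main obstacle is well-foundedness together with the simultaneity of the two targets. Because \cref{lem:celeri_branche} rewrites the far side of the windows, straightening can alter parts of the sequence that lie beyond $\alpha$, so the recursion is not obviously monotone toward a terminal production, and one must verify that attachment and straightening can be interleaved without cycling. The resolution is exactly the ``no wiggle room'' property that motivated the definition: a straight sequence is forced into the $\prec$-minimal periodic form along its long branches by \cref{lem:celeri_branche} and into minimal excursions by \cref{lem:couic-couic}, so only finitely much genuine choice survives within any bounded radius of $\seed[\Ss]$ — the same finiteness phenomenon exploited in \cref{lem:finite_choice}. Consequently the rewriting stabilises on each bounded scale, the family $(\alpha_\xi)$ converges, and fairness of the site enumeration guarantees that every persistently attachable position is eventually filled, so the stable limit $\alpha'$ is both straight and terminal. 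Establishing that stabilisation, and that the limit genuinely extends $\alpha$ despite the intermediate rewriting, is where the bulk of the work lies.
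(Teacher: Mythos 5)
There is a genuine gap, and it is exactly the one you flag at the end of your own proposal without resolving. The paper's proof turns on a single concrete device that your construction lacks: when $\lim \alpha$ is not terminal, it appends a possible attachment $t@Z$ whose position $z$ is at \emph{minimal distance from the seed}. That choice makes a three-case analysis go through. Either no pair of windows $(N,F)$ with equal movies separates $\seed[\Ss]$ from $z$, in which case the extended sequence is already straight; or $z$ would lie in $\far(F)$ for such a pair, which is impossible because \cref{lem:couic-couic} would then produce an attachable position strictly closer to the seed, contradicting minimality; or $z$ lies in $\far(N) \cap \near(F)$, and then applying \cref{lem:celeri_branche} yields a sequence that is straight, still has $\alpha$ as a prefix, and still contains the new attachment (it is simply replicated, last, in each period). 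So every step is monotone and makes genuine progress, and the straightening can never undo the attachment that was just made.

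Your scheme — fair enumeration of sites, attach, then re-straighten — has no mechanism preventing the fair enumeration from choosing a site $z$ beyond the far window of a repeated pair. In that situation the straightening step (forcing the far side into the $\prec$-minimal periodic form via \cref{lem:celeri_branche}) can erase or relocate precisely the attachment you just performed. Then the family $(\alpha_\xi)$ is not increasing, and worse, fairness becomes unsatisfiable: the site remains attachable forever but is removed at every stage, so the limit is never terminal. Your appeal to \cref{lem:finite_choice} does not close this hole — that lemma bounds the number of straight sequences under a window-repetition hypothesis; it neither shows that your attach-then-straighten rewriting terminates nor that it is monotone. A secondary weak point is the claim that a directed union of straight sequences is straight: movies grow as the sequence grows, so a pair of windows can acquire equal movies only in the limit, and no bounded stage witnesses the required $\prec$-minimality. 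Both difficulties evaporate once the attachment is chosen at minimal distance from the seed, which is the actual content of the paper's argument.
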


\begin{proof}
  Let $P = \lim \alpha$. If $P \notin \TerminalProds{\Ss}$, then there is an attachment $t@Z$ which is possible in $P$.

  Let $\alpha'$ be $\alpha$ with $t@Z$ appended at its end, and $z$ the position of that last attachment in $\support[{ \lim \alpha' }]$. Assume that $Z$ is such that the distance between $\support[ {\seed[\Ss]} ]$ and $z$ is minimal.

  If there are no pairs of windows $(N, F)$ in $\alpha'$ such that $\far(F) \subset \far(N)$, $\movie(N) = \movie(F)$,  $\support[ {\seed[\Ss]} ] \subset \near(N)$ and $z \in \far(N)$, then $\alpha'$ is straight.

  If there is a pair of windows  $(N, F)$ in $\alpha'$ such that $\far(F) \subset \far(N)$, $\movie(N) = \movie(F)$,  $\support[ {\seed[\Ss]} ] \subset \near(N)$ and $z \in \far(F)$, then by \cref{lem:couic-couic}, there is a place closer to $ {\seed[\Ss]} $ than $z$ where an attachment was possible.

  Lastly, if there is a pair of windows  $(N, F)$ in $\alpha'$ such that $\far(F) \subset \far(N)$, $\movie(N) = \movie(F)$,  $\support[ {\seed[\Ss]} ] \subset \near(N)$ and $z \in \far(N) \cap \near(F)$, then the sequence $\alpha''$ obtained by applying \ref{lem:celeri_branche} in $\alpha'$ is straight and has $\alpha$ as a prefix, since the new attachments (the repetitions of $t@Z$) are done last in each repetition of the movie in $\far(N) \cap \near(F)$.

  This process yields a straight assembly sequence $\alpha'$ which strictly extends $\alpha$. It can be repeated until $\lim \alpha' \in \TerminalProds{\Ss}$.
\end{proof}

\begin{lemma}
  \label{lem:crash_recovery}
  Let $\Ss$ be an aTAM system and $\alpha$ a straight assembly sequence of $\Ss$. Then $\ezEmbed(\alpha)$ is a prefix of a straight sequence $\alpha'$. Moreover, the connected treewidth of $\support[ {\lim \alpha'} ]$ is no greater than that of  $\support[{ (\lim \ezEmbed(\alpha)) }]$.
\end{lemma}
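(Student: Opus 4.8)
The plan is to use the fact that, up to the point where the embedding first fails to be injective, $\ezEmbed(\alpha)$ is isomorphic to a prefix of $\alpha$, and then to append a straight completion supplied by the preceding lemmas. First I would note that $\ezEmbed(\alpha)$ is a genuine free assembly sequence: by the Assembly Sequence Embedding definition it is the image $(\ezEmbed(A_0),\dots,\ezEmbed(A_k))$, where $k$ is the largest index on which $\ezEmbed$ is an isomorphism, and by the Fizziness-Increase of Embeddings lemma it is valid and satisfies $\ezEmbed(\alpha)\moreFizzyEq\alpha$. Since $\ezEmbed$ is an isomorphism on every $\support[A_i]$ with $i\le k$, the sequence $\ezEmbed(\alpha)$ is isomorphic to the length-$k$ prefix of $\alpha$. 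I would then check that this prefix is itself straight: any window pair $N,F$ of $\ezEmbed(\alpha)$ with equal movies and $\support[A_0]\subset\near(N)$ transports, under the isomorphism, to a window pair of the straight sequence $\alpha$, where the corresponding application of \cref{lem:celeri_branche} is already $\prec$-minimal; as the outcome of that reduction is dictated by $\near(N)$ and the shared movie, the prefix inherits the reducedness. Hence $\ezEmbed(\alpha)$ is straight, and the preceding lemma (extension of a straight sequence to a straight terminal one) yields a straight $\alpha'$ with $\ezEmbed(\alpha)$ as a prefix, settling the first assertion.

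The \emph{moreover}-clause is the real content. I would establish that every cycle of $\support[\lim\alpha']$ is a $\tau$-translate of a cycle already present in $\support[\lim\ezEmbed(\alpha)]=\support[A_k]$. Granting this, \cref{lem:treedec} — which reads the connected treewidth off the largest encircled square, equivalently off the geodesic cycles of the fill-in $\complete{\cdot}$ — gives that $\lim\alpha'$ encircles no square larger than $A_k$ does, so its connected treewidth is no greater than that of $\lim\ezEmbed(\alpha)$.

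To prove the cycle claim, I would use that the attachments $\alpha'$ performs beyond the prefix $\ezEmbed(\alpha)$ arise only from the straightening operations of \cref{lem:couic-couic} and \cref{lem:celeri_branche}: the former only deletes attachments, while the latter grafts translated copies of a far-part $\far(N)$ across a window $N$ that is a cut. Because $\alpha'$ is a \emph{free} sequence, these pumped periods avoid all collisions and reconnect to the remainder of the assembly only across $N$; the periodicity relation $\tau(F)\subseteq F$ of \cref{lem:celeri_branche} then forces the macroscopic shape of each arm to be an acyclic, one-way periodic path, so no cycle can run between two distinct periods or between two distinct arms. Every cycle therefore lies inside a single period and is a translate of a cycle supported in $A_k$. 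Extending a connected tree decomposition of $A_k$ by bags of size two along these tree-like periodic arms leaves the maximal bag size unchanged, which yields the desired inequality on connected treewidth.

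The step I expect to be the main obstacle is precisely this last structural analysis: verifying rigorously, through the iterated invocations of \cref{lem:wml} concealed inside \cref{lem:celeri_branche}, that the collision-avoiding nature of free assembly supports prevents the pumped arms from closing up, so that straight completion is genuinely monotone for connected treewidth. Everything else reduces to bookkeeping with the embedding and the straightening operations already developed.
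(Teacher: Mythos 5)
Your proposal has two genuine gaps, and they compound each other.

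First, $\ezEmbed(\alpha)$ is \emph{not} straight in general, and your transport argument does not establish that it is. Straightness is a property of window \emph{movies}, and movies do not survive truncation: $\ezEmbed(\alpha)$ is (up to isomorphism) the prefix of $\alpha$ up to the last time $k$ at which $\ezEmbed$ is an isomorphism, and two windows $N, F$ whose movies in $\alpha$ differ only in frames occurring after time $k$ have \emph{identical} movies in $\ezEmbed(\alpha)$. Straightness of $\alpha$ imposes nothing on such a pair, yet straightness of $\ezEmbed(\alpha)$ would require the truncated sequence to be pumped along $(N,F)$ --- which it has no reason to be. So the window pairs of $\ezEmbed(\alpha)$ do not ``transport to window pairs of $\alpha$ with equal movies'', and the claimed inheritance of reducedness fails. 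This is exactly why the paper's proof does not stop at $\ezEmbed(\alpha)$: it applies \cref{lem:celeri_branche} to $\ezEmbed(\alpha)$ along each same-movie window pair closest to $\seed[\Ss]$, and the straight sequence $\alpha'$ of the statement is the \emph{result of this pumping}; if $\ezEmbed(\alpha)$ were already straight, that step would be vacuous.

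Second, and more damaging for the ``moreover'' clause: you take $\alpha'$ to be a straight \emph{terminal} extension of $\ezEmbed(\alpha)$, via the lemma preceding this one. Besides resting on the unproved straightness of $\ezEmbed(\alpha)$ (that lemma requires a straight input), this choice makes the treewidth bound unprovable: the terminal-extension lemma appends genuinely new attachments $t@Z$ wherever attachments are possible, and these are \emph{not} produced by the straightening operations of \cref{lem:couic-couic} and \cref{lem:celeri_branche} --- so the premise of your cycle analysis is false for your own $\alpha'$. A terminal completion may grow new structure anywhere (that is its purpose) and encircle arbitrarily large new squares, so its connected treewidth need not be bounded by that of $\lim \ezEmbed(\alpha)$. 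Note that the statement does not ask for $\alpha'$ to be terminal, and in the proof of \cref{lem:tree_pump} terminalization and crash recovery are deliberately alternated as two distinct steps. The bound holds precisely because the paper's $\alpha'$ consists of $\ezEmbed(\alpha)$ plus \emph{pumped} material only: one fixes a tree decomposition of $\support[ {(\lim \ezEmbed(\alpha))} ]$ and covers each far part of a same-movie window pair by translated copies of the corresponding bags close to the seed, yielding a decomposition of $\support[ {\lim \alpha'} ]$ of the same width. (Your ``bags of size two along the arms'' would not suffice even in the pumped setting: the pumped arms are two-dimensional chunks of assembly of width up to that of the window, not paths of single tiles.)
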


\begin{proof}
  If $\alpha$ embeds cleanly in $\mathbb{Z}^2$, there is nothing to prove since $\ezEmbed(\alpha) = \alpha$.

  Otherwise, $\ezEmbed(\alpha)$ stops because one of its attachment $t@z$ creates a hole which does not exist in $\alpha$. Because $\alpha$ is straight, there cannot be a pair of windows with the same movie separating $\seed[\Ss]$ and $z$. Applying \cref{lem:celeri_branche} in $\ezEmbed(\alpha)$ on each pair of windows with the same movie which are the closest to $\seed[\Ss]$ yields a straight sequence $\alpha'$ extending $\ezEmbed(\alpha)$.

  Fix a tree decomposition of $\support[(\lim \ezEmbed(\alpha))]$. There is a tree decomposition of $\support[ {\lim \alpha'} ]$ where each bag which lies in the far part of a pair of windows with the same movie is a copy of the corresponding bag close to the seed. This tree decomposition has the same width as the decomposition of $\support[(\lim \ezEmbed(\alpha))]$.
\end{proof}

At last, all the ingredients are there to prove \cref{lem:tree_pump}.

\begin{proof}[Proof of \cref{lem:tree_pump}]
  Define a sequence $(\alpha_i)$ of straight assembly sequences as follows:
  \begin{itemize}
  \item Fix $\alpha_0$ to be the assembly sequence with zero attachments: $\alpha_0 = (\seed[\Ss])$;
  \item for $i$ even, $\alpha_{i+1}$ is a straight extension of $\alpha_i$ which reaches a terminal production;
  \item for $i$ odd, $\alpha_{i+1}$ is obtained from $\alpha_i$ by \cref{lem:crash_recovery}.
  \end{itemize}

  For every $i$, either $\alpha_{i+1} = \alpha_i$ or $\alpha_{i+1} \moreFizzy \alpha_i$: at the even steps, if $\alpha_{i+1} \neq \alpha_i$, then it is a proper extension and is thus more fizzy; at the odd steps, if $\alpha_{i}$ does not embed into $\mathbb{Z}^2$, then $\ezEmbed(\alpha_i)$ is more fizzy, and so is its extension $\alpha_{i+1}$.

  If $\TerminalProds{\Ss} \cap C_m[\Ss] = \emptyset$, then for each $i$ odd, the Connected Treewidth of $\ezEmbed(\alpha_i)$ is at most $2m$, hence the Connected Treewidth of $\alpha_{i+1}$ is at most $2m$. But then, by \cref{lem:finite_choice}, $\alpha_{i}$ can only take a finite number of different values for $i$ even.

  Thus, the sequence $(\alpha_i)$ is eventually stationnary, and its fixpoint $\beta$ is a straight sequence which embeds into $\mathbb{Z}^2$ and reaches a terminal production. If $\TerminalProds{\Ss} \cap B_{F(n,m), \vec{d}} \neq \emptyset$, this terminal production must reach at least $F(n, m)$ in direction $\vec{d}$. Thus $\support[ {\lim \beta} ]$ has a branch with two windows $A, B$ such that $\movie(A) = \movie(B)$ and the vector sending $A$ to $B$ verifies $\vec{p} \cdot \vec{d} > 1$ (by \cref{lem:useless}). Because $\beta$ is straight, that branch is periodic, thus $\lim \beta \in \TerminalProds{\Ss} \cap P_{\vec{d}}$.
\end{proof}

%% Please use bibtex, 
%\bibliography{main.stacs}
\bibliography{auto-assemblage}
\end{document}